\documentclass[11pt]{article}
\usepackage[margin=1in]{geometry} 
\usepackage{amsmath,amsthm,amssymb,amsfonts,titling,bbm, titlesec, bm, physics, enumitem, accents, xcolor, setspace, fancyhdr, natbib, geometry,
	pdflscape}
\usepackage{graphicx}
\usepackage{float}
\usepackage[font=footnotesize,labelfont=bf]{caption}
\usepackage{array}
\usepackage[title]{appendix}
\usepackage{afterpage}
\usepackage{listings}
\usepackage{prodint}
\usepackage{algorithm}
\usepackage{algpseudocode}
\usepackage{bm}
\usepackage{lineno}
\usepackage{booktabs}
\usepackage[hidelinks]{hyperref}
\usepackage[smallproofs=false, 
qedsymbolblacksquare=false, 
prooftitleitbf=true, 
sepcounters=true]{phfthm} 

\setlength{\droptitle}{-5em}
\bibliographystyle{apalike}

\doublespacing

\newcommand{\R}{\mathbb{R}}

\newcommand{\I}{\mathbbm{1}}

\newcommand{\argmax}{\text{argmax}}

\newcommand{\pr}{\textrm{pr}}
\newcommand{\Var}{\textrm{var}}

\newcommand{\midd}{\,|\,}

\newlength{\continueindent}
\setlength{\continueindent}{2em}
\makeatletter
\newcommand*{\ALG@customparshape}{\parshape 2 \leftmargin \linewidth \dimexpr\ALG@tlm+\continueindent\relax \dimexpr\linewidth+\leftmargin-\ALG@tlm-\continueindent\relax}
\apptocmd{\ALG@beginblock}{\ALG@customparshape}{}{\errmessage{failed to patch}}
\makeatother

\titleformat{\section}{\large\scshape\bfseries}{\thesection.}{1em}{}
\titleformat{\subsection}{\normalfont\bfseries}{\thesubsection.}{1em}{}

\DeclareMathAlphabet{\mathbbmsl}{U}{bbm}{m}{sl}

\allowdisplaybreaks

\newtheorem{example}{Example}
\newtheorem{step}{Step}
\newtheorem{condition}{Condition}

\begin{document}
	
	\begin{center}
		\large{\textbf{Assessing variable importance in survival analysis using machine learning}}\vspace{0.3cm}
		
		\normalsize{Charles J. Wolock\textsuperscript{1}, Peter B. Gilbert\textsuperscript{3,2}, Noah Simon\textsuperscript{2} \& Marco Carone\textsuperscript{2,3}}
		\\ \vspace{0.3cm}
		\singlespacing
		\footnotesize{\textsuperscript{1}Department of Biostatistics, Epidemiology, and Informatics, University of Pennsylvania\\
			\textsuperscript{2}Department of Biostatistics, University of Washington\\
			\textsuperscript{3}Vaccine and Infectious Disease Division, Fred Hutchinson Cancer Center}
	\end{center}
	
	\vspace{0.3cm}
	\singlespacing
	\begin{center}
		\textbf{Abstract}
	\end{center}
	\begin{quote}
		Given a collection of features available for inclusion in a predictive model, it may be of interest to quantify the relative importance of a subset of features for the prediction task at hand. For example, in HIV vaccine trials, participant baseline characteristics are used to predict the probability of HIV acquisition over the intended follow-up period, and investigators may wish to understand how much certain types of predictors, such as behavioral factors, contribute toward overall predictiveness. Time-to-event outcomes such as time to HIV acquisition are often subject to right censoring, and existing methods for assessing variable importance are typically not intended to be used in this setting. We describe a broad class of algorithm-agnostic variable importance measures for prediction in the context of survival data. We propose a nonparametric efficient estimation procedure that incorporates flexible learning of nuisance parameters, yields asymptotically valid inference, and enjoys double-robustness. We assess the performance of our proposed procedure via numerical simulations and analyze data from the HVTN 702 vaccine trial to inform enrollment strategies for future HIV vaccine trials. \vspace{0.1in}
		
		\textit{Keywords:} Censoring, debiased machine learning, feature importance, survival analysis 
	\end{quote}
	\doublespacing
	
	\section{Introduction}
	
	Variable importance is often defined as a quantification of the contribution that a feature or group of features makes to predicting an outcome. As black-box machine learning prediction algorithms have become popular in scientific applications, such as biomedical research, there has been corresponding growth of a literature focused on defining and making statistical inference on variable importance. Existing methods for assessing variable importance can be broadly categorized as either describing how a given algorithm uses features in making predictions, or as measuring the population-level predictiveness potential of a set of features. We refer to these types of variable importance as \textit{extrinsic} and \textit{intrinsic}, respectively \citep{williamson2021rejoinder}.
	
	The question of variable importance arises naturally in the design of HIV vaccine trials. To achieve desired statistical power in efficacy trials, investigators often aim to recruit individuals thought to have a high probability of HIV acquisition during the intended follow-up period. This process relies in part on risk prediction models for HIV acquisition, which 
	have been developed for various populations; examples include \citet{Menza2009}, \citet{Smith2012},  \citet{Balkus2016}, and \citet{Wand2018}. A variety of features, including demographic information (e.g., age), laboratory assay readouts (e.g., diagnosed prevalent sexually transmitted infections), and behavioral questionnaire responses (e.g., number of sex partners over a specified time frame), have been found to be associated with time to HIV acquisition. Understanding the relative predictive importance of these features may inform data collection for future risk model development, which in turn may influence participant recruitment practices. Highly important variables may be prioritized for data collection, while less important variables may be deemed too onerous or expensive to merit collection. As such, reliable statistical estimation and quantification of uncertainty are paramount.
	
	Variable importance typically relies on the concept of a predictiveness measure, which is meant to quantify the performance of a prediction model. Measuring predictiveness in the setting of HIV vaccine trials, where acquisition risk is often studied using data from previous trials, is complicated by right censoring, a ubiquitous characteristic of these trials and other prospective biomedical studies of time-to-event outcomes. In the presence of censoring, extra care must be taken in evaluating the predictiveness of a given model, since outcomes and predictions cannot be directly compared. Accounting for right-censored data is particularly difficult when the censoring is informative, i.e., when the event and censoring times are not independent. In the realm of survival analysis, many widely used methods for evaluating predictiveness do not account for censoring and may converge to population parameters that (undesirably) depend on the censoring mechanism. Those that do account for censoring often require strong assumptions on the censoring mechanism or event time distribution. For example, many predictiveness estimation methods are based on the semiparametric Cox proportional hazards model \citep{Cox1972}, which, while convenient, raises the concern of model misspecification in practice. 
	
	The growing availability of flexible prediction models in survival analysis motivates the development of correspondingly flexible methods for evaluating variable importance. Extrinsic variable importance methods are commonly used for time-to-event data. Local importance methods, such as Shapley additive explanations \citep{Krzyzinski2023survshap} and local interpretable model-agnostic explanations \citep{Kovalev2020survlime}, are designed to measure the extent to which the prediction for a particular observation in the dataset depends on a feature of interest. Permutation-based methods quantify the change in model performance when predictor values are randomly shuffled \citep{Breiman2001, fisher2019} and have been applied in time-to-event settings \citep{spytek2023survex}. Notably, these extrinsic methods are generally construed as evaluating a fixed prediction algorithm rather than the population-level predictiveness potential of the available features. Methods for statistical inference on extrinsic importance are available for some algorithms, such as random forests \citep{Ishwaran2019RFinference}, but general inferential tools that handle informative right censoring are lacking.
	
	Meanwhile, by and large, the literature on model-free, algorithm-agnostic intrinsic variable importance, including the plug-in framework of \citet{Williamson2021b} and related approaches (e.g., \citealp{Dai2022, Zhang2022, Verdinelli2021}) has not treated survival data, and these methods cannot be directly applied to right-censored data. When right censoring is present, nonparametric identification of a functional of the time-to-event distribution generally involves complex nuisance parameters \citep{vanderLaan2003}; using flexible estimators of these nuisance parameters then necessitates debiasing \citep{Bickel1998, van2011targeted}. Except for the recent work of \citet{Boileau2023}, which is specifically focused on variable importance measures (VIMs) for evaluating treatment effect modification, general variable importance approaches do not address this complexity.
	
	In this paper, we present a general framework for estimating and making inference on predictiveness and variable importance using right-censored time-to-event data. We focus on intrinsic variable importance, which is a model-free and algorithm-agnostic quantity. Our approach accommodates censoring that is possibly informed by measured covariates. We consider a broad class of predictiveness measures for time-to-event outcomes, encompassing many measures used in practice. Our specific contributions include: 
	\begin{enumerate}
		\item[(i)] providing conditions under which measures in this class are identified in terms of the observed data distribution;
		\item[(ii)] deriving the nonparametric efficient influence function for these measures;
		\item[(iii)] devising a debiased machine learning procedure for estimation and inference.
	\end{enumerate}
	These developments have allowed us to analyze variable importance for predicting time to HIV acquisition diagnosis using data from HVTN 702, a phase 2b-3 trial conducted in South Africa between 2016 and 2020 to investigate the safety and efficacy of a canarypox-protein HIV vaccine regimen \citep{Gray2021}; we provide details of the analysis in this paper. The proposed methods are implemented in the \texttt{survML} package.
	
	\section{Problem setup}\label{sec:VIM review}
	
	\subsection{Data structure and notation}
	We observe a vector $X$ of baseline covariates taking values in $\mathcal{X} \subset \R^p$. The outcome of interest is the time elapsed between an initiating event (e.g., randomization into the HVTN 702 study) and a terminating event (e.g., diagnosis of HIV acquisition); we refer to this as the event time and denote it by $T \in (0, \infty)$. We observe the event time subject to right censoring, by which follow-up on a participant may conclude before the participant has experienced the event, potentially due to loss to follow-up or termination of the study. Let $C \in (0, \infty)$ denote the time between the initiating and censoring events. The ideal data unit can be considered to be $\mathbbmsl{Z} := (X,T, C)$, sampled from distribution $\mathbbmsl{P}_0$. For each participant, we observe the follow-up time $Y := \min\{T, C\}$ and the event indicator $\Delta := \I(T \leq C)$, resulting in the observed data unit $Z := (X, Y, \Delta)$. Our sample consists of $n$ independent and identically distributed observations $Z_1, \dots, Z_n$ drawn from $P_0$, the observed data distribution implied by $\mathbbmsl{P}_0$. We denote by $\mathcal{M}_{\text{ideal}}$ the ideal data model in which $\mathbbmsl{P}_0$ lies and by $\mathcal{M}_{\text{obs}}$ the observed data model in which $P_0$ lies. Throughout the article, we use the italicized blackboard bold font (e.g., $\mathbbmsl{Z}$) to emphasize whenever objects correspond to the ideal data setting.  
	
	We denote by $\mathbb{P}_n$ the empirical measure of $Z_1, \dots, Z_n$. We use $a \wedge b$ to denote $\min\{a,b\}$. For vectors $u = (u_1,\dots,u_p)$ and $v = (v_1,\dots,v_p)$, we take inequalities to be component-wise, i.e., $\{u \leq v\} = \{u_1 \leq v_1, \dots, u_p \leq v_p\}.$ We use $s \subset \{1,\dots, p\}$ to denote the index set of a covariate subgroup. For any vector $v$, we use $v_s$ to denote the elements of $v$ with index in $s$ and $v_{-s}$ the elements of $v$ with index not in $s$. 
	
	\subsection{Predictiveness and variable importance}
	
	In this section we give a brief overview of predictiveness and variable importance. We consider a class $\mathcal{F}$ of potential prediction functions, mapping from $\mathcal{X}$ to a context-specific codomain $\mathcal{Y}$ depending on the predictiveness measure being studied. The subset $\mathcal{F}_s := \{f \in \mathcal{F}: f(u) = f(u^*) \text{ for all } u,u^* \in \mathcal{X} \text{ satisfying } u_{-s} = u^*_{-s}\}$ characterizes prediction functions in $\mathcal{F}$ that ignore features with index in $s$. We allow $\mathcal{F}$ to be largely unrestricted up to regularity conditions. We define $\mathbbmsl{V}(f, \mathbbmsl{P}_0)$ to be the ideal data predictiveness measure, quantifying how predictive $f \in \mathcal{F}$ is under $\mathbbmsl{P}_0$, with larger values indicating higher predictiveness. The oracle prediction function $\mathbbmsl{f}_0$ is then defined by $\mathbbmsl{f}_0 \in \argmax_{f \in \mathcal{F}} \ \mathbbmsl{V}(f, \mathbbmsl{P}_0)$.
	The oracle prediction function represents the optimal prediction function possible under $\mathbbmsl{P}_0$, as measured by $\mathbbmsl{V}$. If $\mathcal{F}$ is sufficiently rich, the oracle should not depend on the choice of function class. The oracle predictiveness, a measure of the combined predictive potential of $X$ under $\mathbbmsl{P}_0$, is then defined as $\mathbbmsl{V}(\mathbbmsl{f}_0, \mathbbmsl{P}_0)$. We analogously define the residual oracle predictiveness of $X_s$ as $\mathbbmsl{V}(\mathbbmsl{f}_{0,s}, \mathbbmsl{P}_0)$, where $\mathbbmsl{f}_{0,s} \in \argmax_{f \in \mathcal{F}_{s}}\mathbbmsl{V}(f, \mathbbmsl{P}_0)$ is the residual oracle prediction function. This quantifies the combined predictive potential of $X_{-s}$. 
	
	For $r \subset s$ a strict subset of $s$, we define the intrinsic \textit{exclusion} importance of $X_{s\setminus r}$ relative to $X_{-r}$ as $\mathbbmsl{V}(\mathbbmsl{f}_{0,r}, \mathbbmsl{P}_0) - \mathbbmsl{V}(\mathbbmsl{f}_{0,s}, \mathbbmsl{P}_0)$, that is, the decrease in oracle predictiveness when $s$ is excluded compared to when only $r$ is excluded. For example, if $X = (X_1, X_2, X_3)$, then setting $r = \emptyset$ and $s = \{1,2\}$ measures the importance of $(X_1, X_2)$ relative to $(X_1, X_2, X_3)$, while setting $r = \{1\}$ and $s = \{1,2\}$ measures the importance of $X_2$ relative to $(X_2, X_3)$. Without loss of generality, we focus on $\mathbbmsl{V}(\mathbbmsl{f}_{0}, \mathbbmsl{P}_0) - \mathbbmsl{V}(\mathbbmsl{f}_{0,s}, \mathbbmsl{P}_0)$, the intrinsic importance of $X_s$ relative to the full covariate vector $X$. Because the censoring mechanism may depend on covariates, it may be of interest to include features in the full vector $X$ without considering their predictive potential. Such variables should be included in the index set $r$.
	
	Although permutation importance is typically considered in the context of a fixed prediction model, it may also be defined as an intrinsic characteristic of $\mathbbmsl{P}_0$; see, for example, \citet{hooker2021permutation}. For an index set $s$, we use $\mathbbmsl{P}_{0,s}$ to denote the distribution of $(X_{s}^{(2)}, X_{-s}^{(1)}, T^{(1)}, C^{(1)})$, where $(X^{(1)}, T^{(1)}, C^{(1)})$ and $(X^{(2)}, T^{(2)}, C^{(2)})$ are drawn independently from $\mathbbmsl{P}_0$. The intrinsic \textit{permutation importance} of $X_{s \setminus r}$ relative to $X_{-r}$ is $\mathbbmsl{V}(\mathbbmsl{f}_0, \mathbbmsl{P}_{0, r}) - \mathbbmsl{V}(\mathbbmsl{f}_0, \mathbbmsl{P}_{0, s})$; this quantifies the decrease in the predictiveness of the oracle $\mathbbmsl{f}_0$ when $s$ is permuted compared to when only $r$ is permuted. In the Supplementary Material, we discuss the relationship between permutation and exclusion importance and examine their differences in a simulation study. However, exclusion importance is the primary focus of this work. 
	
	\subsection{Common predictiveness measures in survival analysis}\label{subsec:common vims}
	
	The choice of predictiveness measure should depend on the purpose of the prediction function $f$. For example, as in our scientific application, when $f$ is a risk score intended to stratify participants into risk categories, an appropriate measure may quantify how well $f$ discriminates between high-risk and low-risk participants. Our framework is broadly applicable, but the practitioner must choose a relevant predictiveness measure. We give several examples of a predictiveness measure below. For a discussion of predictiveness measures for survival data, see \citet{Korn1990}. 
	
	\begin{example}[\sc{AUC}]
		\upshape
		\citet{Heagerty2005} describe several extensions of the receiver operating characteristic curve for a time-to-event outcome. For example, they propose the use of the area under the \textit{cumulative/dynamic} receiver operating characteristic curve (AUC) at landmark time $\tau$, defined as    
		\begin{align*}
			\mathbbmsl{V}(f,\mathbbmsl{P}_0) := \pr_{\mathbbmsl{P}_0}\{f(X_1) > f(X_2) \midd T_1 \leq \tau, T_2 > \tau\}\ ,
		\end{align*}
		where $(X_1, T_1)$ and $(X_2, T_2)$ are independent draws from $\mathbbmsl{P}_0$. This measure corresponds to the probability that a participant who has experienced an event at or before time $\tau$ has a higher value of risk score $f$ compared to a participant who has not. Proposed methods for estimating AUC\ using right-censored data have been based, for example, on inverse-probability-of-censoring weights \citep{Uno2007, Hung2010}, a modified Kaplan-Meier approach \citep{Chambless2006}, and  an assumption of proportional hazards  \citep{Song2008}. 
	\end{example}
	
	\begin{example}[\sc{Brier score}]
		\upshape
		It may be of interest to predict whether a participant will remain event-free by landmark time $\tau$. Here, the predictiveness of $f$ can be quantified using a loss function, say $L: \mathcal{F} \times \{0,1\} \to [0, \infty)$, for predicting the binary outcome $\I(T > \tau)$. The predictiveness measure can then be defined as the negative risk $-E_{\mathbbmsl{P}_0}\left[L\{f(X), \I(T > \tau)\}\right]$. Common loss functions include the log, binary classification, and squared error losses. The mean squared error (MSE) for predicting a binary outcome is referred to as the Brier score \citep{Brier1950} and is often estimated using an inverse-probability-of-censoring weighting approach \citep{Gerds2006}. Here, we focus on the negative Brier score
		\begin{align*}
			\mathbbmsl{V}(f, \mathbbmsl{P}_0) := -E_{\mathbbmsl{P}_0}\left[\left\{f(X)- \I(T > \tau)\right\}^2\right].
		\end{align*}
		
	\end{example}
	
	\begin{example}[\sc{Survival time MSE}]
		\upshape
		When $f$ is intended to predict a participant's $\tau$--restricted survival time $T \wedge \tau$, its predictiveness can be evaluated using a loss $L: \mathcal{F}\times(0,\tau] \to [0, \infty)$ and corresponding negative risk $-E_{\mathbbmsl{P}_0}\left[L\{f(X), T \wedge \tau\}\right]$. We focus on the negative MSE\  
		\begin{align*}
			\mathbbmsl{V}(f, \mathbbmsl{P}_0) := -E_{\mathbbmsl{P}_0}\left[\left\{f(X)- (T \wedge \tau)\right\}^2\right].
		\end{align*}
	\end{example}
	
	\begin{example}[\sc{C-index}]
		\upshape
		The concordance index (C-index) is often considered to be a global measure of discriminative performance. The population C-index $\pr_{\mathbbmsl{P}_0}(f(X_1) > f(X_2) \midd T_1 < T_2)$ corresponds to the probability that, for a randomly selected pair of participants, the participant who experiences the event earlier has the higher value of the risk score $f$. For identifiability under right censoring, we restrict this definition to participant pairs whose earlier event time falls before a user-specified time $\tau$, defining
		\begin{align*}
			\mathbbmsl{V}(f, \mathbbmsl{P}_0) := \pr_{\mathbbmsl{P}_0}\{f(X_1) > f(X_2) \midd T_1 < T_2, T_1 \leq \tau\}\ .
		\end{align*} 
		Existing approaches for estimating the C-index include Harrell's C-index statistic \citep{Harrell1982}, as well as methods based on the Cox model \citep{Gonen2005}, the Pareto distribution \citep{Brentnall2018}, the use of inverse-probability-of-censoring weights \citep{Uno2011}, and the stratified Kaplan-Meier estimator \citep{Efron1967}, among others. 
	\end{example}
	
	\section{Adapting VIMs for survival analysis}\label{sec:survival VIMs}
	
	\subsection{Identification and support}
	The predictiveness measures and corresponding VIMs in Section \ref{sec:VIM review} are expressed in terms of the ideal data distribution $\mathbbmsl{P}_0$. Therefore, before proceeding we must identify them in terms of the observed data distribution $P_0$. To do so, we make use of the following condition: 
	\begin{condition}\label{condition:independent censoring}
		\upshape
		$T$ and $C$ are conditionally independent given $X$.
	\end{condition}
	Condition \ref{condition:independent censoring} allows informative censoring, as long as the censoring mechanism is uninformative within strata defined by the vector $X$ of recorded covariates. In light of Condition \ref{condition:independent censoring}, in addition to features of interest for prediction, it is necessary to include any covariate thought to inform the censoring mechanism in the full feature vector $X$. Conditionally independent censoring is a form of coarsening-at-random, and so, the model $\mathcal{M}_{\text{obs}}$ is nonparametric  if $\mathcal{M}_{\text{ideal}}$ is otherwise unrestricted \citep{vanderLaan2003}. 
	
	As is typical in handling right censoring, our identification approach is based on the hazard function. We define as $\mathbbmsl{L}_0(t \midd x) := \int_0^t \,\{1 - \mathbbmsl{F}_0(u^- \midd x)\}^{-1}\mathbbmsl{F}_0(du \midd x)$ the conditional cumulative hazard of $T$ given $X=x$ at $t$, where we write $\mathbbmsl{F}_0(t \midd x) := \pr_{\mathbbmsl{P}_0}(T \leq t \midd X = x)$. The product integral mapping $\alpha  \mapsto  \prodi_{(0, t]}\left\{1 - \alpha(du\midd x)\right\}$, which links the hazard and distribution functions \citep{Gill1990}, then allows us to write
	\begin{align*}
		\mathbbmsl{F}_0(t \midd x) = 1-\Prodi_{(0, t]}\left\{1 - \mathbbmsl{L}_0(du \midd x)\right\}.
	\end{align*}
	Next, we define the observed data hazard $L_0(t \midd x) := \int_0^t \,\{1 - M_0(u^- \midd x)\}^{-1}M_{0,1}(du \midd x)$, where $M_{0,1}(t \midd x) := \pr_{P_0}(Y \leq t, \Delta = 1 \midd X = x)$  and $M_0(t \midd x) := \pr_{P_0}(Y \leq t \midd X = x)$. We also define the product integral mapping $F_0(t \midd x) := 1-\prodi_{(0,t]}\{1 - L_0(du \midd x)\}$. 
	
	In this article, we focus on a class of predictiveness measures that can be written as an expectation taken with respect to the joint distribution of $(X$, $T)$, as defined below. From here and on, $\mathbbmsl{H}_0(x_0,t_0)$ denotes the joint distribution function of $(X, T)$ evaluated at $(x_0,t_0)$, and with a slight abuse of notation, we write $\mathbbmsl{V}(f, \mathbbmsl{P}_0)$ as $\mathbbmsl{V}(f, \mathbbmsl{H}_0)$.
	\begin{definition}
		An ideal data predictiveness measure $\mathbbmsl{V}(f, \mathbbmsl{H}_0)$ is called a standardized survival V--measure if it can be written in the form $
		\mathbbmsl{V}(f, \mathbbmsl{H}_0) = \mathbbmsl{V}_1(f, \mathbbmsl{H}_0)/\mathbbmsl{V}_2(\mathbbmsl{H}_0)$
		with
		\begin{align*}
			\mathbbmsl{V}_1(f, \mathbbmsl{H}_0) &= \int \dots \int \omega[\{f(x_1), t_1\}, \dots, \{f(x_m), t_m\}]\prod_{j=1}^m \mathbbmsl{H}_0(dx_j, dt_j)\ , \\
			\mathbbmsl{V}_2(\mathbbmsl{H}_0) &= \int \dots \int \theta(t_1, \dots, t_m)\prod_{j=1}^m \mathbbmsl{H}_0(dx_j, dt_j)
		\end{align*}
		for symmetric kernel functions $\omega: \left\{\mathcal{Y} \times \R\right\}^m \to \R$ and $\theta: \R^m \to \R$ and $m \geq 1$ an integer.
	\end{definition}
	A more general definition could allow $\mathbbmsl{V}$ to have an additive dependence on some constant $\kappa \in \R$, which would encompass predictiveness measures such as the proportion of explained variance \citep{Schemper2000}. For variable importance, interest lies in the difference between full and residual oracle predictiveness, and the constant $\kappa$ simply cancels.
	
	The joint distribution function of $(X,T)$ evaluated at $(x_0,t_0)$ can be written in the form $\mathbbmsl{H}_0(x_0, t_0) = \int \I(u \leq x_0)\mathbbmsl{F}_0(t_0 \midd u)\mathbbmsl{Q}_0(du)$, where $\mathbbmsl{Q}_0$ is the distribution function of $X$ under $\mathbbmsl{P}_0$. Letting $Q_0$ denote the distribution function of $X$ under $P_0$, we note that $\mathbbmsl{Q}_0 = Q_0$ since $X$ is fully observed. Identification of $\mathbbmsl{H}_0(x_0,t_0)$ then depends on identification of $\mathbbmsl{F}_0(t \midd x)$ for relevant $(x,t)$ values. Due to right censoring, this quantity is unlikely to be identified in the right tail of the time-to-event support, that is, for large $t$ values. Therefore, whether $\mathbbmsl{V}(\mathbbmsl{f}_0, \mathbbmsl{H}_0)$ is identified depends on the kernel functions $\omega$ and $\theta$. We use the following condition, along with Condition \ref{condition:independent censoring}, to ensure identification. 
	
	\begin{condition}\label{condition:region of integration}
		\upshape
		There exists some $\tau_0 \in (0,\infty)$ such that: (a) $\pr_{\mathbbmsl{P}_0}(C > \tau_0 \midd X) > 0$ $P_0$--almost surely; and (b) for any fixed $(x_1,\dots,x_m)$ and $(t_2,\dots,t_m)$, $t_1 \mapsto \omega[\{f(x_1),t_1\},\dots,\{f(x_m), t_m\}]$ and $t_1 \mapsto \theta(t_1,\dots,t_m)$ are constant over $(\tau_0,\infty)$.
	\end{condition}
	
	Before stating the identification result, we define $H_0(x_0,t_0) := \int \I(u \leq x_0)F_0(t_0 \midd u) Q_0 (du)$ and $V(f, H_0) := V_1(f, H_0)/V_2(H_0)$ with
	\begin{align*}
		V_1(f, H_0) &:= \int \dots \int \omega[\{f(x_1), t_1\}, \dots, \{f(x_m), t_m\}]\prod_{j=1}^m H_0(dx_j, dt_j)\ ; \\
		V_2(H_0) &:= \int \dots \int \theta(t_1, \dots, t_m)\prod_{j=1}^m H_0(dx_j, dt_j)\ .
	\end{align*}
	\begin{theorem}[\sc{identification}]\label{thm:ID}
		\noproofref
		Conditions \ref{condition:independent censoring} and \ref{condition:region of integration} imply that $\mathbbmsl{V}_1(f, \mathbbmsl{H}_0) = V_1(f,H_0)$, $\mathbbmsl{V}_2(\mathbbmsl{H}_0) = V_2(H_0)$, and so, $\mathbbmsl{V}(f, \mathbbmsl{H}_0) = V(f, H_0)$. 
	\end{theorem}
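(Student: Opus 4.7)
The plan is to reduce the identification to showing that $\mathbbmsl{H}_0$ and $H_0$ coincide as measures on the identifiable region $\mathcal{X} \times [0, \tau_0]$, and then to leverage the tail constancy in Condition \ref{condition:region of integration}(b) to pass from that regional equality to equality of the full $m$-fold integrals defining $V_1$ and $V_2$.

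First, I would show that the ideal and observed conditional cumulative hazards coincide on $[0, \tau_0]$. Under Condition \ref{condition:independent censoring}, $T \independent C \mid X$, so for any $u$ with $\pr_{\mathbbmsl{P}_0}(C \geq u \mid X = x) > 0$,
\begin{align*}
M_{0,1}(du \mid x) &= \pr_{\mathbbmsl{P}_0}(T \in du \mid X = x)\, \pr_{\mathbbmsl{P}_0}(C \geq u \mid X = x), \\
1 - M_0(u^- \mid x) &= \pr_{\mathbbmsl{P}_0}(T \geq u \mid X = x)\, \pr_{\mathbbmsl{P}_0}(C \geq u \mid X = x),
\end{align*}
and the $C$-survivor factors cancel in the ratio $L_0(du \mid x) = M_{0,1}(du \mid x)/\{1 - M_0(u^- \mid x)\}$, yielding $L_0(du \mid x) = \mathbbmsl{L}_0(du \mid x)$. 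Condition \ref{condition:region of integration}(a) ensures this positivity for all $u \leq \tau_0$, $P_0$-almost surely in $x$, so $L_0(\cdot \mid x) = \mathbbmsl{L}_0(\cdot \mid x)$ on $[0, \tau_0]$. Invoking the product-integral representation \citep{Gill1990} then delivers $F_0(t \mid x) = \mathbbmsl{F}_0(t \mid x)$ for $t \in [0, \tau_0]$. Combined with $\mathbbmsl{Q}_0 = Q_0$ (since $X$ is fully observed) and the representation $\mathbbmsl{H}_0(x_0, t_0) = \int \I(u \leq x_0) \mathbbmsl{F}_0(t_0 \mid u) \mathbbmsl{Q}_0(du)$, this gives $\mathbbmsl{H}_0 = H_0$ as measures on $\mathcal{X} \times [0, \tau_0]$.

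To extend this pointwise measure equality to the integrals, I would decompose the $m$-fold integration domain according to which coordinates $t_j$ fall in $[0, \tau_0]$ versus $(\tau_0, \infty)$. By symmetry of $\omega$ and $\theta$ combined with Condition \ref{condition:region of integration}(b), each kernel is constant in every $t_j$ that exceeds $\tau_0$, so on each piece of the decomposition the kernel depends on the tail coordinates only through the corresponding $x_j$ and on the in-window coordinates $(x_j, t_j)$ with $t_j \leq \tau_0$. The in-window factors match by the previous step; the tail factors reduce to integrals against the marginal law of $X$ restricted to $T > \tau_0$, and these marginal laws agree because both full laws have unit total mass and coincide on $\mathcal{X} \times [0, \tau_0]$. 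Summing over the pieces yields $\mathbbmsl{V}_1(f, \mathbbmsl{H}_0) = V_1(f, H_0)$; the same argument applied with $\theta$ gives $\mathbbmsl{V}_2(\mathbbmsl{H}_0) = V_2(H_0)$, and dividing yields $\mathbbmsl{V}(f, \mathbbmsl{H}_0) = V(f, H_0)$.

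The main obstacle is the first step: carefully tracking the conditional independence and positivity at the measure level to extract hazard equality on the identifiable window. Once the conditional distribution functions agree on $[0, \tau_0]$, the product-integral representation and the Fubini-style decomposition exploiting the kernels' tail constancy are essentially routine.
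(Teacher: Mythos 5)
Your proposal follows essentially the same two-step route as the paper: first establish $\mathbbmsl{L}_0 = L_0$ on $(0,\tau_0]$ by cancelling the $\pr(C \geq u \mid X)$ factors in the hazard ratio under Conditions \ref{condition:independent censoring} and \ref{condition:region of integration}(a), lift this to $\mathbbmsl{F}_0 = F_0$ on $(0,\tau_0]$ via the product integral, and then decompose the $m$-fold integral using the tail constancy of the kernels in Condition \ref{condition:region of integration}(b) together with $\mathbbmsl{Q}_0 = Q_0$. The only cosmetic difference is that you partition the $m$-fold domain into $2^m$ pieces indexed by which coordinates are in the window, whereas the paper works with the coarser split into $\mathcal{R} := \{\mathcal{X}\times(0,\tau_0]\}^m$ and its complement; your finer decomposition is arguably the more transparent way to justify that the ``tail'' contributions reduce to integrals against quantities that are identified (namely $Q_0$ and $1-F_0(\tau_0\midd\cdot)$), which the paper compresses into a single step introducing the auxiliary kernel $\tilde\omega$. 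This is a matter of exposition rather than substance; the proof strategy is the same.
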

	This result is essentially a consequence of the identification of the hazard under conditionally independent censoring \citep{Beran1981}. Table \ref{tab:kernels} gives the form of $\omega$ and $\theta$ for each of the example measures introduced in Section \ref{sec:VIM review}, which can be shown to satisfy Condition \ref{condition:region of integration}(b) with $\tau_0 = \tau$. 
	
	\begin{table}
		\centering
		\begin{tabular}{lccc}  
			\toprule
			VIM & $m$ & $\omega[\{f(x_1), t_1\}, \dots, \{f(x_m), t_m\}]$ & $\theta(t_1,\dots,t_m)$\\\midrule
			AUC &2&$\I\{f(x_1) > f(x_2), t_1 \leq \tau, t_2 > \tau\} $&$\I(t_1 \leq \tau, t_2 > \tau) $\\
			Brier score &1&$ \left\{f(x) - \I(t > \tau)\right\}^2$&1\\ 
			Survival time MSE & 1 & $\left\{f(x) - (t \wedge \tau)\right\}^2$&1\\
			C-index &2&$\I\{f(x_1) > f(x_2),t_1 \leq \tau,t_2 > t_1\}$& $\I(t_1 \leq \tau,t_2 > t_1)$ \\ \bottomrule
		\end{tabular}
		\caption{Degree and kernel functions for example VIMs. The AUC and C-index kernels can be symmetrized by adding a second evaluation of the kernel with arguments exchanged, and dividing by two.}
		\label{tab:kernels}
	\end{table} 
	
	\subsection{Characterizing the oracle prediction function}
	
	Because $\mathbbmsl{V}(f, \mathbbmsl{H}_0) = V(f, H_0)$ as a result of Theorem \ref{thm:ID}, it follows that $\argmax_{f \in \mathcal{F}}\mathbbmsl{V}(f, \mathbbmsl{H}_0) = \argmax_{f\in \mathcal{F}}V(f, H_0)$, that is, any maximizer of the ideal data predictiveness measure is also a maximizer of the corresponding observed data predictiveness measure and vice-versa.  Often, as is the case for AUC, Brier score, and survival time MSE, $\mathbbmsl{f}_0(x)$ can be shown to depend on the evaluation of $t \mapsto \mathbb{F}_0(t \midd x)$ in $(0, \tau]$, and so it is identified under Conditions \ref{condition:independent censoring} and \ref{condition:region of integration}(a). The residual oracle prediction function $\mathbbmsl{f}_{0,s}$ can in many cases be written pointwise as $x \mapsto E_{\mathbbmsl{P}_0}\left\{\mathbbmsl{f}_0(X) \midd X_{-s} = x_{-s}\right\}$ and is therefore identified provided $\mathbbmsl{f}_0$ is. We define $f_0 \in \argmax_{f \in \mathcal{F}}V(f, P_0)$ and $f_{0,s} \in \argmax_{f \in \mathcal{F}_{s}}V(f, P_0)$ as the observed data oracle and residual oracle prediction functions, respectively. From here and on, we denote the identified oracle predictiveness as $v_0 := v_{0,1}/v_{0,2}$ with $v_{0,1} := V_1(f_0, H_0)$ and $v_{0,2} := V_2(H_0).$ Similarly, we denote the identified residual oracle predictiveness as $v_{0,s}:=v_{0,s,1}/v_{0,2}$ with $v_{0,s,1} := V_1(f_{0,s}, H_0)$. The identified importance of $X_s$ relative to the full feature vector is then denoted $\psi_{0,s} := v_{0} - v_{0,s}$.
	
	Oracle prediction functions for the stated example measures are given in Appendix 2. Notably, for AUC, Brier score, and survival time MSE, $f_0(x)$ is available in closed form and depends on the evaluation of $t \mapsto F_0(t \midd x)$. For the C-index, it is not straightforward to characterize the oracle prediction function or even to verify its existence. In the Supplementary Material, we outline a strategy for numerical optimization of the C-index, which is based on using gradient boosting to optimize an estimate of a smoothed, differentiable approximation of the C-index. While this strategy works well in all our numerical experiments, there is a need for additional theoretical work to characterize settings in which a maximizer exists. To our knowledge, this has not been fully addressed in the literature.
	
	\section{Estimation and inference}\label{sec:estimation and inference}
	
	\subsection{Overview}\label{sec:estimation overview}
	The observed data predictiveness measure depends on the unknown nuisances $f_0$ and $F_0$. Using flexible learning methods to estimate these nuisances decreases the risk of inconsistent estimation without the need for restrictive assumptions on the distribution of $T$ given $X$. Given estimators  $f_n$ and $F_n$, and denoting by $Q_n$ the empirical distribution of $X$ based on the data, we might consider the plug-in estimator $v_n := V_1(f_n, H_n)/V_2(H_n)$, where
	\begin{align*}
		H_n:(x_0,t_0)\mapsto \int \I(u \leq x_0) F_n(t_0\midd u) Q_n(du) = \frac{1}{n}\sum_{i=1}^{n}\I(X_i \leq x_0)F_n(t_0 \midd X_i)\ .
	\end{align*}
	In general, we cannot expect $f_n$ and $F_n$ to converge at rate $n^{-1/2}$. The fact that $f_0$ is a maximizer of $f \mapsto V(f, H_0)$ implies that its estimation makes no first-order contribution to the asymptotic behavior of $v_n$ \citep{Williamson2021b}. Nonetheless, we must pursue a debiasing strategy to account for the excess bias possibly induced by $F_n$, since failing to do so often precludes convergence in distribution at rate $n^{-1/2}$ and makes it difficult to achieve valid inference, that is, to build confidence intervals with nominal coverage and tests with proper type I error control. This is in contrast  to the simpler data structure considered in \cite{Williamson2021b}, where the ability to use an empirical plug-in estimate negates the need for debiasing and thereby dramatically simplifies the inferential procedure.
	
	There are several possible approaches to debiasing $v_n$. Here, we choose to debias $H_n$ and then plug the resulting estimator into the functional $H\mapsto V_1(f_n, H)/V_2(H)$. This involves first obtaining an approximation $b_n$ of the bias of $H_n$, which can be done using techniques from efficiency theory. The one-step debiased estimator $H_n^* := H_n - b_n$ can then be used to construct the predictiveness estimator $v_n^*:=V_1(f_n, H_n^*)/V_2(H_n^*)$. As we will show, this approach yields an estimator that is asymptotically linear and nonparametric efficient under certain conditions and enjoys a double-robustness property. A similar result holds for estimation of $v_{0,s}$. We outline the one-step VIM estimation procedure here and provide details in the remainder of this section. 
	\begin{step}
		\upshape
		Compute estimator $F_n$ of conditional distribution function of $T$ given  $X$ and  construct plug-in joint distribution function estimator $H_{n}$. 
	\end{step}
	\begin{step}
		\upshape
		Construct one-step debiased estimator $H_n^* = H_n - b_{n}$. 
	\end{step}
	\begin{step}
		\upshape
		Compute estimators $f_n$ and $f_{n,s}$ of full and residual oracle prediction functions. 
	\end{step}
	\begin{step}
		\upshape
		Compute VIM estimator $V_1(f_n,H_n^*) /V_2(H_n^*) - V_1(f_{n,s},H_n^*) /V_2(H_n^*)$. 
	\end{step}
	
	We note that we may have alternatively considered debiasing $V_1(f_n,H_n)/V_2(H_n)$ rather than $H_n$. However, doing so results in  predictiveness measure estimators with weaker robustness properties; this is discussed and illustrated in Section \ref{sec:sims}.
	
	\subsection{Efficiency}
	
	To perform valid inference, we must first study the predictiveness measure as a real-valued mapping defined on $\mathcal{M}_{\text{obs}}$. For a generic $P \in \mathcal{M}_{\text{obs}}$, we define $f_P$, $L_P$, $M_P$, $F_P$, $Q_P$ and $H_P$ similarly as $f_0$, $L_0$, $M_0$, $F_0$, $Q_0$ and $H_0$ with $P$ substituted in place of $P_0$. 
	
	We first present the efficient influence function of $P \mapsto V(f_P, H_P)$, which plays a key role in our proposed procedure. This efficient influence function involves the conditional survival function of $C$ given $X$, which is identified under Condition \ref{condition:independent censoring} as $\pr_{\mathbbmsl{P}_0}(C \geq t \midd X = x) = G_0(t \midd x)$ with $G_P(t \midd x) := \prodi_{(0,t)}[1 - \{1 - M_P(u^- \midd x)\}^{-1}M_{P,0}(du \midd x)]$ and $M_{P,0}(t \midd x) := \pr_{P}(Y \leq t, \Delta = 0 \midd X = x)$.
	This nuisance parameter appears in the key quantity
	\begin{align*}
		z:=(x,y,\delta) \mapsto \chi_{0}(z,t) := -S_0(t \midd x)\left\{\frac{\delta \I_{[0,t]}(y)}{S_0(y \midd x)G_0(y \midd x)} - \int_0^{t \wedge y}\frac{L_0(du \midd x)}{S_0(u \midd x)G_0(u \midd x)}\right\}
	\end{align*}
	with $S_0 := 1- F_0$. Up to an inverse weighting term, the above mapping is the influence function of the stratified Kaplan-Meier estimator \citep{Reid1981}. We also define the mappings
	\begin{align*}
		\omega_{0,1}: (x,t) &\mapsto \int \cdots \int \omega[\{f_0(x),t\},\{f_0(x_{2}),t_{2}\}, \dots, \{f_0(x_m), t_m\}]\prod_{j=2}^{m}H_0(dx_j, dt_j)\ ;\\
		\theta_{0,1}: t &\mapsto \int \cdots \int \theta(t,t_2, \dots, t_m)\prod_{j=2}^{m}H_0(dx_j, dt_j)\ .
	\end{align*} 
	Condition \ref{condition:efficient influence function}, used in the derivation of the efficient influence function, is given in Appendix 1. This condition essentially requires a certain degree of smoothness of $(f,H)\mapsto V(f,H)$ around $(f_0,H_0)$.
	
	\begin{theorem}[\sc{efficient influence function}]\label{thm:efficient influence function}\noproofref
		Suppose that there exists $\eta \in (0, \infty)$ such that $G_0(\tau \midd X) \geq \eta$ $P_0$--almost surely and that Condition \ref{condition:efficient influence function} holds. Then, $P \mapsto V(f_P, H_P)$ is pathwise differentiable at $P_0$ relative to the nonparametric model $\mathcal{M}_{\text{obs}}$, with efficient influence function given by $
		\phi_0:=(\phi_{\omega,0}-v_0\phi_{\theta,0})/v_{0,2}$,
		where
		\begin{align*}
			&\phi_{\omega,0}: z \mapsto m\left[\int \omega_{0,1}\left(x, t\right)\left\{F_0(dt\midd x) - \chi_{0}(z,dt)\right\}-v_{0,1}\right]; \\
			&\phi_{\theta,0}: z \mapsto m\left[\int \theta_{0,1}\left( t\right)\left\{F_0(dt\midd x) - \chi_{0}(z,dt)\right\}-v_{0,2}\right].
		\end{align*}
	\end{theorem}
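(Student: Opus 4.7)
The plan is to compute the pathwise derivative of $\epsilon \mapsto V(f_{P_\epsilon}, H_{P_\epsilon})$ along a regular one-dimensional submodel $\{P_\epsilon\} \subset \mathcal{M}_{\text{obs}}$ through $P_0$ with score $g$, express it as $E_{P_0}\{\phi_0(Z) g(Z)\}$, and verify that the candidate $\phi_0$ lies in the nonparametric tangent space $L^2_0(P_0)$ so that it is in fact the efficient influence function. First, I would split the total derivative into a contribution flowing through $f_{P_\epsilon}$ and one through $H_{P_\epsilon}$. Because $f_0$ maximizes $f \mapsto V(f, H_0)$ over $\mathcal{F}$, and Condition 3 provides the smoothness needed to differentiate $(f,H) \mapsto V(f,H)$ at $(f_0, H_0)$, a standard variational argument shows that the $f$-direction derivative vanishes at $(f_0, H_0)$, so I may treat $f_P$ as fixed at $f_0$ in the subsequent computation, as in \citet{Williamson2021b}.

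Second, applying the quotient rule to $V = V_1/V_2$, the efficient influence function of $P \mapsto V(f_0, H_P)$ equals $(\phi_{\omega,0} - v_0 \phi_{\theta,0})/v_{0,2}$, where $\phi_{\omega,0}$ and $\phi_{\theta,0}$ are the efficient influence functions of $P \mapsto V_1(f_0, H_P)$ and $P \mapsto V_2(H_P)$. Each is an $m$-fold integral of a symmetric kernel against $H_P$. Differentiating the product $\prod_{j=1}^m H_{P_\epsilon}(dx_j, dt_j)$ with respect to $\epsilon$ yields, by symmetry of $\omega$ and $\theta$, $m$ identical terms; integrating out the remaining $m-1$ arguments collapses $\omega$ to $\omega_{0,1}$ and $\theta$ to $\theta_{0,1}$. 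The problem therefore reduces to finding the efficient influence function of the linear functional $P \mapsto \int \phi(x,t) H_P(dx,dt) = E_P\{\phi(X,T)\}$, with the centering constants $-v_{0,1}$ and $-v_{0,2}$ arising from the mean-zero projection.

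Third, for this EIF under the coarsening-at-random model, I decompose $E_P\{\phi(X,T)\} = \int \{\int \phi(x,t) F_P(dt \midd x)\} Q_P(dx)$. The $Q_P$ component contributes $\int \phi(X,t) F_0(dt \midd X) - E_{P_0}\{\phi(X,T)\}$, since $X$ is fully observed. The $F_P(\cdot \midd x)$ component, under conditionally independent right censoring with $G_0(\tau \midd X) \geq \eta$, has the well-known Kaplan-Meier martingale representation: the EIF of $P \mapsto F_P(t \midd x)$ evaluated at $x = X$ is exactly $-\chi_0(Z, t)$, obtained by propagating the cumulative-hazard score $\{dN(u) - \I(Y \geq u) L_0(du \midd X)\}/\{S_0(u \midd X) G_0(u \midd X)\}$ through the product integral linking $L_0$ and $F_0$ via the functional delta method. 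Integrating $\phi$ against $F_0(dt \midd X) - \chi_0(Z, dt)$ then yields the bracketed expression in $\phi_{\omega,0}/m$; specializing $\phi$ to $\omega_{0,1}$ and $\theta_{0,1}$ and multiplying by $m$ gives the stated expressions.

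The main obstacle is the third step: correctly deriving the EIF of $F_P(t \midd x)$ under right censoring and verifying that $\chi_0(Z, \cdot)$ lies in the nonparametric tangent space. The interchange of product-integral differentiation with pathwise differentiation, together with the integrability afforded by the positivity bound $G_0(\tau \midd X) \geq \eta$, is the technical crux. The maximizer argument in step one is standard but warrants care because $\mathcal{F}$ need not be a linear space; I would handle this by restricting to Hadamard-differentiable perturbations of $f_0$ inside $\mathcal{F}$ and invoking Condition 3 to vanish the resulting directional derivative at the optimum. Once these ingredients are in place, assembling $\phi_{\omega,0}$, $\phi_{\theta,0}$, and the quotient rule produces $\phi_0 = (\phi_{\omega,0} - v_0 \phi_{\theta,0})/v_{0,2}$, which by construction is mean-zero and square-integrable under $P_0$, hence the efficient influence function.
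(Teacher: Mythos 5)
Your proposal follows the same route as the paper's proof: apply the quotient rule to $V_1/V_2$, use symmetry of the kernels to collapse the $m$-fold product-measure derivative to $m$ copies of a single linear functional of $H_P$, split that functional's variation into a $Q_P$ part and an $F_P(\cdot\mid x)$ part, obtain the conditional-hazard influence term $-\chi_0(Z,t)$ from the Hadamard differentiability of the product integral (Gill--Johansen), and finally invoke Condition \ref{condition:efficient influence function} (as in Williamson et al.) to show that estimation of $f_0$ contributes nothing first-order. The only thing you leave implicit that the paper works out in full is the explicit pathwise derivative of $L_\epsilon$ and the product-integral chain rule yielding $\tfrac{d}{d\epsilon}F_\epsilon(t\mid x)\big|_{\epsilon=0} = -\iint\chi_0(z,t)\,\dot\ell_0(y,\delta,x)\,P_0(dy,d\delta\mid x)$, together with the verification that $z\mapsto \int\omega_{0,1}(x,t)\chi_0(z,dt)$ is already $P_0$-centered.
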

	If $F_0$ were known, in view of the theory of $V$--statistics \citep{Serfling1980}, $V_1$ and $V_2$ would represent generalized moment functionals with uncentered nonparametric efficient influence functions $x \mapsto m\int \omega_{0,1}(x,t)F_0(dt \midd x)$ and $x \mapsto m\int \theta_{0,1}(t)F_0(dt \midd x)$. The functions $z \mapsto m\int \omega_{0,1}(x,t)\chi_{0}(z,dt)$ and $z \mapsto m\int \theta_{0,1}(t)\chi_{0}(z,dt)$ therefore represent contributions from estimating $F_0$. As indicated before, the fact that $f_0$ is unknown has no impact on the form of the efficient influence function. 
	
	\subsection{Estimation of the joint distribution function}
	
	As indicated above, to produce an estimator of $v_0$, we first derive a debiased machine learning strategy for estimation of $H_0$. This strategy relies on the efficient influence function of $P \mapsto H_P(x_0, t_0)$ at $P_0$ relative to $\mathcal{M}_{\text{obs}}$, given by $z \mapsto \bar{\varphi}_{0,x_0,t_0}(z):= \varphi_{0,x_0,t_0}(z) - H_0(x_0,t_0)$
	with $\varphi_{0,x_0,t_0}(z) := \I(x \leq x_0)\{F_0(t_0 \midd x) - \chi_{0}(z,t_0)\}$; details are provided in Lemma 1 in the Supplementary Material. Here, $(x_0,t_0) \in \mathcal{X}\times (0, \tau_0]$ denotes a generic point in the identified support of $(X,T)$. Our approach proceeds by constructing an estimator $P_n$ of $P_0$. Since the representation of $\varphi_0$ includes the variation-independent nuisances $F_0$ and $G_0$, a natural parametrization of $P_0$ is $(F_0, G_0, Q_0)$. Of course, any estimate of $F_0$ implies estimates of $S_0$ and $L_0$ as well. We define $P_n$ as the estimator of $P_0$ constructed from estimators $F_n$ and $G_n$ as well as the empirical covariate distribution $Q_n$. We denote by $\varphi_{n}$ and $\bar{\varphi}_{n}$ the estimated uncentered and centered influence functions obtained by replacing $F_0$ and $G_0$ by $F_n$ and $G_n$, respectively, in $\varphi_0$ and $\bar{\varphi}_0$. We define $\varphi_{\infty}$ and $\bar{\varphi}_{\infty}$ similarly but instead replacing $F_0$ and $G_0$ by $F_\infty$ and $G_\infty$, the in-probability limits of $F_n$ and $G_n$, respectively. The notation $F_\infty$ and $G_\infty$ is used to emphasize that $F_n$ and $G_n$ may not necessarily be consistent for the intended nuisance functions $F_0$ and $G_0$. We study the behavior of the plug-in estimator $H_n(x_0,t_0)$ by decomposing
	\begin{equation}
		\begin{aligned}
			H_n(x_0,t_0) - H_0(x_0,t_0) \ =&\ \ \frac{1}{n}\sum_{i=1}^{n}\left\{\bar{\varphi}_{\infty, x_0,t_0}(Z_i) - \int \bar{\varphi}_{\infty, x_0,t_0}(z)P_0(dz)\right\} \label{eq:full linearization F_0}\\
			&\hspace{0.75cm}+ R_{x_0,t_0}(P_n, P_0) + C_{n,x_0,t_0}(P_n, P_\infty)  - \frac{1}{n}\sum_{i=1}^{n}\bar{\varphi}_{n,x_0,t_0}(Z_i) \ ,
		\end{aligned}
	\end{equation}
	where we define $R_{x_0,t_0}(P_n,P_0) := H_n(x_0,t_0) - H_0(x_0,t_0) +\int\bar{\varphi}_{n,x_0,t_0}(z)P_0(dz)$ and $C_{n,x_0,t_0}(P_n, P_\infty) := \int \{\bar{\varphi}_{n,x_0,t_0}(z) - \bar{\varphi}_{\infty,x_0,t_0}(z)\}(\mathbb{P}_n - P_0)(dz)$. The terms $R_{x_0,t_0}(P_n,P_0)$ and $C_{n,x_0,t_0}(P_n, P_\infty)$ are second-order and hence asymptotically negligible under some conditions. 
	The leading linear term of (\ref{eq:full linearization F_0}) is the empirical average of a mean-zero transformation of $Z_i$, $i=1,\ldots,n$. The final term represents the excess bias due to flexibly estimating $F_0$ and $G_0$, and its presence indicates that $H_n(x_0,t_0)$ may fail to achieve $n^{1/2}$--consistency. The one-step estimator of $H_0(x_0,t_0)$ is then
	\begin{align*}
		H_n^*(x_0,t_0):=H_n(x_0,t_0) + \frac{1}{n}\sum_{i=1}^{n}\bar{\varphi}_{n,x_0,t_0}(Z_i) = \frac{1}{n}\sum_{i=1}^{n}\varphi_{n,x_0,t_0}(Z_i)\ .
	\end{align*}
	In light of (\ref{eq:full linearization F_0}), under certain conditions,  $H_n^*(x_0,t_0)$ is an asymptotically linear estimator of $H_0(x_0,t_0)$ with influence function $\bar{\varphi}_{\infty,x_0,t_0}$. 
	
	The expected second-order behavior of $R_{x_0,t_0}(P_n, P_0)$ is a consequence of the pathwise differentiability of $P \mapsto H_P(x_0,t_0)$. For $C_{n,x_0,t_0}(P_n, P_\infty)$ to be negligible, it is often assumed that $F_n$ and $G_n$ fall in Donsker classes in large samples, essentially requiring that the algorithm used to produce $F_n$ and $G_n$ not be too complex. Cross-fitting can circumvent the need for such a condition \citep{Zheng2011, Chernozhukov2018}. For $K$--fold cross-fitting, the data indices $1, \dots, n$ are partitioned into $K$ subsets, say $\mathcal{I}_1, \dots, \mathcal{I}_K$ of roughly equal sizes $n_1,\ldots,n_K$. Then, for each $k \in \{1,\dots, K\}$, the observations with index in $\mathcal{I}_k$ are set aside as test data; estimators $F_{n,k}$ and $G_{n,k}$ are constructed using the rest of the data, yielding estimated influence function $\varphi_{n,k,x_0,t_0}$; and the fold-specific estimator $H^*_{n,k}(x_0,t_0):=n_k^{-1}\sum_{i\in \mathcal{I}_k}\varphi_{n,k,x_0,t_0}(Z_i)$ is computed. Finally, the cross-fitted estimator is taken to be the average of all $K$ fold-specific estimators. Below, we use a similar procedure to produce a cross-fitted VIM estimator. 
	
	\subsection{VIM estimation}\label{subsec:vim estimation}
	
	To estimate the oracle predictiveness $v_0$, we consider the cross-fitted estimator $v_n^* := v_{n,1}^*/v_{n,2}^*$ with $v_{n,1}^*:= K^{-1}\sum_{k=1}^{K}V_1(f_{n,k}, H^*_{n,k})$ and $v_{n,2}^* := K^{-1}\sum_{k=1}^{K}V_2(H^*_{n,k})$, 
	where $f_{n,k}$ denotes an estimator of $f_0$ based on the same data used to construct $F_{n,k}$ and $G_{n,k}$. 
	We study the behavior of $v_n^*$ by separately considering $v_{n,1}^*$ and $v_{n,2}^*$. For $v_{n,1}^*$, we can decompose
	\begin{align*}
		v_{n,1}^* &-  v_{0,1} = \frac{1}{K}\sum_{k=1}^{K}\left\{V_1(f_0, H^*_{n,k}) - V_1(f_0, H_0)\right\} + \frac{1}{K}\sum_{k=1}^{K}\left\{V_1(f_{n,k}, H_0) - V_1(f_0, H_0)\right\} + r_n\ ,   
	\end{align*}
	where $r_{n} := K^{-1}\sum_{k=1}^{K}[\{V_1(f_{n,k}, H^*_{n,k}) - V_1(f_{n,k}, H_0)\} - \{V_1(f_0, H^*_{n,k}) - V_1(f_0, H_0)\}]$. In the first term on the right-hand side above, the prediction function argument is fixed at $f_0$, so the behavior of this term is determined by that of $H^*_{n,k}$ via the mapping $H \mapsto V_1(f_0, H)$. The second term is the contribution from the estimation of $f_0$, which, due to the optimality property of $f_0$, is expected to be second-order. The final term $r_n$ is a difference-of-differences term that will also be second-order under some conditions. For $v_{n,2}^*$, which does not involve $f_0$, we simply have that $v_{n,2}^* - v_{0,2} = K^{-1}\sum_{k=1}^{K}\{V_2(H^*_{n,k}) - V_2(H_0)\}$.
	
	Conditions under which the following large-sample results hold are stated in Appendix 1. Our first result establishes the consistency of $v_n^*$. While we require that $f_n$ converge to $f_0$ for this result, we only require that one of $F_n$ and $G_n$ converge to $F_0$ and $G_0$, respectively. In this sense, $v_n^*$ is a doubly-robust estimator. Notably, in many cases it is possible to estimate $f_0$ itself in a doubly-robust manner; we discuss this in Section \ref{subsec:nuisance estimation}.
	
	\begin{theorem}[\sc{double-robust consistency}]\label{thm:consistency}\noproofref
		If Conditions \ref{condition:bounded away from zero}--\ref{condition:continuity of V} hold, then $v_n^*$ converges in probability to $v_0$. 
	\end{theorem}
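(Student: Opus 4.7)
The plan is to prove convergence of the numerator $v_{n,1}^*$ and denominator $v_{n,2}^*$ separately to $v_{0,1}$ and $v_{0,2}$, then apply the continuous mapping theorem (invoking a condition guaranteeing $v_{0,2}$ is bounded away from zero, which I expect to be among Conditions \ref{condition:bounded away from zero}--\ref{condition:continuity of V}). Using the decomposition of $v_{n,1}^* - v_{0,1}$ already given in the text, together with the simpler identity $v_{n,2}^* - v_{0,2} = K^{-1}\sum_{k=1}^K\{V_2(H_{n,k}^*) - V_2(H_0)\}$, the problem reduces to three tasks: (i) show $V_1(f_0, H_{n,k}^*) - V_1(f_0, H_0) \to_p 0$ and $V_2(H_{n,k}^*) - V_2(H_0) \to_p 0$; (ii) show $V_1(f_{n,k}, H_0) - V_1(f_0, H_0) \to_p 0$; and (iii) show $r_n \to_p 0$.

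The central and hardest step is (i), which is where the double-robust character of the estimator is unlocked. The key is to bound $H_{n,k}^*(x_0,t_0) - H_0(x_0,t_0)$ uniformly over $(x_0,t_0)$ in a relevant subset, and to show this upper bound tends to zero in probability provided one of $F_n$ or $G_n$ is consistent. To do so, I would revisit identity (1) for the fold-specific estimator: the leading linear term is an empirical mean of mean-zero terms and vanishes by a law of large numbers; the cross-fit term $C_{n,k,x_0,t_0}(P_{n,k}, P_\infty)$ vanishes by the cross-fitting argument (the training and evaluation samples are independent, so no Donsker condition is needed); and the remainder $R_{x_0,t_0}(P_{n,k}, P_0)$ factorizes (via a standard identity for pathwise-differentiable parameters under censoring) into an integrand that is a product of $F_n - F_0$ and $G_n - G_0$ terms, so consistency of either one suffices for it to vanish. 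Continuity of $H \mapsto V_1(f_0, H)$ and $H \mapsto V_2(H)$ (supplied by Condition \ref{condition:continuity of V}) then transports convergence of $H_{n,k}^*$ to convergence of the first term in the decomposition of $v_{n,1}^* - v_{0,1}$ and of $v_{n,2}^* - v_{0,2}$.

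For (ii), the optimality of $f_0$ as maximizer of $f \mapsto V_1(f, H_0)/v_{0,2}$ implies that small deviations of $f_{n,k}$ from $f_0$ produce only second-order changes in $V_1(\cdot, H_0)$; combined with a consistency hypothesis on $f_{n,k}$ this gives negligibility. For (iii), the remainder $r_n$ is a difference-of-differences: writing $V_1(f_{n,k}, H_{n,k}^*) - V_1(f_{n,k}, H_0) - V_1(f_0, H_{n,k}^*) + V_1(f_0, H_0)$ as an integral against $H_{n,k}^* - H_0$ of a kernel involving $f_{n,k} - f_0$ shows the term is $o_p(1)$ as a product of two vanishing factors; the relevant smoothness is again provided by Condition \ref{condition:continuity of V}.

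Combining (i)--(iii) yields $v_{n,1}^* \to_p v_{0,1}$ and $v_{n,2}^* \to_p v_{0,2}$, and continuous mapping gives $v_n^* \to_p v_0$. The main obstacle I expect is a careful handling of step (i): writing the remainder $R$ in the explicit product form that delivers double-robustness requires manipulating the product-integral representation of $F_0$ through the efficient influence function $\bar{\varphi}_{0}$, and controlling the resulting cross-product uniformly in $(x_0,t_0)$ over the support needed by the outer integrals defining $V_1$ and $V_2$. Once this factorization is in hand, the remaining arguments are routine applications of cross-fitting, continuous mapping, and the optimality of $f_0$.
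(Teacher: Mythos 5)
Your overall strategy — establishing $v_{n,1}^* \to_p v_{0,1}$ and $v_{n,2}^* \to_p v_{0,2}$ separately, then applying the continuous mapping theorem — matches the paper's proof, and your identification of the key factorization of the linearization remainder (the Duhamel-type expansion in which $F_n - F_0$ multiplies $G_n - G_0$) is precisely where double-robustness is unlocked in the paper's Lemmas 2 and 3. The cross-fitting argument for the empirical process term is likewise on target.

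There is, however, a genuine gap in how you propose to transport convergence of $H_{n,k}^*$ to convergence of the functionals. You attribute the continuity of $H \mapsto V_1(f_0,H)$ and $H \mapsto V_2(H)$ to Condition \ref{condition:continuity of V}, but that condition is a Lipschitz bound in the \emph{prediction function} argument only: it controls $|V(f_j, H_0) - v_0|$ by $\|f_j - f_0\|_{\mathcal{F}}$ with $H_0$ held fixed, and says nothing about perturbations of $H$. If you tried to carry out your step (i) using only Condition \ref{condition:continuity of V}, you would have no way to pass from $\|H_{n,k}^* - H_0\|_\infty \to 0$ to $V_1(f_0, H_{n,k}^*) - V_1(f_0, H_0) \to 0$. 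The paper obtains $H$-continuity by a different mechanism: the bound $|\int q_1 \, dq_2| \leq \kappa \|q_2\|_\infty \|q_1\|_v^*$ from \citet{Gill1993}, combined with the variation-norm bounds on $\omega$ and $\theta$ in Condition \ref{condition:variation norm of kernel}, yields $|V_1(f_{n,k}, H_{n,k}^*) - V_1(f_{n,k}, H_0)| = O_P(\|H^*_{n,k,f_{n,k}} - H_{0,f_{n,k}}\|_\infty)$ and similarly for $V_2$. Note also the change of variables to the push-forward joint distribution $H_{P,f}$ indexed by $(y_0, t_0)$ rather than $(x_0, t_0)$: because $\omega$ acts through $f(x)$, uniform control of $H_{n,k}^*(x_0,t_0)$ alone does not suffice, and Lemma \ref{lemma:uniform consistency of joint} is stated for $H^*_{n,k,f_{n,k}}$ for exactly this reason. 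Condition \ref{condition:continuity of V} is instead the tool for your step (ii), where it directly gives $|V_1(f_{n,k}, H_0) - V_1(f_0, H_0)| \leq J_1\|f_{n,k} - f_0\|_{\mathcal{F}} = o_P(1)$; invoking optimality of $f_0$ is unnecessary at the consistency level (that second-order property is reserved for the $o_P(n^{-1/2})$ rate in the asymptotic-linearity theorem). Finally, the paper actually sidesteps your three-term decomposition with $r_n$ entirely, writing $v_{n,1}^* - v_{0,1}$ as the sum of $K^{-1}\sum_k\{V_1(f_{n,k}, H_0) - V_1(f_0, H_0)\}$ and $K^{-1}\sum_k\{V_1(f_{n,k}, H_{n,k}^*) - V_1(f_{n,k}, H_0)\}$, which handles consistency in two steps; your route would still work, but establishing $r_n = o_P(1)$ effectively reduplicates the work needed for the other two terms.
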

	
	Our next result establishes the asymptotic linearity of $v_n^*$ under slightly stronger conditions. 
	Often, under regularity conditions on $P_0$ and $\mathcal{F}$, the requirement that estimation of $f_0$ make no first-order contribution to the behavior of $v_n^*$ is satisfied because $f_0$ is an optimizer of $f \mapsto V(f, P_0)$. We provide details for the example measures in the Supplementary Material. We also require that the nuisance functions be estimated at sufficiently fast rates: roughly speaking, we need that $f_{n,k} - f_0 = o_P(n^{-1/4})$, $(G_{n,k} - G_0)(F_{n,k} - F_0) = o_P(n^{-1/2})$ and $(f_{n,k} - f_0)(\bar{\varphi}_{n,k} - \bar{\varphi}_0) = o_P(n^{-1/2})$ in an appropriate sense.
	
	\begin{theorem}[\sc{asymptotic linearity}]\label{thm:asymptotic linearity}\noproofref
		If Conditions \ref{condition:bounded away from zero}, \ref{condition:nuisance limits}, and \ref{condition:optimality}--\ref{condition:second order remainder} hold, and in addition, $F_\infty = F_0$ and $G_\infty = G_0$, then $v_n^* - v_0 = n^{-1}\sum_{i=1}^{n}\phi_0(Z_i) + o_P(n^{-1/2})$. If Condition \ref{condition:efficient influence function} also holds, then $v_n^*$ is nonparametric efficient.
	\end{theorem}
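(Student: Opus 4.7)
The plan is to combine the ratio decomposition $v_n^* - v_0 = v_{n,2}^{*-1}\{(v_{n,1}^* - v_{0,1}) - v_0(v_{n,2}^* - v_{0,2})\}$, which uses $v_{0,1} = v_0 v_{0,2}$, with the three-term expansion of $v_{n,1}^*$ displayed just before the theorem, and then to invoke the asymptotic linearity of each fold-specific $H_{n,k}^*$ established in (\ref{eq:full linearization F_0}). Theorem \ref{thm:consistency} already gives $v_{n,2}^* \to v_{0,2}$ in probability with $v_{0,2} > 0$, so once the numerator is shown to be $O_P(n^{-1/2})$ a Slutsky-type step reduces the problem to linearizing $v_{n,1}^*$ and $v_{n,2}^*$ separately and dividing by $v_{0,2}$; combining then produces the stated influence function $\phi_0 = (\phi_{\omega,0} - v_0\phi_{\theta,0})/v_{0,2}$.

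For $v_{n,1}^*$, I would first discard the two non-leading terms of the displayed decomposition. The $f_n$-contribution $K^{-1}\sum_k\{V_1(f_{n,k},H_0) - V_1(f_0,H_0)\}$ is $o_P(n^{-1/2})$ by the optimality-type Condition \ref{condition:optimality}, which together with the assumed $o_P(n^{-1/4})$ rate on $f_{n,k}-f_0$ forces the difference to be purely quadratic in the nuisance error; and the difference-of-differences $r_n$ is $o_P(n^{-1/2})$ by Condition \ref{condition:second order remainder} applied to the product $(f_{n,k}-f_0)(\bar{\varphi}_{n,k}-\bar{\varphi}_0)$ under $F_\infty=F_0$, $G_\infty=G_0$. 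What remains is the linear-in-$H$ term $K^{-1}\sum_k\{V_1(f_0, H_{n,k}^*) - V_1(f_0, H_0)\}$. Because $H \mapsto V_1(f_0, H)$ is a product-of-measures functional with symmetric kernel, a Hoeffding-style expansion around $H_0$ gives
\begin{align*}
V_1(f_0, H_{n,k}^*) - V_1(f_0, H_0) = m\int \omega_{0,1}(x,t)(H_{n,k}^* - H_0)(dx,dt) + Q_{n,k},
\end{align*}
where $Q_{n,k}$ is quadratic in $H_{n,k}^* - H_0$ and is $o_P(n^{-1/2})$ under Condition \ref{condition:continuity of V}. An analogous argument, with $\theta_{0,1}$ replacing $\omega_{0,1}$ and no $f_n$-contribution, handles $v_{n,2}^* - v_{0,2}$.

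The final step is to insert the cross-fit expansion of $H_{n,k}^*$. Specializing (\ref{eq:full linearization F_0}) to fold $k$ under $F_\infty = F_0$ and $G_\infty = G_0$, sample splitting forces the empirical-process term $C_{n,x_0,t_0}$ to vanish conditionally on the out-of-fold data \citep{Zheng2011, Chernozhukov2018}, while pathwise differentiability of $P\mapsto H_P(x_0,t_0)$ renders $R_{x_0,t_0}(P_n,P_0)$ a product of $F_{n,k}-F_0$ and $G_{n,k}-G_0$, which is $o_P(n^{-1/2})$ by Condition \ref{condition:second order remainder}. This delivers $H_{n,k}^*(x_0,t_0) - H_0(x_0,t_0) = n_k^{-1}\sum_{i\in\mathcal{I}_k}\bar{\varphi}_{0,x_0,t_0}(Z_i) + o_P(n^{-1/2})$ uniformly over $(x_0,t_0)\in\mathcal{X}\times(0,\tau_0]$. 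Substituting into the linearized expressions, interchanging integration and summation, and averaging over folds produces $n^{-1}\sum_i \phi_{\omega,0}(Z_i)$ from the $v_{n,1}^*$ piece and $n^{-1}\sum_i \phi_{\theta,0}(Z_i)$ from the $v_{n,2}^*$ piece. Assembling gives $v_n^* - v_0 = n^{-1}\sum_i \phi_0(Z_i) + o_P(n^{-1/2})$. Under the additional Condition \ref{condition:efficient influence function}, Theorem \ref{thm:efficient influence function} identifies $\phi_0$ as the nonparametric efficient influence function, so the asymptotically linear estimator is automatically nonparametric efficient.

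The main obstacle I anticipate is making the uniform-over-$(x_0,t_0)$ expansion of $H_{n,k}^*$ rigorous enough that integration against $\omega_{0,1}$ and $\theta_{0,1}$ preserves the $o_P(n^{-1/2})$ remainder: this requires both the cross-fitted empirical-process term $C_{n,x_0,t_0}$ and the pathwise-differentiability remainder $R_{x_0,t_0}$ to shrink uniformly in $(x_0,t_0) \in \mathcal{X}\times(0,\tau_0]$, not merely pointwise. A secondary difficulty is ensuring that the quadratic remainder from the $V$-statistic linearization of $V_1$ is genuinely $o_P(n^{-1/2})$, which relies on a sufficiently fast rate for $H_{n,k}^* - H_0$ in an appropriate supremum norm that Condition \ref{condition:continuity of V} and the nuisance-rate assumptions must together deliver.
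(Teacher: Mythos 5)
Your proposal captures the paper's overall skeleton: the ratio/delta-method reduction, the three-term expansion of $v_{n,1}^*$, control of the $f_n$-contribution via the optimality and rate conditions, and substitution of the uniform linearization of $H_{n,k}^*$ into the linear-in-$H$ term. However, two details are mis-attributed or incomplete in ways that matter.

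First, you control the quadratic ($V$-statistic) remainder $Q_{n,k}$ in the Hoeffding-style expansion of $V_1(f_0, H_{n,k}^*) - V_1(f_0, H_0)$ ``under Condition \ref{condition:continuity of V}.'' That condition is a Lipschitz bound in the $f$-argument of $V$, not in the $H$-argument, and it is not even among the hypotheses of Theorem \ref{thm:asymptotic linearity}. The paper's Lemma 7 establishes that $r_{\omega,n,k}$ and $r_{\theta,n,k}$ are $o_P(n_k^{-1/2})$ using the uniform sectional variation norm bounds of Condition \ref{condition:variation norm of kernel} together with the supremum-norm $n^{-1/2}$-rate of $H^*_{n,k,f_0} - H_{0,f_0}$ from Lemma 6. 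The degenerate-$V$-statistic piece $r_{n,\omega,k,1}$ and the cross terms $r_{n,\omega,k,2}$, $r_{n,\omega,k,3}$ each need this machinery; nothing about $f \mapsto V(f, H_0)$ being Lipschitz in $f$ addresses them.

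Second, your treatment of $r_n = K^{-1}\sum_k[\{V_1(f_{n,k}, H^*_{n,k}) - V_1(f_{n,k}, H_0)\} - \{V_1(f_0, H^*_{n,k}) - V_1(f_0, H_0)\}]$ accounts only for the $P_0^m h_{3,n,k}$ cross term, which Condition \ref{condition:second order remainder} does control. But the per-fold decomposition of $r_n$ in the paper is $(\mathbb{P}_{n,k}^m - P_0^m)h_{1,n,k} - (\mathbb{P}_{n,k}^m - P_0^m)h_{2,n,k} + P_0^m h_{3,n,k}$. The first two are centered empirical $V$-processes with data-dependent kernels; their negligibility at rate $o_P(n^{-1/2})$ requires Condition \ref{condition:weak consistency} together with a conditional Chebyshev argument comparing to the degenerate $U$-statistic $\mathbb{P}_{n,k,*}^m h_{j,n,k}$. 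Your proposal omits these entirely, and without them $r_n = o_P(n^{-1/2})$ does not follow. These two points aside, your route coincides with the paper's, and your anticipated difficulty about uniformity in $(x_0,t_0)$ is exactly what the Donsker-class and weak-convergence lemmas (Lemmas 4--6) are there to handle.
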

	
	By substituting $f_{n,k}$, $f_{0}$ and $\mathcal{F}$ for $f_{n,k,s}$, $f_{0,s}$ and $\mathcal{F}_s$, respectively, a similar result to Theorem \ref{thm:asymptotic linearity} holds for the cross-fitted one-step estimator $v_{n,s}^*$ of the residual oracle predictiveness $v_{0,s}$, with resulting influence function $\phi_{0,s}$ obtained by replacing $f_0$ with $f_{0,s}$ in the form of $\phi_0$. To estimate the variable importance $\psi_{0,s}$, we use $\psi_{n,s}^* := v_{n}^* - v_{n,s}^*$, which is itself asymptotically linear and nonparametric efficient with influence function $\phi_0-\phi_{0,s}$. When $\psi_{0,s} \neq 0$, that is, under non-null importance, $n^{1/2}(\psi_{n,s}^* - \psi_{0,s})$ converges in distribution to a mean-zero normal random variable with variance $\sigma^2_{0,s} := \Var_{P_0}\{\phi_0(Z) - \phi_{0,s}(Z)\}>0$. In Algorithm \ref{alg:alternative, xfit} in Appendix 3, we propose a cross-fitted estimator $\sigma^2_{n,s}$ of $\sigma^2_{0,s}$ and describe how to conduct inference under non-null importance. Denoting by $z_{q}$ the $q$th quantile of the standard normal distribution, a confidence interval with asymptotic coverage $1 - \alpha$ is $(\psi_{n,s} - z_{1 - \alpha/2} \sigma_{n,s}n^{-1/2}, \psi_{n,s} + z_{1 - \alpha/2} \sigma_{n,s}n^{-1/2})$.
	
	When $\psi_{0,s} = 0$, the influence function of $\psi_{n,s}^*$ degenerates to the zero function. In this case, $\psi_{n,s}^*$ converges to zero at a rate faster than $n^{-1/2}$, and the inferential procedure described in Algorithm \ref{alg:alternative, xfit} may fail. This boundary problem has been a topic of recent study \citep{Dai2022, Lundborg2022, Hudson2023}. Sample splitting \citep{Dai2022, Williamson2021b} provides a simple means of constructing tests of the null hypothesis that $\psi_{0,s} = 0$ as well as confidence intervals that remain valid under that null. The sample splitting procedure, outlined in Algorithm \ref{alg:alternative, ss xfit} in Appendix 3, entails estimating $v_0$ and $v_{0,s}$ using non-overlapping portions of the data. The VIM estimator obtained as the difference between estimators of $v_0$ and $v_{0,s}$ constructed on independent data has non-degenerate behavior under the null and thus can easily be used to achieve valid inference. 
	
	\subsection{Estimation of nuisance parameters}\label{subsec:nuisance estimation}
	
	Due to the forms of the efficient influence functions $\phi_0$ and $\phi_{0,s}$, our proposed procedure requires estimation of the conditional time-to-event distribution function $F_0$ and conditional censoring survival function $G_0$ over the entire interval $(0, \tau]$. Notably, this is true even if the chosen predictiveness measure itself can be shown to depend only on the evaluation of $t \mapsto F_0(t \midd x)$ at landmark time $\tau$, as is the case for AUC\ or Brier score. Therefore, we recommend using a learning approach that targets the entire distribution function rather than its evaluation at a single time point. Furthermore, to reduce the risk of inconsistent estimation due to the use of misspecified models, we focus on flexible learning methods for survival data, including the approaches of \citet{Westling2023} and \cite{Wolock2024}.
	
	Estimation of $f_0$ must be handled on a case-by-case basis. For many predictiveness measures, such as AUC, Brier score, and survival time MSE, $f_0$ can be written in terms of $F_0$, and so an estimate of $F_0$ can be used to produce an estimate of $f_0$ without fitting any additional algorithm. This strategy depends on consistent estimation of $F_0$. \citet{Rubin2007} propose a doubly-robust pseudo-outcome regression procedure to estimate the conditional mean of a transformation of $T$ given $X$. Therefore, when the oracle prediction function takes such a form, the doubly-robust pseudo-outcome approach can be used to produce a doubly-robust estimator of $f_0$. This is a two-step procedure: first, estimates of $F_0$ and $G_0$ are used to construct the pseudo-outcomes, and then, an additional regression is fit to estimate the desired conditional mean. In Section \ref{sec:sims}, we illustrate this procedure for AUC\ variable importance. When an oracle is not available in closed form, as for the C-index, an additional optimization scheme can be performed to produce an estimate of $f_0$. In the Supplementary Material, we outline a numerical optimization approach for the C-index based on gradient boosting. 
	
	When $f_{0,s}(x)$ can be written as $E_{P_0}\{f_{0}(X)\midd X_{-s} = x_{-s}\}$, as for AUC, Brier score, and survival time MSE, we can construct full oracle predictions and then regress them on the reduced feature vector to obtain an estimate of $f_{0,s}$. When numerical optimization is required, as for the C-index, the class of potential optimizers can be restricted to those depending only on the residual covariates $X_{-s}$. 
	
	\section{Numerical experiments}\label{sec:sims}
	
	\subsection{Simulation setup}
	
	We conducted numerical studies to evaluate the performance of our proposed inferential procedure for survival VIMs. In all experiments, we began by generating independent replicates of $(X, T, C)$, where $X$ was a $p$--dimensional covariate vector generated from a multivariate normal distribution with mean vector $(0,\dots, 0)$ and covariance matrix $\Sigma$. The values of $p$ and $\Sigma$ varied between simulation settings. Given covariate vector $X = x$, the event time $T$ and censoring time $C$ were simulated from the log-normal accelerated failure time models 
	\begin{align*}
		\log T = \tfrac{5}{10}x_1 -\tfrac{3}{10}x_2 + \tfrac{1}{10}x_1x_2 - \tfrac{1}{10}x_3x_4 + \tfrac{1}{10}x_1x_5 + \varepsilon_T\ ,\hspace{0.2in} \log C = \beta_{0,C} - \tfrac{2}{10}x_1 + \tfrac{2}{10}x_2 + \varepsilon_C\ ,
	\end{align*}
	where $\varepsilon_T$ and $\varepsilon_C$ were independent standard normal random variables, and where $\beta_{0,C}$ was chosen to achieve the desired censoring rate in each simulation setting, with $X_1$ and $X_2$ informing the censoring mechanism. For each observation, the observed follow-up time $Y:=\min\{T, C\}$ and event indicator $\Delta:=\I(T \leq C)$ were computed. 
	
	We evaluated our estimation procedure using three different strategies for estimating nuisance functions $F_0$ and $G_0$: random survival forests \citep{Ishwaran2008}, survival Super Learner \citep{Westling2023}, and global survival stacking \citep{Wolock2024}. The latter two are meta-learners that estimate an optimal linear combination of base learners using cross-validation. Details on nuisance estimation, including algorithm libraries and selection of tuning parameters, are given in the Supplementary Material. The integrals appearing in $v_n^*$ and $v_{n,s}^*$ were approximated as sums on a finite grid consisting of the observed event times.
	
	\subsection{Overall performance of the proposed procedure}
	
	First, we investigated the performance of our procedure under several qualitatively distinct scenarios, which are summarized in Table \ref{tab:scenarios} with true VIM values for each scenario given in the Supplementary Material.  In all scenarios, we generated 500 random datasets of size $n \in\{500, 750, \dots,1500\}$. We considered VIMs based on AUC, Brier score, and C-index. For the landmark time VIMs, the oracle prediction functions were estimated as $x\mapsto F_n(\tau \midd x)$. To estimate the residual oracle prediction functions, predictions $F_n(\tau \midd X_1),\dots,F_n(\tau \midd X_n)$ were regressed on the reduced covariate vectors $X_{1,-s},\ldots,X_{n,-s}$ using Super Learner with the same library used in global survival stacking. For the C-index, $F_n(\cdot \midd x)$ was plugged into the numerical optimization procedure described in the Supplementary Material to estimate the oracle and residual oracle prediction functions. A custom family implemented in the \texttt{mboost} package was used for numerical optimization. We implemented our estimation procedure with five-fold cross-fitting and also included non-cross-fitted comparators, that is, using the proposed procedures with $K = 1$. Landmark times for AUC\ and Brier score were set to $\tau \in \{0.5, 0.9\}$, corresponding to the 50th and 75th population quantiles of observed event times. The restriction time for the C-index was set to $\tau = 0.9$. We show only a subset of the results for Scenario I here; additional results for Scenario I and all results for Scenarios II--IV can be found in the Supplementary Material.
	
	\begin{table}\centering
		\begin{tabular}{cccccc}
			\toprule
			Scenario & $p$ & Null features & Corr. features & Censoring rate & Results \\\midrule
			I& 25  &Yes &Yes&50\%&Main text, Supp. Material\\
			II & 25& Yes & No & 50\%&Supp. Material\\
			III & 5 & No & No & 50\%& Supp. Material\\
			IV & 25 & Yes & No & 30\%, 40\%, \ldots, 70\% & Supp. Material \\ \bottomrule
		\end{tabular}\caption{Summary of simulation scenarios.}
		\label{tab:scenarios}
	\end{table} 
	
	In Scenario I, we set $p = 25$ and $\Sigma$ to be a $25 \times 25$ matrix with 1 on the diagonal. Off-diagonal elements were set to 0, except for $\Sigma_{1,6} = \Sigma_{6,1} = 0.7$ and $\Sigma_{2,3} = \Sigma_{3,2} = -0.3$. Five of 25 features had non-zero importance; we considered the importance of correlated features $X_1$ and $X_6$, as well as the group importance of $(X_1,X_6)$. While $X_6$ does not directly affect the event time $T$, the true importance of $X_1$ is altered due to its association with $X_6$. We set $\beta_{0,C} = 0$ to achieve a 50\% censoring rate. We used the sample splitting procedure described in Algorithm \ref{alg:alternative, ss xfit} to compute point and standard error estimates, from which we computed nominal 95\% Wald-type confidence intervals. In addition, we computed $p$-values corresponding to the null hypothesis of zero importance versus the one-sided alternative. We evaluated performance using the empirical bias scaled by $n^{1/2}$; the empirical variance scaled by $n$ and divided by the theoretical asymptotic variance; the empirical confidence interval coverage; and the average confidence interval width (for $X_1$ and $(X_1,X_6)$) or empirical rejection probability (for $X_6$).
	
	In Fig. \ref{fig:scenario2_small}, we show the results for estimating the importance of $X_1$, $X_6$ and $(X_1,X_6)$ based on AUC\ at $\tau = 0.5$ in Scenario I. The scaled bias for the cross-fitted estimators is small, with global survival stacking and survival Super Learner generally outperforming the random survival forests estimator. Compared to their cross-fitted counterparts, the non-cross-fitted estimators tend to have inflated bias, especially for estimating the importance of $X_1$ and $X_6$. Besides the non-cross-fitted random survival forests estimator, all estimators have variance roughly proportional to sample size and near the expected asymptotic variance. The confidence interval coverage approaches the nominal level with increasing sample size when cross-fitting is used. The non-cross-fitted procedures are mildly anticonservative for global survival stacking and survival Super Learner and highly anticonservative for random survival forests. For $X_1$ and $(X_1, X_6)$, the width of all confidence intervals decreases with increasing sample size, as expected. For $X_6$, the type I error is generally controlled near the 0.05 level when using cross-fitting. Altogether, these results demonstrate that our proposed procedure has strong performance in samples of realistic size, and they underscore the importance of cross-fitting when using flexible machine learning nuisance estimators. Under the null hypothesis, the sample splitting procedure is well calibrated.
	
	\begin{figure}
		\centering
		\includegraphics[width=\linewidth]{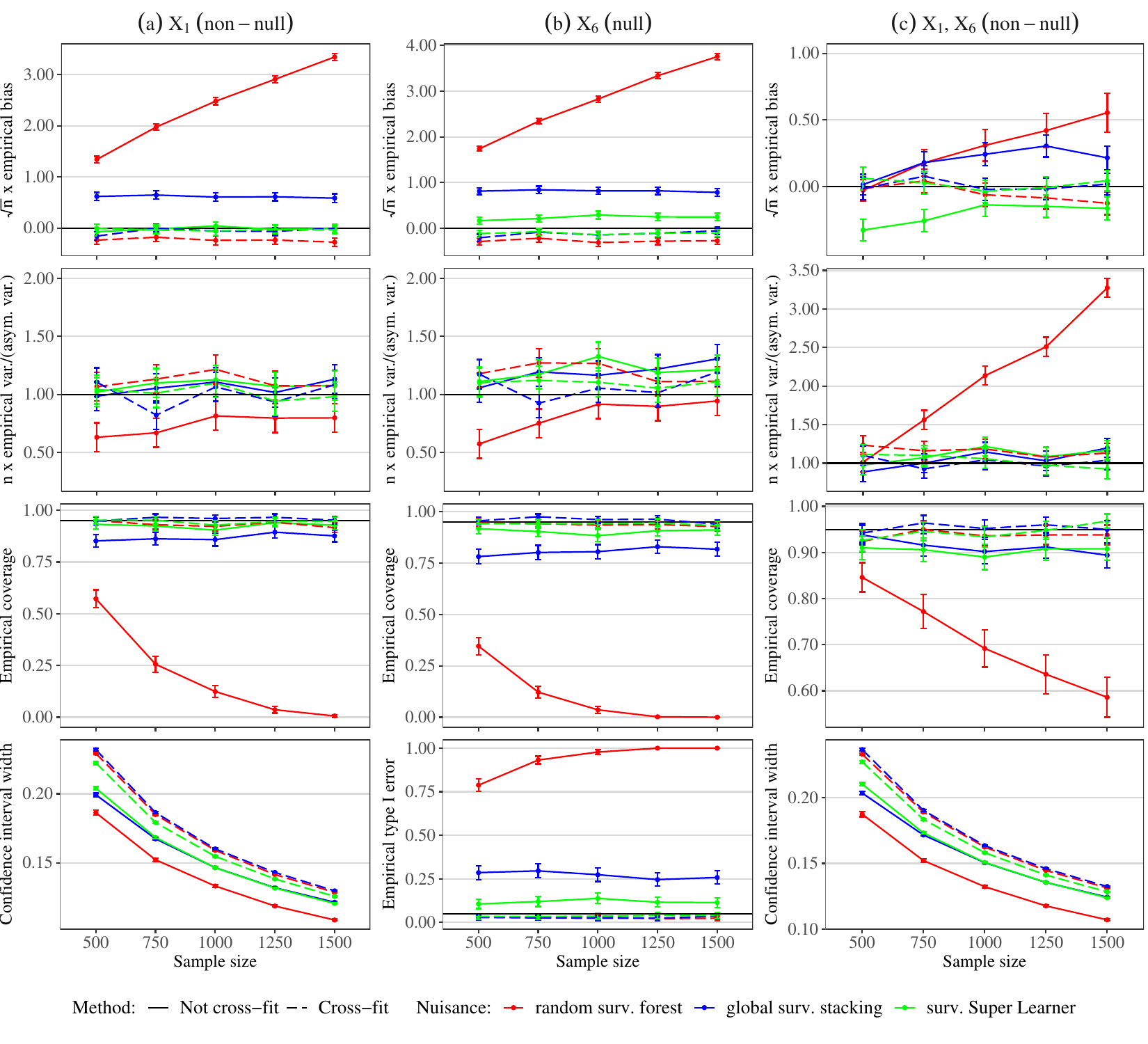}
		\caption{Performance of the one-step AUC VIM estimator in Scenario I in terms of bias, variance, confidence interval coverage, interval width, and type I error. Columns correspond to performance for estimating (a) the non-null importance of $X_1$, (b) the null importance of $X_6$, and (3) the non-null importance of $(X_1,X_6)$. Colors denote different nuisance estimators, which were used to estimate both event and censoring distributions. Solid and dashed lines denote non-cross-fitted and cross-fitted estimators, respectively. Vertical bars represent 95\% confidence intervals taking into account Monte Carlo error.}
		\label{fig:scenario2_small}
	\end{figure}
	
	In the Supplementary Material, we present results in Scenario I for the Brier score and C-index VIMs, as well as all results in Scenario II, in which the features were uncorrelated; Scenario III, in which all features were important and sample splitting was not used; and Scenario IV, in which we varied the censoring rate. Overall, we see that Algorithm \ref{alg:alternative, xfit} performs well when the features of interest have non-null importance. We also see that the operating characteristics of the procedure are largely consistent across censoring levels. Unsurprisingly, the impact of censoring on estimator variance and confidence interval width is larger at the later landmark time, when the censoring rate is higher. The proposed procedure performs similarly well with uncorrelated features as with correlated features. The results also show that, even when the oracle prediction function is not available in closed form, as for the C-index, a numerical optimization approach can yield good results. 
	
	\subsection{Performance of the proposed procedure under nuisance misspecification}
	
	Next, we investigated the robustness properties of our procedure. The doubly-robust behavior of $v_n^*$ is affected by the debiasing approach and by the method used for estimating the oracle prediction function $f_0$; we consider both of these factors here. 
	
	As noted in Section \ref{sec:estimation overview}, there are several ways to debias the plug-in estimator defined as $v_n := V_1(f_n,H_n)/V_2(H_n)$. The double-robustness of $v_{n}^*$, established in Theorem \ref{thm:consistency}, is essentially a consequence of the double-robustness of the one-step estimator $H_{n,k}^*(x_0,t_0)$ uniformly over $(x_0,t_0)$; see Lemma 7 in the Supplementary Material. It is also possible to use the efficient influence function of $P\mapsto V(f_P, P)$ to directly debias $v_n$. However, for measures with $m \geq 2$, this approach generally yields a remainder term that, while second-order, does not involve the conditional censoring distribution. (In special cases, such as when $m = 1$ and $\theta(t_1) = 1$, the two approaches to debiasing $v_n$ yield identical doubly-robust estimators.) Therefore, for this remainder term to tend to zero in probability, consistent estimation of $F_0$ is required. We provide a theoretical illustration of this phenomenon in the Supplementary Material. We refer to our proposed method as the \textit{indirect one-step} estimator, and the method based on debiasing using the efficient influence function of $P \mapsto V(f_P, P)$ as the \textit{direct one-step} estimator. 
	
	Regardless of the chosen debiasing method, consistency of $v_n^*$ depends on consistent estimation of $f_0$. As discussed in Section \ref{subsec:nuisance estimation}, in some cases $f_0$ can be estimated in a doubly-robust manner. For example, for AUC\ and Brier score VIMs, the oracle prediction function $x \mapsto F_0(\tau \midd x)$ can be estimated using the doubly-robust pseudo-outcome regression approach of \citet{Rubin2007}, for which only one of the initial nuisance estimators $F_n$ and $G_n$ must be consistent; details for the pseudo-outcome regression procedure are given in the Supplementary Material.  
	
	To illustrate the different robustness properties of these approaches, we performed a numerical study. We generated 500 random datasets of size $n \in \{250, 500, 1000, 2500, 5000\}$ under the settings of Scenario I. We considered the importance of $X_1$ in terms of AUC\ at $\tau = 0.5$. We compared the properties of the indirect and direct one-step estimators and of two approaches to estimating $f_0$: (i) the conditional distribution function method using $F_n(\tau \midd X_1), \ldots, F_n(\tau \midd X_n)$; and (ii) the doubly-robust pseudo-outcome method, where initial estimators $F_n$ and $G_n$ were plugged into the pseudo-outcome mapping, and the estimated pseudo-outcomes were regressed on the feature vector using Super Learner. Our theory indicated that all estimators described here should exhibit consistency under inconsistent estimation of $G_0$, while only the indirect one-step estimator using doubly-robust pseudo-outcome regression should be robust to inconsistent estimation of $F_0$.
	
	We considered three misspecification settings. In the first, both $F_0$ and $G_0$ were estimated using survival Super Learner. In the second setting, $F_0$ was estimated using survival Super Learner but $G_0$ was deliberately inconsistently estimated. This was achieved as follows: for each $X_i$, we set $G_n(t_1 \midd  X_i),  \ldots, G_n(t_J \midd X_i)$ to an evenly spaced grid of values between 1 and 0.1, where $\{t_1,\ldots,t_J\}$ denotes the grid of times used to approximate integrals appearing in construction of the one-step estimator. Thus, $G_n$ was a valid conditional survival function but was not estimated using the data and was the same for each $X_i$. In the third setting, $G_0$ was estimated using survival Super Learner while $F_0$ was deliberately inconsistently estimated as described (with the grid direction reversed since $F_0$ is a distribution function). We used Algorithm \ref{alg:alternative, ss xfit} with five-fold cross-fitting. For each simulation replicate, we computed the error of the VIM estimator. We computed the empirical bias and empirical variance of the VIM estimator across simulation replicates.
	
	In Fig. \ref{fig:robust} we show the results of this experiment. From the left column, we observe that when neither $F_0$ nor $G_0$ is deliberately inconsistently estimated, all four estimators perform well, with decreasing variance and bias tending to zero with increasing sample size. From the center column, where the censoring estimator $G_n$ is inconsistent, we observe a similar overall pattern. In the right column, where $F_0$ is deliberately inconsistently estimated, only the indirect one-step estimator using the doubly-robust pseudo-outcome method appears consistent; all other estimators have non-vanishing bias. Overall, these results support our theoretical results on the robustness of our proposed procedure to severely inconsistent estimation of either nuisance. We note that the pseudo-outcome approach carries an increased computational burden, as it requires fitting $K$ additional regressions when $K$-fold cross-fitting is used. 
	
	\begin{figure}
		\centering
		\includegraphics[width=\linewidth]{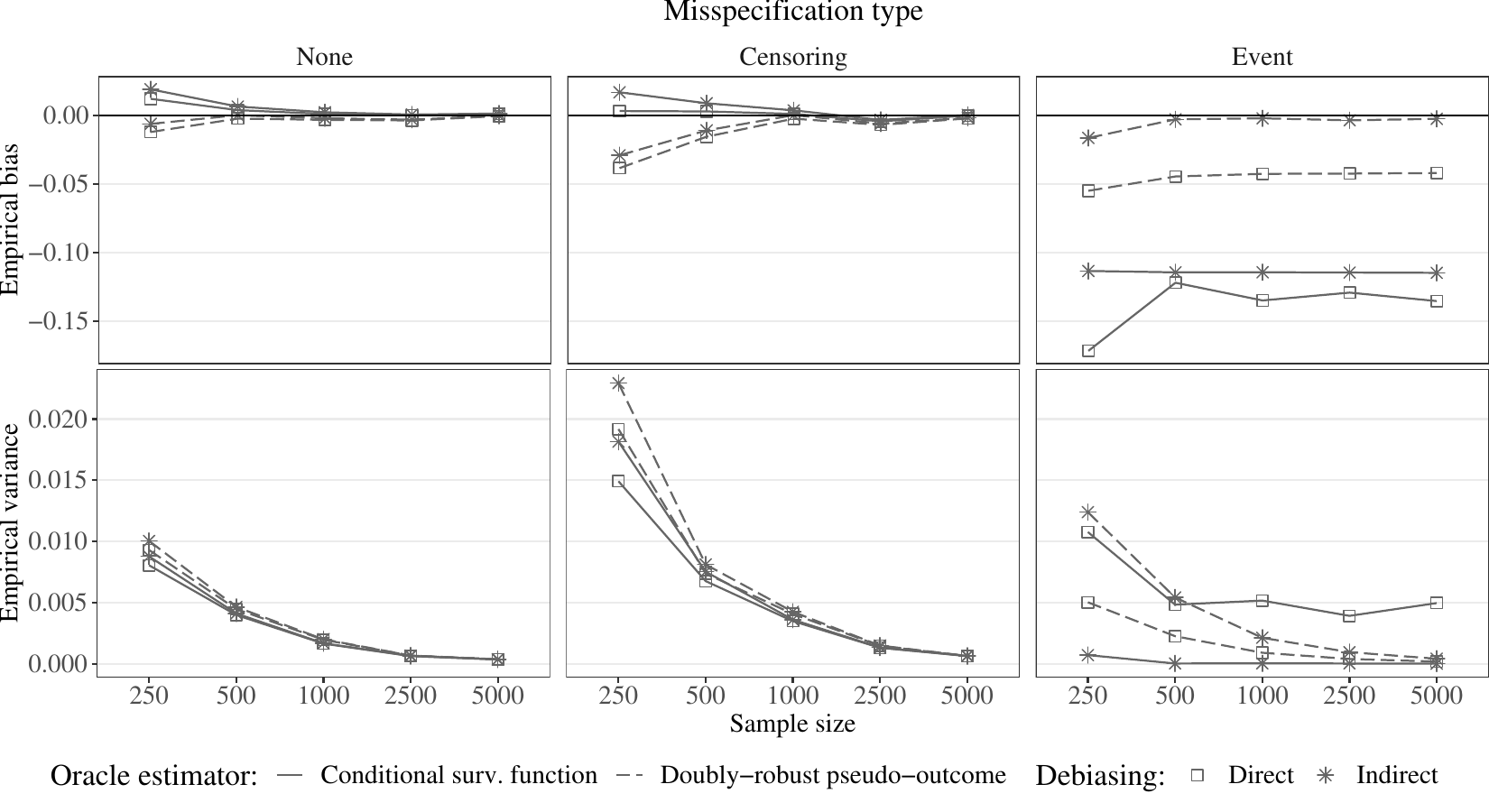}
		\caption{Comparison of robustness of one-step AUC\ VIM estimation procedures. In the left column, both conditional event and censoring distributions were estimated consistently. In the center and right columns, the conditional censoring estimator and conditional event estimator, respectively, were deliberately misspecified. Squares correspond to the direct one-step estimator and stars to the proposed indirect one-step estimator. Solid lines correspond to using the conditional survival function estimator of the oracle prediction function and dashed lines to using the doubly-robust pseudo-outcome approach.}
		\label{fig:robust}
	\end{figure}
	
	\section{Variable importance in HVTN 702}\label{sec:data analysis}
	
	The HVTN 702 vaccine trial \citep{Gray2021} included 5404 participants, with 2704 and 2700 individuals assigned to the vaccine and placebo groups, respectively. Participants were healthy adults between the ages of 18 and 35 years. By the time the trial was terminated when nonefficacy stopping criteria were met, around 5\% of participants had been diagnosed with HIV acquisition and around 7\% had been lost to follow-up; the remainder were not diagnosed with HIV acquisition during follow-up. Secondary analyses of data from this trial included development of a ``baseline risk score" for the 24-month probability of HIV acquisition diagnosis using covariates measured at enrollment, and a subsequent analysis of vaccine efficacy within strata defined by that risk score. The risk score was constructed using a regularized Cox model, but the importance of the baseline covariates for predicting the probability of HIV acquisition diagnosis was never formally assessed. 
	
	In order to investigate variable importance for predicting time to HIV acquisition diagnosis in HVTN 702, we chose to measure predictiveness in terms of landmark time AUC and C-index. This aligns with the primary goal of baseline risk score development in HIV trials: in most trials, feasible time horizons are short enough that only a small percentage of participants acquire HIV, and interest lies in discriminating between those who are at higher versus lower risk of acquisition. We evaluated AUC\ importance at several landmark times to assess if and how variable importance varies over different time horizons. 
	
	We analyzed variable importance in the modified intention-to-treat cohort, consisting of the 5384 participants who underwent randomization and were HIV-1 negative at baseline. The median follow-up time in this cohort was 623 days. Because vaccine efficacy was estimated to be null, with estimated HIV acquisition rates of 3.4 and 3.3 per 100 person-years in the vaccine and placebo arms, respectively, we conducted a pooled analysis of both treatment arms. The original baseline risk score analysis in \citealp{Gray2021} used a landmark time of 24 months. We considered landmark times of 18, 24 and 30 months post-randomization, corresponding to approximately 60\%, 36\% and 13\% of participants still at-risk, respectively. The percentage of participants diagnosed with HIV acquisition at these landmark times were 4.1\%, 5.1\% and 5.4\%, respectively. The majority of censored participants were subject to administrative censoring due to the termination of the trial. The estimated rates of loss to follow-up (early study termination for any reason, including death) at 18, 24 and 30 months were 6.4\%, 7.3\% and 7.5\%, respectively. For the C-index, we used a restriction time of 30 months.
	
	The baseline risk scores in \citet{Gray2021} were constructed separately in participants with male versus female sex assigned at birth. In addition to performing an analysis stratified by sex assigned at birth, we also investigated variable importance in the full cohort. The baseline variables of interest are summarized in Table \ref{tab:features} and detailed fully in \citet{Gray2021}. We considered sex assigned at birth (in the combined cohort only), body mass index, and age, as well as feature groups consisting of geographic variables, variables related to sexual health, variables related to behavior, and variables related to housing. These variable groupings were determined \textit{a priori} on the basis of subject matter knowledge. For convenience, we refer to these as ``feature groups" even when they only include a single feature. Missing covariate values were imputed following the procedure detailed in \citet{Gray2021}.   
	
	\begin{table}
		\centering 
		\begin{tabular}{cp{0.75\linewidth}}
			\toprule
			Feature group & Features included  \\ \midrule
			Geographic region &---\\
			Sex&--- \\
			Age&--- \\
			Body mass index&--- \\
			Sexual health&Prevalent sexually transmitted infection, genital sores, genital discharge \\
			Sexual behavior & Sexual orientation, married or have main sex partner, live with partner, partner has other partners, anal sex, condom use, unprotected sex with alcohol use, sex with HIV+ partner, unprotected sex with HIV+ partner, exchange services for sex \\
			Housing & Urban/rural, formal dwelling, home has 3+ services \\ \bottomrule
		\end{tabular}\caption{Features included in the HVTN 702 VIM analysis. Sex was as assigned at birth.}
		\label{tab:features}
	\end{table} 
	
	We implemented Algorithm \ref{alg:alternative, ss xfit} using five-fold cross-fitting with global survival stacking for nuisance estimation. The learner library was as described for the simulations in Section \ref{sec:sims}. Because cross-fitting and sample splitting introduce additional randomness into the inferential procedure, we followed the proposal of \citet{Dai2022}, which entails performing the procedure multiple times and aggregating the resulting $p$-values in order to stabilize inference. We adapted this proposal to our setting by performing 10 iterations of Algorithm \ref{alg:alternative, ss xfit}, thereby producing $p$-values $\{p_1,\dots,p_{10}\}$, and using the compound Bonferroni-geometric aggregation function proposed by \citet{Vovk2020}, in which the overall $p$-value is computed as $2\min\{10\min(p_1,\ldots,p_{10}, e \tilde{p})\}$ with $\tilde{p}$ the geometric mean of $p_1,\ldots,p_{10}$. This aggregation function has been shown to have good performance for both lightly correlated and highly correlated $p$-values. Confidence intervals were obtained by inverting the resulting hypothesis test. The point estimate presented is the average of the point estimates over random split repetitions. 
	
	We assessed the importance of each feature group relative to that feature group plus geographic variables, comparing a model using only geographic variables to a model using geographic variables plus the feature group of interest. 
	This quantifies the gain in predictiveness from inclusion of the feature(s) of interest compared to a simple base model. We adjusted for treatment arm assignment by including it as a covariate in the estimation of the nuisance parameters, but did not otherwise include it as a predictor in the full and residual oracle prediction functions. In the Supplementary Material, we present the results of an analysis of VIM relative to the full covariate vector. 
	
	The analysis results for VIM relative to the feature group of interest plus geographic variables are shown in Fig. \ref{fig:702 marginal}. In the combined cohort, sex assigned at birth is estimated to be the most important feature in terms of both AUC\ and C-index predictiveness, and is deemed statistically significant for AUC\ at all three time horizons; one-sided hypothesis tests of zero importance yielded $p = 0.005$, $p = 0.012$, and $p = 0.004$ at 18, 24, and 30 months, respectively. This is unsurprising, as the HIV incidence estimates in \citet{Gray2021} were 4.3 acquisition events per 100 person-years in females and 1.3 acquisition events per 100 person-years in males. Sexual health features, sexual behavior features, and age are of secondary importance after sex assigned at birth, with housing features and body mass index ranking last. The feature group rankings and magnitudes are relatively stable over the three horizons for AUC. Among females, sexual health features have the largest estimated importance, although the magnitude of estimated importance is small for all feature groups and all predictiveness measures. Among males, sexual behavior features are estimated to be the most important feature group, with substantially larger point estimates than other feature groups, especially for AUC\ at the 18-month time horizon, for which a one-sided test of zero importance yielded $p = 0.034$. Notably, as there were only 37 HIV acquisition diagnoses observed among males, the confidence intervals in the male cohort are substantially wider than those in the other cohorts. Besides sexual behavior features in males, tests of non-zero importance in the female and male cohorts did not reach statistical significance.
	
	\begin{figure}
		\centering
		\includegraphics[width=\linewidth]{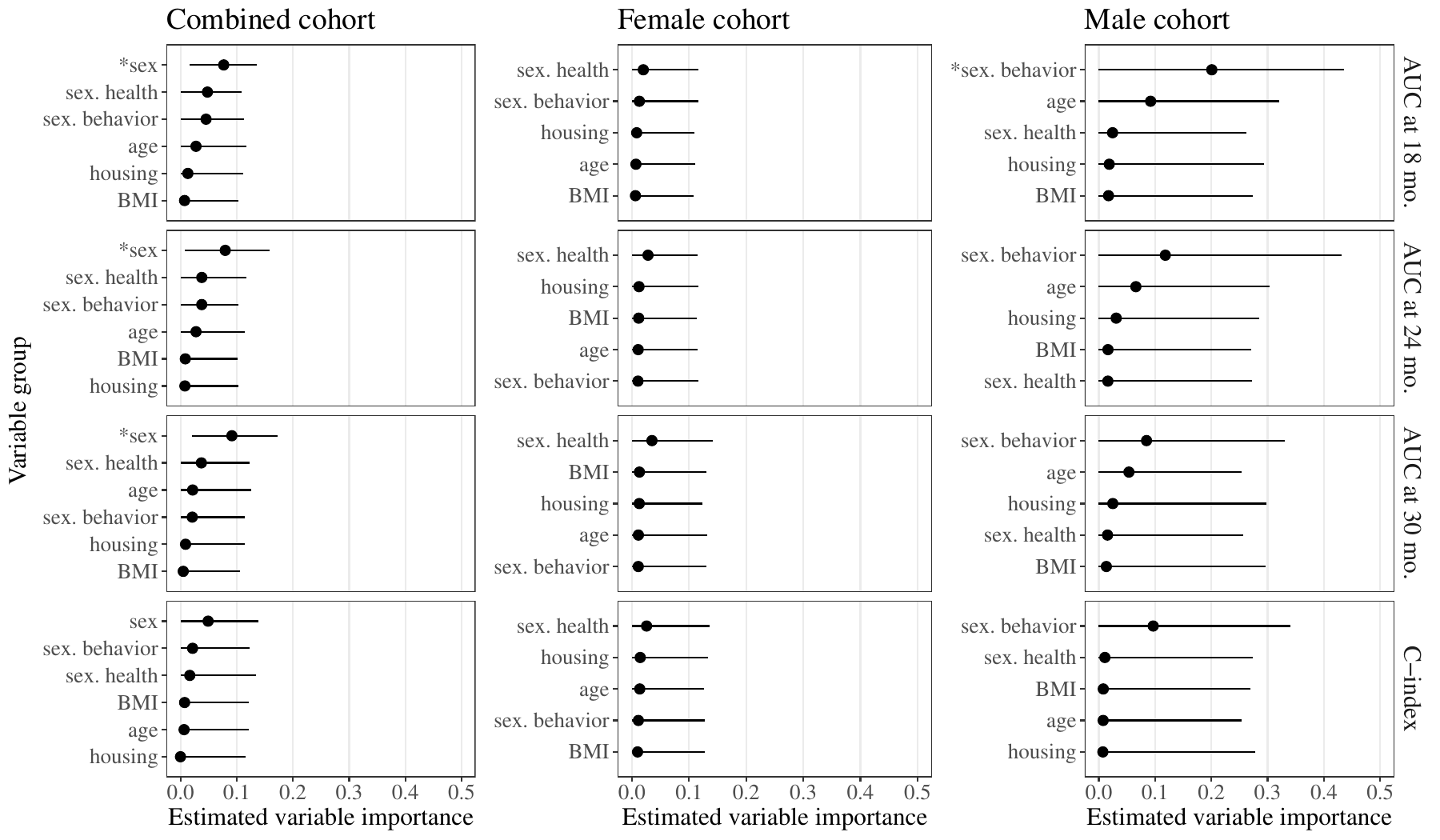}
		\caption{Variable importance in HVTN 702 evaluated relative to the feature group of interest plus geographic features. Rows correspond, from top to bottom, to AUC variable importance evaluated at 18, 24, and 30 months of follow-up, and C-index variable importance. Columns correspond, from left to right, to the combined male and female cohort, female cohort, and male cohort. Feature groups include sex assigned at birth (sex), age, body mass index (BMI), sexual health features, sexual behavior features, and housing features. Feature groups for which a one-sided test of zero importance reached nominal significance at a 0.05 level are marked with an asterisk.}
		\label{fig:702 marginal}
	\end{figure}
	
	\section{Concluding remarks}
	
	We have described a framework for nonparametric estimation and inference on variable importance in survival analysis. We studied a class of VIM parameters encompassing many existing predictiveness measures used in time-to-event settings. As long as a rich enough set of covariates is observed, measures in this class can be identified in terms of the observed data distribution, even under covariate-dependent right censoring. We provided a closed-form EIF for all measures within this class and proposed a debiased machine learning procedure using flexible estimation of nuisance functions. The resulting estimator enjoys doubly-robust consistency with regard to the nuisance functions $F_0$ and $G_0$, and when both $F_0$ and $G_0$ are consistently estimated, it is asymptotically normal when suitably centered and scaled and is also nonparametric efficient.
	
	The C-index predictiveness measure, for which a closed-form oracle prediction function is not readily available, presents theoretical issues that warrant future work. Our proposed inferential procedure requires that estimation of $f_0$ make no first-order contribution to the behavior of the one-step estimator, which can be established for AUC, Brier score, and survival time MSE under mild conditions. While our proposed numerical optimization procedure for the C-index performs well in numerical experiments, whether the second-order condition can be adapted to accommodate measures with no closed-form optimizer remains an open question. 
	
	Another limitation of this work concerns inference under the setting of zero importance. Sample splitting, while pragmatic, is not entirely satisfactory, as splitting the data introduces additional randomness and reduces power to detect non-zero alternatives. A theoretical analysis of the repeated splitting procedure \citep{Dai2022} may be fruitful; previous work on combining $p$-values for testing a single hypothesis has shown that power achieved by a given number of repeated splits and $p$-value aggregation method varies depending on the data-generating mechanism \citep{Vovk2020}. A principled approach to selecting the number of splits and aggregation method in a data-driven manner would be of practical utility. Alternatively, it may be possible to adapt the method of \citet{Hudson2023}, which does not involve sample splitting, to the survival setting.
	
	\section*{Acknowledgement}
	The authors thank the study participants and investigators of the HVTN 702 trial conducted by the HIV Vaccine Trials Network. The authors also thank Dr. Michele Andrasik for valuable scientific guidance. This work was supported by the National Science Foundation Graduate Research Fellowship Program under Grant No. DGE-2140004, by National Heart, Lung, and Blood Institute grant R01-HL137808, and by National Institute of Allergy and Infectious Diseases grants UM1-AI068635 and R37-AI029168. The content is solely the responsibility of the authors and does not necessarily represent the official views of the funding agencies.
	
	\section*{Software and code}
	\label{SM}
	Code to reproduce all simulations and data analyses is available online at \url{https://github.com/cwolock/surv_vim_supplementary}. The proposed methods are implemented in the \texttt{survML} package available at \url{https://github.com/cwolock/survML}.
	
	\section*{Appendix 1}\label{sec:conditions appendix}
	\subsection*{Additional technical conditions}
	We use $\norm{\cdot}_\infty$ and $\norm{\cdot}_v^*$ to denote the supremum norm and uniform sectional variation norm, respectively. We consider $\mathcal{F}$ to be endowed with some norm $\norm{\cdot}_{\mathcal{F}}$. For a generic $P \in \mathcal{M}_{\text{obs}}$ and $f \in \mathcal{F}$, we define the joint distribution function of $(f(X), T)$ evaluated at $(y_0, t_0) \in \mathcal{Y} \times (0,\tau_0]$ as $H_{P,f}: (y_0,t_0) \mapsto \int \I(f(u) \leq y_0)\,F_P(t_0 \midd x)\,Q_P(dx)$.
	As shorthand, we denote $H_{0,f}:=H_{P_0,f}$. We use $S_{n,k}$ and $L_{n,k}$ to denote the estimators of $S_0$ and $L_0$, respectively, that correspond to $F_{n,k}$.
	
	For any fixed event conditional survival function $S$ and corresponding event conditional distribution function $F$, and any fixed censoring conditional survival function $G$, we define $\pi_{S,G}: (z, t) \mapsto F(t \midd x) - \chi_{S,G}(z, t)$. 
	As shorthand, we use $\pi_0$ and $\pi_{n,k}$ to denote $\pi_{S_0, G_0}$ and $\pi_{S_{n,k}, G_{n,k}}$, respectively.
	Next, we define for any $f$ and any $\pi_{S,G}$,
	\begin{align*}
		\Gamma(f,\pi_{S,G})(z_1,\dots,z_m) :=  \int\cdots\int \omega[\{f(x_1), t_1\},\dots,\{f(x_m), t_m\}]\prod_{j=1}^{m}\pi_{S,G}(z_j, dt_j)\ .
	\end{align*}
	Using this notation, we define the following random functions for each $k \in \{1,\dots,K\}$:
	\begin{align*}
		&h_{1,n,k}(z_1,\dots,z_m) := \Gamma(f_{n,k}, \pi_{n,k})(z_1,\dots,z_m) - \Gamma(f_0, \pi_{0})(z_1,\dots,z_m)\ ; \\
		& h_{2,n,k}(z_1,\dots, z_m) := \Gamma(f_0, \pi_{n,k})(z_1,\dots,z_m) - \Gamma(f_0, \pi_0)(z_1,\dots,z_m)\ ;\\
		&h_{3,n,k}(z_1,\dots,z_m) := \int\cdots\int \big(\omega[\{f_{n,k}(x_1), t_1\},\dots,\{f_{n,k}(x_m),t_m\}] \\
		&\hspace{3cm}- \omega[\{f_{0}(x_1), t_1\},\dots,\{f_{0}(x_m),t_m\}]\big)\left\{\prod_{j=1}^{m}\pi_{n,k}(z_j, dt_j) -\prod_{j=1}^{m}\pi_{0}(z_j, dt_j) \right\};\\
		&h_{t,n,k}(x) := S_{n,k}(t \midd x)\int_0^t\left\{\frac{G_0(u \midd x)}{G_{n,k}(u \midd x)} - 1\right\}(L_{n,k} - L_0)(du \midd x)\ .
	\end{align*}
	We also define
	\begin{align*}
		\bar{h}_{n,k,1}&: z \mapsto \int \cdots \int  h_{1,n,k}(z, z_2, \dots, z_m)\prod_{j=2}^{m}P_0(dz_j)\ , \\
		\bar{h}_{n,k,2}&: z \mapsto  \int \cdots \int h_{2,n,k}(z, z_2, \dots, z_m)\prod_{j=2}^{m}P_0(dz_j)\ .
	\end{align*}

	\begin{condition}\label{condition:efficient influence function}
		\upshape
		There exists a dense subset $\mathcal{H}$ of $L_2^0(P_0)$ such that, for each $h \in \mathcal{H}$ and regular univariate parametric submodel $\{P_{0,\epsilon}\} \subset \mathcal{M}_{\text{obs}}$ through $P_0$ at $\epsilon = 0$ and with score for $\epsilon$ equal to $h$ at $\epsilon = 0$, the following conditions hold, with $f_{0,\epsilon}$ denoting $f_{P_{0,\epsilon}}$: (a) $V(f_{0,\epsilon}, P_\epsilon) - V(f_{0,\epsilon}, P_0) = V(f_{0}, P_\epsilon) - V(f_{0},P_0) + o(\epsilon)$; (b) $\epsilon \mapsto V(f_{0,\epsilon}, P_0)$ is differentiable in a neighborhood of $\epsilon = 0$; and (c) the optimizer $f_{0,\epsilon}$ is in $\mathcal{F}$ for small enough $\epsilon$. 
	\end{condition}
	
	\begin{condition}\label{condition:bounded away from zero} 
		\upshape
		There exists $\eta \in (0,\infty)$ such that, with $P_0$--probability tending to one and for $P_0$--almost all $x \in \mathcal{X}$, $G_{n,k}(\tau \midd x) \geq 1/\eta$ and $G_0(\tau \midd x) \geq 1/\eta$ for each $k \in \{1,\dots,K\}$. 
	\end{condition}
	
	\begin{condition}\label{condition:nuisance limits} 
		\upshape
		There exist $G_\infty$ and $S_\infty$ such that, for all $k$, 
		\begin{align*}
			E_{P_0}\left[\sup_{u \in (0, \tau]}\abs{\frac{1}{G_{n,k}(u \midd X)} - \frac{1}{G_\infty(u \midd X)}}\right]^2, \hspace{0.4cm}E_{P_0}\left[\sup_{u \in (0, \tau]}\sup_{v \in [0,u]}\abs{\frac{S_{n,k}(u \midd X)}{S_{n,k}(v \midd X)} - \frac{S_\infty(u \midd X)}{S_\infty(v \midd X)}}\right]^2,
		\end{align*}
		and $\|f_{n,k}(X) - f_0( X)\|_{\mathcal{F}}$ are $o_P(1)$.
	\end{condition}
	\begin{condition}
		\label{condition:double robust} For $P_0$--almost all $x \in \mathcal{X}$, there exist sets $\mathcal{S}_x, \mathcal{G}_x \subseteq (0, \tau]$ such that $\mathcal{S}_x \cup \mathcal{G}_x = (0, \tau]$, $L_\infty(u \midd x)  = L_0(u \midd x)$ for all $u \in \mathcal{S}_x$, and $G_\infty(u \midd x) = G_0 (u \midd x)$ for all $ u \in \mathcal{G}_x$. 
	\end{condition}
	\begin{condition} \label{condition:continuity of V} 
		\upshape
		There exists some constant $J_1$ such that, for each sequence $f_1, f_2,\ldots \in \mathcal{F}$ such that $\|f_j - f_0\|_{\mathcal{F}} \to 0$, $|V(f_j, H_0) - v_0| \leq J_1\|f_j - f_0\|_{\mathcal{F}}$ for each $j$ large enough. 
		
	\end{condition}
	\begin{condition} \label{condition:variation norm of kernel} 
		\upshape
		The functions $\omega$ and $\theta$ are bounded, and additionally, for all $l \leq m$, 
		\begin{align*}
			\sup_{(y_1, t_1), \dots, (y_{l-1}, t_{l-1})}&\norm{\omega_{0,f_0,l}\left\{(y_1, t_1), \dots, (y_{l-1}, t_{l-1}), (\cdot, \cdot)\right\}}_v^*, \quad 
			\sup_{t_1, \dots, t_{l-1}}&\norm{\theta_{0,l}\left(t_1,\dots,t_{l-1}, \cdot\right)}^*_v
		\end{align*}
		are finite,
		where we define $\theta_{0,l}: (t_1,\dots,t_l) \mapsto\int \cdots \int \theta\left(t_1,\dots, t_m\right)\prod_{j=l+1}^{m}H_0(dx_j, dt_j)$ and
		\begin{align*}
			&\omega_{0,f,l}: \{(y_1,t_1),\dots,(y_l,t_l)\} \mapsto\int \cdots \int \omega\left\{(y_1,t_1),\dots, (y_m, t_m)\right\}\prod_{j=l+1}^{m}H_{0,f}(dy_j, dt_j)\ .
		\end{align*}
		Furthermore, with $P_0$--probability tending to one, for each $k \in \{1,\dots,K\}$,
		\begin{align*}\sup_{(y_1, t_1), \dots, (y_{l-1}, t_{l-1})}&\norm{\omega_{0,f_{n,k},l}\left\{(y_1, t_1), \dots, (y_{l-1}, t_{l-1}), (\cdot, \cdot)\right\}}_v^*   < \infty\ .
		\end{align*}
	\end{condition}
	
	\begin{condition}\label{condition:optimality} 
		\upshape
		There exists some constant $J_2$ such that, for each sequence $f_1, f_2, \ldots \in \mathcal{F}$ such that $\|f_j - f_0\|_{\mathcal{F}} \to 0$, $|V(f_j, P_0) - V(f_0, P_0)| \leq J_2\|f_j - f_0\|^2_{\mathcal{F}}$ for  large enough $j$.
	\end{condition}
	
	\begin{condition} \label{condition:rate of convergence for f}  
		\upshape
		For all $k$, $\|f_{n,k} - f_0\|_{\mathcal{F}} = o_P(n^{-1/4})$.
	\end{condition}
	
	\begin{condition}\label{condition:weak consistency} 
		\upshape    
		For all $k$, $\int \bar{h}_{n,k,1}^2(z)P_0(dz)$ and $\int  \bar{h}_{n,k,2}^2(z)P_0(dz)$ are $o_P(1)$. 
	\end{condition}
	
	\begin{condition}\label{condition:second order remainder} 
		\upshape
		For all $k$, $\sup_{t \in (0, \tau]}\int |h_{t,n,k}(x)|Q_0(dx)$ and $\int h_{3,n,k}(z_1,\dots,z_m)\prod_{j=1}^{m}P_0(dz_j)$ are $o_P(n^{-1/2})$. 
	\end{condition}
	
	\section*{Appendix 2}
	\subsection*{Oracle prediction functions for stated example measures}
	\setcounter{example}{0}
	
	\begin{example}[\sc{AUC}]
		\upshape
		The ideal data maximizer of $f \mapsto \mathbbmsl{V}(f, \mathbbmsl{P}_0)$ is the population conditional mean of the indicator variable $\I(T > \tau)$ given $X$, $\mathbbmsl{f}_0: x \mapsto \pr_{\mathbbmsl{P}_0}(T > \tau \midd X = x)$ \citep{Agarwal2014}, and so, $f_0$ is $x\mapsto 1-F_0(\tau\midd x)$. Similarly, $\mathbbmsl{f}_{0,s}$ equals $x \mapsto \pr_{\mathbbmsl{P}_0}(T > \tau \midd X_{-s} = x_{-s})$, and so, $f_{0,s}$ is $x \mapsto E_{P_0}\left\{f_{0}(X) \midd X_{-s} = x_{-s}\right\}$. 
	\end{example}
	
	\begin{example}[\sc{Brier score}]
		\upshape
		The Brier score at time $\tau$ is equivalent to the negative MSE for predicting the binary outcome $\I(T > \tau)$. As such, the oracle prediction function is the conditional survival probability $\mathbbmsl{f}_0: x \mapsto \pr_{\mathbbmsl{P}_0}(T > \tau\midd X = x)$, and so, $f_0$ is $x \mapsto 1-F_0(\tau \midd x)$. As for AUC, $f_{0,s}$ is $x \mapsto E_{P_0}\left\{f_{0}(X) \midd X_{-s} = x_{-s}\right\}$.
	\end{example}
	
	\begin{example}[\sc{Survival time MSE}]
		\upshape
		As for the Brier score, the negative MSE is maximized by the conditional mean, which here corresponds to $x \mapsto E_{\mathbbmsl{P}_0}(T \wedge \tau \midd X = x)$, and so, $f_0: x \mapsto \int_0^\tau \left\{1-F_0(t \midd x)\right\} dt$. The residual oracle prediction function is $x \mapsto E_{\mathbbmsl{P}_0}(T \wedge \tau \midd X_{-s} = x_{-s})$, with identified counterpart given by $f_{0,s}: x \mapsto E_{P_0}\left\{f_{0}(X) \midd X_{-s} = x_{-s}\right\}$. 
	\end{example}
	
	\begin{example}[\sc{C-index}]
		\upshape
		If $f$ is a prediction function such that $f(x_1) \geq f(x_2)$ implies 
		\begin{align*}
			\pr_{\mathbbmsl{P}_0}\left(T_1 < T_2, T_1 \leq \tau\midd X_1 = x_1, X_2 = x_2\right) > \pr_{\mathbbmsl{P}_0}\left(T_2 < T_1, T_2 \leq \tau\midd X_1 = x_1, X_2 = x_2\right) 
		\end{align*}
		for all $(x_1, x_2) \in \mathcal{X}^2$, then $f$ is an oracle prediction function; details are provided in the Supplementary Material. However, \citet{Elgui2023} exhibit a family of distributions for which no such $f$ exists. 
	\end{example}
	
	\section*{Appendix 3}\label{sec:algo appendix}
	
	\subsection*{Algorithms for estimation and inference}
	\begin{algorithm}[h!]
		\caption{Cross-fitted inference on VIM value $\psi_{0,s}$ (non-zero importance)}\label{alg:alternative, xfit}
		\begin{algorithmic}[1]
			\State Select approximation time grid $\mathcal{B} := \{t_1,\dots, t_J\}$ with $t_J \geq \tau$. In following steps, approximate integrals as Riemann sums on $\mathcal{B}$.
			\State Generate $W_n \in \{1,\dots,K\}^n$ by sampling uniformly from $\{1,\dots,K\}$ with replacement. For $k = 1,\dots, K$, denote by $\mathcal{D}_k$ the subset of observations with index in $\mathcal{I}_k := \{i: W_{n,i} =k\}$. 
			\For{$k = 1,\dots,K$}
			\State  Using only data in $\cup_{j \neq k}\mathcal{D}_j$, construct estimators $F_{n,k}$ and $G_{n,k}$ of $F_0$ and $G_0$, respectively, on $\mathcal{B}$, and construct estimators $f_{n,k}$ and $f_{n,k,s}$ of $f_0$ and $f_{0,s}$, respectively.
			\State  Substitute $f_{n,k}$, $f_{n,k,s}$, $F_{n,k}$ and $G_{n,k}$ for $f_0$, $f_{0,s}$, $F_0$ and $G_0$ in the form of $\phi_0$ and $\phi_{0,s}$ to construct $\phi_{n,k}$ and $\phi_{n,k,s}$, respectively.
			\State Compute $H_{n,k}^*:(x_0,t_0)\mapsto n_k^{-1}\sum_{i \in \mathcal{I}_k}\varphi_{n,k,x_0,t_0}(Z_i)$ with  $F_{n,k}$ and $G_{n,k}$ used in $\varphi_{n,k,x_0,t_0}$.
			\State Compute $v_{n,1,k}^* := V_1(f_{n,k}, H_{n,k}^*)$, $v_{n,2,k}^* := V_2(H_{n,k}^*)$, $v_{n,1,k,s}^* := V_1(f_{n,k,s},H_{n,k}^*)$, and $\sigma^2_{n,k,s} := n_k^{-1}\sum_{i \in \mathcal{I}_k}\left\{\phi_{n,k}(Z_i) - \phi_{n,k,s}(Z_i)\right\}^2$.
			\EndFor
			\State Compute VIM point estimate $\psi_{n,s}^* := (\sum_{k=1}^{K}v_{n,2,k}^* )^{-1}\sum_{k=1}^{K}(v_{n,1,k}^* - v_{n,1,k,s}^*)$ and corresponding variance estimate $\sigma^2_{n,s} := K^{-1}\sum_{k=1}^{K}\sigma^2_{n,k,s}$.
		\end{algorithmic}
	\end{algorithm}
	
	\begin{algorithm}[h!]
		\caption{Sample-split, cross-fitted inference on VIM value $\psi_{0,s}$}\label{alg:alternative, ss xfit}
		\begin{algorithmic}[1]
			\State  Select approximation time grid $\mathcal{B} := \{t_1, \dots, t_J\}$, where $t_J \geq \tau$. In following steps, 
			approximate integrals as Riemann sums on $\mathcal{B}$.
			\State Generate $W_n \in \{1,\dots,2K\}^n$ by sampling uniformly from $\{1,\dots,2K\}$ with replacement. For $k = 1,\dots, 2K$, denote by $\mathcal{D}_k$ the subset of observations with index in $\mathcal{I}_k:=\{i: W_{n,i} =k\}$. Let $n_s$ denote the number of observations in $\cup_{k \text{ even}}\mathcal{D}_k$ and set $n_s^* := n - n_k$.
			\For{$k = 1, \dots, K$}
			\State Using only data in $\cup_{j \neq k}\mathcal{D}_j$, construct estimators $F_{n,k}$ and $G_{n,k}$ of $F_0$ and $G_0$, respectively, on $\mathcal{B}$. In addition, construct estimators $f_{n,k}$ and $f_{n,k,s}$ of $f_0$ and $f_{0,s}$, respectively. 
			\State Substitute $f_{n,k}$, $f_{n,k,s}$, $F_{n,k}$ and $G_{n,k}$ for $f_0$, $f_{0,s}$, $F_0$ and $G_0$ in the form of $\phi_0$ and $\phi_{0,s}$ to construct $\phi_{n,k}$ and $\phi_{n,k,s}$, respectively.
			\State Compute $H_{n,k}^*:(x_0,t_0)\mapsto n_k^{-1}\sum_{i \in \mathcal{I}_k}\varphi_{n,k,x_0,t_0}(Z_i)$ with  $F_{n,k}$ and $G_{n,k}$ used in $\varphi_{n,k,x_0,t_0}$.
			\State If $k$ is odd, compute $v_{n,1,k}^* := V_1(f_{n,k}, H_{n,k}^*)$, $v_{n,2,k}^* := V_2(H_{n,k}^*)$, and $\tilde{\sigma}^2_{n,k} := n_k^{-1}\sum_{i \in \mathcal{I}_k}\phi_{n,k}(Z_i)^2$. If $k$ is even, compute $v_{n,1,k,s}^* := V_1(f_{n,k,s}, H_{n,k}^*)$, $v_{n,2,k}^* := V_2(H_{n,k}^*)$, and $\tilde{\sigma}^2_{n,k,s} := n_k^{-1}\sum_{i \in \mathcal{I}_k}\phi_{n,k,s}(Z_i)^2$.
			\EndFor
			\State Compute VIM point estimate $\tilde{\psi}_{n,s} := (\sum_{k=1}^{K}v_{n,2,2k-1})^{-1}\sum_{k=1}^{K}v_{n,1,2k-1} - (\sum_{k=1}^{K}v_{n,2,2k})^{-1}\sum_{k=1}^{K}v_{n,1,2k,s}$ and corresponding variance estimate $\tilde{\sigma}^2_{n,s} := (Kn_s^*)^{-1}\sum_{k=1}^{K}\tilde{\sigma}^2_{n,2k-1} + (Kn_s)^{-1}\sum_{k=1}^{K}\tilde{\sigma}^2_{n,2k,s}$. 
		\end{algorithmic}
	\end{algorithm}
	
	\clearpage
	\singlespacing
	
	\bibliography{refs}
	
	\clearpage
	
	\begin{center}
		{\large \textbf{Supplementary Material}}
	\end{center}
	
	\doublespacing
	
	\noindent In Section \ref{sec:proofs}, we provide proofs of the theoretical results in the main text along with technical lemmas. Section \ref{sec:additional technical details} contains additional technical details, including verification of key conditions for the example predictiveness measures. In Section \ref{sec:gradient boosted c}, we outline a procedure using gradient boosting to numerically maximize the C-index. In Section \ref{sec:addtl sims}, we give details and full results for the main numerical experiments.  In Section \ref{sec:addtl data analysis}, we provide additional analysis results for the HVTN 702 study. In Section \ref{sec:robustness}, we discuss and illustrate the robustness properties of our proposed estimation procedure. Finally, in Section \ref{sec:permutation}, we compare our approach to a permutation-based variable importance procedure.
	
	\setcounter{section}{0}
	\setcounter{figure}{0}
	\setcounter{table}{0}
	\renewcommand{\thesection}{S\arabic{section}}
	\renewcommand{\thefigure}{S\arabic{figure}}
	\renewcommand{\thetable}{S\arabic{table}}
	\section{Proofs of theorems}\label{sec:proofs}
	
	\subsection{Additional notation}
	We use $E_0$ to denote an expectation taken with respect to $P_0$, that is, $E_0[f(Z)] := E_{P_0}[f(Z)]$. We also adopt the empirical process notation $P^mf := \int \cdots \int f(z_1,\dots,z_m)\prod_{j=1}^mP(dz_j)$ for probability measure $P$. Inequalities and equalities involving random quantities are taken to hold pointwise, that is, for every element in the sample space, unless otherwise specified.  
	
	\subsection{Discussion of technical conditions}\label{subsec:regularity conditions}
	Conditions \ref{condition:independent censoring} and \ref{condition:region of integration} stated and discussed in Section \ref{sec:survival VIMs} of the main text concern the identifiability of the ideal data parameter $\mathbbmsl{V}(\mathbbmsl{f}_0, \mathbbmsl{P}_0)$. In this subsection, we provide additional discussion on the conditions under which Theorems \ref{thm:efficient influence function}--\ref{thm:asymptotic linearity} hold. 
	
	Condition \ref{condition:efficient influence function} matches that of Theorem 3 in \citet{Williamson2021b}. Condition \ref{condition:bounded away from zero} requires that $G_n$ and $G_0$ be uniformly bounded away from zero. Condition \ref{condition:nuisance limits} requires that $F_n$ and $G_n$ converge to fixed limits, and that $f_n$ converge to $f_0$. Conditions \ref{condition:nuisance limits} and \ref{condition:double robust} together imply that, for almost all $t \in (0, \tau]$ and $x \in \mathcal{X}$, either $S_n$ or $G_n$ is consistent. Condition \ref{condition:continuity of V} essentially requires that the map $f \mapsto V(f, H_0)$ be continuous about $f_0$ with respect to  $\norm{\cdot}_{\mathcal{F}}$. Condition \ref{condition:variation norm of kernel} requires that $\omega$ and $\theta$ be bounded and also places restrictions on the uniform sectional variation norm of $\omega_{0,f_0,l}$, $\omega_{0, f_{n,k},l}$ and $\theta_{0,f_0,l}$ for all $l \leq m$ with respect to one argument. This is significantly weaker than requiring a finite uniform sectional variation norm with respect to all arguments. Details for stated example measures are provided in Section \ref{sec:additional technical details}. 
	
	Condition \ref{condition:optimality} formalizes the requirement that estimation of $f_0$ have no first-order contribution. We provide details for the example measures in Section \ref{sec:additional technical details}. Condition \ref{condition:rate of convergence for f} requires that $f_0$ be estimated at a sufficiently fast rate. Condition \ref{condition:weak consistency} ensures asymptotic negligibility of a set of empirical process terms. Condition \ref{condition:second order remainder} 
	requires that two remainder terms, depending on $(G_{n,k} - G_0)(L_{n,k} - L_0)$ and $(f_{n,k} - f_0)(\pi_{n,k} - \pi_0)$, tend to zero at rate faster than $n^{-1/2}$.
	
	\subsection{Identification}
	\begin{proof}[*thm:ID]
		We begin by showing that for all $t \in (0, \tau_0]$ and $P_0$--almost all $x \in \mathcal{X}$, $\mathbbmsl{L}_0(t \midd x) = L_0(t \midd x)$ as long as Conditions \ref{condition:independent censoring} and \ref{condition:region of integration}(a) hold. To see this, we begin by using standard probability rules to write
		\begin{align*}
			M_{0,1}(u \midd x) = P_0(Y \leq u, \Delta =1 \midd X = x)&= \mathbbmsl{P}_0(T \leq u, T \leq C \midd X = x) \\
			&= \int_0^u \mathbbmsl{P}_0(C \geq t \midd X=x)\mathbbmsl{F}_0(dt \midd x)\ .
		\end{align*}
		Here, we have used the conditional independence of $T$ and $C$ given $X$. Using the same conditional independence, we have that $1 - M_0(u^- \midd x) = \mathbbmsl{P}_0(C \geq u\midd X = x)\left\{1-\mathbbmsl{F}_0(u^- \midd x)\right\}$. Condition \ref{condition:region of integration}(a) implies that $\mathbbmsl{P}_0(C \geq u \midd X = x) > 0$ for all $u \in (0, \tau_0]$, and since $t \in (0, \tau_0]$ we can write
		\begin{align*}
			L_0(t \midd x) = \int_0^t \frac{M_{0,1}(du \midd x)}{1 - M_0(u^- \midd x)} &= \int_0^t\frac{ \mathbbmsl{P}_0(C \geq u\midd X=x)\mathbbmsl{F}_0(du \midd x)}{\mathbbmsl{P}_0(C \geq u\midd X= x)\left\{1-\mathbbmsl{F}_0(u^- \midd x)\right\}} \\
			&= \int_0^t\frac{\mathbbmsl{F}_0(du \midd x)}{1-\mathbbmsl{F}_0(u^- \midd x)} = \mathbbmsl{L}_0(t \midd x)\ .
		\end{align*}
		It follows that $\mathbbmsl{F}_0(t \midd x) = F_0(t \midd x)$ for all $t \in (0, \tau_0]$ and $P_0$--almost all $x \in \mathcal{X}$. 
		
		Next, we consider $\mathbbmsl{V}_1(f, \mathbbmsl{H}_0)$. We partition $\{\mathcal{X} \times (0, \infty)\}^m$ into regions $\mathcal{R} := \{\mathcal{X} \times (0, \tau_0]\}^m$ and its complement $\mathcal{R}^C$. In light of Condition \ref{condition:region of integration}(b), there exists some real-valued function $\tilde{\omega}$ on $\mathcal{Y}^m$ such that 
		\begin{align*}
			&\mathbbmsl{V}_1(f, \mathbbmsl{H}_0) = \int \cdots \int \omega((f(x_1), t_1), \dots, (f(x_m),t_m))\prod_{j=1}^{m}\mathbbmsl{H}_0(dx_j, dt_j)\\
			&= \int_{\mathcal{R}}\omega((f(x_1), t_1), \dots, (f(x_m),t_m))\prod_{j=1}^{m}\mathbbmsl{H}_0(dx_j, dt_j) \\
			&\hspace{1cm}+ \int_{\mathcal{R}^C}\omega((f(x_1), t_1), \dots, (f(x_m),t_m))\prod_{j=1}^{m}\mathbbmsl{H}_0(dx_j, dt_j)\\
			&= \int_{\mathcal{R}}\omega((f(x_1), t_1), \dots, (f(x_m),t_m))\prod_{j=1}^{m}\mathbbmsl{F}_0(dt_j\midd x_j)\mathbbmsl{Q}_0(dx_j)\\
			&\hspace{1cm}+ \int_{\mathcal{X}^m}\tilde{\omega}(f(x_1), \dots, f(x_m))\prod_{j=1}^{m}\mathbbmsl{Q}_0(dx_j)\ .
		\end{align*}
		Using the fact that $\mathbbmsl{Q}_0 = Q_0$ and that $\mathbbmsl{F}_0(t \midd x) = F_0(t \midd x)$ for $t \in (0, \tau_0]$ and $P_0$--almost all $x \in \mathcal{X}$, it follows that
		\begin{align*}
			&\int_{\mathcal{R}}\omega((f(x_1), t_1), \dots, (f(x_m),t_m))\prod_{j=1}^{m}\mathbbmsl{F}_0(dt_j\midd x_j)\mathbbmsl{Q}_0(dx_j)\\
			&\hspace{2cm}+ \int_{\mathcal{X}^m}\tilde{\omega}(f(x_1), \dots, f(x_m))\prod_{j=1}^{m}\mathbbmsl{Q}_0(dx_j)\\
			&= \int_{\mathcal{R}}\omega((f(x_1), t_1), \dots, (f(x_m),t_m))\prod_{j=1}^{m}F_0(dt_j\midd x_j)Q_0(dx_j) \\
			&\hspace{2cm}+ \int_{\mathcal{X}^m}\tilde{\omega}(f(x_1), \dots, f(x_m))\prod_{j=1}^{m}Q_0(dx_j)\\
			&= \int \cdots \int \omega((f(x_1), t_1), \dots, (f(x_m),t_m))\prod_{j=1}^{m}H_0(dx_j, dt_j) = V_1(f, H_0)\ .
		\end{align*}
		An identical argument applies for $\mathbbmsl{V}_2(\mathbbmsl{H}_0).$
	\end{proof}
	
	\subsection{Efficiency calculations}
	
	\begin{proof}[*thm:efficient influence function] 
		In the following, with a slight abuse of notation we let $P_0(y,\delta \midd x)$ denote the joint distribution function of $(Y,\Delta)$ given $X = x$ evaluated at $(y,\delta)$.
		
		Let $\{P_\epsilon\}$ be a suitably smooth and bounded Hellinger differentiable path with $P_{\epsilon = 0} = P_0$ and score function $\dot{\ell}_0$ at $\epsilon = 0$. We will repeatedly make use of that fact that $\dot{\ell}_0(x, y, \delta) = \dot{\ell}_0(x) + \dot{\ell}_0(y,\delta \midd x)$, and that for any function $b:\mathcal{X}\rightarrow\mathbb{R}$, $
		\iiint b(x) \dot{\ell}_0(y, \delta \midd x)P_0(dy, d\delta \midd x) Q_0(dx) = 0$. 
		We begin by using the quotient rule to write
		\begin{align}
			\left.\frac{d}{d\epsilon}V(f_0, P_\epsilon)\right|_{\epsilon = 0} &= \left.\frac{d}{d\epsilon}\frac{V_1(f_0, H_\epsilon)}{V_2( H_\epsilon)}\right|_{\epsilon = 0} = \frac{\left.\frac{d}{d\epsilon}V_1(f_0, H_\epsilon)\right|_{\epsilon = 0}}{v_{0,2}} - \frac{V_1(f_0,  H_0)\left.\frac{d}{d\epsilon}V_2(H_\epsilon)\right|_{\epsilon = 0}}{v_{0,2}^2}\ . \label{eq:quotient rule}
		\end{align}
		We study the two derivatives above separately.  Under appropriate boundedness conditions, we have that
		\begin{align}
			&\left.\frac{d}{d\epsilon}V_1(f_0,H_\epsilon)\right|_{\epsilon = 0 }\ =\ \left.\frac{d}{d\epsilon}\int\cdots \int \omega\left(((f_0(x_1),t_1),\dots, (f_0(x_m), t_m)\right)\prod_{j=1}^{m}H_\epsilon(dx_j, dt_j)\right|_{\epsilon = 0}\nonumber\\
			&\hspace{1cm}=\ \left.\frac{d}{d\epsilon}\int\cdots \int \omega\left(((f_0(x_1),t_1),\dots, (f_0(x_m), t_m)\right)\prod_{j=1}^{m}F_\epsilon(dt_j \midd x_j)Q_\epsilon(dx_j)\right|_{\epsilon = 0}\nonumber\\
			&\hspace{1cm}=\ \int\cdots \int \omega\left((f_0(x_1),t_1),\dots, (f_0(x_m),t_m)\right) \left.\frac{d}{d\epsilon}\prod_{j=1}^{m}F_\epsilon(d t_j \midd x_j)\right|_{\epsilon = 0}\prod_{j=1}^{m}Q_0(dx_j)\nonumber \\&\hspace{1.7cm}+\int\cdots \int\omega\left((f_0(x_1),t_1),\dots, (f_0(x_m),t_m)\right) \prod_{j=1}^{m}F_0(d t_j \midd x_j)\left.\frac{d}{d\epsilon}\prod_{j=1}^{m}Q_\epsilon(dx_j)\right|_{\epsilon = 0} \label{eq:pathwise derivative calculation}.
		\end{align}
		Using the product rule and symmetry of $\omega$, the second term in (\ref{eq:pathwise derivative calculation}) is equal to
		\begin{align*}
			&m\int\cdots \int \omega\left((f_0(x_1),t_1),\dots, (f_0(x_m),t_m)\right)\prod_{j=1}^{m}F_0(d t_j \midd x_j) \prod_{j=2}^{m}Q_0(dx_j)\left.\frac{d}{d\epsilon}Q_\epsilon(dx_1)\right|_{\epsilon = 0}\\
			&=m\int\cdots \int \omega\left((f_0(x_1),t_1),\dots, (f_0(x_m),t_m)\right)\prod_{j=1}^{m}F_0(d t_j \midd x_j) \prod_{j=2}^{m}Q_0(dx_j)\dot{\ell}_0(x)Q_0(dx_1)\\
			&=m\int\cdots \int \Big\{ \omega\left((f_0(x_1),t_1),\dots, (f_0(x_m),t_m)\right)\\
			& \hspace{3cm}\times \prod_{j=1}^{m}F_0(d t_j \midd x_j) \prod_{j=2}^{m}Q_0(dx_j)\dot{\ell}_0(x_1,y,\delta)P_0(dy, d\delta \midd x_1)Q_0(dx_1)\Big\}\ ,
		\end{align*}
		where the final equality follows from properties of score functions. We can rewrite this last expression as 
		\begin{align*}
			&m\int\cdots \int  \omega\left((f_0(x_1),t_1),\dots, (f_0(x_m),t_m)\right)F_0(d t_1 \midd x_1) \prod_{j=2}^{m}H_0(dx_j, dt_j)\dot{\ell}_0(x,y,\delta)P_0(dy, d\delta, dx_1)\\
			&= m\int \cdots \int \omega_{0,1}\left(x_1,t_1\right)F_0(d t_1 \midd x_1)\dot{\ell}_0(x,y,\delta)P_0(dy, d\delta, dx_1)\ ,
		\end{align*}
		where, as in the main text, we define $\omega_{0,1}(x,t) := \omega_{0,f_0,1}(f_0(x),t)$.
		Therefore, this term contributes 
		\begin{align}
			x \mapsto m\int\omega_{0,1}\left(x,t\right)F_0(d t \midd x)\label{eq:efficient influence function contribution 1}
		\end{align}
		to the efficient influence function. 
		
		Now, by definition we have that
		\begin{align*}
			\left.\frac{d}{d\epsilon}F_\epsilon(t \midd x)\right|_{\epsilon = 0} =- \left.\frac{d}{d\epsilon}\Prodi_{(0,t]}\left\{1 - L_{\epsilon}(du \midd x)\right\}\right|_{\epsilon = 0} .
		\end{align*}
		By Theorem 8 of \citet{Gill1990}, the product integral map $h \mapsto \Theta_h(t) :=  \prodi_{(0, t]}\{1-h(dt)\}$ is Hadamard differentiable with respect to the supremum norm with derivative given by $\alpha \mapsto \Theta_h(t) \int_0^t\frac{\Theta_h(u^-)}{\Theta_h(u)}\alpha(du)$. By the chain rule for functional derivatives, we have that
		\begin{align*}
			\left.\frac{d}{d\epsilon}F_\epsilon(t \midd x)\right|_{\epsilon = 0} = -\left.\frac{d}{d\epsilon}\Prodi_{(0,t]}\left\{1 - L_{\epsilon}(du \midd x)\right\}\right|_{\epsilon = 0} = S_0(t \midd x) \int_0^t\frac{S_0(u^- \midd x)}{S_0(u \midd x)}\left.\frac{d}{d\epsilon}L_\epsilon (du \midd x)\right|_{\epsilon = 0}.
		\end{align*}
		Now, using the quotient rule, we can write
		\begin{align*}
			\left.\frac{d}{d\epsilon}L_\epsilon(t \midd x)\right|_{\epsilon = 0} &= \int_0^t\left.\frac{d}{d\epsilon}\frac{M_{\epsilon, 1}(du \midd x)}{1 - M_\epsilon(u^- \midd x)}\right|_{\epsilon = 0}\\
			&= \int\frac{\I_{[0,t]}(u)\left.\frac{d}{d\epsilon}M_{\epsilon, 1}(du \midd x)\right|_{\epsilon = 0}}{1 - M_0(u^- \midd x)} + \int\frac{\I_{[0,t]}(u)\left.\frac{d}{d\epsilon}M_\epsilon(u^- \midd x)\right|_{\epsilon = 0}M_{0,1}(du \midd x)}{\left\{1 - M_0(u^- \midd x)\right\}^2}\ .
		\end{align*}
		Computing each of the derivatives above, we have that
		\begin{align*}
			\left.\frac{d}{d\epsilon}M_{\epsilon, 1}(u \midd x)\right|_{\epsilon = 0} = \left.\frac{d}{d\epsilon}P_\epsilon(Y \leq u, \Delta =1 \midd X = x)\right|_{\epsilon = 0}&= \left.\frac{d}{d\epsilon}\iint \I_{[0,u]}(y)\delta P_\epsilon(dy, d\delta \midd x)\right|_{\epsilon = 0}\\
			&= \iint \I_{[0,u]}(y)\delta \dot{\ell}_0(y, \delta \midd x)P_0(dy, d\delta \midd x)\ ,
		\end{align*}
		and also that
		\begin{align*}
			&\left.\frac{d}{d\epsilon}M_\epsilon(u^- \midd x)\right|_{\epsilon = 0} = \left.\frac{d}{d\epsilon}P_\epsilon (Y < u \midd X=x)\right|_{\epsilon = 0}= \left.\frac{d}{d\epsilon}\int \I_{[0,u)}(y)M_\epsilon(dy \midd x)\right|_{\epsilon = 0}\\
			&\hspace{2cm}= \int \I_{[0,u)}(y)\dot{\ell}_0(y \midd x)M_0(dy \midd x) \\
			&\hspace{2cm}= \iint \I_{[0,u)}(y)\dot{\ell}_0(y \midd x)P_0(dy, d\delta \midd x)  + \iint \I_{[0,u)}(y)\dot{\ell}_0(\delta \midd y,x)P_0(dy, d\delta \midd x) \\
			&\hspace{2cm}= \iint \I_{[0,u)}(y) \dot{\ell}_0(y, \delta \midd x)P_0(dy, d\delta \midd x)\ ,
		\end{align*}
		where the last two equalities follow from properties of score functions. Therefore, we find that
		\begin{align*}
			\left.\frac{d}{d\epsilon}L_{\epsilon}(t \midd x)\right|_{\epsilon = 0} &= \int\frac{\I_{[0,t]}(u)\int \delta \dot{\ell}_0(u,\delta \midd x)P_0(du,d\delta \midd x)}{1 - M_0(u^- \midd x)} \\
			&\hspace{1cm}+ \int\frac{\I_{[0,t]}(u)\I_{[0,u)}(y)\iint \dot{\ell}_0(y,\delta \midd x)P_0(dy, d\delta \midd x)M_{0,1}(du \midd x)}{\left\{1 - M_0(u^- \midd x)\right\}^2}\\
			&= \iint\frac{\I_{[0,t]}(y)\delta \dot{\ell}_0(y,\delta \midd x)P_0(dy,d\delta \midd x)}{1 - M_0(y^- \midd x)} \\
			&\hspace{1cm}+ \iiint\frac{\I_{[0,t]}(u)\I_{[0,u)}(y)\dot{\ell}_0(y,\delta \midd x)P_0(dy, d\delta \midd x)M_{0,1}(du \midd x)}{\left\{1 - M_0(u^- \midd x)\right\}^2}\\
			&= \iint\frac{\I_{[0,t]}(y)\delta \dot{\ell}_0(y,\delta \midd x)P_0(dy,d\delta \midd x)}{1 - M_0(y^- \midd x)} \\
			&\hspace{1cm}+ \iiint\frac{\I_{[0,t]}(u)\I_{[0,u)}(y)\dot{\ell}_0(y,\delta \midd x)P_0(dy, d\delta \midd x)L_{0}(du \midd x)}{1 - M_0(u^- \midd x)}\ .
		\end{align*}
		Again using the fact that scores are $P_0$--centered and observing that $\I(y > u) = 1 - \I(u \leq y)$, we plug this expression into the derivative of the product integral to obtain
		\begin{align*}
			\left.\frac{d}{d\epsilon}F_{\epsilon}(t \midd x)\right|_{\epsilon=0} &= S_0(t \midd x) \iint\frac{\I_{[0,t]}(u)S_0(u^- \midd x)}{S_0(u \midd x)\left\{1 - M_0(u^- \midd x)\right\}}\delta \dot{\ell}_0(u, \delta \midd x)P_0(du, d\delta \midd x) \\
			&\hspace{0.5cm}-S_0(t \midd x) \iiint\frac{\I_{[0,t]}(u)\I_{[0,u)}(y)S_0(u^- \midd x)}{S_0(u \midd x)\left\{1 - M_0(u^- \midd x)\right\}} \dot{\ell}_0(y, \delta \midd x)P_0(dy, d\delta \midd x)L_0(du \midd x)\\
			&= S_0(t \midd x) \iint\frac{\delta \I_{[0,t]}(y)S_0(y^- \midd x)}{S_0(y \midd x)\left\{1 - M_0(y^- \midd x)\right\}}\dot{\ell}_0(y, \delta \midd x)P_0(dy, d\delta \midd x) \\
			&\hspace{0.5cm}-S_0(t \midd x) \iint\int_0^{t \wedge y}\frac{S_0(u^- \midd x)L_0(du \midd x)}{S_0(u \midd x)\left\{1 - M_0(u^- \midd x)\right\}} \dot{\ell}_0(y, \delta \midd x)P_0(dy, d\delta \midd x)\\
			&= S_0(t \midd x) \iint\frac{\delta \I_{[0,t]}(y)}{S_0(y \midd x)G_0(y \midd x)}\dot{\ell}_0(y, \delta \midd x)P_0(dy, d\delta \midd x) \\
			&\hspace{0.5cm}-S_0(t \midd x) \iint\int_0^{t \wedge y}\frac{L_0(du \midd x)}{S_0(u \midd x)G_0(u \midd x)} \dot{\ell}_0(y, \delta \midd x)P_0(dy, d\delta \midd x)\\
			&=- \iint \chi_0(z,t)\dot{\ell}_0(y, \delta \midd x)P_0(dy, d\delta \midd x)\ .
		\end{align*}
		Next, we note that 
		\begin{align*}
			E_0[\chi_0(Z,t) \midd X = x]&= -S_0(t \midd x)\int \left[\frac{\delta \I_{[0,t]}(y)}{S_0(y \midd x)G_0(y \midd x)} - \int_0^{t \wedge y}\frac{L_0(du \midd x)}{S_0(u \midd x)G_0(u \midd x)}\right]P_0(dy, d\delta \midd x)\\
			&= -S_0(t \midd x)\left[\int_0^t \frac{M_{0,1}(dy \midd x)}{S_0(y \midd x)G_0(y \midd x)} - \int_0^{t}\frac{\left\{1 - M_0(u^- \midd x)\right\}L_0(du \midd x)}{S_0(u \midd x)G_0(u \midd x)}\right]\\
			&= -S_0(t \midd x)\left[\int_0^t \frac{M_{0,1}(dy \midd x)}{S_0(y \midd x)G_0(y \midd x)} - \int_0^{t}\frac{M_{0,1}(du \midd x)}{S_0(u \midd x)G_0(u \midd x)}\right]\ = 0\ .
		\end{align*}
		By properties of score functions, this implies that
		\begin{align*}
			\left.\frac{d}{d\epsilon}F_\epsilon(t \midd x)\right |_{\epsilon=0} &=-\iint \chi_0(z,t)\dot{\ell}_0(y, \delta \midd x)P_0(dy, d\delta \midd x) - \iint \chi_0(z,t)\dot{\ell}_0( x)P_0(dy, d\delta \midd x)\\
			&= -\iint \chi_0(z,t)\dot{\ell}_0(y, \delta,x)P_0(dy, d\delta \midd x)
		\end{align*}since $\iint \chi_0(z,t)\dot{\ell}_0( x)P_0(dy, d\delta \midd x)=0$.
		Hence, using the product rule and symmetry of $\omega$, the first summand in (\ref{eq:pathwise derivative calculation}) equals
		\begin{align*}
			&-m\int\cdots \int \Big\{\omega\left((f_0(x_1),t_1),\dots,(f_0(x_m),t_m)\right) \chi_0(z_1,dt_1)\\
			&\hspace{4cm}\times \prod_{j=2}^{m}H_0(dx_j,dt_j)\dot{\ell}_0(y_1,\delta_1,x_1)P_0(dy_1, d\delta_1 \midd x_1)Q_0(dx_1)\Big\}\\
			&= -m\int\cdots \int \omega_{0,1}\left(x_1,t_1\right) \chi_0(z_1,dt_1)\dot{\ell}_0(y_1,\delta_1,x_1)P_0(dy_1, d\delta_1 \midd x_1)Q_0(dx_1)\ ,
		\end{align*}
		yielding that the efficient influence function contribution from this term is
		\begin{align}
			z \mapsto -m\int\omega_{0,1}\left(x,t\right) \chi_{0}(z,dt)\ .\label{eq:efficient influence function contribution 2}
		\end{align}
		Next, we note that 
		\begin{align*}
			\chi_{0}(z,dt) &= -\left\{\frac{\delta \I_{[0,t]}(y)}{S_0(y \midd x) G_0(y \midd x)} - \int_0^{t \wedge y}\frac{L_0(du \midd x)}{S_0(u \midd x)G_0(u \midd x))}\right\}S_0(dt \midd x)\\
			&\hspace{1cm}-S_0(t \midd x)\frac{\delta \gamma(t - y)}{S_0(y \midd x) G_0(y \midd x)} + S_0(t \midd x)\frac{\I_{[0,y]}(t)L_0(dt \midd x)}{S_0(t \midd x)G_0(t \midd x)}\ ,
		\end{align*}
		where $\gamma$ is the Dirac delta function.
		It follows then that
		\begin{align*}
			\int \omega_{0,1}\left(x,t\right)\chi_{0}(z,dt) &= -\int \omega_{0,1}\left(x,t\right) \left\{\frac{\delta \I_{[0,t]}(y) }{S_0(y \midd x) G_0(y\midd x)} - \int_0^{t \wedge y}\frac{L_0(du \midd x)}{S_0(u \midd x)G_0(u \midd x))}\right\}S_0(dt \midd x)\\
			&\hspace{1cm}-\omega_{0,1}\left(x,t\right)\frac{S_0(y \midd x)\delta}{S_0(y \midd x) G_0(y \midd x)} + \int \omega_{0,1}\left(x,t\right)S_0(t \midd x)\frac{\I_{[0,y]}(t)L_0(dt \midd x)}{S_0(t \midd x)G_0(t \midd x)}\\
			&= -\int \omega_{0,1}\left(x,t\right) \left\{\frac{\delta \I_{[0,t]}(y)}{S_0(y \midd x) G_0(y\midd x)} - \int_0^{t \wedge y}\frac{L_0(du \midd x)}{S_0(u \midd x)G_0(u \midd x))}\right\}S_0(dt \midd x)\\
			&\hspace{1cm}-\omega_{0,1}\left(x,t\right)\frac{\delta}{ G_0(y \midd x)} + \int \omega_{0,1}\left(x,t\right)\frac{\I_{[0,y]}(t)L_0(dt \midd x)}{G_0(t \midd x)}\ .
		\end{align*}
		Taking an expectation over $(Y,\Delta)$ given $X = x$, we have that
		\begin{align*}
			&E_0\left[\int \omega_{0,1}\left(x,t\right)\chi_{0}(Z,dt)\middle| X =x\right] \\
			&= -\int \omega_{0,1}\left(x,t\right) \left\{\int_0^t\frac{M_{0,1}(dy \midd x)}{S_0(y \midd x) G_0(y\midd x)} - \int_0^{t}\frac{\left\{1 - M_0(u^- \midd x)\right\}L_0(du \midd x)}{S_0(u \midd x)G_0(u \midd x))}\right\}S_0(dt \midd x)\\
			&\hspace{1cm}-\int \omega_{0,1}\left(x,t\right)\frac{M_{0,1}(dy \midd x)}{ G_0(y \midd x)} + \int \omega_{0,1}\left(x,t\right)\frac{\left\{1 - M_0(t^- \midd x)\right\}L_0(dt \midd x)}{G_0(t \midd x)}\\
			&= -\int \omega_{0,1}\left(x,t\right) \left\{\int_0^t\frac{M_{0,1}(dy \midd x)}{S_0(y \midd x) G_0(y\midd x)} - \int_0^{t}\frac{M_{0,1}(du \midd x)}{S_0(u \midd x)G_0(u \midd x))}\right\}S_0(dt \midd x)\\
			&\hspace{1cm}-\int \omega_{0,1}\left(x,t\right)\frac{M_{0,1}(dy \midd x)}{ G_0(y \midd x)} + \int \omega_{0,1}\left(x,t\right)\frac{M_{0,1}(dt \midd x)}{G_0(t \midd x)} = 0\ .
		\end{align*}
		The tower rule then implies that (\ref{eq:efficient influence function contribution 2}) is already $P_0$--centered. To center the efficient influence function overall, we subtract $mv_{0,1}$ from (\ref{eq:efficient influence function contribution 1}), yielding the centered gradient $\phi_{\omega,0}$. The derivation of the gradient of $V_2$ is identical, so combining these results with (\ref{eq:quotient rule}) yields the overall gradient
		\begin{align*}
			\phi_0: z \mapsto \frac{\phi_{\omega,0}(z)-v_{0}\phi_{\theta,0}(z)}{v_{0,2}} \ .
		\end{align*}
		It remains to show that $P \mapsto V(f_P, P)$ and $P \mapsto V(f_0, P)$ have the same efficient influence function at $P_0$. This was established in \citet{Williamson2021b}, but we provide details here for completeness. We note that
		\begin{align*}
			V(f_\epsilon, P_\epsilon) &- V(f_0, P_0) = V(f_\epsilon, P_\epsilon) - V(f_0, P_\epsilon) + V(f_0, P_\epsilon) -  V(f_0, P_0)\ .
		\end{align*}
		Under Condition \ref{condition:efficient influence function}(a), we have that $V(f_\epsilon, P_\epsilon) - V(f_0, P_\epsilon)  = V(f_\epsilon, P_0) - V(f_0, P_0)  + o(\epsilon)$. Furthermore, under Conditions \ref{condition:efficient influence function}(b) and \ref{condition:efficient influence function}(c), we have that
		\begin{align*}
			\left.\frac{d}{d\epsilon}V(f_\epsilon, P_0)\right|_{\epsilon = 0} = 0\ ,
		\end{align*}
		and so Taylor's theorem yields that $V(f_\epsilon, P_0) - V(f_0, P_0) = o(\epsilon)$. Since the parameter $P \mapsto V(f_0, P)$ is pathwise differentiable at $P_0$ with canonical gradient $\phi_0$, we can write that $V(f_0, P_\epsilon) -  V(f_0, P_0) = \epsilon \int \phi_0(z)\dot{\ell}_0(z)P_0(dz) + O(\epsilon^2)$. Combining all these observations, we have that $ V_1(f_\epsilon, P_\epsilon) - V(f_0, P_0) = \epsilon \int \phi_0(z)\dot{\ell}_0(z)P_0(dz)  + o(\epsilon)$. Hence, $P \mapsto V(f_P, P)$ is pathwise differentiable at $P_0$ with efficient influence function $\phi_0$. 
	\end{proof}
	
	\begin{lemma}\label{lemma:efficient influence function of joint CDF}
		For any $f \in \mathcal{F}$, if there exists $\eta \in (0, \infty)$ such that $G_0(t_0 \midd x) \geq 1/\eta$ for $P_0$--almost all $x \in \mathcal{X}$ such that $f(x) \leq y_0$, then $P \mapsto H_{P,f}(y_0,t_0)$ is pathwise differentiable at $P_0$ relative to the nonparametric model $\mathcal{M}_{\text{obs}}$, with efficient influence function given by $z \mapsto \bar{\varphi}_{0,f,y_0,t_0}(z):= \varphi_{0,f,y_0,t_0}(z) - H_{P,f}(y_0,t_0)$, where 
		\begin{align*}
			\varphi_{0,f,y_0,t_0}(z) := \I(f(x) \leq y_0)\left\{F_0(t_0 \midd x) - \chi_{0}(z,t_0)\right\}.
		\end{align*}
		If there exists $\eta \in (0, \infty)$ such that $G_0(t_0 \midd x) \geq 1/\eta$ for $P_0$--almost all $x$ such that $x \leq x_0$, then $P \mapsto H_{P}(x_0,t_0)$ is pathwise differentiable at $P_0$ relative to $\mathcal{M}_{\text{obs}}$, with efficient influence function given by $z \mapsto \bar{\varphi}_{0,x_0,t_0}(z):= \varphi_{0,x_0,t_0}(z) - H_{P}(x_0,t_0)$, where 
		\begin{align*}
			\varphi_{0,x_0,t_0}(z) := \I(x \leq x_0)\left\{F_0(t_0 \midd x) - \chi_{0}(z,t_0)\right\}.
		\end{align*}
	\end{lemma}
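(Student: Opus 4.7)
The plan is to compute the pathwise derivative of $\epsilon\mapsto H_{P_\epsilon,f}(y_0,t_0)$ at $\epsilon=0$ along an arbitrary suitably smooth one-dimensional submodel $\{P_\epsilon\}\subset\mathcal{M}_{\text{obs}}$ with score $\dot{\ell}_0$, express it in the form $E_0[D(Z)\dot{\ell}_0(Z)]$ for a $P_0$-mean-zero function $D$, and then invoke that the model is nonparametric to conclude $D=\bar{\varphi}_{0,f,y_0,t_0}$ is the efficient influence function. The second claim follows from the first by the substitution $f(x)\le y_0\mapsto x\le x_0$, so I will focus on the first claim.

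First I would write $H_{P_\epsilon,f}(y_0,t_0)=\int\I(f(x)\le y_0)F_\epsilon(t_0\midd x)Q_\epsilon(dx)$ and apply the product rule. The $Q_\epsilon$ piece contributes $\int \I(f(x)\le y_0)F_0(t_0\midd x)\dot{\ell}_0(x)Q_0(dx)$, which by iterated expectation equals $E_0[\I(f(X)\le y_0)F_0(t_0\midd X)\dot{\ell}_0(Z)]$. The $F_\epsilon$ piece contributes $\int \I(f(x)\le y_0)\{\tfrac{d}{d\epsilon}F_\epsilon(t_0\midd x)|_{\epsilon=0}\}Q_0(dx)$; here I would import verbatim the calculation already performed in the proof of Theorem~\ref{thm:efficient influence function}, which uses Hadamard differentiability of the product-integral mapping (Theorem 8 of \citealp{Gill1990}) together with the quotient rule applied to $L_\epsilon=\int M_{\epsilon,1}(du\midd x)/\{1-M_\epsilon(u^-\midd x)\}$, to obtain
\begin{align*}
\left.\frac{d}{d\epsilon}F_\epsilon(t_0\midd x)\right|_{\epsilon=0}=-\iint\chi_0(z,t_0)\,\dot{\ell}_0(y,\delta\midd x)\,P_0(dy,d\delta\midd x).
\end{align*}
The lemma's assumption that $G_0(t_0\midd x)\ge 1/\eta$ for $P_0$-almost all $x$ with $f(x)\le y_0$ guarantees that all reciprocals appearing in $\chi_0(\cdot,t_0)$ are well defined on the integration region, legitimizing this step.

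Combining the two pieces yields
\begin{align*}
\left.\frac{d}{d\epsilon}H_{P_\epsilon,f}(y_0,t_0)\right|_{\epsilon=0}=\iiint \I(f(x)\le y_0)\{F_0(t_0\midd x)-\chi_0(z,t_0)\}\dot{\ell}_0(y,\delta,x)\,P_0(dy,d\delta\midd x)Q_0(dx),
\end{align*}
after using that $\dot{\ell}_0(z)=\dot{\ell}_0(x)+\dot{\ell}_0(y,\delta\midd x)$ and that $\iint \chi_0(z,t_0)\dot{\ell}_0(x)P_0(dy,d\delta\midd x)=0$ by the centering identity $E_0[\chi_0(Z,t_0)\midd X]=0$ established in the proof of Theorem~\ref{thm:efficient influence function}. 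This gives $E_0[\varphi_{0,f,y_0,t_0}(Z)\dot{\ell}_0(Z)]$. Since $\dot{\ell}_0$ is $P_0$-centered, we may subtract the constant $H_{P_0,f}(y_0,t_0)$ from $\varphi_{0,f,y_0,t_0}$ without altering the inner product, yielding $E_0[\bar{\varphi}_{0,f,y_0,t_0}(Z)\dot{\ell}_0(Z)]$ with $\bar{\varphi}_{0,f,y_0,t_0}$ now $P_0$-centered. Because $\mathcal{M}_{\text{obs}}$ is nonparametric, the tangent space equals $L_2^0(P_0)$ and any $P_0$-centered, square-integrable function appearing as the Riesz representer of the pathwise derivative is automatically the efficient influence function; thus $\bar{\varphi}_{0,f,y_0,t_0}$ is as claimed.

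The only real obstacle is the derivative of $F_\epsilon$, but this computation is already carried out in full in the proof of Theorem~\ref{thm:efficient influence function}, so the present proof reduces to assembling those pieces, handling the straightforward $Q_\epsilon$ contribution, and performing the centering. The second assertion of the lemma is obtained by replacing $\I(f(x)\le y_0)$ with $\I(x\le x_0)$ throughout; the conditional boundedness hypothesis on $G_0$ is then what ensures the $\chi_0$ term remains well defined on the relevant subset of $\mathcal{X}$.
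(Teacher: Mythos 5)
Your proof is correct and follows essentially the same route as the paper's: differentiate $H_{P_\epsilon,f}(y_0,t_0)$ along an arbitrary smooth submodel, split into the $Q_\epsilon$ and $F_\epsilon$ contributions by the product rule, import the derivative of $F_\epsilon$ from the proof of Theorem~\ref{thm:efficient influence function}, use the conditional centering of $\chi_0$, and subtract the parameter value to center the gradient. The only difference is cosmetic: you make the final Riesz-representer/nonparametric-tangent-space argument explicit, whereas the paper leaves it implicit.
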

	
	\begin{proof}[*lemma:efficient influence function of joint CDF]
		As in the proof of Theorem \ref{thm:efficient influence function}, we proceed by direct calculation of the pathwise derivative of the parameter. Again, we let $\{P_\epsilon\}$ be a suitably smooth and bounded Hellinger differentiable path with $P_{\epsilon = 0} = P_0$ and score function $\dot{\ell}_0$ at $\epsilon = 0$. We begin by writing
		\begin{align*}
			H_{\epsilon,f}(y_0,t_0) = \int \I(f(x) \leq y_0)\,F_\epsilon(t_0 \midd x)\,Q_\epsilon(dx)\ .
		\end{align*}
		Under appropriate boundedness conditions, we have that
		\begin{align}
			&\left.\frac{d}{d\epsilon}H_{\epsilon,f}(y_0,t_0)\right|_{\epsilon = 0} = \left.\frac{d}{d\epsilon}\int \I(f(x) \leq y_0)\,F_\epsilon(t_0 \midd x)\,Q_\epsilon(dx) \right|_{\epsilon = 0} \nonumber \\
			&\hspace{0.4cm}= \int \dots \int \I(f(x) \leq y_0)\left.\frac{d}{d\epsilon}F_\epsilon(t_0 \midd x)\right|_{\epsilon = 0}Q_0(dx)  + \int \I(f(x) \leq y_0)\,F_0(t_0 \midd x) \left.\frac{d}{d\epsilon}Q_\epsilon(dx) \right|_{\epsilon = 0}\ .\label{eq:pathwise derivative joint cdf}
		\end{align}
		Using the same argument as in the proof of Theorem \ref{thm:efficient influence function}, the second term in (\ref{eq:pathwise derivative joint cdf}) contributes $x \mapsto \I(f(x) \leq y_0)F_0(t_0 \midd x)$
		to the efficient influence function. For the first term, we recall that $\left.\frac{d}{d\epsilon}F_\epsilon(t_0 \midd x) \right|_{\epsilon = 0} = -\iint \chi_{0}(z,t_0)\dot{\ell}_0(y,\delta,x)P_0(dy,d\delta,dx)$. Therefore, the first term in (\ref{eq:pathwise derivative joint cdf}) contributes $z \mapsto -\I(f(x) \leq y_0)\chi_{0}(z,t_0)$. This portion of the efficient influence function is already $P_0$--centered. To center the efficient influence function overall, we subtract $H_{0,f}(y_0,t_0)$, yielding the overall efficient influence function
		\begin{align*}
			\bar{\varphi}_{0,f}(y_0,t_0): z \mapsto \I(f(x) \leq y_0)\left\{F_0(t_0 \midd x) - \chi_{0}(z,t_0)\right\} - H_{0,f}(y_0,t_0)\ .
		\end{align*}
		The derivation of the efficient influence function of $P \mapsto H_P(x_0,t_0)$ at $P_0$ is identical. 
	\end{proof}
	
	\subsection{Theoretical analysis of the joint distribution function estimator}
	
	For generic conditional event time survival function $S$ and conditional censoring survival function $G$, we define
	\begin{align*}
		\chi_{S,G}(z,t) := -S(t \midd x) \left[\frac{\delta\I_{[0,t]}(y)}{S(y \midd x)G(y \midd x)} - \int_0^{t \wedge y}\frac{L(du \midd x)}{S(u \midd x)G(u \midd x)}\right]
	\end{align*}
	with $L$ the conditional cumulative hazard function corresponding to $S$. We then let $\varphi_{S,G,f,y_0,t_0}:z \mapsto \I(f(x) \leq y_0)\{1 - S(t \midd x) - \chi_{S,G}(z,t_0)\}$. As shorthand, we denote $\chi_{n,k} := \chi_{S_{n,k}, G_{n,k}}$, $\varphi_{n,k,f,y_0,t_0} := \varphi_{S_{n,k}, G_{n,k},f,y_0,t_0}$, $\chi_{\infty} := \chi_{S_{\infty}, G_{\infty}}$, and $\varphi_{\infty,f,y_0,t_0} := \varphi_{S_\infty, G_{\infty},f,y_0,t_0}$. 
	
	\begin{lemma}\label{lemma:duhamel}
		For any conditional event time survival function $S$ and corresponding cumulative hazard function $L$, any conditional censoring survival function $G$, and any $f \in \mathcal{F}$, it holds that
		\begin{align*}
			&P_0\varphi_{S,G,f,y_0,t_0} - H_{0,f}(y_0,t_0) \\
			&\hspace{.8cm}= E_0\left[\I(f(X) \leq y_0)S(t_0\midd X)\int_0^{t_0}\frac{S_0(u^-\midd X)}{S(u \midd X)}\left\{\frac{G_0(u \midd X)}{G(u \midd X)} - 1\right\}(L - L_0)(du \midd X)\right].
		\end{align*}
		
	\end{lemma}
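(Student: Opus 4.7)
The plan is to reduce the outer expectation to a computation conditional on $X = x$ and then perform a hazard-based manipulation whose key tool is the Duhamel equation. Writing $F(t_0 \midd X) = 1 - S(t_0 \midd X)$ and using the tower property, the left-hand side rearranges as
\begin{align*}
P_0\varphi_{S,G,f,y_0,t_0} - H_{0,f}(y_0,t_0) = E_0\bigl[\I(f(X) \leq y_0)\bigl\{F(t_0 \midd X) - F_0(t_0 \midd X) - E_0[\chi_{S,G}(Z,t_0) \midd X]\bigr\}\bigr],
\end{align*}
so the task reduces to simplifying the bracketed quantity at a fixed $x$. This isolates two pieces: the raw plug-in discrepancy $F(t_0 \midd x) - F_0(t_0 \midd x)$, and the conditional mean of the debiasing term $E_0[\chi_{S,G}(Z,t_0) \midd X = x]$.

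For the conditional mean, I would substitute the definition of $\chi_{S,G}$ and take expectations term by term. Under Condition \ref{condition:independent censoring}, one has the identifications $M_{0,1}(du \midd x) = G_0(u \midd x)F_0(du \midd x)$ and $1 - M_0(u^- \midd x) = S_0(u^- \midd x)G_0(u \midd x)$. After substitution, both summands pick up the common factor $S_0(u^- \midd x)G_0(u \midd x)/[S(u \midd x)G(u \midd x)]$ and collapse, via $L_0(du\midd x) = F_0(du\midd x)/S_0(u^-\midd x)$, to
\begin{align*}
E_0[\chi_{S,G}(Z,t_0) \midd X = x] = S(t_0 \midd x)\int_0^{t_0} \frac{S_0(u^- \midd x)G_0(u \midd x)}{S(u \midd x)G(u \midd x)}(L - L_0)(du \midd x).
\end{align*}
As a sanity check, this vanishes at $(S,G) = (S_0,G_0)$, consistent with $z \mapsto \chi_0(z,t_0)$ being $P_0$-centered as verified in the proof of Theorem \ref{thm:efficient influence function}. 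In parallel, the Duhamel equation, obtained by applying the product rule to $u \mapsto S_0(u \midd x)/S(u \midd x)$ together with $dS = -S^- dL$ and integrating, gives
\begin{align*}
F(t_0 \midd x) - F_0(t_0 \midd x) = S_0(t_0 \midd x) - S(t_0 \midd x) = S(t_0 \midd x)\int_0^{t_0} \frac{S_0(u^- \midd x)}{S(u \midd x)}(L - L_0)(du \midd x).
\end{align*}

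Subtracting the two integrals factors out $S(t_0\midd x)\cdot S_0(u^-\midd x)/S(u\midd x)\cdot (L - L_0)(du\midd x)$ and leaves behind the bracket $1 - G_0(u\midd x)/G(u\midd x)$, which after a trivial rearrangement of sign yields the claimed expression once the outer factor $\I(f(X) \leq y_0)$ and the $E_0$ are restored. The main obstacle is purely bookkeeping: applying the conditional-independence identities with the correct left- versus right-continuous arguments for $S_0$ and $G_0$, and tracking signs when recombining the two integrals. No new analytic machinery is needed beyond standard hazard calculus and Duhamel's equation, and the resulting cross-product form $(G_0/G - 1)(L - L_0)$ immediately explains the double-robustness property flagged in the main text, since the right-hand side vanishes whenever either $G = G_0$ or $L = L_0$.
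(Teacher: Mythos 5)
Your proof follows essentially the same route as the paper's: condition on $X$ via the tower property, compute $E_0[\chi_{S,G}(Z,t_0)\midd X=x]$ using the coarsening-at-random identities $M_{0,1}(du\midd x)=G_0(u\midd x)F_0(du\midd x)$ and $1-M_0(u^-\midd x)=S_0(u^-\midd x)G_0(u\midd x)$, expand the plug-in discrepancy via the Duhamel equation, and recombine. One small bookkeeping note: carrying the signs through carefully produces a leading minus sign, i.e.\ the right-hand side comes out as $-E_0[\I(f(X)\leq y_0)S(t_0\midd X)\int_0^{t_0}\frac{S_0(u^-\midd X)}{S(u\midd X)}\{\frac{G_0(u\midd X)}{G(u\midd X)}-1\}(L-L_0)(du\midd X)]$, which is what the paper's own proof arrives at as well (the sign in the lemma statement appears to be a typo); this has no bearing on the double-robustness conclusion since the integrand still vanishes when either $L=L_0$ or $G=G_0$.
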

	\begin{proof}[*lemma:duhamel] 
		On one hand, we note that
		\begin{align*}
			&E_0\left[\frac{\Delta\I_{[0,t_0]}(Y)}{S(Y \midd x)G(Y \midd x)} - \int_0^{t_0 \wedge Y}\frac{L(du \midd x)}{S(u \midd x)G(u \midd x)}\,\middle|\, X = x\right]\\
			&\hspace{0.8cm}= -\int_0^{t_0}\frac{S_0(y^- \midd x)G_0(y \midd x)}{S(y \midd x)G(y \midd x)}(L - L_0)(du \midd x)\ ,
		\end{align*}
		which implies that $E_0\left[\chi_{S,G}(Z,t) \midd X = x\right] = S(t \midd x)\int_0^{t_0}\frac{S_0(y^- \midd x)G_0(y \midd x)}{S(y \midd x)G(y \midd x)}(L - L_0)(du \midd x)$.
		On the other hand, $F(t_0 \midd x) - F_0(t_0 \midd x) = S_0(t_0 \midd x) - S(t_0 \midd x)$, and in view of the Duhamel equation (Theorem 6 of \citealp{Gill1990}), we have that
		\begin{align*}
			S_0(t_0 \midd x) - S(t_0 \midd x) = S(t_0 \midd x)\int_0^{t_0}\frac{S_0(u^- \midd x)}{S(u \midd x)}(L  - L_0)(du \midd x)\ .
		\end{align*}
		Combining these results, we find that
		\begin{align*}
			&F(t_0 \midd x) - F_0(t_0 \midd x) - E_0\left[\chi_{S,G}(Z,t_0) \midd X = x\right]\\
			&\hspace{0.8cm}= S(t_0 \midd x)\int_0^{t_0}\frac{S_0(u^- \midd x)}{S(u \midd x)}(L  - L_0)(du \midd x) - S(t_0\midd x)\int_0^{t_0}\frac{S_0(u^- \midd x)G_0(u \midd x)}{S(u \midd x)G(u \midd x)}(L - L_0)(du \midd x)\\
			&\hspace{0.8cm}= -S(t_0 \midd x) \int_0^{t_0}\frac{S_0(u^- \midd x)}{S(u \midd x)}\left\{\frac{G_0(u \midd x)}{G(u \midd x)} - 1\right\} (L - L_0)(du \midd x)\ .
		\end{align*}
		Finally, applying the tower property, we conclude that
		\begin{align*}
			&P_0\varphi_{S,G,f,y_0,t_0} - H_{0,f}(y_0,t_0) \\
			&\hspace{0.8cm}= E_0\left[\I(f(X) \leq y_0)\left\{F(t_0 \midd X) - \chi_{S,G}(Z,t_0)\right\}\right] - E_0\left[\I(f(X) \leq y_0)F_0(t_0 \midd X)\right]\\
			&\hspace{.8cm}= E_0\left[\I(f(X) \leq y_0)\left\{F(t_0 \midd X) - \chi_{S,G}(Z,t_0) - F_0(t_0 \midd X)\right\}\right]\\
			&\hspace{.8cm}= E_0\left[\I(f(X) \leq y_0)E_0\left[F(t_0 \midd X) - \chi_{S,G}(Z,t_0) - F_0(t_0 \midd X) \midd X = x\right]\right]\\
			&\hspace{.8cm}= -E_0\left[\I(f(X) \leq y_0)S(t_0 \midd X) \int_0^{t_0}\frac{S_0(u^- \midd X)}{S(u \midd X)}\left\{\frac{G_0(u \midd X)}{G(u \midd X)} - 1\right\} (L - L_0)(du \midd X)\right].
		\end{align*}
	\end{proof}
	
	\begin{lemma}\label{lemma:DR}
		If Condition \ref{condition:double robust} holds, then $P_0\varphi_{\infty, f,y_0,t_0} = H_{0,f}(y_0,t_0)$ for any $f \in \mathcal{F}$.
	\end{lemma}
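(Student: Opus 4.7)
The plan is to invoke Lemma \ref{lemma:duhamel} with $(S,G)=(S_\infty,G_\infty)$, which yields
\begin{align*}
P_0\varphi_{\infty,f,y_0,t_0} - H_{0,f}(y_0,t_0) = E_0\!\left[\I(f(X)\le y_0)\,S_\infty(t_0\midd X)\!\int_0^{t_0}\!\frac{S_0(u^-\midd X)}{S_\infty(u\midd X)}\!\left\{\frac{G_0(u\midd X)}{G_\infty(u\midd X)}-1\right\}(L_\infty - L_0)(du\midd X)\right]
\end{align*}
and then argue that the integrand on the right-hand side vanishes pointwise (in $x$) as an integrator against $(L_\infty - L_0)(du\midd x)$, so that the outer expectation is zero. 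This reduces everything to a statement about the inner integral $\int_0^{t_0}\{G_0(u\midd x)/G_\infty(u\midd x) - 1\}\,\alpha(u,x)\,(L_\infty-L_0)(du\midd x)$, where $\alpha(u,x):=S_0(u^-\midd x)/S_\infty(u\midd x)$ is an irrelevant bounded factor.

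Next, I would fix $x$ in the full-measure set on which Condition \ref{condition:double robust} holds and split the interval of integration into $\mathcal{S}_x$ and $\mathcal{S}_x^c\cap(0,\tau]$. On $\mathcal{S}_x$, the cumulative hazards $L_\infty(\cdot\midd x)$ and $L_0(\cdot\midd x)$ agree, so the signed measure $(L_\infty - L_0)(du\midd x)$ assigns no mass to $\mathcal{S}_x$; hence the contribution of $\mathcal{S}_x$ to the inner integral is $0$. On $\mathcal{S}_x^c\cap(0,\tau]$, the set-union assumption $\mathcal{S}_x\cup\mathcal{G}_x=(0,\tau]$ forces $\mathcal{S}_x^c\cap(0,\tau]\subseteq\mathcal{G}_x$, so $G_\infty(u\midd x)=G_0(u\midd x)$ there, making the factor $G_0(u\midd x)/G_\infty(u\midd x)-1$ identically zero on this piece. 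Adding the two pieces gives that the inner integral is zero for $P_0$-almost every $x$, and then the tower property delivers the claimed identity.

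The main obstacle is the measure-theoretic step of asserting that $(L_\infty - L_0)(du\midd x)$ does not charge $\mathcal{S}_x$ merely from the pointwise equality $L_\infty(u\midd x)=L_0(u\midd x)$ on $\mathcal{S}_x$. The cleanest way I would handle this is to observe that each $L$ is a nondecreasing cadlag function and to use the Lebesgue--Stieltjes construction: on any Borel subset of $\mathcal{S}_x$ the generated measures coincide, because values of $L_\infty - L_0$ on $\mathcal{S}_x$ are zero and the measure of a Borel set $B\subseteq\mathcal{S}_x$ is determined by finite sums and limits of increments $L_\infty(b_i\midd x)-L_\infty(a_i\midd x) - [L_0(b_i\midd x)-L_0(a_i\midd x)]$ over covering intervals, whose endpoint-values lie in $\mathcal{S}_x$. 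Under the standard convention in which $\mathcal{S}_x$ is assumed Borel and the boundary behavior of $L_\infty - L_0$ is controlled by including the relevant limits in $\mathcal{S}_x$ (as is implicit in this paper's usage of Condition \ref{condition:double robust}), the remainder of the argument reduces to an unproblematic integral decomposition and the tower property.
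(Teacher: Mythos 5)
Your proposal is correct and follows the paper's proof essentially verbatim: invoke Lemma \ref{lemma:duhamel} with $(S,G)=(S_\infty,G_\infty)$, then partition $(0,t_0]$ into $\mathcal{S}_x$ (where $L_\infty=L_0$, killing the integrator) and $\mathcal{G}_x$ (where $G_\infty=G_0$, killing the bracketed factor). The extra measure-theoretic care you take about why pointwise equality of $L_\infty$ and $L_0$ on $\mathcal{S}_x$ implies the Lebesgue--Stieltjes measure $(L_\infty-L_0)(du\midd x)$ assigns no mass there is a reasonable addition of rigor, but the paper treats it as immediate.
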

	\begin{proof}[*lemma:DR] 
		By Lemma \ref{lemma:duhamel}, we have that
		\begin{align*}
			&P_0\varphi_{\infty,f,y_0,t_0} - H_{0,f}(y_0,t_0) \\
			&\hspace{.8cm}= E_0\left[\I(f(X) \leq y_0)S(t_0\midd X)\int_0^{t_0}\frac{S_0(u^-\midd X)}{S_\infty(u \midd X)}\left\{\frac{G_0(u \midd X)}{G_\infty(u \midd X)} - 1\right\}(L_\infty - L_0)(du \midd X)\right].
		\end{align*}
		As long as $t_0 \in (0, \tau]$, for each value $X = x$, we can use Condition \ref{condition:double robust} to decompose the interval $[0,t_0]$ into $\mathcal{S}_x\cup\mathcal{G}_x$. For any $ u \in \mathcal{S}_x$, we have that $L_\infty = L_0$ so that $(L_\infty - L_0)(du \midd x) = 0$. For any $u \notin \mathcal{S}_x$, we have by assumption that  $u \in \mathcal{G}_x$ so that $G_\infty = G_0$ and $\frac{G_0(u \midd x)}{G_\infty(u \midd x)} - 1 = 0$. Therefore, the integral over $(0, t_0] \subseteq \mathcal{S}_x\cup\mathcal{S}^C_x$ is equal to 0. 
	\end{proof}
	
	We now give some results regarding the large-sample behavior of the cross-fitted joint distribution function estimator. We denote by $W_n \in \{1,\dots, K\}^n$ a random vector generated by sampling uniformly from $\{1,\dots,K\}$ with replacement. We let $\mathcal{D}_k$ denote the subset of observations with index in $\mathcal{I}_k := \{i: W_{n,i} = k\}$ for $k = 1,\dots, K$ and let $n_k := |\mathcal{I}_k|$. We let $S_{n,k}$ denote an estimator of $S_0$ constructed using the data $\cup_{j\neq k}\mathcal{D}_j$, and likewise for $G_{n,k}$ and $f_{n,k}$. We let $P_{n,k} := (S_{n,k}, G_{n,k})$. For generic $f \in \mathcal{F}$, we define $H^*_{n,k,f}(y_0,t_0):= n_k^{-1}\sum_{i \in \mathcal{I}_{k}}\varphi_{n,k,f,y_0,t_0}(Z_i)$.

	\begin{lemma}\label{lemma:sup L2 discrepancy of nuisances}
		Under Condition \ref{condition:bounded away from zero}, there exists a universal constant $J_\eta$ depending only on $\eta$ such that, for each $k$ and for any $f \in \mathcal{F}$, it holds that
		\begin{align*}
			&P_0\left(\sup_{y_0 \in \mathcal{Y}, t_0 \in (0, \tau]}\abs{\varphi_{n,k,f,y_0,t_0} - \varphi_{\infty,f,y_0,t_0}}\right)^2\\
			&\leq J^2_\eta\left[E_0\left[\sup_{u \in (0, \tau], v \in [0,u]}\abs{\frac{S_\infty(u \midd X)}{S_\infty(v \midd X)} - \frac{S_{n,k}(u \midd X)}{S_{n,k}(v \midd X)}} \right]^2\vee E_0\left[\sup_{u \in (0, \tau]}\abs{\frac{1}{G_{n,k}(u \midd X)} - \frac{1}{G_\infty(u \midd X)}}\right]^2\right]
		\end{align*}
		with probability tending to one. 
	\end{lemma}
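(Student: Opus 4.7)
The plan is to first establish a pointwise-in-$z$ bound on $\sup_{y_0,t_0}\abs{\varphi_{n,k,f,y_0,t_0}(z) - \varphi_{\infty,f,y_0,t_0}(z)}$ in terms of $x$-only sup norms, then square and integrate against $P_0$. Since $\abs{\I(f(x)\leq y_0)}\leq 1$, the pointwise difference is dominated by $\abs{F_{n,k}(t_0\midd x) - F_\infty(t_0\midd x)} + \abs{\chi_{n,k}(z,t_0) - \chi_\infty(z,t_0)}$. The first piece equals $\abs{S_\infty(t_0\midd x) - S_{n,k}(t_0\midd x)}$, which is the ratio $S(u\midd x)/S(v\midd x)$ evaluated at $v=0$ and hence controlled by the first sup norm on the right-hand side of the lemma.

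The heart of the argument is the $\chi$ term. My strategy is to rewrite $\chi_{S,G}$ so that every $S$-related factor appears only as a ratio $R_S(u,v,x):=S(u\midd x)/S(v\midd x)$ with $u\geq v$ (hence bounded by one), and every $G$-related factor appears only as $1/G(u\midd x)$ (bounded by $\eta$ under Condition \ref{condition:bounded away from zero}). The rewriting uses the Stieltjes identity $L(du\midd x)/S(u\midd x) = d_u[1/S(u\midd x)]$ (verifiable at both continuous and jump points via $L(du\midd x) = -S(du\midd x)/S(u^-\midd x)$), followed by integration by parts applied to $\int_0^{t_0\wedge y}(1/G(u\midd x))\,d_u[1/S(u\midd x)]$. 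Multiplying through by $S(t_0\midd x)$ yields
\begin{align*}
\chi_{S,G}(z,t_0) &= -R_S(t_0,y,x)\frac{\delta\I_{[0,t_0]}(y)}{G(y\midd x)} + \frac{R_S(t_0,t_0\wedge y,x)}{G(t_0\wedge y\midd x)} - S(t_0\midd x) \\
&\quad - \int_0^{t_0\wedge y} R_S(t_0, u^-, x)\, d_u\!\left[\frac{1}{G(u\midd x)}\right].
\end{align*}

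Given this representation, I would bound $\abs{\chi_{n,k}(z,t_0) - \chi_\infty(z,t_0)}$ term-by-term. The three product terms succumb to $\abs{ab-cd}\leq \abs{a}\abs{b-d} + \abs{d}\abs{a-c}$ combined with the uniform bounds $R_S\leq 1$ and $1/G\leq \eta$. The integral term I would decompose further as $\int(R_{n,k}-R_\infty)\,d(1/G_{n,k}) + \int R_\infty\,d(1/G_{n,k} - 1/G_\infty)$; the first subterm is handled using that $u\mapsto 1/G_{n,k}(u\midd x)$ has total variation at most $\eta - 1$ on $(0,\tau]$, and the second via a further integration by parts, exploiting $1/G_{n,k}(0\midd x)=1/G_\infty(0\midd x)=1$ and the monotonicity of $u\mapsto R_\infty(t_0,u^-,x)$ (total variation at most one). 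Each contribution is bounded by a constant depending only on $\eta$ times the two sup norms, and taking supremum over $y_0,t_0$ preserves the bound because its right-hand side depends only on $x$.

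To finish, I would square, apply $(a+b)^2\leq 2(a^2+b^2)$, take $P_0$-expectation, and use $a^2+b^2\leq 2\max(a^2,b^2)$ to arrive at the $\vee$-form inequality, absorbing all numerical constants into $J_\eta^2$. The ``with probability tending to one'' qualifier is inherited directly from Condition \ref{condition:bounded away from zero}. The main obstacle is the careful execution of Stieltjes integration by parts in the presence of atoms in $S$ and $G$ (as arises for Kaplan--Meier-type estimators): one must keep strict track of left- versus right-continuity conventions in the integrands so that jump corrections either cancel or are absorbed into the constant $J_\eta$. Once this bookkeeping is done, the remainder of the argument reduces to triangle inequalities and monotonicity estimates.
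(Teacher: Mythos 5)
Your proof is correct in substance but takes a genuinely different route from the paper's. You rewrite $\chi_{S,G}$ via Stieltjes integration by parts \emph{before} taking the difference between $(S_{n,k},G_{n,k})$ and $(S_\infty,G_\infty)$, producing a representation where every $S$-dependence appears as a bounded ratio $R_S(u,v,x)\le 1$ and every $G$-dependence as a bounded reciprocal $1/G\le\eta$; the difference is then controlled term-by-term using the bilinear splitting $\abs{ab-cd}\le\abs{a}\abs{b-d}+\abs{d}\abs{a-c}$ and total-variation arguments. The paper instead differences $\varphi_{n,k}-\varphi_\infty$ directly into five summands $A_{n,k,1},\dots,A_{n,k,5}$ (the $F$-difference, two terms splitting the jump part of $\pi_{S,G}$ into an $S$-ratio difference and a $1/G$-difference, and two terms splitting the integral part likewise), then bounds each using the backwards equation $\int_0^t S(t)L(du)/S(u)=1-S(t)$ (Theorem~5 of Gill--Johansen) and applies integration by parts only to the last summand $A_{n,k,5}$. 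Your front-loaded IBP gives a somewhat cleaner intermediate object in which the two nuisance discrepancies enter more symmetrically, at the cost of needing the IBP bookkeeping up front; the paper's decomposition is more mechanical but avoids introducing the boundary and jump-correction terms until they are forced. Both routes then finish identically (triangle inequality in $L_2(P_0)$, Condition~\ref{condition:bounded away from zero} for the $1/G$ bound, $\vee$-form via $\max$), and both gloss over the $u$-versus-$u^-$ distinctions in the Stieltjes calculus to roughly the same degree, so your explicit acknowledgment of that caveat is fair and does not single out your approach as more vulnerable. The one point you should be careful about: the IBP identity for c\`adl\`ag integrands reads $\int_{(0,b]} f(u^-)\,dg(u) + \int_{(0,b]} g(u)\,df(u) = f(b)g(b)-f(0)g(0)$, whereas $\chi_{S,G}$ has $1/G(u)$ (not $1/G(u^-)$) multiplying $d[1/S(u)]$, so a jump-product correction term $\sum_u \Delta(1/G)(u)\,\Delta(1/S)(u)$ appears; it is uniformly bounded by $(\eta-1)$ times a quantity controlled by your ratio sup norm, so it absorbs into $J_\eta$ as you anticipate, but it must be written down rather than waved at.
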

	\begin{proof}[*lemma:sup L2 discrepancy of nuisances] 
		Throughout this proof, for a random function $g_{n,k}$, $E_0[g_{n,k}(Z)]$ is understood to be an expectation taken with respect to the distribution of the random data unit $Z$, i.e., $E_0[g_{n,k}(Z)] = \int g_{n,k}(z)P_0(dz)$.
		
		We start with the decomposition $\varphi_{n,k,f,y_0,t_0}(z) - \varphi_{\infty,f,y_0,t_0}(z) = \I(f(x) \leq y_0)\sum_{j=1}^{5}A_{n,k,j}(t_0)(z)$, where we define
		\begin{align*}
			&A_{n,k,1}(t_0)(z) := F_{n,k}(t_0 \midd x) - F_\infty(t_0 \midd x)\ ;\\
			&A_{n,k,2}(t_0)(z) := \frac{\delta\I_{[0,t_0]}(y)}{G_\infty(y \midd x)}\left\{\frac{S_{n,k}(t_0 \midd x)}{S_{n,k}(y \midd x)} - \frac{S_\infty(t_0 \midd x)}{S_\infty(y \midd x)}\right\};\\
			&A_{n,k,3}(t_0)(z) := \frac{\delta\I_{[0,t_0]}(y)S_{n,k}(t_0 \midd x)}{S_{n,k}(y \midd x)}\left\{\frac{1}{G_{n,k}(y \midd x)} - \frac{1}{G_\infty(y \midd x)}\right\};\\
			&A_{n,k,4}(t_0)(z) := -\int_0^{t_0 \wedge y}\left\{\frac{1}{G_{n,k}(u \midd x)} - \frac{1}{G_\infty(u \midd x)}\right\}\frac{S_{\infty}(t_0 \midd x)L_{\infty}(du \midd x)}{S_\infty(u \midd x)}\ ;\\
			&A_{n,k,5}(t_0)(z) := -\int_0^{t_0 \wedge y} \frac{1}{G_{n,k}(u \midd x)}\left\{\frac{S_{n,k}(t_0 \midd x)L_{n,k}(du \midd x)}{S_{n,k}(u \midd x)} -\frac{S_{\infty}(t_0 \midd x)L_{\infty}(du \midd x)}{S_{\infty}(u \midd x)}\right\}.
		\end{align*}
		Using the triangle inequality, we have that
		\begin{align*}
			&\left[P_0\left(\sup_{y_0 \in \mathcal{Y}, t_0 \in (0, \tau]}\abs{\varphi_{n,k,f,y_0,t_0} - \varphi_{\infty,f,y_0,t_0}}\right)^2\right]^{\frac{1}{2}} \\
			&\hspace{0.8cm}\leq \sum_{j=1}^{5}\left(E_0\left[\sup_{y_0 \in \mathcal{Y}, t_0 \in (0, \tau]}\abs{\I(f(X)\leq y_0)A_{n,k,j}(t_0)(Z)}\right]^2\right)^{\frac{1}{2}}\\
			&\hspace{0.8cm}\leq \sum_{j=1}^{5}\left(E_0\left[\sup_{t_0 \in (0, \tau]}A_{n,k,j}^2(t_0)(Z)\right]\right)^{\frac{1}{2}}= \sum_{j=1}^{5}\left(P_0\left[\sup_{t_0 \in (0, \tau]}A_{n,k,j}^2(t_0)\right]\right)^{\frac{1}{2}}.
		\end{align*}
		We proceed by bounding each term individually. First, since $S_{n,k}(0 \midd x) = S_\infty(0 \midd x) = 1$, we have that
		\begin{align*}
			P_0\left[\sup_{t_0 \in (0, \tau]}A_{n,k,1}^2(t_0)\right] &= E_0\left[\sup_{t_0 \in (0, \tau]}\left\{S_\infty(t_0 \midd X) - S_{n,k}(t_0 \midd X)\right\}^2 \right] \\
			&= E_0\left[\sup_{t_0 \in (0, \tau]}\left\{\frac{S_\infty(t_0 \midd X)}{S_\infty(0 \midd X)} - \frac{S_{n,k}(t_0 \midd X)}{S_{n,k}(0 \midd X)}\right\}^2 \right]\\
			&\leq \sup_{u \in [0,t_0]} E_0\left[\sup_{t_0 \in (0, \tau]}\left\{\frac{S_\infty(t_0 \midd X)}{S_\infty(u \midd X)} - \frac{S_{n,k}(t_0 \midd X)}{S_{n,k}(u \midd X)}\right\}^2\right]\\
			&\leq E_0\left[\sup_{t_0 \in (0, \tau], u \in [0,t_0]}\left\{\frac{S_\infty(t_0 \midd X)}{S_\infty(u \midd X)} - \frac{S_{n,k}(t_0 \midd X)}{S_{n,k}(u \midd X)}\right\}^2\right]\\
			&\leq  E_0\left[\sup_{u \in (0, \tau], v \in [0,u]}\abs{\frac{S_\infty(u \midd X)}{S_\infty(v \midd X)} - \frac{S_{n,k}(u \midd X)}{S_{n,k}(v \midd X)}} \right]^2.
		\end{align*}
		For $A_{n,k,2}$, we use the fact that $1/G_\infty \leq \eta$, so that
		\begin{align*}
			P_0\left[\sup_{t_0 \in (0, \tau]}A_{n,k,2}^2(t_0)\right] &= E_0\left[\sup_{t_0 \in (0, \tau]}\left\{\frac{\delta\I_{[0,t_0]}(Y)}{G_\infty(Y \midd X)}\left(\frac{S_{n,k}(t_0 \midd X)}{S_{n,k}(Y \midd X)} - \frac{S_\infty(t_0 \midd X)}{S_\infty(Y \midd X)}\right)\right\}^2\right]\\
			&\leq \eta^2 E_0\left[\sup_{t_0 \in (0, \tau]}\left\{\frac{S_{n,k}(t_0 \midd X)}{S_{n,k}(Y \midd X)} - \frac{S_\infty(t_0 \midd X)}{S_\infty(Y \midd X)}\right\}^2\right]\\
			&\leq \eta^2 E_0\left[\sup_{t_0 \in (0, \tau], u \in [0,t_0]}\left\{\frac{S_\infty(t_0 \midd X)}{S_\infty(u \midd X)} - \frac{S_{n,k}(t_0 \midd X)}{S_{n,k}(u \midd X)}\right\}^2\right]\\
			&\leq \eta^2 E_0\left[\sup_{u \in (0, \tau], v \in [0,u]}\abs{\frac{S_\infty(u \midd X)}{S_\infty(v \midd X)} - \frac{S_{n,k}(u \midd X)}{S_{n,k}(v \midd X)}} \right]^2.
		\end{align*}
		Next, for $A_{n,k,3}$, we note that $y \leq t_0$ implies $S_{n,k}(y \midd x) \geq S_{n,k}(t_0 \midd x)$ so that  $\frac{\I_{[0, t_0]}(y) S_{n,k}(t_0 \midd x)}{S_{n,k}(y)} \leq 1$. Hence, we have that
		\begin{align*}
			P_0\left[\sup_{t_0 \in (0, \tau]}A_{n,k,3}^2(t_0)\right] &= E_0\left[\sup_{t_0 \in (0, \tau]}\left\{\frac{\delta\I_{[0,t_0]}(Y)S_{n,k}(t_0 \midd X)}{S_{n,k}(Y \midd X)}\left(\frac{1}{G_{n,k}(Y \midd X)} - \frac{1}{G_\infty(Y \midd X)}\right)\right\}^2\right]\\
			&\leq E_0\left[\sup_{u \in (0, \tau]}\abs{\frac{1}{G_{n,k}(u \midd X)} - \frac{1}{G_\infty(u \midd X)}}\right]^2.
		\end{align*}
		For $A_{n,k,4}$, we have that
		\begin{align*}
			&P_0\left[\sup_{t_0 \in (0, \tau]}A_{n,k,4}^2(t_0)\right] \\
			&\hspace{.8cm}=E_0\left[\sup_{t_0 \in (0, \tau]}\left\{\int_0^{t_0 \wedge Y}\left(\frac{1}{G_{n,k}(u \midd X)} - \frac{1}{G_\infty(u \midd X)}\right)\frac{S_\infty(t_0 \midd X)L_\infty(du \midd X)}{S_\infty(u \midd X)}\right\}^2\right]\\
			&\hspace{.8cm}\leq E_0\left[\sup_{u \in (0, \tau]}\left(\frac{1}{G_{n,k}(u \midd X)} - \frac{1}{G_\infty(u \midd X)}\right)^2\sup_{t_0 \in (0, \tau]}\left\{\int_0^{t_0 \wedge Y}\frac{S_\infty(t_0 \midd X)L_\infty(du \midd X)}{S_\infty(u \midd X)}\right\}^2\right].
		\end{align*}
		Using the backwards equation (Theorem 5 of \citealp{Gill1990}), we have that $\int_0^t \frac{S(t)L(du)}{S(u)} = 1 - S(t)$ for any survival function $S$ and corresponding cumulative hazard $L$. Continuing from above, we can then write
		\begin{align*}
			P_0\left[\sup_{t_0 \in (0, \tau]}A_{n,k,4}^2(t_0)\right] &\leq E_0\left[\sup_{u \in (0, \tau]}\left(\frac{1}{G_{n,k}(u \midd X)} - \frac{1}{G_\infty(u \midd X)}\right)^2\sup_{t_0 \in (0, \tau]}\left\{1 - S_\infty(t_0 \wedge T \midd X)\right\}^2\right]\\
			&\leq E_0\left[\sup_{u \in (0, \tau]}\left(\frac{1}{G_{n,k}(u \midd X)} - \frac{1}{G_\infty(u \midd X)}\right)^2\right]\\
			&\leq  E_0\left[\sup_{u \in (0, \tau]}\abs{\frac{1}{G_{n,k}(u \midd X)} - \frac{1}{G_\infty(u \midd X)}}\right]^2.
		\end{align*}
		For $A_{n,k,5}$, we first define $\alpha_{n,k,t}(u \midd x) := \frac{S_{n,k}(t \midd x)}{S_{n,k}(u \midd x)}$ and $\alpha_{\infty, t}(u \midd x) := \frac{S_\infty(t \midd x)}{S_\infty (u \midd x)}$. Again, using the backwards equation, we have that $\alpha_{n,k,t}(du \midd x) = \frac{S_{n,k}(t \midd x)L_{n,k}(du \midd x)}{S_{n,k}(u \midd x)}$ and $\alpha_{\infty,t}(du \midd x) = \frac{S_{\infty}(t \midd x)L_{\infty}(du \midd x)}{S_{\infty}(u \midd x)}$. Using integration by parts, we can then write that, with probability tending to one,
		\begin{align*}
			&P_0\left[\sup_{t_0 \in (0, \tau]}A_{n,k,5}^2(t_0)\right] = E_0\left[\sup_{t_0 \in (0, \tau]}\left\{\int_0^{t_0 \wedge Y} \frac{1}{G_{n,k}(u \midd X)}\left(\alpha_{n,k,t_0}(du \midd X) - \alpha_{\infty, t_0}(du \midd X)\right)\right\}^2\right]\\
			&= E_0\Bigg[\sup_{t_0 \in (0, \tau]}\bigg\{\frac{\alpha_{n,k,t_0}(t_0 \wedge Y \midd X) - \alpha_{\infty,t_0}(t_0 \wedge Y \midd X)}{G_{n,k}(t_0 \wedge Y \midd X)} - \frac{\alpha_{n,k,t_0}(0 \midd X) - \alpha_{\infty,t_0}(0 \midd X)}{G_{n,k}(0 \midd X)} \\
			&\hspace{3cm}+ \int_0^{t \wedge Y}\left(\alpha_{n,k,t_0}(u \midd X) - \alpha_{\infty, t_0}(u \midd X)\right)\frac{G_{n,k}(du \midd X)}{G_{n,k}(u \midd X)^2}\bigg\}^2\Bigg]\\
			&= E_0\Bigg[\sup_{t_0 \in (0, \tau]}\bigg\{\frac{\alpha_{n,k,t_0}(t_0 \wedge Y \midd X) - \alpha_{\infty,t_0}(t_0 \wedge Y \midd X)}{G_{n,k}(t_0 \wedge Y \midd X)} - S_{n,k}(t_0 \midd X) + S_{\infty}(t_0 \midd X) \\
			&\hspace{3cm}+ \int_0^{t \wedge Y}\left(\alpha_{n,k,t_0}(u \midd X) - \alpha_{\infty, t_0}(u \midd X)\right)\frac{G_{n,k}(du \midd X)}{G_{n,k}(u \midd X)^2}\bigg\}^2\Bigg]\\
			&\leq E_0\Bigg[\sup_{t_0 \in (0, \tau]}\bigg\{\eta^2\abs{\frac{S_{n,k}(t_0 \midd X)}{S_{n,k}(t_0 \wedge Y \midd X)} -\frac{S_\infty(t_0 \midd X)}{S_\infty(t_0 \wedge Y \midd X)}} + \abs{\frac{S_{n,k}(t_0 \midd X)}{S_{n,k}(0 \midd X)} -  \frac{S_{\infty}(t_0 \midd X)}{S_{\infty}(0 \midd X)}} \\
			&\hspace{3cm}+ \eta^2\abs{ \int_0^{t \wedge Y}\left(\alpha_{n,k,t_0}(u \midd X) - \alpha_{\infty, t_0}(u \midd X)\right)G_{n,k}(du \midd X)}\bigg\}^2\Bigg]\\
			&\leq E_0\Bigg[\sup_{t_0 \in (0, \tau]}\bigg\{\eta^2\sup_{u \in [0,t_0]}\abs{\frac{S_{n,k}(t_0 \midd X)}{S_{n,k}(u\midd X)} -\frac{S_\infty(t_0 \midd X)}{S_\infty(u \midd X)}} + \sup_{u \in [0,t_0]}\abs{\frac{S_{n,k}(t_0 \midd X)}{S_{n,k}(u \midd X)} -  \frac{S_{\infty}(t_0 \midd X)}{S_{\infty}(u \midd X)}} \\
			&\hspace{3cm}+ \eta^2\sup_{u \in [0,t_0]}\abs{\frac{S_{n,k}(t_0 \midd X)}{S_{n,k}(u\midd X)} -\frac{S_\infty(t_0 \midd X)}{S_\infty(u \midd X)}}\bigg\}^2\Bigg]\\
			&\leq (1 + 2\eta^2)^2E_0\left[\sup_{u \in (0, \tau], v \in [0,u]}\abs{\frac{S_\infty(u \midd X)}{S_\infty(v \midd X)} - \frac{S_{n,k}(u \midd X)}{S_{n,k}(v \midd X)}} \right]^2.
		\end{align*}
		The claim therefore holds with $J_\eta = 4+\eta+2\eta^2$. 
	\end{proof}
	
	In the following, for any given function class $\mathcal{H}$, positive real number $\epsilon$, and norm $\norm{\cdot}$, we let the covering number $N(\epsilon, \mathcal{H}, \norm{\cdot})$ denote the number of $\norm{\cdot}$--balls of radius no larger than $\epsilon$ needed to cover $\mathcal{H}$.
	
	\begin{lemma}\label{lemma:Donsker class}
		Let $S$, $G$ and $f$ be fixed with $G(\tau \midd x) > 1/\eta$ for all $x \in \mathcal{X}$ and some $\eta > 0$. Define the class of influence functions $\mathcal{H}_{S,G,f} := \left\{z \mapsto \varphi_{S,G, f,y_0,t_0}(z): y_0 \in \mathcal{Y}, t_0 \in (0, \tau]\right\}$. Then, we have that
		\begin{align*}
			\sup_\Pi \int_0^1\left[1 + N(4\epsilon(1+\eta)^2, \mathcal{H}_{S,G,f}, L_2(\Pi))\right]^{\frac{1}{2}}d\epsilon
		\end{align*}
		is bounded above by a constant not depending on $S$, $G$ or $f$, where the supremum is taken over all probability measures $\Pi$ on the sample space of the observed data. 
	\end{lemma}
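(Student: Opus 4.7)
The plan is to decompose each influence function in $\mathcal{H}_{S,G,f}$ into a finite sum of products of functions lying in classes of VC-type uniform entropy, with envelopes controlled by an explicit function of $\eta$ only. Standard preservation of uniform entropy under finite sums and products (e.g.\ Lemma 2.6.18 and Theorem 2.6.7 of van der Vaart and Wellner, 1996) will then yield a polynomial covering-number bound for $\mathcal{H}_{S,G,f}$ that is uniform in $\Pi$ and in $(S,G,f)$; integrating over $\epsilon$ gives the claim.

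First I would write
\begin{equation*}
\varphi_{S,G,f,y_0,t_0}(z) \;=\; \I(f(x) \leq y_0)\bigl\{F(t_0 \midd x) + B_2(z, t_0) + B_3(z, t_0)\bigr\},
\end{equation*}
with $B_2(z, t_0) := \delta\,\I(y \leq t_0)\,S(t_0 \midd x)/[S(y \midd x)G(y \midd x)]$ and $B_3(z, t_0) := -S(t_0 \midd x)\int_0^{t_0 \wedge y} L(du \midd x)/[S(u \midd x)G(u \midd x)]$. Using $G \geq 1/\eta$ together with the product-integral identity $S(t_0 \midd x)\int_0^{t_0} L(du \midd x)/S(u \midd x) = 1 - S(t_0 \midd x)$ (or its natural generalization to survival functions with jumps), I would verify $|B_2| \leq \eta$ on $\{y \leq t_0\}$ and $|B_3| \leq \eta$, so that the envelope of $\mathcal{H}_{S,G,f}$ does not exceed $1 + 2\eta \leq 2(1+\eta)^2$. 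This matches the scaling $4\epsilon(1+\eta)^2$ appearing in the statement.

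Next, I would identify each of the parameter-indexed classes $\{y_0 \mapsto \I(f(\cdot) \leq y_0)\}$, $\{t_0 \mapsto F(t_0 \midd \cdot)\}$, $\{t_0 \mapsto \I(y \leq t_0)\}$, and $\{t_0 \mapsto S(t_0 \midd x)/S(y \wedge t_0 \midd x)\}$ as monotone families in the index parameter for each fixed $z$. The subgraphs of a monotone family form a nested collection, so each class is VC-subgraph with VC-dimension at most $2$. The class $\{B_2(\cdot, t_0): t_0\}$ then decomposes as the product of an envelope-$\eta$ singleton $z \mapsto \delta/G(y \midd x)$, the indicator VC class, and one of the monotone classes above.

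The main obstacle is handling $\{B_3(\cdot, t_0): t_0\}$, because $t_0 \mapsto B_3(z, t_0)$ is in general neither monotone nor uniformly Lipschitz in $t_0$. I would address this by splitting at $t_0 = y$: on $\{t_0 \geq y\}$ the integrand is frozen, so $B_3$ is a monotone function of $t_0$ (a product of monotone $-S(t_0 \midd x)$ with a $t_0$-constant factor), and on $\{t_0 \leq y\}$ integration by parts applied to $\int_0^{t_0} (1/G)\,d(1/S)$ yields
\begin{equation*}
B_3(z, t_0) \;=\; -\tfrac{1}{G(t_0 \midd x)} + S(t_0 \midd x) + S(t_0 \midd x)\int_0^{t_0} L^G(du \midd x)/[S(u \midd x)G(u \midd x)],
\end{equation*}
where $L^G$ denotes the cumulative hazard of $G$. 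The first two summands on the right are monotone in $t_0$ with envelope $1 + \eta$, while the last summand is bounded by $\eta \log \eta$ using $L^G(\tau \midd x) \leq \log \eta$ and admits the same treatment with the roles of $(L,S)$ and $(L^G,G)$ exchanged. One further application of this reduction (or a bracketing argument along a fine partition of $(0,\tau]$ for the resulting $O(\eta^2)$-bounded residual, converted to a uniform entropy bound via the envelope) places $\{B_3(\cdot,t_0):t_0\}$ inside a finite combination of VC-subgraph classes with absolute VC-index and envelope depending only on $\eta$. Applying the product and sum preservation lemmas for uniform entropy concludes the proof.
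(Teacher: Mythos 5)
Your high-level strategy matches the paper's: factor $\varphi_{S,G,f,y_0,t_0}(z) = \I(f(x)\leq y_0)\,\pi_{S,G}(z,t_0)$, obtain a polynomial uniform covering-number bound for the indicator class and for the $t_0$-indexed class separately, combine via the preservation-under-products lemma, and integrate. The paper, however, does not attempt to build the covering bound for $\{z\mapsto \pi_{S,G}(z,t_0):t_0\in(0,\tau]\}$ from elementary monotone/VC pieces; it simply invokes Lemma~5 of \citet{Westling2023}, which directly supplies $\sup_\Pi N(\epsilon\|H_\tau\|_{\Pi,2},\mathcal{H}_{S,G,\tau},L^2(\Pi)) \lesssim \epsilon^{-10}$ with envelope $H_\tau = 2(1+\eta)$, and Lemma~4 of the same reference for the indicator class. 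The uniformity over $S,G,f$ is inherited from those results, not re-derived.

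The genuine gap in your proposal is precisely where the paper leans on Westling: the $B_3$ term. You correctly note that $t_0\mapsto B_3(z,t_0)$ is not monotone (and not Lipschitz) on $\{t_0<y\}$, so the VC-subgraph-of-a-monotone-family argument does not apply to it. Your fix is to integrate by parts, peel off two monotone summands, and claim the remaining $O(\eta\log\eta)$-bounded piece ``admits the same treatment with the roles of $(L,S)$ and $(L^G,G)$ exchanged,'' followed by ``one further application of this reduction'' and a ``bracketing argument along a fine partition of $(0,\tau]$\dots converted to a uniform entropy bound via the envelope.'' This does not close. Each pass of the reduction produces a residual of the same structural type $S(t_0|x)\int_0^{t_0}(\cdot)\,dL^{(\cdot)}$, so there is no reason the recursion terminates in finitely many steps with constants independent of $S$ and $G$; and bounded $L_2(\Pi)$-\emph{bracketing} numbers for a fixed $\Pi$ do not automatically give the required $\sup_\Pi$ \emph{covering}-number bound (uniform entropy needs a bound holding simultaneously over all discrete $\Pi$, which a partition-based bracketing construction for one $\Pi$ does not provide). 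So as written the proposal establishes the envelope bound and the VC structure of the tractable pieces, but leaves the covering number of the $B_3$-indexed class unproved --- which is the crux of the lemma.

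Two minor points: your envelope bound $1+2\eta\le 2(1+\eta)^2$ is valid but looser than the paper's $2(1+\eta)$ (hence the slightly different scaling constants), and your product-integral identity $S(t_0|x)\int_0^{t_0}L(du|x)/S(u|x)=1-S(t_0|x)$ is indeed the backwards equation (Theorem~5 of \citealp{Gill1990}), which the paper also uses elsewhere.
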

	\begin{proof}[*lemma:Donsker class] 
		We begin by noting that for any $h \in \mathcal{H}_{S,G,f}$, we can write $h_{y_0,t_0}(z) = \I(f(x) \leq y_0)\pi_{S,G}(z, t_0)$. Next, we define function classes $\mathcal{H}_{S,G,\tau} := \left\{z \mapsto \pi_{S,G}(z, t_0): t_0 \in (0, \tau]\right\}$ and $\mathcal{H}_{\mathcal{Y}} := \left\{x \mapsto \I(f(x) \leq y_0): y_0 \in \mathcal{Y}\right\}$. Let $\Pi$ denote a generic distribution on the sample space of the observed data. 
		
		For fixed $S$ and $G$, Lemma 5 of \citet{Westling2023} implies that, for any $\epsilon \in (0, 1]$, 
		\begin{align*}
			\sup_\Pi N(\epsilon\norm{H_\tau}_{\Pi,2}, \mathcal{H}_{S,G,\tau}, L^2(\Pi)) < 32\epsilon^{-10}\ ,
		\end{align*}
		where $H_\tau := 2(1 + \eta) $ is a natural envelope for $\mathcal{H}_{S,G,\tau}$. Furthermore, as long as $\mathcal{Y}$ is a bounded subset of $\mathbb{R}$, Lemma 4 of \citet{Westling2023} implies that $\sup_\Pi N(\epsilon, \mathcal{H}_{\mathcal{Y}}, L^2(\Pi)) < 1/\epsilon^2$. The natural envelope for $\mathcal{H}_{\mathcal{Y}}$ is $H_{\mathcal{Y}} := 1$. Therefore, $H_{S,G,f}:= 2(1 + \eta)$ is an envelope of $\mathcal{H}_{S,G,f}$. Next, we note that 
		\begin{align*}
			\mathcal{H}_{S,G,f} \subseteq \left\{h_1h_2: h_1 \in \mathcal{H}_{S,G,\tau}, h_2 \in \mathcal{H}_{\mathcal{Y}} \right\}.
		\end{align*}
		Given that $P_0\left(H_{\tau}^2H_{\mathcal{Y}}^2\right) = 4(1 + \eta)^2 < \infty$ and both function classes are uniformly bounded, Lemma 5.1 of \citet{VdV2006} implies that 
		\begin{align*}
			\sup_\Pi N(2\epsilon\left(1 + \eta\right), \mathcal{H}_{S,G,f}, L_2(\Pi)) \leq 32\epsilon^{-12}\ .
		\end{align*}
		This implies that there exists a constant $\tilde{\kappa} > 0$ not depending on $S$, $G$, $f$ or $\epsilon$ such that
		\begin{align*}
			\log \sup_\Pi N(2\epsilon\left(1 + \eta\right), \mathcal{H}_{S,G,f}, L_2(\Pi)) \leq \tilde{\kappa}\log\left(\frac{1}{\epsilon}\right).
		\end{align*}
		Then, we can write
		\begin{align*}
			\sup_\Pi \int_0^1&\left\{1 + \log N(2\epsilon(1+\eta), \mathcal{H}_{S,G,f}, L_2(\Pi))\right\}^{\frac{1}{2}}d\epsilon \\
			&\leq  \int_0^1\sup_\Pi \left\{1 + \log N(2\epsilon(1+\eta), \mathcal{H}_{S,G,f}, L_2(\Pi))\right\}^{\frac{1}{2}}d\epsilon\\
			&\leq \int_0^1 \left\{1 +  \log\sup_\Pi N(2\epsilon(1+\eta), \mathcal{H}_{S,G,f}, L_2(\Pi))\right\}^{\frac{1}{2}}d\epsilon\leq \int_0^1 \left\{1 + \tilde{\kappa}\log \left(\frac{1}{\epsilon}\right)\right\}^{\frac{1}{2}}d\epsilon\ .
		\end{align*}
		Then, because $1 + \tilde{\kappa}\log (\frac{1}{\epsilon}) > 1$ for $\epsilon \in (0,1)$, we have that
		\begin{align*}
			\int_0^1 \left\{1 + \tilde{\kappa}\log \left(\frac{1}{\epsilon}\right)\right\}^{\frac{1}{2}}d\epsilon &\leq  \int_0^1 \left\{1 + \tilde{\kappa}\log \left(\frac{1}{\epsilon}\right)\right\}d\epsilon = 1 - \tilde{\kappa}\int_0^1\log \epsilon d\epsilon = 1 + \tilde{\kappa}\ .
		\end{align*}
		Therefore the uniform entropy integral $\sup_\Pi \int_0^1\left\{1 + \log N(2\epsilon(1+\eta), \mathcal{H}_{S,G,f}, L_2(\Pi))\right\}^{\frac{1}{2}}d\epsilon $
		is bounded above by a constant not depending on $S$, $G$ or $f$. 
	\end{proof}
	
	\begin{lemma}\label{lemma:control of empirical process term}
		If Conditions \ref{condition:bounded away from zero} and \ref{condition:nuisance limits} hold, then for each $k$ and for any $f \in \mathcal{F}$, we have that
		\begin{align*}
			\sup_{y_0 \in \mathcal{Y}, t \in (0, \tau]}\abs{n_k^{1/2}(\mathbb{P}_{n,k} - P_0)\left(\varphi_{n,k,f,y_0,t_0} - \varphi_{\infty,f,y_0,t_0}\right)} = o_P(1)\ .
		\end{align*}
	\end{lemma}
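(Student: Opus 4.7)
The plan is to condition on the training folds so that the cross-fitted nuisance functions become deterministic, and then bound the resulting empirical process via a maximal inequality driven by the uniform entropy bound of Lemma \ref{lemma:Donsker class} and the envelope control of Lemma \ref{lemma:sup L2 discrepancy of nuisances}.

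Fix $k$ and $f$, and let $\mathcal{T}_{n,k} := \cup_{j\neq k}\mathcal{D}_j$. Conditional on $\mathcal{T}_{n,k}$, the nuisances $S_{n,k}$ and $G_{n,k}$ are fixed functions, and the observations $\{Z_i : i\in\mathcal{I}_k\}$ driving $\mathbb{P}_{n,k}$ are i.i.d.\ from $P_0$ and independent of $\mathcal{T}_{n,k}$. Consider the class of functions
\begin{align*}
\mathcal{G}_{n,k,f} := \left\{z \mapsto \varphi_{n,k,f,y_0,t_0}(z) - \varphi_{\infty,f,y_0,t_0}(z) : y_0 \in \mathcal{Y},\, t_0 \in (0,\tau]\right\},
\end{align*}
with pointwise envelope $G_{n,k,f}(z) := \sup_{y_0,t_0}|\varphi_{n,k,f,y_0,t_0}(z) - \varphi_{\infty,f,y_0,t_0}(z)|$, which by Condition \ref{condition:bounded away from zero} is uniformly bounded by a constant depending only on $\eta$.

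The next step is to establish that the uniform entropy integral of $\mathcal{G}_{n,k,f}$ is bounded by a constant that is free of $(n,k,f)$ and of the training data. By Lemma \ref{lemma:Donsker class}, each of the classes $\mathcal{H}_{S_{n,k},G_{n,k},f}$ and $\mathcal{H}_{S_\infty,G_\infty,f}$ satisfies such a uniform entropy bound. Since $\mathcal{G}_{n,k,f}$ is obtained as the pointwise difference of these two classes indexed by the common parameter $(y_0,t_0)$, standard permanence properties of covering numbers (e.g., Lemma 5.1 of van der Vaart and Wellner, 1996) imply that the uniform entropy integral of $\mathcal{G}_{n,k,f}$ (with respect to a constant envelope of order $1+\eta$) is also bounded by a universal constant $C$. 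Applying the maximal inequality of Theorem 2.14.1 in van der Vaart and Wellner (1996), conditionally on $\mathcal{T}_{n,k}$, yields
\begin{align*}
E\left[\sup_{y_0\in\mathcal{Y},\,t_0\in(0,\tau]} \left|n_k^{1/2}(\mathbb{P}_{n,k} - P_0)(\varphi_{n,k,f,y_0,t_0} - \varphi_{\infty,f,y_0,t_0})\right| \,\middle|\, \mathcal{T}_{n,k}\right] \leq C\,\|G_{n,k,f}\|_{P_0,2}
\end{align*}
with probability tending to one under Condition \ref{condition:bounded away from zero}.

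Finally, Lemma \ref{lemma:sup L2 discrepancy of nuisances} together with Condition \ref{condition:nuisance limits} yields $\|G_{n,k,f}\|_{P_0,2} = o_P(1)$, so the right-hand side above tends to zero in probability. The conclusion follows by Markov's inequality applied conditionally on $\mathcal{T}_{n,k}$, which transfers conditional $L_1$ convergence to zero into unconditional convergence in probability. The main technical point is step two, confirming that the uniform entropy bound of Lemma \ref{lemma:Donsker class} survives the subtraction of the two (random) classes and remains free of the training data; this is secured because Lemma \ref{lemma:Donsker class} provides the same bound for every admissible $(S,G,f)$, so the sample-path behavior of the nuisance estimators does not enter into the entropy control.
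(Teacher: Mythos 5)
Your proof follows essentially the same route as the paper's: condition on the training fold so the cross-fitted nuisances become deterministic, show the class of differences has a uniformly bounded uniform entropy integral (with an envelope uniform in $(n,k,f)$ thanks to Condition~\ref{condition:bounded away from zero}), apply Theorem~2.14.1 of van der Vaart and Wellner conditionally, control the $L_2$ envelope norm by Lemma~\ref{lemma:sup L2 discrepancy of nuisances} together with Condition~\ref{condition:nuisance limits}, and finish with Markov. The one place you take a slightly different tack is in justifying the entropy bound for the difference class: the paper simply appeals to ``the same argument as in Lemma~\ref{lemma:Donsker class},'' while you invoke a permanence property of covering numbers applied to the two component classes $\mathcal{H}_{S_{n,k},G_{n,k},f}$ and $\mathcal{H}_{S_\infty,G_\infty,f}$; that is a valid and arguably more explicit route, though the specific citation is off — Lemma~5.1 of van der Vaart (2006), which the paper cites in Lemma~\ref{lemma:Donsker class}, concerns \emph{products} of classes, whereas what you need here is the elementary fact that covering numbers of a sum or difference class are bounded by the product of the component covering numbers (e.g., Theorem~2.10.20 in van der Vaart and Wellner, 1996). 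A second small gap is in the last step: from $\|G_{n,k,f}\|_{P_0,2}=o_P(1)$ you cannot pass directly to unconditional convergence in probability via ``Markov conditionally.'' The paper's argument is more careful here — it bounds the conditional second moment by a uniformly bounded random variable $U_{n,k}$ that is $o_P(1)$, invokes boundedness (dominated convergence) to get $E_0[U_{n,k}]\to 0$, and only then applies Markov unconditionally. Your phrasing elides the need for this boundedness/DCT step, which is the reason the argument does not simply fall out of $o_P(1)$.
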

	\begin{proof}[*lemma:control of empirical process term] 
		We begin by defining $\pi_{\infty} := \pi_{S_\infty, G_\infty}$. We note that $\varphi_{n,k,f,y_0,t_0}(z) = \I(f(x) \leq y_0)\pi_{n,k}(z,t_0)$ and $\varphi_{\infty,f, y_0,t_0}(z) = \I(f(x) \leq y_0)\pi_{\infty}(z,t_0)$. We use $\mathbb{G}_{n,k}$ to denote the empirical process $n_k^{1/2}\left(\mathbb{P}_{n,k} - P_0\right)$.  
		
		We use the tower property to write
		\begin{align*}
			&E_0\left[\sup_{y_0 \in \mathcal{Y}, t_0 \in (0, \tau]}\abs{\mathbb{G}_{n,k}\left(\varphi_{n,k,f,y_0,t_0} - \varphi_{\infty,f,y_0,t_0}\right)}\right] \\
			&\hspace{.8cm}=E_0\left[E_0\left[\sup_{y_0 \in \mathcal{Y}, t_0 \in (0, \tau]}\abs{\mathbb{G}_{n,k}\left(\varphi_{n,k,f,y_0,t_0} - \varphi_{\infty,f,y_0,t_0}\right)}\,\middle|\, \cup_{j\neq k}\mathcal{D}_j\right] \right] \\
			&\hspace{.8cm}= E_0\left[E_0\left[\sup_{h \in \mathcal{H}_{n,k,f}}\abs{\mathbb{G}_{n,k} h}\,\middle|\, \cup_{j\neq k}\mathcal{D}_j\right]\right],
		\end{align*}
		where we define $\mathcal{H}_{n,k,f} := \left\{z \mapsto \varphi_{n,k,f,y_0,t_0}(z) - \varphi_{\infty,f,y_0,t_0}(z): y_0 \in \mathcal{Y}, t_0 \in (0, \tau]\right\}$. Conditioning on the training data $\cup_{j \neq k}\mathcal{D}_j$, the nuisance estimators $S_{n,k}$ and $G_{n,k}$ are fixed functions. We note that $\bar{H}_{n,k,f}: z \mapsto  \sup_{y_0 \in \mathcal{Y}, t_0 \in (0, \tau]}\abs{\varphi_{n,k,f,y_0,t_0}(z) - \varphi_{\infty,f,y_0,t_0}(z)}$ is a natural envelope function for $\mathcal{H}_{n,k,f}$. Then, by the same argument as used in Lemma \ref{lemma:Donsker class}, we have that the uniform entropy integral
		\begin{align*}
			J(1, \mathcal{H}_{n,k,f}) := \sup_\Pi \int_0^1\left[1 + \log N(\epsilon\norm{\bar{H}_{n,k,f}}_{\Pi,2}, \mathcal{H}_{n,k,f}, L_2(\Pi))\right]^{\frac{1}{2}}d\epsilon
		\end{align*}
		is bounded above by a constant  depending on neither $n$ nor $k$. By Theorem 2.14.1 of \citet{vandervaart1996}, there exists a constant $\kappa^*$  depending on neither $n$ nor $k$ such that
		\begin{align}
			&E_0\left[E_0\left[\sup_{h \in \mathcal{H}_{n,k,f}}\abs{\mathbb{G}_{n,k} h}\, \middle|\, \cup_{j\neq k}\mathcal{D}_j\right]\right] \leq \kappa^*E_0\left[E_0\left[\bar{H}_{n,k,f}(Z)^2\,\middle|\, \cup_{j \neq k}\mathcal{D}_j\right]^{\frac{1}{2}}\right] \nonumber \\ 
			&\hspace{1cm}= \kappa^*E_0\left[E_0\left[\sup_{y_0 \in \mathcal{Y}, t_0 \in (0, \tau]}\left\{\varphi_{n,k,f,y_0,t_0}(Z) - \varphi_{\infty,f, y_0,t_0}(Z)\right\}^2\,\middle|\, \cup_{j \neq k}\mathcal{D}_j\right]^{\frac{1}{2}}\right] \nonumber \\ 
			&\hspace{1cm}\leq \kappa^*\left\{E_0\left[E_0\left[\sup_{y_0 \in \mathcal{Y}, t_0 \in (0, \tau]}\left\{\varphi_{n,k,f,y_0,t_0}(Z) - \varphi_{\infty, f,y_0,t_0}(Z)\right\}^2\,\middle|\, \cup_{j \neq k}\mathcal{D}_j\right]\right]\right\}^{\frac{1}{2}}, \label{eq:iterated upper bound}
		\end{align}
		where the second inequality follows from Jensen's inequality. 
		Next, let $U_{n,k}$ denote
		\begin{align*}
			&J_\eta^2\left[E_0\left[\sup_{u \in (0, \tau], v \in [0,u]}\abs{\frac{S_\infty(u \midd X)}{S_\infty(v \midd X)} - \frac{S_{n,k}(u \midd X)}{S_{n,k}(v \midd X)}} \right]^2\vee E_0\left[\sup_{u \in (0, \tau]}\abs{\frac{1}{G_{n,k}(u \midd X)} - \frac{1}{G_\infty(u \midd X)}}\right]^2\right]
		\end{align*}
		with the expectation taken over the random data unit $X$ and not over the random functions $S_{n,k}$ and $G_{n,k}$. By Lemma \ref{lemma:sup L2 discrepancy of nuisances}, we have that
		\begin{align*}
			E_0\left[\sup_{y_0 \in \mathcal{Y}, t_0 \in (0, \tau]}\left\{\varphi_{n,k,f,y_0,t_0}(Z) - \varphi_{\infty, f,y_0,t_0}(Z)\right\}^2\,\middle|\, \cup_{j \neq k}\mathcal{D}_j\right] \leq U_{n,k}
		\end{align*}
		with probability tending to one. Combining this with (\ref{eq:iterated upper bound}), we observe that
		\begin{align*}
			E_0\left[\sup_{y_0 \in \mathcal{Y}, t_0 \in (0, \tau]}\abs{\mathbb{G}_{n,k}\left(\varphi_{n,k,f,y_0,t_0} - \varphi_{\infty,f,y_0,t_0}\right)}\right] \leq \kappa^*\left\{E_0\left[U_{n,k}\right]\right\}^{\frac{1}{2}}
		\end{align*}
		with probability tending to one. We note that $U_{n,k}$ is a uniformly bounded sequence of random variables converging in probability to 0 under Condition \ref{condition:nuisance limits}. This implies that $\kappa^*\left\{E_0\left[U_{n,k}\right]\right\}^{\frac{1}{2}} \to 0$.  
		Finally, applying Markov's inequality, for any $\epsilon > 0 $, we have that
		\begin{align*}
			P_0&\left(\sup_{y_0 \in \mathcal{Y}, t_0 \in (0, \tau]}\abs{\mathbb{G}_{n,k}\left(\varphi_{n,k,f,y_0,t_0} - \varphi_{\infty,f,y_0,t_0}\right)}> \epsilon\right) \\
			&\hspace{2cm}\leq \frac{1}{\epsilon}E_0\left[\sup_{y_0 \in \mathcal{Y}, t_0 \in (0, \tau]}\abs{\mathbb{G}_{n,k}\left(\varphi_{n,k,f,y_0,t_0} - \varphi_{\infty,f,y_0,t_0}\right)}\right] \to 0\ .
		\end{align*}
	\end{proof}
	
	\begin{lemma}\label{lemma:uniform consistency of joint}
		If Conditions \ref{condition:bounded away from zero}--\ref{condition:double robust} hold, then for each $k$ it holds that
		\begin{align*}
			\sup_{y_0 \in \mathcal{Y}, t_0 \in (0, \tau]}\abs{H^*_{n,k,f_{n,k}}(y_0,t_0) - H_{0,f_{n,k}}(y_0,t_0)} \stackrel{\textnormal{p}}{\longrightarrow} 0 \ .
		\end{align*}
	\end{lemma}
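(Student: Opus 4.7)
The plan is to decompose the supremum into a drift-cancelling empirical process term, a centered fluctuation term, and a bias term, and to handle each using results already in hand. Writing
\begin{align*}
H^*_{n,k,f_{n,k}}(y_0,t_0) - H_{0,f_{n,k}}(y_0,t_0) &= (\mathbb{P}_{n,k} - P_0)(\varphi_{n,k,f_{n,k},y_0,t_0} - \varphi_{\infty,f_{n,k},y_0,t_0}) \\
&\hspace{0.3cm}+ (\mathbb{P}_{n,k} - P_0)\varphi_{\infty,f_{n,k},y_0,t_0} \\
&\hspace{0.3cm}+ \{P_0\varphi_{n,k,f_{n,k},y_0,t_0} - H_{0,f_{n,k}}(y_0,t_0)\},
\end{align*}
I would take the supremum over $(y_0, t_0) \in \mathcal{Y}\times(0,\tau]$ of each of the three pieces separately.

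For the first piece, I would invoke Lemma \ref{lemma:control of empirical process term} with $f = f_{n,k}$. Although that lemma is stated for a fixed $f \in \mathcal{F}$, its proof conditions on the training data $\cup_{j \neq k}\mathcal{D}_j$ throughout and appeals to the entropy bound of Lemma \ref{lemma:Donsker class}, which does not depend on the particular $f$; the same $o_P(n_k^{-1/2})$ bound therefore applies with the training-data-measurable $f_{n,k}$ in place of $f$. For the second piece, conditioning on the training data likewise renders $f_{n,k}$ deterministic, and the class $\{z \mapsto \varphi_{\infty,f_{n,k},y_0,t_0}(z): y_0 \in \mathcal{Y},\, t_0 \in (0,\tau]\}$ is $P_0$--Donsker with a uniform entropy integral not depending on $f_{n,k}$ (again by the argument of Lemma \ref{lemma:Donsker class}), yielding an $O_P(n_k^{-1/2})$ supremum bound that transfers unconditionally via Markov's inequality.

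The third piece is the place where the double-robustness structure enters. Under Conditions \ref{condition:bounded away from zero} and \ref{condition:double robust}, Lemma \ref{lemma:DR} gives $P_0 \varphi_{\infty,f_{n,k},y_0,t_0} = H_{0,f_{n,k}}(y_0,t_0)$, so the third piece equals $P_0(\varphi_{n,k,f_{n,k},y_0,t_0} - \varphi_{\infty,f_{n,k},y_0,t_0})$, whose absolute value is bounded by Jensen's inequality by $\|\varphi_{n,k,f_{n,k},y_0,t_0} - \varphi_{\infty,f_{n,k},y_0,t_0}\|_{L_2(P_0)}$. Pulling the supremum inside the expectation and applying Lemma \ref{lemma:sup L2 discrepancy of nuisances} together with Condition \ref{condition:nuisance limits} drives this term to $o_P(1)$. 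The chief subtlety throughout is the data dependence of $f_{n,k}$; the cross-fitting design, by making $f_{n,k}$ independent of $\mathcal{D}_k$, is exactly what allows the conditioning step to bypass any Donsker-type requirement on $\mathcal{F}$ itself, and Condition \ref{condition:double robust} is essential in identifying the limit of the bias term with the target $H_{0,f_{n,k}}$ even when only one of $L_\infty, G_\infty$ matches its $P_0$ counterpart.
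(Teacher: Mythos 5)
Your proof is correct and the key ingredients coincide with the paper's: cross-fit conditioning, the entropy bound from Lemma~\ref{lemma:Donsker class}, the nuisance discrepancy bound from Lemma~\ref{lemma:sup L2 discrepancy of nuisances}, and double robustness via Lemma~\ref{lemma:DR} under Condition~\ref{condition:double robust}. The only difference is in how you organize the empirical process contribution. The paper treats it as a single term $(\mathbb{P}_{n,k}-P_0)\bar{\varphi}_{n,k,f_{n,k},y_0,t_0}$ and applies the maximal inequality directly with the constant envelope $2(1+\eta)$ for $\mathcal{H}_{n,k,f_{n,k}}$, yielding $O_P(n_k^{-1/2})$ in one shot; that is all consistency needs. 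You instead split into $(\mathbb{P}_{n,k}-P_0)(\varphi_{n,k}-\varphi_\infty) + (\mathbb{P}_{n,k}-P_0)\varphi_\infty$, which is precisely the finer decomposition the paper reserves for Lemma~\ref{lemma:convergence of joint process} (weak convergence of the process). Your Piece 1 invokes Lemma~\ref{lemma:control of empirical process term} to get $o_P(n_k^{-1/2})$, and your Piece 2 gets $O_P(n_k^{-1/2})$ from the fixed-envelope Donsker argument, so you prove slightly more than the lemma requires. Your remark that Lemma~\ref{lemma:control of empirical process term}, though stated for fixed $f$, transfers to $f_{n,k}$ by the same conditioning argument is exactly right and mirrors how the paper handles the analogous data dependence in its own proof. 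The bias term is handled identically: $R = P_0(\varphi_{n,k,f_{n,k}} - \varphi_{\infty,f_{n,k}})$ after Lemma~\ref{lemma:DR}, then Cauchy--Schwarz and Lemma~\ref{lemma:sup L2 discrepancy of nuisances} with Condition~\ref{condition:nuisance limits} give $o_P(1)$.
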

	\begin{proof}[*lemma:uniform consistency of joint] 
		For each $k$ and any $f \in \mathcal{F}$, we define the plug-in estimator  $H_{n,k,f}(y_0,t_0) := \frac{1}{n_k}\sum_{i \in \mathcal{I}_k}\I(f(X_i) \leq y_0)F_{n,k}(t_0 \midd X_i)$. We then have that
		\begin{align*}
			&H_{n,k,f_{n,k}}(x_0,t_0) - H_{0,f_{n,k}}(y_0,t_0) \\
			&\hspace{1.8cm}= (\mathbb{P}_{n,k}-P_0)\bar{\varphi}_{n,k,f_{n,k},y_0,t_0}  + R_{f_{n,k},y_0,t_0}(P_{n,k}, P_0) - \mathbb{P}_{n,k}\bar{\varphi}_{n,k,f_{n,k},y_0,t_0}\ ,
		\end{align*}
		where, for generic $f \in \mathcal{F}$, we define 
		$R_{f,y_0,t_0}(P_{n,k}, P_0):= H_{n,k,f}(y_0,t_0) - H_{0,f}(y_0,t_0) + P_0\bar{\varphi}_{n,k,f,y_0,t_0}$.
		We can therefore write
		\begin{align*}
			H_{n,k,f_{n,k}}^*(y_0,t_0) - H_{0,f_{n,k}}(y_0,t_0) = (\mathbb{P}_{n,k}- P_0)\bar{\varphi}_{n,k,f_{n,k},y_0,t_0} + R_{f_{n,k},y_0,t_0}(P_{n,k}, P_0)\ . 
		\end{align*}
		Applying the triangle inequality yields that
		\begin{align*}
			&\sup_{y_0 \in \mathcal{Y}, t_0 \in (0, \tau]}\abs{H^*_{n,k,f_{n,k}}(y_0,t_0) - H_{0,f_{n,k}}(y_0,t_0)} \\
			&\hspace{1.8cm}\leq \sup_{y_0 \in \mathcal{Y}, t_0 \in (0, \tau]}\abs{(\mathbb{P}_{n,k} - P_0)\bar{\varphi}_{n,k,f_{n,k},y_0,t_0}} + \sup_{y_0 \in \mathcal{Y}, t_0 \in (0, \tau]}\abs{R_{f_{n,k},y_0,t_0}(P_{n,k}, P_0)}\ .
		\end{align*}
		Similarly as in the proof of Lemma \ref{lemma:control of empirical process term}, we use the tower property to write
		\begin{align*}
			&E_0\left[\sup_{y_0 \in \mathcal{Y}, t_0 \in (0, \tau]}\abs{(\mathbb{P}_{n,k} - P_0)\bar{\varphi}_{n,k,f_{n,k},y_0,t_0} }\right] = E_0\left[E_0\left[\sup_{h \in \mathcal{H}_{n,k,f_{n,k}}}\abs{(\mathbb{P}_{n,k} - P_0) h}\,\middle|\, \cup_{j\neq k}\mathcal{D}_j\right]\right],
		\end{align*}
		where we define $\mathcal{H}_{n,k,f_{n,k}} := \left\{z \mapsto \bar{\varphi}_{n,k,f_{n,k},y_0,t_0}(z): y_0 \in \mathcal{Y}, t_0 \in (0, \tau]\right\}$. Conditioning on the training data $\cup_{j \neq k}\mathcal{D}_j$, the nuisance estimators $S_{n,k}$, $G_{n,k}$ and $f_{n,k}$ are fixed functions. Then, by the same argument as used in Lemma \ref{lemma:Donsker class}, we have that the uniform entropy integral
		\begin{align*}
			J(1, \mathcal{H}_{n,k,f_{n,k}}) := \sup_\Pi \int_0^1\left[1 + \log N(2\epsilon(1 + \eta), \mathcal{H}_{n,k,f_{n,k}}, L_2(\Pi))\right]^{\frac{1}{2}}d\epsilon
		\end{align*}
		is bounded above by a constant depending on neither $n$ nor $k$. Theorem 2.14.1 of \citet{vandervaart1996} implies that there exists a constant $\kappa^*$  depending on neither $n$ nor $k$ such that
		\begin{align}
			&E_0\left[E_0\left[\sup_{h \in \mathcal{H}_{n,k,f_{n,k}}}\abs{(\mathbb{P}_{n,k} - P_0)h}\, \middle|\, \cup_{j\neq k}\mathcal{D}_j\right]\right] \nonumber \leq 2\kappa^*(1 + \eta)n^{-1/2}\ . \nonumber 
		\end{align}
		Therefore, applying Markov's inequality, for any $\epsilon > 0$, we have that
		\begin{align*}
			P_0&\left(\sup_{y_0 \in \mathcal{Y}, t_0 \in (0, \tau]}\abs{(\mathbb{P}_{n,k} - P_0)\bar{\varphi}_{n,k,f_{n,k}, y_0, t_0}} > \epsilon\right)\\
			&\hspace{2cm} \leq \frac{1}{\epsilon}E_0\left[\sup_{y_0 \in \mathcal{Y}, t_0 \in (0, \tau]}\abs{(\mathbb{P}_{n,k} - P_0)\bar{\varphi}_{n,k,f_{n,k},y_0,t_0}}\right] \to 0
		\end{align*} 
		Finally, because $P_0\bar{\varphi}_{n,k,f_{n,k},y_0,t_0} = P_0\varphi_{n,k,f_{n,k},y_0,t_0} - H_{n,k,f_{n,k}}(y_0,t_0)$, we note that 
		\begin{align}
			R_{f_{n,k},y_0,t_0}(P_{n,k}, P_0) &= H_{n,k,f_{n,k}}(y_0,t_0) - H_{0,f_{n,k}}(y_0,t_0) + P_0\varphi_{n,k,f_{n,k},y_0,t_0} - H_{n,k,f_{n,k}}(y_0,t_0) \nonumber \\
			&= P_0\varphi_{n,k,f_{n,k},y_0,t_0} - H_{0,f_{n,k}}(y_0,t_0)\ .\label{eq:joint cdf linearization remainder}
		\end{align}
		Under Condition \ref{condition:double robust}, we have that $H_{0,f_{n,k}}(y_0,t_0) = P_0\varphi_{\infty,f_{n,k},y_0,t_0}$. Therefore, Lemma \ref{lemma:DR} implies that $R_{f_{n,k},y_0,t_0}(P_{n,k}, P_0)  = P_0\left(\varphi_{n,k,f_{n,k},y_0,t_0} - \varphi_{\infty,f_{n,k},y_0,t_0}\right)$. Under Conditions \ref{condition:bounded away from zero} and \ref{condition:nuisance limits}, Lemma \ref{lemma:sup L2 discrepancy of nuisances} implies that 
		\begin{align*}
			\sup_{y_0 \in \mathcal{Y}, t_0 \in (0, \tau]}\abs{R_{f_{n,k},y_0,t_0}(P_{n,k}, P_0)} = o_P(1)\ . 
		\end{align*}
	\end{proof}
	
	\begin{lemma}\label{lemma:convergence of joint process}
		If Conditions \ref{condition:bounded away from zero} and \ref{condition:nuisance limits} hold with $S_\infty = S_0$ and $G_\infty = G_0$, and Condition \ref{condition:second order remainder} holds, then for each $k$ and each $f \in \mathcal{F}$, it holds that
		\begin{align*}
			\sup_{y_0 \in \mathcal{Y}, t_0 \in (0, \tau]}\abs{H^*_{n,k,f}(y_0,t_0) - H_{0,f}(y_0,t_0) - \mathbb{P}_{n,k}\bar{\varphi}_{0,f,y_0,t_0}} = o_P(n_k^{-1/2})\ .
		\end{align*}
		In particular, $\{n_k^{1/2}(H^*_{n,k,f}(y_0,t_0) - H_{0,f}(y_0,t_0)): y_0 \in \mathcal{Y}, t_0 \in (0, \tau]\}$ converges weakly relative to the supremum norm to a tight mean-zero Gaussian process with covariance $((y_1,t_1),(y_2,t_2)) \mapsto P_0(\bar{\varphi}_{0,f,y_1,t_1}\bar{\varphi}_{0,f,y_2,t_2})$. 
	\end{lemma}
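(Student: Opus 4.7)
The plan is to combine the decomposition used for Lemma \ref{lemma:uniform consistency of joint} with the faster rates now in force. Starting from
\begin{align*}
H^*_{n,k,f}(y_0,t_0) - H_{0,f}(y_0,t_0) = (\mathbb{P}_{n,k}- P_0)\bar{\varphi}_{n,k,f,y_0,t_0} + R_{f,y_0,t_0}(P_{n,k}, P_0),
\end{align*}
I would add and subtract $\bar{\varphi}_{0,f,y_0,t_0}$ in the first summand on the right and use $P_0\bar{\varphi}_{0,f,y_0,t_0} = 0$ to rewrite the target quantity as
\begin{align*}
H^*_{n,k,f}(y_0,t_0) - H_{0,f}(y_0,t_0) - \mathbb{P}_{n,k}\bar{\varphi}_{0,f,y_0,t_0} = (\mathbb{P}_{n,k}- P_0)(\bar{\varphi}_{n,k,f,y_0,t_0} - \bar{\varphi}_{0,f,y_0,t_0}) + R_{f,y_0,t_0}(P_{n,k}, P_0).
\end{align*}
The goal is then to show that both terms on the right-hand side are $o_P(n_k^{-1/2})$ uniformly in $(y_0,t_0)\in\mathcal{Y}\times(0,\tau]$.

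For the empirical process term, since we now assume $S_\infty = S_0$ and $G_\infty = G_0$, Lemma \ref{lemma:control of empirical process term} directly yields uniform negligibility at rate $n_k^{-1/2}$. For the remainder, I would invoke the representation (\ref{eq:joint cdf linearization remainder}) and Lemma \ref{lemma:duhamel} to write $R_{f,y_0,t_0}(P_{n,k},P_0)$ as a product expectation in $(G_{n,k}-G_0)$ and $(L_{n,k}-L_0)$ with a factor $\I(f(x)\leq y_0)S_{n,k}(t_0\midd x)S_0(u^-\midd x)/S_{n,k}(u\midd x)$ inside the integral. For any $u\le t_0\le\tau$, this factor is bounded by $1$ (since $S_0\le 1$, $S_{n,k}(t_0)/S_{n,k}(u)\le 1$, and the indicator is at most $1$), allowing me to pass the absolute value inside and majorize
\begin{align*}
\sup_{y_0\in\mathcal{Y},\, t_0\in(0,\tau]}|R_{f,y_0,t_0}(P_{n,k},P_0)| \;\le\; \sup_{t_0\in(0,\tau]}\int|h_{t_0,n,k}(x)|\,Q_0(dx),
\end{align*}
which is $o_P(n^{-1/2})$ by Condition \ref{condition:second order remainder}. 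This gives the claimed asymptotic linearization.

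The weak-convergence statement then follows from the functional central limit theorem applied to the deterministic class $\mathcal{F}_0 := \{\bar{\varphi}_{0,f,y_0,t_0}: y_0\in\mathcal{Y},\, t_0\in(0,\tau]\}$. Lemma \ref{lemma:Donsker class} with $(S,G) = (S_0,G_0)$ shows that the uncentered analogue has a bounded uniform entropy integral and is uniformly bounded; subtracting the deterministic offset $H_{0,f}(y_0,t_0)$ preserves both properties, so $\mathcal{F}_0$ is $P_0$-Donsker with uniformly square-integrable envelope. Consequently, $n_k^{1/2}\mathbb{P}_{n,k}\bar{\varphi}_{0,f,y_0,t_0}$ converges weakly in $\ell^\infty(\mathcal{Y}\times(0,\tau])$ to a tight mean-zero Gaussian process with the stated covariance, and by the uniform linearization this weak convergence transfers to $\{n_k^{1/2}(H^*_{n,k,f}(y_0,t_0) - H_{0,f}(y_0,t_0))\}$.

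The main obstacle I anticipate is making the remainder bound truly uniform in $(y_0,t_0)$. Condition \ref{condition:second order remainder} as stated bounds $\sup_{t}\int|h_{t,n,k}|\,dQ_0$, but $R_{f,y_0,t_0}$ from Lemma \ref{lemma:duhamel} carries the extra factor $S_0(u^-\midd x)/S_{n,k}(u\midd x)$ inside the hazard integral that is absent from $h_{t,n,k}$. Pushing the absolute value inside the $du$ integral and then exploiting the monotonicity $S_{n,k}(t_0)/S_{n,k}(u)\le 1$ for $u\le t_0$ (together with $S_0\le 1$ and $\I(f(x)\le y_0)\le 1$) is the key technical step that simultaneously achieves both the $y_0$-uniform majorization and the reduction to the quantity already controlled by Condition \ref{condition:second order remainder}.
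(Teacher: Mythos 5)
Your decomposition, your use of Lemma~\ref{lemma:control of empirical process term} for the empirical process term after substituting $\bar{\varphi}_\infty = \bar{\varphi}_0$, and your Donsker argument via Lemma~\ref{lemma:Donsker class} for the weak-convergence conclusion all match the paper's proof exactly.

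However, the explicit majorization of the remainder term has a gap. You correctly observe that the combined factor $\I(f(x)\leq y_0)\,S_{n,k}(t_0\midd x)\,S_0(u^-\midd x)/S_{n,k}(u\midd x)$ sitting inside the Duhamel integral is at most $1$, but pushing the absolute value inside the $du$-integral and bounding that factor by $1$ produces
\begin{align*}
\int_0^{t_0}\abs{\tfrac{G_0(u\midd x)}{G_{n,k}(u\midd x)}-1}\,\abs{(L_{n,k}-L_0)(du\midd x)}\ ,
\end{align*}
which is the integral of an absolute value and therefore an \emph{upper} bound on $\abs{h_{t_0,n,k}(x)} = S_{n,k}(t_0\midd x)\abs{\int_0^{t_0}\{G_0/G_{n,k}-1\}(L_{n,k}-L_0)(du\midd x)}$, not a lower bound. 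Your claimed inequality $\sup_{y_0,t_0}\abs{R_{f,y_0,t_0}(P_{n,k},P_0)} \leq \sup_{t_0}\int\abs{h_{t_0,n,k}(x)}\,Q_0(dx)$ therefore does not follow from this argument; the direction is wrong. The paper's own proof does not attempt this reduction --- it bounds $\sup\abs{R}$ by the expectation of the Duhamel integrand with the factor $S_0(u^-\midd x)/S_{n,k}(u\midd x)$ retained and asserts that Condition~\ref{condition:second order remainder} controls the rate --- so the paper is itself slightly loose at this step, but it does not commit to a majorization in the wrong direction.
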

	\begin{proof}[*lemma:convergence of joint process] 
		For any $f \in \mathcal{F}$, we can decompose
		\begin{align*}
			&H_{n,k,f}(y_0,t_0) - H_{0,f}(y_0,t_0) \\
			&\hspace{.8cm}= (\mathbb{P}_{n,k}-P_0)\bar{\varphi}_{\infty,f,y_0,t_0} + C_{n,k,f,y_0,t_0}(P_{n,k}, P_\infty) + R_{f,y_0,t_0}(P_{n,k}, P_0) - \mathbb{P}_{n,k}\bar{\varphi}_{n,k,f,y_0,t_0}\ ,
		\end{align*}
		where we have defined $C_{n,k,f,y_0,t_0}(P_{n,k}, P_\infty):= n_k^{-\frac{1}{2}}\mathbb{G}_{n,k}(\bar{\varphi}_{n,k,f,y_0,t_0} - \bar{\varphi}_{\infty,f,y_0,t_0})$. We can therefore write
		\begin{align*}
			H_{n,k,f}^*(y_0,t_0) - H_{0,f}(y_0,t_0) = (\mathbb{P}_{n,k}- P_0)\bar{\varphi}_{\infty,f,y_0,t_0} + C_{n,k,f,y_0,t_0}(P_{n,k}, P_0) + R_{f,y_0,t_0}(P_{n,k}, P_0)\ . 
		\end{align*}
		Because $F_\infty = F_0$, $S_\infty = S_0$, $L_\infty = L_0$, and $G_\infty = G_0$, we have that $\bar{\varphi}_{\infty,f} = \bar{\varphi}_{0,f}$. First, this implies that $P_0\bar{\varphi}_{\infty,f,y_0,t_0} = 0$. Second, we can use the triangle inequality to write that
		\begin{align*}
			&\sup_{y_0 \in \mathcal{Y}, t_0 \in (0, \tau]}\abs{H^*_{n,k,f}(y_0,t_0) - H_{0,f}(y_0,t_0) - \mathbb{P}_{n,k}\bar{\varphi}_{0,f,y_0,t_0}} \\
			&\hspace{0.8cm}\leq \sup_{y_0 \in \mathcal{Y}, t_0 \in (0, \tau]}\abs{C_{n,k,f,y_0,t_0}(P_{n,k}, P_0)} + \sup_{y_0 \in \mathcal{Y}, t_0 \in (0, \tau]}\abs{R_{f,y_0,t_0}(P_{n,k}, P_0)}\ .
		\end{align*}
		Under Conditions \ref{condition:bounded away from zero} and \ref{condition:nuisance limits}, Lemma \ref{lemma:control of empirical process term} implies that the leading term on the right-hand side above is $o_P(n_k^{-1/2})$. For the second term, we can use (\ref{eq:joint cdf linearization remainder}) and Lemma \ref{lemma:duhamel} to write that
		\begin{align*}
			&R_{f,y_0,t_0}(P_{n,k}, P_0) = P_0\varphi_{n,k,f,y_0,t_0} - H_{0,f}(y_0,t_0) \\
			&\hspace{0.8cm}= E_0\left[\I(f(X) \leq y_0)S_{n,k}(t_0\midd X)\int_0^{t_0}\frac{S_0(u^-\midd X)}{S_{n,k}(u \midd X)}\left\{\frac{G_0(u \midd X)}{G_{n,k}(u \midd X)} - 1\right\}(L_{n,k} - L_0)(du \midd X)\right].
		\end{align*}
		Therefore, it follows that
		\begin{align*}
			&\sup_{y_0 \in \mathcal{Y}, t_0 \in (0, \tau]}\abs{R_{f,y_0,t_0}(P_{n,k}, P_0)} \\
			&\leq   \sup_{y_0 \in \mathcal{Y}, t_0 \in (0, \tau]}\abs{E_0\left[\I(f(X) \leq y_0)S_{n,k}(t_0\midd X)\int_0^{t_0}\frac{S_0(u^-\midd X)}{S_{n,k}(u \midd X)}\left\{\frac{G_0(u \midd X)}{G_{n,k}(u \midd X)} - 1\right\}(L_{n,k} - L_0)(du \midd X)\right]}\\
			&\leq E_0\left[\sup_{y_0 \in \mathcal{Y}, t_0 \in (0, \tau]}\abs{\I(f(X) \leq y_0)S_{n,k}(t_0\midd X)\int_0^{t_0}\frac{S_0(u^-\midd X)}{S_{n,k}(u \midd X)}\left\{\frac{G_0(u \midd X)}{G_{n,k}(u \midd X)} - 1\right\}(L_{n,k} - L_0)(du \midd X)}\right]\\
			&\leq  E_0\left[\sup_{t_0 \in (0, \tau]}\abs{S_{n,k}(t_0\midd X)\int_0^{t_0}\frac{S_0(u^-\midd X)}{S_{n,k}(u \midd X)}\left\{\frac{G_0(u \midd X)}{G_{n,k}(u \midd X)} - 1\right\}(L_{n,k} - L_0)(du \midd X)}\right].
		\end{align*}
		This upper bound is $o_P(n_k^{-1/2})$ under Condition \ref{condition:second order remainder}. 
		
		We conclude that $\sup_{y_0 \in \mathcal{Y}, t_0 \in (0, \tau]}\lvert H^*_{n,k,f}(y_0,t_0) - H_{0,f}(y_0,t_0) - \mathbb{P}_{n,k}\bar{\varphi}_{0,f,y_0,t_0}\rvert = o_P(n_k^{-1/2})$. Lemma \ref{lemma:Donsker class} implies that the class of influence functions $\left\{\bar{\varphi}_{0,f,y_0,t_0}: y_0 \in \mathcal{Y}, t_0 \in (0, \tau]\right\}$ is a uniformly bounded $P_0$--Donsker class, and so, $\{n_k^{1/2}(H^*_{n,k,f}(y_0,t_0) - H_{0,f}(y_0,t_0)): y_0 \in \mathcal{Y}, t_0 \in (0, \tau]\}$ converges weakly relative to the supremum norm to a tight mean-zero Gaussian process with covariance function $(y_1,t_1),(y_2,t_2) \mapsto P_0(\bar{\varphi}_{0,f,y_1,t_1}\bar{\varphi}_{0,f,y_2,t_2})$, as claimed. 
	\end{proof}
	
	\subsection{Asymptotic analysis of VIM estimator}
	For each $k$, we define the remainder terms 
	\begin{align*}
		&r_{\omega,n,k} := V_1(f_0, H_{n,k}^*) - V_1(f_0, H_0) - m \int\omega_{0,f_0,1}\left(f_0(x_1), t_1\right)\left(H_{n,k}^* - H_0\right)(dx_1,dt_1)\ ;\\
		&r_{\theta,n,k} := V_2(H_{n,k}^*) - V_2(H_0)- m \int\theta_{0,1}\left( t_1\right)\left(H_{n,k}^* - H_0\right)(dx_1,dt_1)\ .
	\end{align*}
	
	\begin{lemma}\label{lemma:V stat remainder}
		If Conditions \ref{condition:bounded away from zero}--\ref{condition:nuisance limits} hold with $S_\infty = S_0$ and $G_\infty = G_0$, and Conditions \ref{condition:variation norm of kernel} and \ref{condition:second order remainder} also hold, then $r_{\omega,n,k}$ and $r_{\theta,n,k}$ are $o_P(n_k^{-1/2})$.
	\end{lemma}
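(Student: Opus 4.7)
The plan is to expand the $m$-th degree $V$--statistic functional $H \mapsto V_1(f_0, H)$ around $H_0$ by multilinearity and then show that the leading contribution is exactly the displayed linearization while all higher-order contributions are $O_P(n_k^{-1})$. Writing $\delta_{n,k} := H_{n,k}^* - H_0$, expanding $\prod_{j=1}^m H_{n,k}^*(dx_j,dt_j) = \prod_{j=1}^m \{H_0 + \delta_{n,k}\}(dx_j,dt_j)$ and using the symmetry of $\omega$, I obtain
\begin{align*}
V_1(f_0, H_{n,k}^*) - V_1(f_0, H_0) = \sum_{l=1}^m \binom{m}{l} \int \cdots \int \omega_{0,f_0,l}((f_0(x_1),t_1),\ldots,(f_0(x_l),t_l)) \prod_{j=1}^l \delta_{n,k}(dx_j,dt_j),
\end{align*}
where $\omega_{0,f_0,l}$ has integrated out the remaining $m-l$ copies of $H_0$. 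The $l=1$ summand is precisely $m \int \omega_{0,1}(x_1,t_1)\, \delta_{n,k}(dx_1,dt_1)$ since $\omega_{0,1}(x,t) = \omega_{0,f_0,1}(f_0(x),t)$ by definition, so $r_{\omega,n,k}$ equals the sum over $l = 2, \ldots, m$.

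The next step is to bound each $l \geq 2$ summand. I would transfer each such integral to the push-forward of $\delta_{n,k}$ under the map $(x,t) \mapsto (f_0(x), t)$; its multi-dimensional distribution function is then $H_{n,k,f_0}^* - H_{0,f_0}$, which by Lemma \ref{lemma:convergence of joint process} applied with $f = f_0$ satisfies $\|H_{n,k,f_0}^* - H_{0,f_0}\|_\infty = O_P(n_k^{-1/2})$ (combining the uniform linearization in that lemma with the $P_0$--Donsker property established in Lemma \ref{lemma:Donsker class}). Iterated Stieltjes integration by parts---carried out one coordinate pair at a time, using the finite uniform sectional variation norms $\sup_{(y_1,t_1),\ldots,(y_{l-1},t_{l-1})}\| \omega_{0,f_0,l}((y_1,t_1),\ldots,(y_{l-1},t_{l-1}),(\cdot,\cdot))\|_v^*$ granted by Condition \ref{condition:variation norm of kernel}---then yields a bound of order $\|H_{n,k,f_0}^* - H_{0,f_0}\|_\infty^l$ on each summand, hence $O_P(n_k^{-l/2}) = o_P(n_k^{-1/2})$ for $l \geq 2$. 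Summing over $l$ gives $r_{\omega,n,k} = o_P(n_k^{-1/2})$.

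The argument for $r_{\theta,n,k}$ is entirely analogous: the expansion structure is identical with $\theta_{0,l}$ in place of $\omega_{0,f_0,l}$, the relevant variation-norm control is again supplied by Condition \ref{condition:variation norm of kernel}, and the sup-norm of the relevant marginal (in $t$ alone) inherits the $O_P(n_k^{-1/2})$ rate from Lemma \ref{lemma:convergence of joint process}. The main technical obstacle I foresee is the bookkeeping for the iterated integration by parts: at each step one must verify that the partially integrated kernel still enjoys the sectional-variation control required to apply the next Stieltjes estimate, and that boundary terms either vanish or are absorbed into the sup-norm bound. The uniformity over the unintegrated arguments built into Condition \ref{condition:variation norm of kernel} is precisely what carries this recursion through cleanly; without such uniformity the reduction would break after the first step.
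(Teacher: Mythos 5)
Your overall framework — expand the $V$--statistic around $H_0$ by multilinearity, observe that the $l=1$ contribution is the displayed linearization, and then bound each $l\ge 2$ summand — is the same as the paper's, and so is the coefficient calculation (the paper writes the coefficients as $A_{l,m}$ rather than $\binom{m}{l}$, a minor bookkeeping discrepancy). But the crucial technical step is where your proposal breaks: the iterated Stieltjes/Gill integration-by-parts argument does \emph{not} deliver a bound of order $\|H^*_{n,k,f_0}-H_{0,f_0}\|_\infty^l$. The Gill inequality $|\int q_1\,dq_2|\le \kappa\|q_2\|_\infty\|q_1\|_v^*$ converts exactly one factor of $\delta_{n,k}:=H^*_{n,k,f_0}-H_{0,f_0}$ from a sup-norm to a sectional-variation control, and the remaining $l-1$ integrals against $\delta_{n,k}$ (hidden inside $\|q_1\|_v^*$) can only be pulled out via the variation norm $\|\delta_{n,k}\|_v$, not its sup-norm. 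Since $\|\delta_{n,k}\|_v = O_P(1)$ (it is the difference of a one-step estimator and a distribution function, each with $O(1)$ total variation), your bound is really $O_P(\|\delta_{n,k}\|_\infty\|\delta_{n,k}\|_v^{l-1}) = O_P(n_k^{-1/2})$ for every $l\ge 2$, which is \emph{not} $o_P(n_k^{-1/2})$.

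This is precisely the obstruction the paper's proof works around by further decomposing $\delta_{n,k}$ into a linear (empirical-process) part $n_k^{-1/2}\mathbb{G}_{n,k}\bar\varphi_{0,f_0,\cdot,\cdot}$ and a second-order remainder $r_{n,H_{0,f_0}}$ with $\|r_{n,H_{0,f_0}}\|_\infty = o_P(n_k^{-1/2})$ (Lemma 6). When you expand the $l$-fold product, the all-linear term is a \emph{degenerate} $V$--statistic of order $l\ge 2$, which is $o_P(n_k^{-1/2})$ by $V$--statistic theory, not by variation-norm estimates; every mixed term carries at least one factor of $r_{n,H_{0,f_0}}$, so a single application of the Gill inequality pulls out the $o_P(n_k^{-1/2})$ sup-norm while the remaining factors contribute only $O_P(1)$ in variation. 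Without this finer decomposition there is no way to improve the crude $O_P(n_k^{-1/2})$ rate into $o_P(n_k^{-1/2})$, and the lemma is not established.
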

	\begin{proof}[*lemma:V stat remainder] 
		Throughout this proof, we repeatedly make use of several facts. First, \citet{Gill1993} established that there exists a constant $0<\kappa<\infty$ such that, for any c\`{a}dl\`{a}g functions $q_1$ and $q_2$, it holds that
		\begin{align}
			\abs{\int q_1(u) q_2(du)} \leq \kappa \norm{q_2}_\infty \norm{q_1}_v^*\ . \label{eq:gill variation norm fact}
		\end{align}
		Next, for a function $\beta: \mathbb{R}^{d + p} \to \mathbb{R}$ and measure $\alpha$, we define 
		\begin{align*}
			\beta_{-p}(u_1,\dots,u_d) := \int \cdots \int \beta(u_1,\dots, u_d, u_{d + 1}, \dots, u_{d + p})\alpha(du_{d + 1}, \dots, du_{d + p})\ .
		\end{align*} 
		We claim that
		\begin{align}
			\norm{\beta_{-p}}_{v}^* \leq \sup_{u_{d + 1}, \dots, u_{d + p}} \norm{\beta(\cdot, u_{d + 1}, \dots, u_{d + p})}_v^*\norm{\alpha}_v\ , \label{eq:partial variation norm fact}
		\end{align}
		where $\norm{\cdot}_v$ denotes the variation norm. By definition, we have that $\norm{\beta_{-p}}_v^* = \sup_s \sup_{r_s}\norm{\beta_{-p}}_{v,r_s}$, where with some abuse of notation we write $\norm{\beta_{-p}}_{v,r_s} := \sup_{u_{-r_s}}\int \abs{\beta_{-p}(du_{r_s}, u_{-r_{s}})}$ and $r_s\subseteq\{1,2,\ldots,d\}$. Hence, we can write that
		\begin{align*}
			\norm{\beta_{-p}}_{v}^* &=  \sup_s \sup_{r_s}\sup_{u_{-r_s}}\int \abs{\int \cdots \int\beta(du_{r_s}, u_{-r_{s}},u_{d + 1}, \dots, u_{d + p})\alpha(du_{d + 1}, \dots, du_{d + p})} \\
			&\leq \left[\sup_s \sup_{r_s}\sup_{u_{-r_s}}\sup_{u_{d + 1}, \dots, u_{d + p}}\int \abs{\beta(du_{r_s}, u_{-r_{s}},u_{d + 1}, \dots, u_{d + p})} \right]\int \cdots \int \abs{\alpha(du_{d + 1}, \dots, du_{d + p})}\\
			&= \sup_{u_{d + 1}, \dots, u_{d + p}} \norm{\beta(\cdot, u_{d + 1}, \dots, u_{d + p})}_v^*\norm{\alpha}_v\ .
		\end{align*}
		Suppose that $g: \R^{d + p }\to \R$ is a given function that can be written as the product $g = g_1g_2$ for some functions $g_1,g_2:\mathbb{R}^{d+p}\rightarrow\mathbb{R}$. Then, elementary calculations suffice to show that 
		\begin{align}
			\norm{g}_v^* \leq  2\norm{g_1}_v^*\norm{g_2}_v^*\ .\label{eq:product variation norm fact}
		\end{align}
		
		With these facts in hand, we begin by showing that for any $f \in \mathcal{F}$, $n_k^{-\frac{1}{2}}\norm{ \mathbb{G}_{n,k}\bar{\varphi}_{0,f,\cdot,\cdot}}^*_v = O_P(1)$. Here and after, for any fixed $y_0$, we take the variation norm of $t_0 \mapsto n_k^{-\frac{1}{2}}\norm{\mathbb{G}_{n,k}\bar{\varphi}_{0,f,y_0,t_0}}_{v}^*$ to be over the interval $(0, \tau]$, since the integrands $\omega$ and $\theta$ are not functions of $t_0$ when $t_0 > \tau$. Because we can write $(\mathbb{P}_{n,k} - P_0)\bar{\varphi}_{0,f,y_0,t_0} = \frac{1}{n_k}\sum_{i \in \mathcal{I}_k}\bar{\varphi}_{0,f,y_0,t_0}(Z_i)$, the triangle inequality yields that $n_k^{-\frac{1}{2}}\norm{\mathbb{G}_{n,k}\bar{\varphi}_{0,f,\cdot,\cdot}}^*_v \leq \frac{1}{n_k}\sum_{i \in \mathcal{I}_k}\norm{\bar{\varphi}_{0,f,\cdot,\cdot}(Z_i)}_v^*$. We then note that $\varphi_{0,f,y_0,t_0}(z) = \I(f(x) \leq y_0)\pi_{0}(z,t_0)$. For fixed $x$, the uniform sectional variation norm of $y_0 \mapsto \I(f(x) \leq y_0)$ is 1, so in light of (\ref{eq:product variation norm fact}), for any fixed $z$, we have that $\norm{\bar{\varphi}_{0,f,\cdot,\cdot}(z)}_v^* \leq 2\norm{\pi_{0}(z,\cdot)}_v^* = 2\norm{\pi_0(z,\cdot)}_v$, where we have replaced the uniform sectional variation norm with the variation norm since $t_0 \mapsto \pi_{0}(z,t_0)$ is both univariate and bounded on $(0,\tau]$. Now, for any fixed $Z = z$, we have
		\begin{align*}
			&\norm{\pi_{0}(z,\cdot)}_v = \int_0^\tau \abs{\left\{F_0(dt \midd x) - \chi_0(z,dt) \right\}}\\
			&\leq \int_0^\tau \abs{F_0(dt \midd x)} + \int_0^\tau \abs{\left\{\frac{\delta \I_{[0,y]}(t)}{S_0(y \midd x)G_0(y \midd x)} - \int_0^{t \wedge y}\frac{L_0(du \midd x)}{S_0(u \midd x)G_0(u \midd x)}  \right\}S_0(dt \midd x)}\\
			&\hspace{1.5cm}+ \frac{\delta\I(y \leq \tau)}{G_0(y \midd x)} + \int_0^{y\wedge \tau} \abs{\frac{L_0(dt \midd x)}{G_0(t \midd x)}}\\
			&\leq 1 + \frac{\delta}{G_0(\tau \midd x)} + \sup_{t \in (0, \tau]}\abs{\int_0^{t \wedge y}\frac{L_0(du \midd x)}{G_0(u \midd x)}}\int \abs{S_0(dt \midd x)} + \frac{\delta}{G_0(\tau \midd x)} + \int_0^{y \wedge \tau}\abs{\frac{L_0(dt \midd x)}{G_0(t \midd x)}}\\
			&\leq  1 + \frac{2\delta}{G_0(\tau \midd x)}  + 2\int_0^\tau\abs{\frac{L_0(dt \midd x)}{G_0(t \midd x)}}\ .
		\end{align*}
		This implies, in view of Condition \ref{condition:bounded away from zero}, that the random variable $\|\pi_0(Z,\cdot)\|^*_{v}$ has finite mean and variance under $P_0$, and so, in particular, we have that
		\begin{align*}
			\frac{1}{n_k}\sum_{i \in \mathcal{I}_k}\norm{\pi_{0}(Z_i, \cdot)}^*_v = O_P(1)\ .
		\end{align*}
		Therefore, we find that $n_k^{-\frac{1}{2}}\norm{\mathbb{G}_{n,k}\bar{\varphi}_{0,f,\cdot,\cdot}}_v^* = O_P(1)$, as claimed.
		
		Next, for generic $f \in \mathcal{F}$, we define pointwise $r_{n,H_{0,f}}(y_0,t_0) = H_{n,k,f}^*(y_0,t_0) - H_{0,f}(x_0,t_0) - n_k^{-\frac{1}{2}}\mathbb{G}_{n,k}\bar{\varphi}_{0,f,x_0,t_0}$.  By the triangle inequality, we have that
		\begin{align*}
			\norm{r_{n,H_{0,f}}}_v^* \leq \|n_k^{-\frac{1}{2}}\mathbb{G}_{n,k}\bar{\varphi}_{0,f,\cdot,\cdot}\|_v^* + \norm{H^*_{n,k,f} - H_{0,f}}_v^*\leq n_k^{-\frac{1}{2}}\norm{\mathbb{G}_{n,k}\bar{\varphi}_{0,f,\cdot,\cdot}}_v^* + 2  = O_P(1)\ .
		\end{align*}
		
		We now analyze the remainder terms $r_{\theta,n,k}$ and $r_{\omega,n,k}$. For $m = 1$, we have $r_{\omega,n,k} = 0$ for $m = 1$, and for $m \geq 2$, we have that
		\begin{align*}
			r_{\omega,n,k} &= \sum_{l=2}^{m}A_{l,m}\int \cdots \int \omega_{0,f_0,l}((f_0(x_1),t_1),\dots, (f_0(x_l),t_l))\prod_{j=2}^{l}(H^*_{n,k} - H_0)(dx_j, dt_j)\\
			&= \sum_{l=2}^{m}A_{l,m}\int \cdots \int \omega_{0,f_0,l}((y_1,t_1),\dots, (y_l,t_l))\prod_{j=2}^{l}(H^*_{n,k,f_0} - H_{0,f_0})(dy_j, dt_j)\ ,
		\end{align*}
		where the coefficients are defined recursively via the relationship $A_{l + 1,m} := \sum_{i=l}^{m-1}A_{l,i}$ for $l = 1,2,\dots, m - 1$ with initialization $A_{1,m} := m$. For any $l \geq 2$, we can write the corresponding summand in the above sum as $r_{n,\omega,k,1} + r_{n,\omega,k,2} + r_{n,\omega,k,3}$, where we define
		\begin{align*}
			r_{n,\omega,k,1}&:=A_{l,m} n_k^{-l/2}\int\cdots \int \omega_{0,f_0,l}\left((y_1, t_1),\dots,(y_l, t_l)\right)\prod_{j=1}^{l}\mathbb{G}_{n,k}\bar{\varphi}_{0,f_0,dy_j, dt_j} \\
			r_{n,\omega,k,2}&:=A_{l,m}\sum_{j=1}^{l-1}{l \choose j}\int\cdots \int \Bigg\{ \omega_{0,f_0,l}\left((y_1, t_1),\dots,(y_l, t_l)\right) \\
			&\hspace{5cm} \times \prod_{s=1}^{l-j}n_k^{-\frac{1}{2}}\mathbb{G}_{n,k}\bar{\varphi}_{0, f_0,dy_s, dt_s} \prod_{s=l-j+1}^{l}r_{n,H_{0,f_0}}(dy_s, dt_s)\Bigg\}\\
			r_{n,\omega,k,3}&:=A_{l,m}\int\cdots \int\omega_{0,f_0,l}\left((y_1, t_1),\dots,(y_l, t_l)\right)\prod_{j=1}^{l}r_{n,H_{0,f_0}}(dy_j, dt_j)\ .
		\end{align*}
		We bound each of these terms separately. The leading term $r_{n,\omega,k,1}$ is a degenerate $V$--statistic and hence is $o_P(n^{-1/2})$ \citep{Serfling1980}. For $r_{n,\omega,k,2}$, there exists a constant $\kappa_1$ such that, for all $1 \leq j \leq l-1$, we have that 
		\begin{align}
			&A_{l,m}{l \choose j}\int\cdots \int \left|\omega_{0,f_0,l}\left((y_1,t_1),\dots, (y_l,t_l)\right)\prod_{s=1}^{l-j}n_k^{-\frac{1}{2}}\mathbb{G}_{n,k}\bar{\varphi}_{0,f_0,dy_s,dt_s}\prod_{s=l-j+1}^{l}r_{n,H_{0,f_0}}(dy_s, dt_s)\right| \nonumber  \\
			&\leq \kappa_1 \|r_{n,H_{0,f_0}}\|_\infty\norm{q_{n,j}}_v^*
			\label{eq:cross norm remainder theta}
		\end{align}
		with
		\begin{align*}
			q_{n,j}: (y_l,t_l) \mapsto  \int \omega_{0,f_0,l}\left((y_1,t_1),\dots,(y_l,t_l)\right)\prod_{s=1}^{l-j}n_k^{-\frac{1}{2}}\mathbb{G}_{n,k}\bar{\varphi}_{0,f_0,dy_s,dt_s}\prod_{s=l-j+1}^{l-1}r_{n,H_{0,f_0}}(dy_s, dt_s)
		\end{align*}
		and where the inequality is due to (\ref{eq:gill variation norm fact}). Now, using (\ref{eq:partial variation norm fact}) and (\ref{eq:product variation norm fact}), we note that $\norm{q_{n,j}}_v^*$ is bounded above by
		\begin{align*}
			2^{l-1}n_k^{-\frac{l-j}{2}}\sup_{(y_1,t_1), \dots, (y_{l-1},t_{l-1})}\norm{ \omega_{0,f_0,l}((y_1,t_1), \dots , (y_{l-1},t_{l-1}), \cdot)}^*_v\norm{\mathbb{G}_{n,k}\bar{\varphi}_{0,f_0, \cdot, \cdot}}_v^{l-j}\|r_{n, H_{0,f_0}}\|_v^{j-1}\ .
		\end{align*}
		Under Condition \ref{condition:variation norm of kernel}, this upper bound is $O_P(1)$, and so, the upper bound in (\ref{eq:cross norm remainder theta}) is $O_P(\|r_{n,H_{0,f_0}}\|_\infty)$. 
		For $r_{n,\omega,k,3}$, we apply an analogous argument, replacing $n_k^{-\frac{1}{2}}\mathbb{G}_{n,k}\pi_{0,y_0,t_0}$ with $r_{n,H_{0,f_0}}(y_0,t_0)$, to conclude that $r_{n,\omega,k,3} = O_P(\|r_{n,H_{0,f_0}}\|_\infty)$.
		Under Conditions \ref{condition:bounded away from zero}, \ref{condition:nuisance limits} (with $S_\infty = S_0$ and $G_\infty = G_0$) and \ref{condition:second order remainder}, Lemma \ref{lemma:convergence of joint process} yields that $\|r_{n,H_{0,f_0}}\|_\infty = o_P(n^{-1/2})$, and so, both $r_{n,\omega,k,2}$ and $r_{n,\omega,k,3}$ are $o_P(n^{-1/2})$.
		
		We now study  $r_{\theta,n,k}$.  Again, we have $r_{\theta,n,k} = 0$ for $m = 1$, and for $m \geq 2$, we have that
		\begin{align}
			r_{\theta,n,k} = \sum_{l=2}^{m}A_{l,m}\int \cdots \int \theta_{0,l}(t_1,\dots, t_{l})\prod_{j=1}^{l}(H^*_{n,k} - H_0)(dx_j, dt_j)\ ,\label{eq:r theta remainder}
		\end{align}
		For any $l \geq 2$, we can write the corresponding summand in (\ref{eq:r theta remainder}) as $r_{n,\theta,k,1} + r_{n,\theta,k,2} + r_{n,\theta,k,3}$, where we define
		\begin{align*}
			r_{n,\theta,k,1}&:=A_{l,m}n_k^{-l/2}\int\cdots \int \theta_{0,l}\left(t_1,\dots,t_l\right)\prod_{j=1}^{l}\mathbb{G}_{n,k}\bar{\varphi}_{0, f_0,dx_j, dt_j} \\
			r_{n,\theta,k,2}&:=A_{l,m}\sum_{j=1}^{l-1}{l \choose j}\int\cdots \int \theta_{0,l}\left(t_1,\dots, t_l\right)\prod_{s=1}^{l-j}n_k^{-\frac{1}{2}}\mathbb{G}_{n,k}\bar{\varphi}_{0,f_0,dx_s,dt_s}\prod_{s=l-j+1}^{l}r_{n,H_{0,f_0}}(dx_s, dt_s)\\
			r_{n,\theta,k,3}&:=A_{l,m}\int\cdots \int \theta_{0,l}\left(t_1,\dots,t_l\right)\prod_{j=1}^{l}r_{n,H_{0,f_0}}(dx_j, dt_j)\ .
		\end{align*}
		Each of these three terms can be analyzed identically as their analogs above. We find that $r_{n,\theta,k,1} = o_P(n^{-1/2})$ because it is a degenerate $V$--statistic.
		Under Conditions \ref{condition:bounded away from zero}, \ref{condition:nuisance limits} (with $S_\infty = S_0$ and $G_\infty = G_0$) and \ref{condition:second order remainder}, Lemma \ref{lemma:convergence of joint process} implies that both $r_{n,\theta,k,2}$ and $r_{n,\theta,k,3}$ are $o_P(n^{-1/2})$.
	\end{proof}
	
	\begin{proof}[*thm:consistency]
		We begin by expanding $v_{n,1}^*$ around $v_{0,1}$ as
		\begin{align*}
			v_{n,1}^* &- v_{0,1} = \frac{1}{K}\sum_{k=1}^{K}\left\{V_1(f_{n,k}, H_{0}) - V_1(f_0, H_0)\right\} + \frac{1}{K}\sum_{k=1}^{K}\left\{V_1(f_{n,k}, H^*_{n,k}) - V_1(f_{n,k}, H_0)\right\}.
		\end{align*}
		Under Condition \ref{condition:continuity of V}, we have that $\abs{V_1(f_{n,k}, H_0) - V_1(f_0, H_0)} \leq J_1 \norm{f_{n,k} - f_0}_{\mathcal{F}} = o_P(1)$ for each $k$, and hence, it follows that $\frac{1}{K}\sum_{k=1}^{K}\left\{V_1(f_{n,k}, H_{0}) - V_1(f_0, H_0)\right\} = o_P(1)$. Next, we note that
		\begin{align*}
			&\abs{V_1(f_{n,k}, H^*_{n,k}) - V_1(f_{n,k}, H_0)} \\
			&\hspace{.5cm}= \abs{\int \dots \int \omega\left((f_{n,k}(x_1), t_1), \dots, (f_{n,k}(x_m),t_m)\right)\left\{\prod_{j=1}^{m}H^*_{n,k}(dx_j, dt_j) -\prod_{j=1}^{m}H_{0}(dx_j, dt_j) \right\}}\ .
		\end{align*}
		By a telescoping sum argument, we have that \begin{align*}
			&\prod_{j=1}^{m}H^*_{n,k}(dx_j, dt_j) -\prod_{j=1}^{m}H_{0}(dx_j, dt_j)\\
			&\hspace{.5cm}=\sum_{l=1}^{m}\left[\prod_{j=1}^{l-1}H^*_{n,k}(dx_j,dt_j)\left\{H_{n,k}^*(dx_l,dt_l) - H_0(dx_l, dt_l)\right\}\prod_{j=l+1}^{m}H_0(dx_j, dt_j) \right],
		\end{align*} which allows us to write that $|V_1(f_{n,k}, H^*_{n,k}) - V_1(f_{n,k}, H_0)|$ is equal to
		\begin{align*}
			&\left|\sum_{l=1}^m\int\cdots \int\bigg\{ \omega\left((f_{n,k}(x_1), t_1), \dots, (f_{n,k}(x_m),t_m)\right)\right.\\
			&\hspace{3cm} \times \left.\prod_{j=1}^{l-1}H^*_{n,k}(dx_j,dt_j)\{H_{n,k}^*(dx_l,dt_l) - H_0(dx_l, dt_l)\}\prod_{j=l+1}^{m}H_0(dx_j, dt_j)\bigg\}\right| \\
			&= \left|\sum_{l=1}^m\int q_{n,l}(f_{n,k}(x_l),t_l)\{H_{n,k}^*(dx_l,dt_l) - H_0(dx_l, dt_l)\}\right|\\
			&=  \left|\sum_{l=1}^m\int q_{n,l}(y_l,t_l)\{H_{n,k,f_{n,k}}^*(dy_l,dt_l) - H_{0,f_{n,k}}(dy_l, dt_l)\}\right|\\
			&\leq \sum_{l=1}^m\left|\int q_{n,l}(y_l,t_l)\{H_{n,k,f_{n,k}}^*(dy_l,dt_l) - H_{0,f_{n,k}}(dy_l, dt_l)\}\right|\, \leq \bar{\kappa}\|H_{n,k,f_{n,k}}^*-H_{0,f_{n,k}}\|_{\infty}\sum_{l=1}^{m}\|q_{n,l}\|_v^*
		\end{align*}
		in view of the triangle inequality and (\ref{eq:gill variation norm fact}), for some constant $0<\bar{\kappa}<\infty$ and with $q_{n,l}$ defined as $(y_l,t_l)\mapsto \int\cdots\int \omega_{0,f_{n,k},l}((y_1,t_1),\ldots,(y_l,t_l))\prod_{j=1}^{l-1}H_{n,k,f_{n,k}}^*(dy_j,dt_j)$. 
		Then, using (\ref{eq:partial variation norm fact}) and (\ref{eq:product variation norm fact}), we can write that, for each $l \in \{1,2,\dots,m\}$,
		\begin{align*}
			\norm{q_{n,l}}_v^* &\leq 2^{l-1}\sup_{(y_1,t_1),\dots,(y_{l-1}, t_{l-1})}\norm{\omega_{0,f_{n,k}l}((y_1,t_1), \dots, (y_{l-1}, t_{l-1}), \cdot)}_v^* \norm{H_{n,k,f_{n,k}}^*}_v^{l-1}\\
			&= 2^{l-1}\sup_{(y_1,t_1),\dots,(y_{l-1}, t_{l-1})}\norm{\omega_{0,f_{n,k},l}((y_1,t_1), \dots, (y_{l-1}, t_{l-1}), \cdot)}_v^* \ .
		\end{align*}
		Each of these terms is bounded in light of Condition \ref{condition:variation norm of kernel}, and so, Lemma \ref{lemma:uniform consistency of joint} implies that $\lVert H_{n,k,f_{n,k}}^* - H_{0,f_{n,k}}\rVert_\infty \sum_{l=1}^{m}\norm{q_{n,l}}_v^* = o_P(1)$. 
		Therefore, $V_1(f_{n,k}, H^*_{n,k}) - V_1(f_{n,k}, H_0) = o_P(1)$ for each $k$, and we conclude that 
		\begin{align*}
			\frac{1}{K}\sum_{k=1}^{k}\left\{V_1(f_{n,k}, H^*_{n,k}) - V_1(f_{n,k}, H_0)\right\} = o_P(1)\ . 
		\end{align*}For $v_{n,2}^*$, we have that $v_{n,2}^* - v_{0,2} = \frac{1}{K}\sum_{k=1}^{K}\{V_2(H^*_{n,k}) - V_2(H_0)\}$. An identical argument as above yields that 
		$ V_2(H^*_{n,k}) - V_2(H_0) = o_P(1)$ for each $k$. Therefore, it holds that $\frac{1}{K}\sum_{k=1}^{K}\{V_2(H^*_{n,k}) - V_2(H_0)\} = o_P(1)$, and we conclude that $v_{n,2}^* \stackrel{\text{p}}{\longrightarrow}v_{0,2}$. By Slutsky's lemma, we finally have that $v_{n,1}^*/v_{n,2}^* \stackrel{\text{p}}{\longrightarrow} v_0$.
	\end{proof}
	
	\begin{proof}[*thm:asymptotic linearity]
		We begin by expanding $v_{n,1}^*$ around $v_{0,1}$ as
		\begin{align*}
			v_{n,1}^* &-v_{0,1} = \frac{1}{K}\sum_{k=1}^{K}\left\{V_1(f_0, H^*_{n,k}) - V_1(f_0, H_0)\right\} + \frac{1}{K}\sum_{k=1}^{K}\left\{V_1(f_{n,k}, H_0) -V_1(f_0, H_0)\right\} + r_n\ ,
		\end{align*}
		where $r_{n} := \frac{1}{K}\sum_{k=1}^{K}[\{V_1(f_{n,k}, H^*_{n,k}) - V_1(f_{n,k}, H_0)\} - \{V_1(f_0, H^*_{n,k}) - V_1(f_0, H_0)\}]$. Under Conditions \ref{condition:optimality} and \ref{condition:rate of convergence for f}, for each $k$, we have that $\abs{V_1(f_{n,k}, H_0) - V_1(f_0, H_0)} \leq J_2 \norm{f_{n,k} - f_0}^2_{\mathcal{F}} = o_P(n^{-1/2})$. Next, applying Lemma \ref{lemma:V stat remainder}, we have that 
		\begin{align*}
			&V_1(f_0, H^*_{n,k}) - V_1(f_0, H_0)\\
			&\hspace{.1cm}= m \iint \omega_{0,f_0,1}\left(f_0(x_1), t_1\right)(H^*_{n,k} - H_0)(dx_1,dt_1) + o_P(n_k^{-1/2})\\
			&\hspace{.1cm}= m \iint \omega_{0,f_0,1}\left(y_1, t_1\right)(H^*_{n,k,f_0} - H_{0,f_0})(dy_1,dt_1) + o_P(n_k^{-1/2})\\
			&\hspace{.1cm}=m\left\{\frac{1}{n_k}\sum_{i \in \mathcal{I}_k}\int \omega_{0,f_0,1}\left(f_0(X_i), t_1\right)\pi_{0}(Z_i,dt_1) +  \iint \omega_{0,f_0,1}\left(y_1, t_1\right)r_{n, H_{0,f_0}}(dx_1,dt_1) - V_1(f_0, H_0)\right\}\\
			&\hspace{1cm}+ o_P(n_k^{-1/2})\\
			&\hspace{.1cm}= \frac{m}{n_k}\sum_{i \in \mathcal{I}_k}\left\{\int \omega_{0,f_0,1}\left(f_0(X_i),t_1\right)\left(F_0(dt_1 \midd X_i)-\chi_{0}(Z_i,dt_1)\right)- V_1(f_0,  H_0)\right\} + o_P(n_k^{-1/2})\\
			&\hspace{.5cm}= \mathbb{P}_{n,k}\phi_{\omega,0} + o_P(n_k^{-1/2})\ ,
		\end{align*}
		where the fourth equality follows by noting that, under Condition \ref{condition:variation norm of kernel}, there exists a constant $\kappa$ such that
		\begin{align*}
			\abs{m  \iint \omega_{0,f_0,1}\left(y_1, t_1\right)r_{n, H_{0,f_0}}(dx_1,dt_1)} \leq \kappa\norm{\omega_{0,f_0,1}}_v^*\|r_{n,H_{0,f_0}}\|_\infty = o_P(n_k^{-1/2})\ .
		\end{align*}
		Then, using the triangle inequality, we can write that
		\begin{align*}
			\abs{\left\{\frac{1}{K}\sum_{k=1}^{K}V_1(f_0, H^*_{n,k}) - V_1(f_0, H_0)\right\} -\mathbb{P}_n \phi_{\omega,0}} &\leq \max_k\abs{\frac{n}{Kn_k}-1}\cdot\mathbb{P}_n\phi_{\omega,0} + \frac{1}{K}\sum_{k=1}^{K}o_P(n_k^{-1/2}) \\
			&= O_P(n^{-1}) + o_P(n^{-1/2}) = o_P(n^{-1/2})\ .
		\end{align*}
		To analyze $r_n$, we note that $P_0^m \Gamma(f_0, \pi_0) = V_1(f_0, H_0)$. We also use the fact that for a function $b:\mathbb{R}^{p+1}\rightarrow\mathbb{R}$, $\int b(x,t)H^*_{n,k}(dx,dt) = \frac{1}{n_k}\sum_{i \in \mathcal{I}_k}\int b(X_i,t)\pi_{n,k}(Z_i, dt)$. Then, for each $k$, we have that
		\begin{align*}
			&\left\{V_1(f_{n,k}, H^*_{n,k}) - V_1(f_{n,k}, H_0)\right\} - \left\{V_1(f_0, H^*_{n,k}) - V_1(f_0, H_0)\right\}\\
			&= \mathbb{P}_{n,k}^m\Gamma(f_{n,k}, \pi_{n,k}) - P_0^m\Gamma(f_{n,k}, \pi_0) - \mathbb{P}_{n,k}^m\Gamma(f_0, \pi_{n,k}) + P_0^m\Gamma(f_0, \pi_0)\\
			&= (\mathbb{P}_{n,k}^m - P_0^m)\left\{\Gamma(f_{n,k}, \pi_{n,k}) - \Gamma(f_0, \pi_0)\right\} - (\mathbb{P}_{n,k}^m - P_0^m)\left\{\Gamma(f_0, \pi_{n,k}) - \Gamma(f_0, \pi_0)\right\}\\
			&\hspace{1cm}+ P_0^m\left\{\Gamma(f_{n,k}, \pi_{n,k}) - \Gamma(f_{n,k}, \pi_0) - \Gamma(f_0, \pi_{n,k}) + \Gamma(f_0, \pi_0)\right\}\\
			&= (\mathbb{P}_{n,k}^m - P_0^m)\left\{\Gamma(f_{n,k}, \pi_{n,k}) - \Gamma(f_0, \pi_0)\right\} - (\mathbb{P}_{n,k}^m - P_0^m)\left\{\Gamma(f_0, \pi_{n,k}) - \Gamma(f_0, \pi_0)\right\}\\
			&\hspace{1cm}+ E_0\Bigg[\int\cdots\int\Big\{ \omega\left((f_{n,k}(X_1), t_1),\dots,(f_{n,k}(X_m),t_m)\right)\\
			&\hspace{2.5cm}-\omega\left((f_{0}(X_1), t_1),\dots,(f_{0}(X_m),t_m)\right)\Big\}\Big\{\prod_{j=1}^{m}\pi_{n,k}(Z_j, dt_j)-\prod_{j=1}^{m}\pi_{0}(Z_j, dt_j)\Big\}\Bigg]\\
			&= (\mathbb{P}_{n,k}^m - P_0^m)h_{1,n,k} - (\mathbb{P}_{n,k}^m - P_0^m)h_{2,n,k}\\
			&\hspace{1cm}+ E_0\Bigg[\int\cdots\int\Big\{ \omega\left((f_{n,k}(X_1), t_1),\dots,(f_{n,k}(X_m),t_m)\right)\\
			&\hspace{2.5cm}-\omega\left((f_{0}(X_1), t_1),\dots,(f_{0}(X_m),t_m)\right)\Big\}\Big\{\prod_{j=1}^{m}\pi_{n,k}(Z_j, dt_j)-\prod_{j=1}^{m}\pi_{0}(Z_j, dt_j)\Big\}\Bigg]\ .
		\end{align*}
		The expectations in the display above are taken with respect to $(Z_1, Z_2, \dots, Z_m)$. The trailing term is $o_P(n^{-1/2})$ under Condition \ref{condition:second order remainder}. 
		
		We let $\mathbb{P}_{n,k,*}^mh_{1,n,k}$ denote the U-statistic analog of $P_{n,k}^mh_{1,n,k}$, that is, 
		\begin{align*}
			\mathbb{P}_{n,k,*}h_{1,n,k} &:= {n \choose m}^{-1}\sum_{\underline{i}_m \in I_{m,n_k}}h_{1,n,k}(Z_{i_1}, \dots Z_{i_m})\ ,
		\end{align*}
		where $I_{m,n_k} := \left\{\underline{i}_m \subseteq \mathcal{I}_k: i_1 < i_2 < \cdots < i_m\right\}$. Results from \citet{Serfling1980} allow us to establish that if $var_{0}[\bar{h}_{1,n,k}(Z) \midd \cup_{j \neq k}\mathcal{D}_j] > 0$, then it holds that 
		\begin{align*}
			var_{0}\left[n_k^{1/2}\mathbb{P}^m_{n,k,*}h_{1,n,k}\,\middle|\, \cup_{j \neq k}\mathcal{D}_j\right] &= m^2 var_0[\bar{h}_{1,n,k}(Z)\midd \cup_{j \neq k}\mathcal{D}_j] + O(n_k^{-1})\ .
		\end{align*}
		Therefore, for any $\epsilon > 0$, we can apply Chebyshev's inequality to get that
		\begin{align*}
			0 \leq P_0\left(\abs{n_k^{1/2}(\mathbb{P}^m_{n,k,*} -P_0^m)h_{1,n,k}} > \epsilon \,\middle|\, \cup_{j \neq k}\mathcal{D}_j\right) &\leq \frac{1}{\epsilon^2}var_{0}\left[n_k^{1/2}\mathbb{P}^m_{n,k,*}h_{1,n,k}\, \middle|\, \cup_{j \neq k}\mathcal{D}_j\right] \\
			&= \frac{m^2var_{0}[\bar{h}_{1,n,k}(Z)\midd \cup_{j \neq k}\mathcal{D}_j] + O(n_k^{-1})}{\epsilon^2}\\
			&\leq \frac{m^2E_0[\bar{h}^2_{1,n,k}(Z)\midd \cup_{j \neq k}\mathcal{D}_j] + O(n_k^{-1})}{\epsilon^2}\ .
		\end{align*}
		Then, in light of Condition \ref{condition:weak consistency}, we have that $P_0(\lvert n_k^{1/2}(\mathbb{P}^m_{n,k,*} -P_0^m)h_{1,n,k}\rvert > \epsilon \midd \cup_{j \neq k}\mathcal{D}_j) = o_P(1)$. This holds for any realization of $\cup_{j \neq k}\mathcal{D}_{j}$, and since probabilities are uniformly bounded, we find that
		\begin{align*}
			P_0\left(\abs{n^{1/2}_k(\mathbb{P}^m_{n,k,*}-  P_0^m)h_{1,n,k}} > \epsilon\right) = E_0\left[P_0\left(\abs{n_k^{1/2}(\mathbb{P}^m_{n,k,*} -P_0^m)h_{1,n,k}} > \epsilon \,\middle|\, \cup_{j \neq k}\mathcal{D}_j\right)\right] = o(1)\ .
		\end{align*}
		When $h_{1,n,k}$ is fixed, we have that $ (\mathbb{P}_{n,k}^m - \mathbb{P}_{n,k,*}^m)h_{1,n,k} = o_P(n_k^{-1/2})$ \citep{Serfling1980}. This allows us to write
		\begin{align*}
			P_0\left(\abs{n^{1/2}_k(\mathbb{P}^m_{n,k}-  \mathbb{P}_{n,k,*}^m)h_{1,n,k}} > \epsilon\right) =  E_0\left[P_0\left(\abs{n_k^{1/2}(\mathbb{P}^m_{n,k} -\mathbb{P}_{n,k,*}^m)h_{1,n,k}} > \epsilon \,\middle|\, \cup_{j \neq k}\mathcal{D}_j\right)\right]  = o(1)\ .
		\end{align*}
		Finally, we can write that
		\begin{align*}
			(\mathbb{P}_{n,k}^m - P_0^m)h_{1,n,k}  &= (\mathbb{P}_{n,k,*}^m - P_0^m)h_{1,n,k} + (\mathbb{P}_{n,k}^m - \mathbb{P}_{n,k,*}^m)h_{1,n,k}= o_P(n_k^{-1/2}) + o_P(n_k^{-1/2}) =  o_P(n_k^{-1/2})\ .
		\end{align*}
		Since $n/n_k \stackrel{\text{p}}{\longrightarrow} K $, we have that $(\mathbb{P}^m_{n,k}-  P_0^m)h_{1,n,k} = o_P(n^{-1/2})$. An identical argument holds for $(\mathbb{P}^m_{n,k}-  P^m_0)h_{2,n,k}$. Thus, for each $k$, it holds that
		\begin{align*}
			\left\{V_1(f_{n,k}, H^*_{n,k}) - V_1(f_{n,k}, H_0)\right\} - \left\{V_1(f_0, H^*_{n,k}) - V_1(f_0, H_0)\right\} = o_P(n^{-1/2})\ .
		\end{align*}
		We conclude that $r_n = o_P(n^{-1/2})$, and thus, that $v_{n,1}^* -V(f_0, H_0) = \mathbb{P}_n\phi_{\omega,0} + o_P(n^{-1/2})$. 
		
		Now, we observe that $v_{n,2}^* - v_{0,2} = \frac{1}{K}\sum_{k=1}^{K}\{V_2(H^*_{n,k}) - V_2(H_0)\}$.  By Lemma \ref{lemma:V stat remainder}, we have that 
		\begin{align*}
			&V_2(H^*_{n,k}) - V_2(H_0) = m \iint  \theta_{0,1}\left( t_1\right)(H^*_{n,k} - H_0)(dx_1,dt_1) + o_P(n_k^{-1/2})\\
			&= m\left\{\frac{1}{n_k}\sum_{i \in \mathcal{I}_k}\int  \theta_{0,1}\left(t_1\right)\left(F_0(dt_1 \midd X_i) - \chi_{0}(Z_i,dt_1)\right)- v_{0,2}\right\}+ o_P(n_k^{-1/2})= \mathbb{P}_{n,k}\phi_{\theta,0} + o_P(n_k^{-1/2})\ .
		\end{align*}
		Using the same argument as above, we have that $\lvert \frac{1}{K}\sum_{k=1}^{K}V_2(H^*_{n,k}) - V_2(H_0) - \mathbb{P}_n\phi_{\theta,0}\rvert = o_P(n^{-1/2})$, and hence, that $v_{n,2}^* - v_{0,2} = \mathbb{P}_n\phi_{\theta,0} + o_P(n^{-1/2})$. 
		
		Finally, an application of the delta method yields that $v_{n}^* -v_0 = (\mathbb{P}_n-P_0)\phi_0+ o_P(n^{-1/2})$. Therefore, $v_{n}^*$ is an asymptotically linear estimator of $v_{0}$ with influence function equal to $\phi_0$. Under Condition \ref{condition:efficient influence function}, Theorem 1 holds and $\phi_0$ is the efficient influence function of $P \mapsto V(f_P, P)$ at $P_0$ relative to $\mathcal{M}_{\text{obs}}$, so that $v_n^*$ is nonparametric efficient.
	\end{proof}
	
	\section{Additional technical details}\label{sec:additional technical details}
	
	\subsection{Derivation of oracle prediction functions for given example measures}
	\setcounter{example}{0}
	
	\begin{example}[\sc{AUC}]
		\upshape
		For a binary outcome $D$, \citet{Agarwal2014} considered the AUC\ $V(f,P) := P(f(X_1) > f(X_2) \midd D_1 = 1, D_2 = 0)$ and showed that the oracle prediction function is $x \mapsto E_P(D \midd X = x)$. Adapting this to the binary outcome $\I(T > \tau)$, we have the oracle $\mathbbmsl{f}_0: x \mapsto \mathbbmsl{P}_0(T > \tau \midd X = x) = 1 - \mathbbmsl{F}_0(\tau \midd x)$.
	\end{example}

	\begin{example}[\sc{Brier score}]
		\upshape
		The Brier score at time $\tau$ is simply the negative MSE\ for the binary outcome $\I(T > \tau)$, which is maximized by the conditional mean $x \mapsto \mathbbmsl{P}_0(T > \tau \midd X=x) = 1 -\mathbbmsl{F}_0(\tau \midd x)$. 
	\end{example}

	\begin{example}[\sc{Survival time MSE}]
		\upshape
		As for the Brier score, the negative MSE\ is maximized by the conditional mean, which in this case is $x \mapsto E_{\mathbbmsl{P}_0}(T \wedge \tau \midd X = x)$. This can be written in terms of $\mathbbmsl{F}_0$ as $\mathbbmsl{f}_0: x \mapsto \int_0^\tau \left\{1-\mathbbmsl{F}_0(t \midd x)\right\} dt$.
	\end{example}
	
	\begin{example}[\sc{C-index}]
		\upshape
		If there exists a function $f$ such that $f(x_1) > f(x_2)$ implies 
		\begin{align}
			\mathbbmsl{P}_0(T_1 < T_2, T_1 \leq \tau\midd X_1 = x_1, X_2 = x_2) \geq \mathbbmsl{P}_0(T_2 < T_1, T_2 \leq \tau\midd X_1 = x_1, X_2 = x_2) \label{eq:optimal risk ordering}
		\end{align}
		for all $(x_1,x_2)$, then $f$ is an oracle prediction function. To see this, we define 
		\begin{align*}
			c_{0,f}: x_1,x_2 \mapsto E_{\mathbbmsl{P}_0}&[\I(f(x_1) > f(x_2))\I(T_1 < T_2, T_1 \leq \tau) \\
			&+ \I(f(x_2) \geq f(x_1))\I(T_2 < T_1, T_2 \leq \tau)\midd X_1 = x_1, X_2 = x_2]\ .
		\end{align*}
		By the tower property, we have that $E_{\mathbbmsl{P}_0}[c_{0,f}(X_1,X_2)] = \mathbbmsl{V}(f, \mathbbmsl{P}_0)$. Next, we note that for all $(x_1,x_2)$ it holds that
		\begin{align*}
			&c_{0,f}(x_1,x_2) \\
			&=E_{\mathbbmsl{P}_0}[\I(f(x_1) > f(x_2))\I(T_1 < T_2, T_1 \leq \tau) \\
			&\hspace{2cm}+ (1 - \I(f(x_1) > f(x_2)))\I(T_2 < T_1, T_2 \leq \tau)\midd X_1 = x_1, X_2 = x_2 ]\\
			&= E_{\mathbbmsl{P}_0}[\I(T_2 < T_1, T_2 \leq \tau) \\
			&\hspace{2cm}+ \I(f(x_1) > f(x_2))\left\{\I(T_1 < T_2, T_1 \leq \tau) - \I(T_2 < T_1, T_2 \leq \tau)\right\}\midd X_1 = x_2, X_2 = x_2]\ .
		\end{align*}
		Let $f_0$ be a prediction function satisfying (\ref{eq:optimal risk ordering}). For any other $f^*\in \mathcal{F}$ and any $(x_1,x_2)$, we can write
		\begin{align*}
			& c_{0,f_0}(x_1,x_2) - c_{0, f^*}(x_1,x_2)\\
			&= \left\{\I(f_0(x_1) > f_0(x_2)) - \I(f^*(x_1) > f^*(x_2))\right\}\{\mathbbmsl{P}_0(T_1 < T_2, T_1 \leq \tau \midd X_1 = x_1, X_2 = x_2) \\
			&\hspace{8cm}- \mathbbmsl{P}_0(T_2 < T_1, T_2 \leq \tau \midd X_1 = x_1, X_2 = x_2)\}\geq 0 \ .
		\end{align*}
		Here, we have used that fact that when $\I(f_0(x_1) > f_0(x_2)) = 1$ --- and therefore the leading term above is nonnegative --- the trailing term is positive, and when $\I(f_0(x_1) > f_0(x_2)) = 0$ --- and therefore the leading term above is nonpositive --- the trailing term is negative. Applying the tower property yields that
		\begin{align*}
			\mathbbmsl{V}(f_0, \mathbbmsl{P}_0) - \mathbbmsl{V}(f, \mathbbmsl{P}_0) = E_{\mathbbmsl{P}_0}[c_{0,f_0}(X_1,X_2) - c_{0,f^*}(X_1,X_2)] \geq 0 \ .
		\end{align*}
		
	\end{example}
	
	\subsection{Explicit form of efficient influence function for given example measures}
	\setcounter{example}{0}
	
	\begin{example}[\sc{AUC}]
		\upshape
		We have that
		\begin{align*}
			&\omega_{0,1}(x,t) = \frac{1}{2}\int \left\{\I(f_0(x) > f_0(x_2), t \leq \tau, t_2 > \tau) + \I(f_0(x_2) > f_0(x), t_2 \leq \tau, t > \tau) \right\} H_0(dx_2, dt_2)\\
			&\hspace{0.8cm}= \frac{1}{2}\int \left\{\I(f_0(x) > f_0(x_2), t \leq \tau)\left\{1 - F_0(\tau \midd x_2)\right\} + \I(f_0(x_2) > f_0(x), t > \tau)F_0(\tau \midd x_2)\right\}Q_0(dx_2)\\
			&\hspace{0.8cm}= \frac{1}{2}E_0\left[\I(f_0(x) > f_0(X), t \leq \tau)\left\{1 - F_0(\tau \midd X)\right\} + \I(f_0(X) > f_0(x), t > \tau)F_0(\tau \midd X) \right];\\
			&\theta_{0,1}(t) = \frac{1}{2}\int \left[\I( t \leq \tau, t_2 > \tau) + \I(t_2 \leq \tau, t > \tau) \right] H_0(dx_2, dt_2)\\
			&\hspace{1cm}= \frac{1}{2}\int \left\{\I(t \leq \tau)\left\{1 - F_0(\tau \midd x_2)\right\} + \I(t > \tau)F_0(\tau \midd x_2) \right\}Q_0(dx_2)\\
			&\hspace{1cm}= \frac{1}{2}E_0\left[\I(t \leq \tau)\left\{1 - F_0(\tau \midd X)\right\} + \I(t > \tau)F_0(\tau \midd X) \right].
		\end{align*}
		Noting that $\int_0^\tau F_0(dt \midd x) = F_0(\tau \midd x)$, $\int_\tau^\infty F_0(dt \midd x) = 1 - F_0(\tau \midd x)$,  $\int_0^\tau \chi_0(z,dt) = \chi_{0,z}(\tau)$, and $\int_\tau^\infty \chi_0(z,dt) = -\chi_{0,z}(\tau)$, we have that
		\begin{align*}
			\phi_{\omega,0}:z &\mapsto E_0\biggr[\I(f_0(x) > f_0(X))\left\{F_0(\tau \midd x) - \chi_{0}(z,\tau)\right\}\left\{1 - F_0(\tau \midd X)\right\} \\
			&\hspace{2cm}+ \I(f_0(X) > f_0(x))F_0(\tau \midd X)\left\{1 - F_0(\tau \midd x) + \chi_{0}(z,\tau)\right\} \biggr] - 2v_{0,1}\ ;\\
			\phi_{\theta,0}:z &\mapsto E_0\biggr[\left\{F_0(\tau \midd x) - \chi_{0}(z,\tau)\right\}\left\{1 - F_0(\tau \midd X)\right\} \\
			&\hspace{2cm}+ F_0(\tau \midd X)\left\{1 - F_0(\tau \midd x) + \chi_{0}(z,\tau)\right\} \biggr] - 2v_{0,2}\ .
		\end{align*}
	\end{example}
	
	\begin{example}[\sc{Brier score}]
		\upshape
		Because $m = 1$ and $\theta(t)  = 1$, we have that $\omega_{0,1}(x,t) = \omega(f_0(x), t)$ and $\theta_{0,1}(t) = 1$. Therefore, it follows that
		\begin{align*}
			\phi_{\omega,0}:z &\mapsto \int \left\{f_0(x)  - \I(t > \tau)\right\}^2\left\{F_0(dt \midd x) - \chi_0(z,dt)\right\}-v_{0,1}\ .
		\end{align*}
	\end{example}
	
	\begin{example}[\sc{Survival time MSE}]
		\upshape
		As for the Brier score, we have that $m = 1$ and $\theta(t) = 1$, and so, it follows that
		\begin{align*}
			\phi_{\omega,0}:z \mapsto -\int \left\{f_0(x) - (t \wedge \tau )\right\}^2\left\{F_0(dt \midd x) - \chi_0(z,dt)\right\} - v_{0,1}\ .
		\end{align*}
	\end{example}
	
	\begin{example}[\sc{C-index}]
		\upshape
		We have that
		\begin{align*}
			&\omega_{0,1}(x,t) = \frac{1}{2}\int \left\{\I(f_0(x) > f_0(x_2), t \leq t_2, t \leq \tau) + \I(f_0(x_2) > f_0(x), t_2 \leq t, t_2 \leq \tau) \right\}H_0(dx_2, dt_2)\\
			&\hspace{1cm}= \frac{1}{2}E_0\left[\int \left[\I(f_0(x) > f_0(X), t\leq t_2, t \leq \tau) + \I(f_0(X) > f_0(x), t_2 \leq t, t_2 \leq \tau) \right]F_0(dt_2 \midd X) \right];\\
			&\theta_{0,1}(t)  = \frac{1}{2}\int \left[\I(t \leq t_2, t \leq \tau) + \I(t_2 \leq t, t_2 \leq \tau) \right] H_0(dx_2, dt_2)\\
			&\hspace{1cm}= \frac{1}{2}E_0\left[\int \left[\I(t \leq t_2, t \leq \tau) + \I( t_2 \leq t, t_2 \leq \tau) \right]F_0(dt_2 \midd X) \right].
		\end{align*}
		Therefore, it follows that
		\begin{align*}
			\phi_{\omega,0}&:z \mapsto E_0\biggr[\int \Big\{\I(f_0(x) > f_0(X), t \leq t_2, t \leq \tau)+\I(f_0(X) > f_0(x), t_2 \leq t, t_2 \leq \tau) \Big\} \\
			&\hspace{3cm} \times\left\{F_0(dt \midd x) - \chi_0(z,dt)\right\}F_0(dt_2 \midd X) \biggr] - 2v_{0,1}\ ;\\
			\phi_{\theta,0}&:z \mapsto E_0\biggr[\int \Big\{\I(t \leq t_2, t \leq \tau)+ \I( t_2 \leq t_1, t_2 \leq \tau) \Big\} \\
			&\hspace{3cm}\times\left\{F_0(dt \midd x) - \chi_0(z,dt)\right\}F_0(dt_2 \midd X)\biggr] - 2v_{0,2}\ .
		\end{align*}
	\end{example}
	
	\subsection{Verification of Conditions \ref{condition:variation norm of kernel} and \ref{condition:optimality} for given example measures}
	\setcounter{example}{0}
	
	\begin{example}[\sc{AUC}]
		\upshape
		\citet{Williamson2021b} considered AUC for binary outcomes, and showed that Condition \ref{condition:optimality} holds for $J_2 = \frac{2\kappa}{\xi_0(1 - \xi_0)}$ with $\xi_0 = \mathbbmsl{P}_0(T \leq \tau)$ and $\norm{\cdot}_\mathcal{F}$ the supremum norm, under the margin condition $
		P_0\left(\abs{F_0(\tau \midd X_1) - F_0(\tau \midd X_2)} < s\right) \leq \kappa s$
		for some $0 < \kappa < \infty$ and all $s$ small.
		
		For any fixed $t_1$, we can write $
		\norm{\theta(t_1,\cdot)}_v = \int \abs{\theta(t_1,dt_2)}= \I(t_1> \tau)$, as well as $\norm{\theta(t_1, \cdot)}_\infty = 1$,
		and so, we have that $\sup_{t_1}\norm{\theta(t_1,\cdot)}_v^* = 1$. 
		Next, we note that, for any fixed $(y_1,t_1)$, we can write $\omega$ as the product of functions $(y_2,t_2)\mapsto\I(y_1 > y_2)$ and $(y_2,t_2)\mapsto\I(t_2 \leq \tau)\I(t_1 > \tau)$. The uniform sectional variation norm of both functions is 1 for fixed $(y_1,t_1)$, and so, we have that $\sup_{y_1,t_1}\norm{\omega((y_1,t_1),\cdot)}_v^* < \infty$. 
	\end{example}
	
	\begin{example}[\sc{Brier score}]
		\upshape
		We have that $\abs{V(f, P_0) - V(f_0, P_0)}  = E_0\left\{f(X) - f_0(X)\right\}^2$ as long as $x\mapsto F_0(\tau \midd x)$ falls in $\mathcal{F}$. Therefore, Condition \ref{condition:optimality} holds with $J_2 = 1$ and $\norm{\cdot}_{\mathcal{F}}$ taken to be the $L_2(P_0)$ or supremum norm.
		
		We have that $\omega(y,t) = y^2 - 2y\I(t > \tau) + \I(t > \tau)$. By the triangle inequality, we have that $\|\omega\|_v^* \leq \|y \mapsto y^2\|_v^* + 4\|y \mapsto y\|_v^* + 1$. This quantity is finite if, for example, $\mathcal{Y}$ is a bounded subset of $\mathbb{R}$. Because the outcome of interest $\I(T > \tau)$ is binary, we may reasonably take $\mathcal{F}$ to be a class of functions mapping to $\mathcal{Y} = [0,1]$, in which case the variation norm is finite.  
	\end{example}
	
	\begin{example}[\sc{Survival time MSE}]
		\upshape
		We have that $\abs{V(f, P_0) - V(f_0, P_0)}  = E_0\left\{f(X) - f_0(X)\right\}^2$ as long as $x\mapsto E_0(T \wedge \tau \midd X = x)$ falls in $\mathcal{F}$. Therefore, Condition \ref{condition:optimality} holds with $J_2 = 1$ and $\norm{\cdot}_{\mathcal{F}}$ taken to be the $L_2(P_0)$ or supremum norm.
		
		We have that $\omega(y,t) = y^2 - 2y(t \wedge \tau) + (t \wedge \tau)$. We note that $\int_0^\tau \abs{dt} = \tau$, and so, by the triangle inequality, we have that $\|\omega\|_v^* \leq \|y \mapsto y^2\|_v^* + 4\tau\|y \mapsto y\|_v^* + \tau$. As for the Brier score above, this quantity is finite if, for example, $\mathcal{Y}$ is a bounded subset of $\mathbb{R}$. In this example, the outcome of interest is $T \wedge \tau$, so we may reasonably take $\mathcal{F}$ to be a class of functions mapping to $\mathcal{Y} = [0,\tau]$, in which case the variation norm is finite.  
	\end{example}
	
	\begin{example}[\sc{C-index}]
		\upshape
		For any fixed $t_1$, we can write
		$\norm{\theta(t_1,\cdot)}_v = \int \abs{\theta(t_1,dt_2)} = 1$, as well as $\norm{\theta(t_1,\cdot)}_\infty = 1$,
		and so, we have that $\sup_{t_1}\norm{\theta(t_1,\cdot)}_v^* = 1$. Next, for any fixed $(y_1,t_1)$, we can write $\omega$ as the product of functions $(y_2,t_2)\mapsto \I(y_1 > y_2)$ and $(y_2,t_2)\mapsto\I(t_2 \leq t_1, t_2 \leq \tau)$. The variation norm of each of these functions is 1 for fixed $(y_1,t_1)$, and so, as for the AUC, we find that $\sup_{y_1,t_1}\|\omega((y_1,t_1),\cdot)\|_v^* < \infty$. 
		
		As discussed in the main text, Condition \ref{condition:optimality} appears difficult to verify when $f_0$ is not available in closed form. 
	\end{example}
	
	\section{Numerical optimization of the C-index}\label{sec:gradient boosted c}
	Due to the popularity of the C-index as a predictiveness measure in survival analysis, several authors have proposed learning a prediction function by directly optimizing the C-index. Existing approaches to C-index maximization have been based on either the Harrell's C-index statistic \citep{Harrell1982} or the inverse-probability-of-censoring-weight-based C-index estimator of \citet{Uno2011}. Due to the lack of smoothness of the indicator function $(x_1,x_2)\mapsto \I(f(x_1) > f(x_2))$, the C-index is not amenable to gradient-based optimization techniques, and so, existing methods typically use a smoothed, differentiable approximation or lower bound. \citet{Raykar2008} cast maximization of the C-index as a ranking problem and propose a differentiable lower bound on the C-index, which is then optimized using a conjugate gradients algorithm. \citet{Chen2012} propose a smoothed C-index approximation based on the sigmoid function and optimize it via gradient boosted machines. \citet{Mayr2014} use a similar smoothed objective function and likewise use gradient boosting with linear models as base learners. Specifically, \citet{Mayr2014} consider the (non-normalized) smoothed C-index
	$E_{\mathbbmsl{P}_0}\left[\I(T_1 < T_2) h_\zeta(f(X_2) -  f(X_1))\right]$, where $h_\zeta: s \mapsto \left\{1 + \exp(s/\zeta)\right\}$ is the sigmoid function and $\zeta$ is a tuning parameter determining the smoothness of the approximation. A $\tau$-restricted smoothed C-index is then given by 
	\begin{align*}
		E_{\mathbbmsl{P}_0}\left[\I(T_1 \leq T_2, T_1 \leq \tau) h_\zeta(f(X_2) -  f(X_1))\right].
	\end{align*}
	Under the assumption that $C$ and $X$ are independent, \citet{Mayr2014} use IPC weights to identify this parameter as
	\begin{align*}
		E_0\left[h_\zeta(f(X_2) - f(X_1))\frac{\Delta_1\I(Y_1 < Y_2, Y_1 \leq \tau)}{\tilde{G}_0(Y_1)^2}\right],
	\end{align*}
	where $\tilde{G}_0(t)$ is the identified marginal survival function of $C$ evaluated at $t$. We leverage our identification of the C-index based on the assumption of conditionally independent censoring, which yields the smoothed objective function
	\begin{align*}
		\iiiint h_\zeta(f(X_2) - f(X_1)) \I(t_1 < t_2, t_1 \leq \tau)H_0(dx_1, dt_1)H_0(dx_2,dt_2)\ . 
	\end{align*}
	The empirical objective function to be numerically maximized is then given by
	\begin{align}
		f\mapsto \sum_{i=1}^{n}\sum_{j=1}^{n}h_\zeta(f(X_j) - f(X_i))w_{i,j}\ ,\label{eq:objective}
	\end{align}
	where $w_{i,j} := \iint \I(t_i < t_j, t_i \leq \tau)F_n(dt_i \midd X_i)F_n(dt_j \midd X_j)$ is a weight. The gradient of this objective function with respect to $f(X_j)$ is given by
	\begin{align}
		\sum_{i=1}^{n}\sum_{j=1}^{n}w_{i,j}\frac{\exp(\frac{f(X_j) - f(X_i)}{\zeta})}{\zeta\left\{1 + \exp(\frac{f(X_j) - f(X_i)}{\zeta})\right\}^2}\ .\label{eq:gradient}
	\end{align}
	In the numerical experiments we present in Section \ref{sec:sims} and Section \ref{sec:addtl sims}, we used the empirical objective (\ref{eq:objective}) and gradient (\ref{eq:gradient}) to define a custom family in the \texttt{mboost} gradient boosting \texttt{R} package. The boosting tuning parameters \texttt{mstop} (number of boosting iterations) and \texttt{nu} (learning rate), as well as the smoothing parameter $\zeta$, were selected by five-fold cross-validation. The objective function used for cross-validation was the unsmoothed plug-in estimate of the C-index $\sum_{i=1}^{n}\sum_{j=1}^{n}\I(f(X_j) - f(X_i))w_{i,j}$. 
	
	Calculation of the double sums in (\ref{eq:objective}) and (\ref{eq:gradient}) can be computationally expensive. In practice, we find that subsampling observations for the boosting procedure can greatly decrease computation times, and the performance of the procedure in simulation is relatively insensitive to the choice of subsample size; see Section \ref{sec:addtl sims}.
	
	\section{Simulation details and additional results in Scenarios I--IV}\label{sec:addtl sims}
	
	\subsection{Details on data-generating mechanism and true VIMs}
	
	In all experiments, we generated the data sequentially as follows:
	\begin{enumerate}
		\item draw $X\sim MVN(0, \Sigma)$; 
		\item draw $\varepsilon_T \sim N(0,1)$ and $\varepsilon_C \sim N(0, 1)$, independent of each other and independent of $X$; 
		\item set $\log(T) = \tfrac{5}{10}X_1 -\tfrac{3}{10}X_2 + \tfrac{1}{10}X_1X_2 - \tfrac{1}{10}X_3X_4 + \tfrac{1}{10}X_1X_5 + \varepsilon_T$;
		\item set $\log(C) = \beta_{0,C} - \tfrac{2}{10}X_1 + \tfrac{2}{10}X_2  + \varepsilon_C$;
		\item set $Y := \min\{T, C\}$ and $\Delta = \I(T \leq C)$.
	\end{enumerate}
	The values of $p$ and $\Sigma$ depend on the scenario. Under this data-generating mechanism, the conditionally independent censoring assumption Condition \ref{condition:independent censoring} holds. The conditional distributions of both $T$ and $C$ fall in the class of accelerated failure time models.
	
	In Scenario I, we set $p = 25$ and set $\Sigma$ such that $X_1, \ldots, X_{25}$ each had variance 1, with $\Sigma_{16} = 0.7$ and $\Sigma_{23} = -0.3$. In Scenario II, we set $p = 25$ and $\Sigma$ to be the identity matrix. In Scenario III, we set $p = 5$ and $\Sigma$ to be the identity matrix. In Scenario IV, we set $p = 5$ and $\Sigma$ to be the identity matrix. To achieve censoring rates of $\{30\%, 40\%, 50\%, 60\%, 70\%\}$, we set $\beta_{0,C}$ equal to $\{0.85, 0.5, 0, -0.45, -0.85\}$, respectively.
	
	In the following, we define
	\begin{align*}
		W&:=(X_1, X_2, X_3, X_4, X_5, X_1X_2, X_3X_4, X_1X_5, X_6, \ldots, X_{25})\ ,\\
		\beta_{T} &:= (\tfrac{5}{10}, -\tfrac{3}{10}, 0, 0,0, \tfrac{1}{10}, -\tfrac{1}{10}, \tfrac{1}{10}, 0, \ldots, 0)\ ,
	\end{align*}
	allowing us to compactly write $\log T = W\beta_{T} + \varepsilon_{T}$. The true survival function $\mathbbmsl{S}_0(t \midd x)$ is given by
	\begin{align*}
		P(T > t \midd X = x) &= P(\log(T) > \log(t) \midd X = x) = P(W\beta_T + \varepsilon_T > \log(t) \midd X = x) \\
		&= P(\varepsilon_T > \log(t) - w\beta_T) = 1 - \Omega(\log(t) - w\beta_T)\ .
	\end{align*}
	where $\Omega$ is the standard normal distribution function. The reduced-dimension conditional survival function omitting $X_s$ is given by
	\begin{align*}
		P(T > t \midd X_{-s} = x_{-s}) &= P(\log(T) > \log(t) \midd X_{-s} = x_{-s})\\
		&= P(W\beta_T + \varepsilon_T > \log(t) \midd X_{-s} = x_{-s}) 
	\end{align*}
	Specifically, for $s = 1, s = 2$, $s = (1,6)$, and $s = 6$, we have
	\begin{align*}
		&P(T > t \midd X_{-1} = x_{-1}) = P\{(\tfrac{5}{10} + \tfrac{1}{10}x_2 + \tfrac{1}{10}x_5)X_1 + \varepsilon_T > \log(t) + \tfrac{3}{10}x_2 + \tfrac{1}{10}x_3x_4 \midd X_{-1} = x_{-1}\}\ ,\\
		&P(T > t \midd X_{-2} = x_{-2}) \\
		&\hspace{2cm}= P\{(-\tfrac{3}{10} + \tfrac{1}{10}x_1 )X_2 + \varepsilon_T > \log(t) - \tfrac{5}{10}x_1 + \tfrac{1}{10}x_3x_4 - \tfrac{1}{10}x_1x_5 \midd X_{-2} = x_{-2}\}\ ,\\
		&P(T > t \midd X_{-(1,6)} = x_{-(1,6)})\\
		&\hspace{2cm}=P\{(\tfrac{5}{10} + \tfrac{1}{10}x_2 + \tfrac{1}{10}x_5)X_1 + \varepsilon_T > \log(t) + \tfrac{3}{10}x_2 + \tfrac{1}{10}x_3x_4 \midd X_{-(1,6)} = x_{-(1,6)}\}\ ,\\
		&P(T > t \midd X_{-6} = x_{-6}) = P(T > t \midd X = x) = 1 - \Omega(\log(t) - w\beta_T)\ .
	\end{align*}
	The conditional distributions required to compute the first three probabilities above are given by
	\begin{align*}
		X_1 \midd X_2, \ldots, X_{25} &\sim N\{\Sigma_{16}X_6, (1 - \Sigma_{16}^2)\}\ ,\\
		X_2 \midd X_1, X_3, \ldots, X_{25} &\sim N\{\Sigma_{23}X_3, (1 - \Sigma_{23}^2)\}\ ,\\
		X_1  \midd X_2, \ldots, X_5, X_7,\ldots, X_{25} &\sim N(0, 1)\ .
	\end{align*}
	We note that $\varepsilon_T$ is a standard normal random variable independent of $X$, and so
	\begin{align*}
		&(\tfrac{5}{10} + \tfrac{1}{10}x_2 + \tfrac{1}{10}x_5)X_1 + \varepsilon_T \midd X_{-1} = x_{-1} \\
		&\hspace{4cm}\sim N\{(\tfrac{5}{10} + \tfrac{1}{10}x_2 + \tfrac{1}{10}x_5)\Sigma_{16}x_6, (\tfrac{5}{10} + \tfrac{1}{10}x_2 + \tfrac{1}{10}x_5)^2(1 - \Sigma_{16}^2)\}\ ,\\
		&(-\tfrac{3}{10} + \tfrac{1}{10}x_1 )X_2 + \varepsilon_T \midd X_{-2} \sim N\{(-\tfrac{3}{10} + \tfrac{1}{10}x_1)\Sigma_{23}x_3, (-\tfrac{3}{10} + \tfrac{1}{10}x_1)^2(1 - \Sigma_{23}^2)\}\ ,\\
		&(\tfrac{5}{10} + \tfrac{1}{10}x_2 + \tfrac{1}{10}x_5)X_1 + \varepsilon_T \midd X_{-(1,6)} \sim N(0, (\tfrac{5}{10} + \tfrac{1}{10}x_2 + \tfrac{1}{10}x_5)^2)\ .
	\end{align*}
	
	For landmark time VIMs, the true oracle and residual oracle prediction functions are characterized by the conditional survival function $\mathbbmsl{S}_0(t \midd x)$. For the C-index, because the distribution of $T$ given $X$ is normal, a valid oracle prediction function is given by the conditional mean $x \mapsto E_{\mathbbmsl{P}_0}\left(T \midd X = x\right)$. This is due to the fact that, for two independent normally distributed random variables $T_1$ and $T_2$ with respective means $\mu_1$ and $\mu_2$ and equal variances, $T_1$ stochastically dominates $T_2$ if $\mu_1 > \mu_2$. This implies that $\mathbbmsl{P}_0(T_1 > T_2) > 1/2$, and hence, the mean falls in $\mathcal{F}_0$, the class of oracle prediction functions. The true VIM values for all simulations are given in Table \ref{tab:true vims main}.
	
	\begin{table}
		\centering \begin{tabular}{cc cc cc c} 
			\toprule
			Scenario& Feature& \multicolumn{5}{c}{Importance measure} \\ \midrule
			&&\multicolumn{2}{c}{AUC\ at $\tau$}&\multicolumn{2}{c}{Brier score at $\tau$}&C-index\\    
			&&$\tau = 0.5$&0.9&0.5&0.9&\\ 
			I & $X_1$&0.051&0.049&0.011&0.014&0.042\\
			& $X_6$&0&0&0&0&0\\
			& $(X_1, X_6)$&0.116&0.113&0.021&0.028&0.095\\
			II & $X_1$&0.117&0.113&0.021&0.028&0.093\\
			& $X_6$&0&0&0&0&0\\
			III& $X_1$&0.117&0.113&0.021&0.028&0.093\\
			&$X_2$ & 0.034 & 0.033 & 0.008 & 0.010 &0.029\\
			IV& $X_1$&0.117&0.113&0.021&0.028&0.093\\ \bottomrule
		\end{tabular}\caption{Approximate values of $\psi_{0,s}$ for numerical experiments. These parameter values were approximated using a Monte Carlo approach with sample size $2\times10^7$.}
		\label{tab:true vims main}
	\end{table} 
	
	\subsection{Details on nuisance parameter and oracle prediction function estimation}\label{sec:nuisance estimation details}
	
	Table \ref{tab:nuisance estimators} describes the algorithms used to estimate $S_0$ and $G_0$.  Tuning parameters for the random survival forest (random survival forest) were selected to minimize out-of-bag error rate, as measured by one minus Harrell's C-index (the default evaluation metric in the \texttt{rfsrc} software package). Table \ref{tab:SL algorithms} gives the algorithms included in the Super Learner library for global survival stacking and estimation of the residual oracle prediction function $f_{0,s}$ for landmark VIMs. Five-fold cross-validation was used to determine the optimal convex combination of these learners that minimized cross-validated squared-error loss, as described in \citet{Wolock2024}. Table \ref{tab:survSL algorithms} gives the algorithms included in the survival Super Learner library. Five-fold cross-validation was used to determine the optimal convex combination of these learners that minimized cross-validated oracle risk functions detailed in \citet{Westling2023}. The random survival forest algorithm was fitted twice, once to estimate $F_0$ and once to estimate $G_0$. For global survival stacking and survival Super Learner, estimates for both distributions are produced simultaneously.
	
	For landmark VIMs, the full oracle prediction function $f_0$ is a simple transformation of $S_0$ and was not estimated separately in the primary experiments. For the C-index, we implemented the boosting procedure details in Section \ref{sec:gradient boosted c} with five-fold cross-validation for tuning parameter selection. The unsmoothed C-index was used as the evaluation metric for cross-validation. The tuning parameters are detailed in Table \ref{tab:boosting tuning params}. 
	
	\begin{table}\centering \begin{tabular}{p{0.25\linewidth}  p{0.3\linewidth}  p{0.35\linewidth}} 
			\toprule
			Algorithm & \texttt{R} implementation & Tuning parameters\\ \midrule
			Random survival forest & \texttt{rfsrc} &\texttt{mtry} $\in \{1,\dots,\sqrt{p}^\dagger\}$ \\ 
			&\citep{rfsrc-package}& \texttt{nodesize} $\in \{5,15,25\}$\\
			&& \texttt{ntree} $\in \{500, 1000\}$\\
			Global survival stacking& \texttt{stackG}& \texttt{SL.library} (see Table \ref{tab:SL algorithms}) \\
			& \citep{Wolock2024}&\texttt{bin\_size = 0.04}\\
			Survival Super Learner & \texttt{survSuperLearner} & \texttt{SL.library} (see Table \ref{tab:survSL algorithms})\\
			&\citep{Westling2023}&\\ \bottomrule
		\end{tabular}
		\caption{Algorithms used for estimation of nuisance parameters. All options besides those listed here were set to default values. In particular, the random survival forests were grown using sampling without replacement and the log-rank splitting rule. The combination of \texttt{mtry}, \texttt{nodesize} and \texttt{ntree} minimizing out-of-bag error rate, as measured by one minus Harrell's C-index, was selected. For global survival stacking, \texttt{time\_basis} was set to \texttt{"continuous"} (time included as continuous predictor in the pooled binary regression), and \texttt{surv\_form} was set to \texttt{"PI"} (product-integral mapping from hazard to survival function). For both global stacking and survival Super Learner, five-fold cross-validation was used to determine the optimal convex combination of algorithms in \texttt{SL.library}. 
			\textsuperscript{\textdagger}: $p$ denotes the number of predictors.}
		\label{tab:nuisance estimators}
	\end{table} 
	
	\begin{table}\centering \begin{tabular}{l l l }
			\toprule
			Algorithm name & Algorithm description & Tuning parameters\\ \midrule
			\texttt{SL.mean} & Marginal mean&NA\\
			\texttt{SL.glm.int} & Logistic regression with pairwise interactions&NA\\ 
			\texttt{SL.gam} & Generalized additive model&default\\
			\texttt{SL.earth} & Multivariate adaptive regression splines&default\\
			\texttt{SL.ranger} & Random forest&default\\
			\texttt{SL.xgboost} & Gradient-boosted trees&\texttt{ntrees} $\in \{250, 500, 1000\}$\\
			&&\texttt{max\_depth} $\in \{1, 2\}$ \\ \bottomrule
		\end{tabular}
		\caption{Algorithms included in the Super Learner for global survival stacking and for estimation of the residual oracle prediction function for landmark VIMs. All tuning parameters besides those for \texttt{SL.xgboost} were set to default values. In particular, \texttt{gam} was implemented with \texttt{degree = 2}; \texttt{earth} with \texttt{degree = 2, penalty = 3, nk =} number of predictors plus 1, \texttt{endspan = 0}, \texttt{minspan = 0}; and \texttt{ranger} with \texttt{num.trees = 500, mtry =} the square root of the number of predictors, \texttt{min.node.size = 1, sample.fraction = 1} with replacement. For \texttt{SL.xgboost}, \texttt{shrinkage} was set to 0.01, \texttt{minobspernode} was set to 10, and each combination of \texttt{ntrees} and \texttt{max\_depth} was included in the Super Learner library.}
		\label{tab:SL algorithms}
	\end{table} 
	
	\begin{table}\centering \begin{tabular}{p{0.25\linewidth}  p{0.65\linewidth}}
			\toprule Algorithm name & Algorithm description \\ \midrule
			\texttt{survSL.km} & Kaplan-Meier estimator\\	
			\texttt{survSL.expreg.int} & Survival regression assuming event and censoring times follow an exponential distribution conditional on covariates, includes two-way interactions\\ 	
			\texttt{survSL.weibreg.int} & Survival regression assuming event and censoring times follow a Weibull distribution conditional on covariates, includes two-way interactions\\	
			\texttt{survSL.loglogreg.int} & Survival regression assuming event and censoring times follow a log-logistic distribution conditional on covariates, includes two-way interactions\\	
			\texttt{survSL.AFTreg.int} & Survival regression assuming event and censoring times follow a log-normal distribution conditional on covariates, includes two-way interactions\\	
			\texttt{survSL.coxph} & Main-terms Cox proportional hazards estimator with Breslow baseline cumulative hazard\\	
			\texttt{survSL.rfsrc} & Random survival forest as implemented in the \texttt{randomForestSRC} package\\ \bottomrule
		\end{tabular}
		\caption{Algorithms included in the survival Super Learner. All tuning parameters were set to default values. In particular, \texttt{gam} was implemented with \texttt{degree = 1}; and \texttt{rfsrc} with \texttt{ntree = 500}, \texttt{mtry =} the square root of the number of predictors, \texttt{nodesize = 15}, \texttt{splitrule = "logrank"}, \texttt{sampsize = 1} with replacement. The screening algorithm \texttt{survscreen.marg}, which uses a univariate Cox regression $p$-value, was applied to all algorithms.}
		\label{tab:survSL algorithms}
	\end{table}

	\begin{table}\centering \begin{tabular}{p{0.12\linewidth}  p{0.5\linewidth}  p{0.25\linewidth}} 
			\toprule Parameter & Description & Possible values \\ \midrule
			\texttt{mstop} &Number of boosting iterations&$\{100, 200, 300, 400, 500\}$\\
			\texttt{nu} &Learning rate&$0.1$ \\
			\texttt{zeta}&Smoothing parameter for sigmoid function&$\{0.01, 0.05\}$ \\ \bottomrule
		\end{tabular}\caption{Tuning parameters for the C-index boosting procedure.}
		\label{tab:boosting tuning params}
	\end{table} 
	
	\subsection{Additional results in Scenario I}
	
	In this subsection, we present additional results in Scenario I, where we set $p = 25$ and $\Sigma$ to be a $25 \times 25$ matrix with 1 on the diagonal. Off-diagonal elements were set to 0, except for $\Sigma_{1,6} = \Sigma_{6,1} = 0.7$ and $\Sigma_{2,3} = \Sigma_{3,2} = -0.3$. We generated 500 random datasets of size $n \in \{500, 750, \dots, 1500\}$ and assessed the importance of $X_1$ and $X_6$ individually, as well as the joint importance of $(X_1, X_6)$, using Algorithm \ref{alg:alternative, ss xfit}, which involves sample splitting and provides valid inference under the null hypothesis of zero importance. We estimated nuisance parameters as in other settings. We present results for empirical bias scaled by $n^{1/2}$, empirical variance scaled by $n$, empirical confidence interval coverage, average confidence interval width (for $X_1$ and $(X_1, X_6)$), and empirical type I error (for $X_6$).
	
	Results for Scenario I are displayed in Figures \ref{fig:X1_scenario4}--\ref{fig:X14_scenario4}. Generally, the proposed procedure performs similarly as in other simulation settings --- the presence of correlated features seems to have little impact on operating characteristics. Without cross-fitting, the random survival forests and global stacking estimators suffer from non-negligible bias, confidence interval coverage below the nominal level, and inflated type I error. With cross-fitting, all estimators perform well as the sample size increases. Furthermore, the procedure performs equally well for groups of features as for individual features. When features are expected to be correlated, considering groups of features may improve the interpretability of VIM analyses. 
	
	\begin{figure}
		\includegraphics[width=\linewidth]{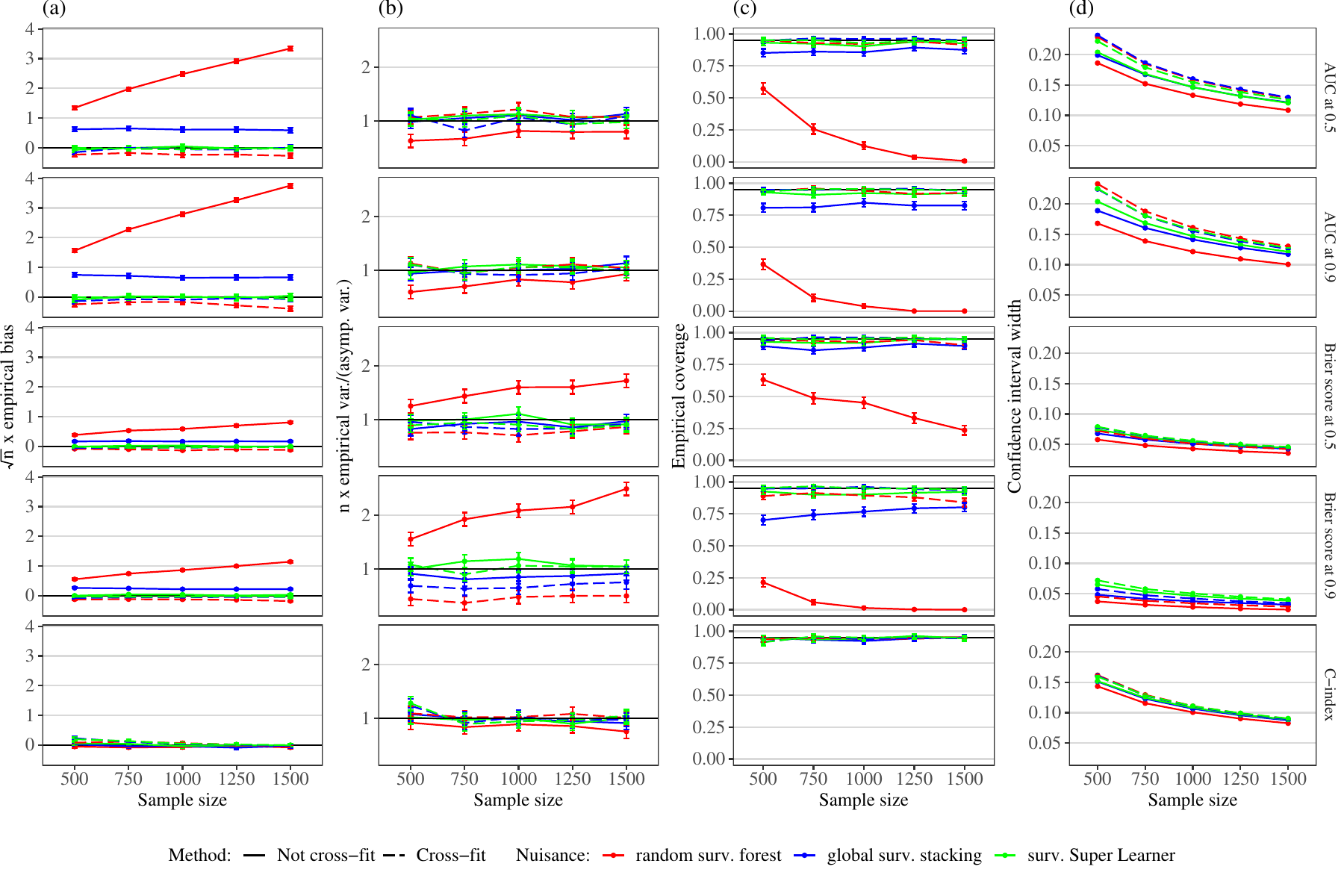}
		\caption{Performance of the one-step VIM estimator for the importance of $X_1$ in Scenario I. The  VIMs shown are AUC\ and Brier score at times 0.5 and 0.9 and the C-index restricted to time 0.9.  (a) empirical bias scaled by $\sqrt{n}$; (b) empirical variance scaled by $n / \sigma^2$, where $\sigma^2$ is the theoretical asymptotic variance of the estimator; (c) empirical coverage of nominal 95\% confidence intervals; (d) average confidence interval width. The colors denote different nuisance estimators, which were used to estimate both event and censoring distributions. Solid and dashed lines denote non-cross-fitted and cross-fitted estimators, respectively. Vertical bars represent 95\% confidence intervals taking into account Monte Carlo error.}
		\label{fig:X1_scenario4}
	\end{figure}
	
	\begin{figure}
		\includegraphics[width=\linewidth]{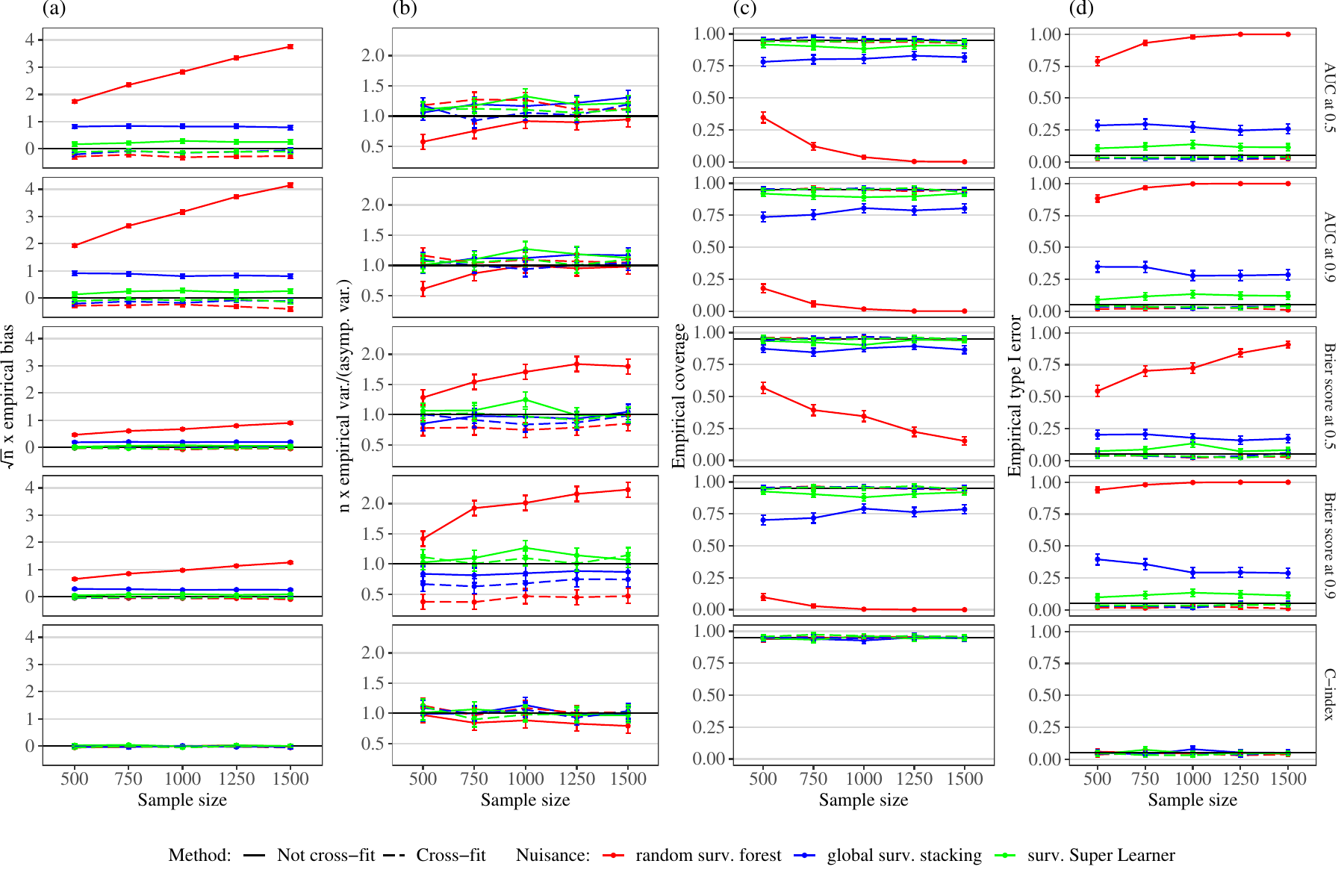}
		\caption{  Performance of the one-step VIM estimator for the (zero) importance of $X_6$ in Scenario I. The VIMs shown are AUC\ and Brier score at times 0.5 and 0.9 and the C-index restricted to time 0.9. (a) empirical bias scaled by $\sqrt{n}$; (b) empirical variance scaled by $n / \sigma^2$, where $\sigma^2$ is the theoretical asymptotic variance of the estimator; (c) empirical coverage of nominal 95\% confidence intervals; (d) empirical type I error. The colors denote different nuisance estimators, which were used to estimate both event and censoring distributions. Solid and dashed lines denote non-cross-fitted and cross-fitted estimators, respectively. Vertical bars represent 95\% confidence intervals taking into account Monte Carlo error.}
		\label{fig:X4_scenario4}
	\end{figure}
	
	\begin{figure}
		\includegraphics[width=\linewidth]{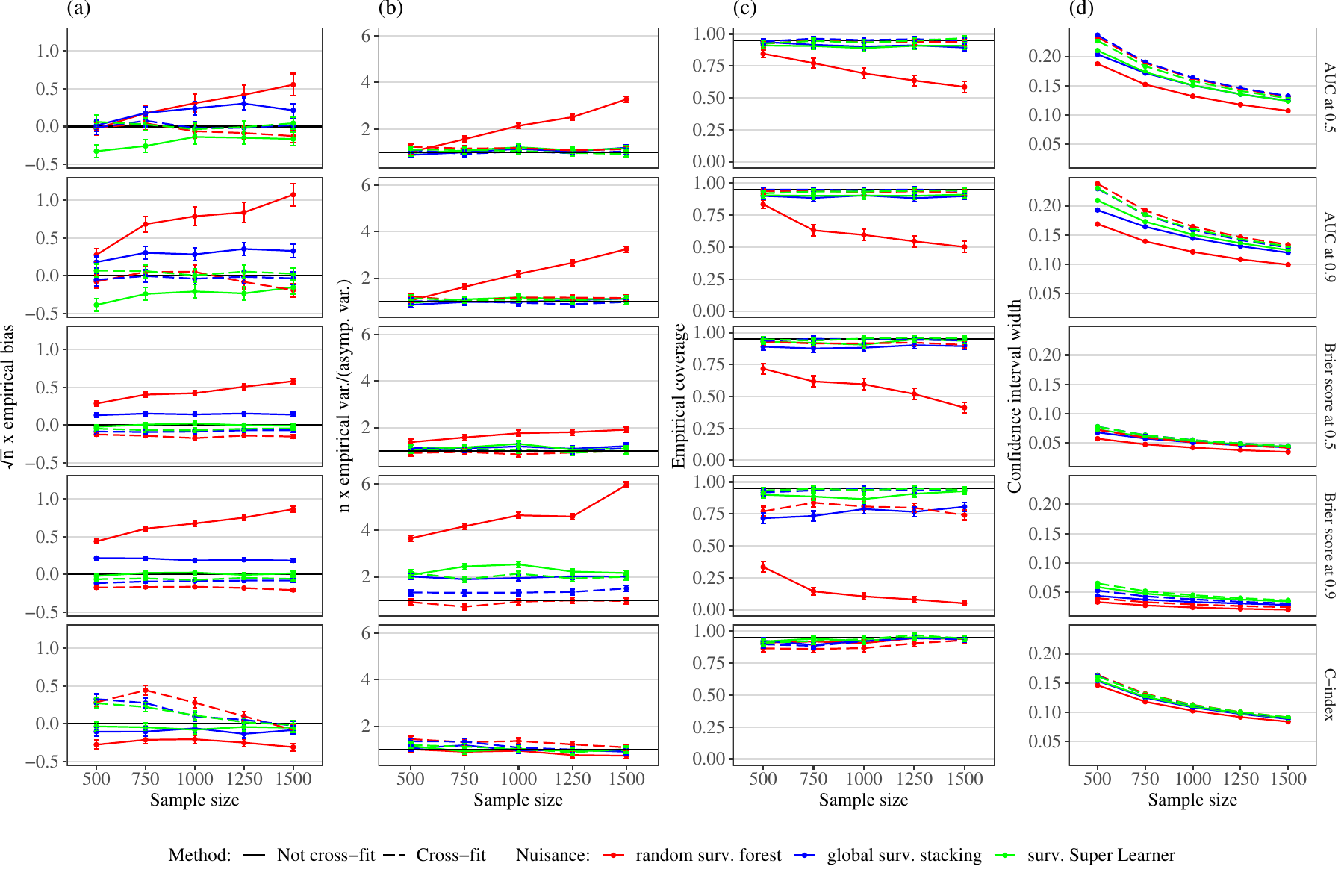}
		\caption{Performance of the one-step VIM estimator for the joint importance of $(X_1,X_6)$ in Scenario I. The VIMs shown are AUC\ and Brier score at times 0.5 and 0.9 and the C-index restricted to time 0.9. (a) empirical bias scaled by $\sqrt{n}$; (b) empirical variance scaled by $n / \sigma^2$, where $\sigma^2$ is the theoretical asymptotic variance of the estimator; (c) empirical coverage of nominal 95\% confidence intervals; (d) average confidence interval width. The colors denote different nuisance estimators, which were used to estimate both event and censoring distributions. Solid and dashed lines denote non-cross-fitted and cross-fitted estimators, respectively. Vertical bars represent 95\% confidence intervals taking into account Monte Carlo error. }
		\label{fig:X14_scenario4}
	\end{figure}
	
	\subsection{Simulation results in Scenario II}
	
	In this section, we provide results for Scenario II, in which $p = 25$, the features are independent,  and all features besides $X_1, \ldots, X_5$ have zero importance. We generated 500 random datasets of size $n \in \{500, 750, \dots, 1500\}$ and used Algorithm \ref{alg:alternative, ss xfit}. Figure \ref{fig:X1_scenario2} displays the results for $X_1$, and Fig. \ref{fig:X4_scenario2} displays the results for $X_6$, which has zero importance. The non-cross-fitted estimators demonstrate increased bias, decreased confidence interval coverage, and increased type I error compared to their cross-fitted counterparts. The cross-fitted estimator using random survival forest performs well for assessing the Brier score VIM of $X_6$, but the associated confidence intervals are moderately anti-conservative in the case of $X_1$.
	
	\begin{figure}
		\centering
		\includegraphics[width=\linewidth]{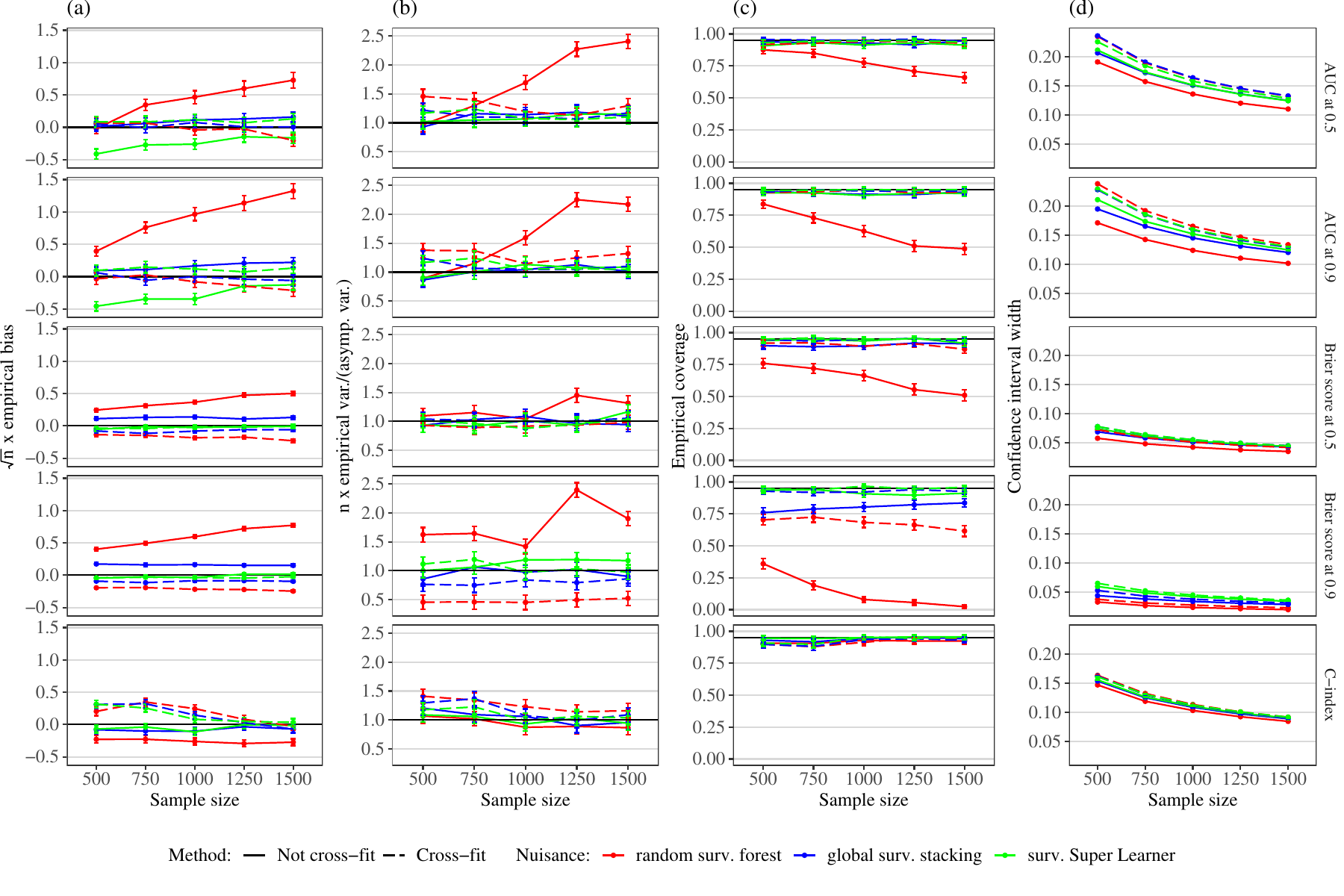}
		\caption{Performance of the one-step VIM estimator for the importance of $X_1$ in Scenario II. The VIMs shown are AUC\ and Brier score at times 0.5 and 0.9 and the C-index restricted to time 0.9. (a) empirical bias scaled by $\sqrt{n}$; (b) empirical variance scaled by $n / \sigma^2$, where $\sigma^2$ is the theoretical asymptotic variance of the estimator; (c) empirical coverage of nominal 95\% confidence intervals; (d) average confidence interval width. The colors denote different nuisance estimators, which were used to estimate both event and censoring distributions. Solid and dashed lines denote non-cross-fitted and cross-fitted estimators, respectively. Vertical bars represent 95\% confidence intervals taking into account Monte Carlo error.}
		\label{fig:X1_scenario2}
	\end{figure}
	
	\begin{figure}
		\centering
		\includegraphics[width=\linewidth]{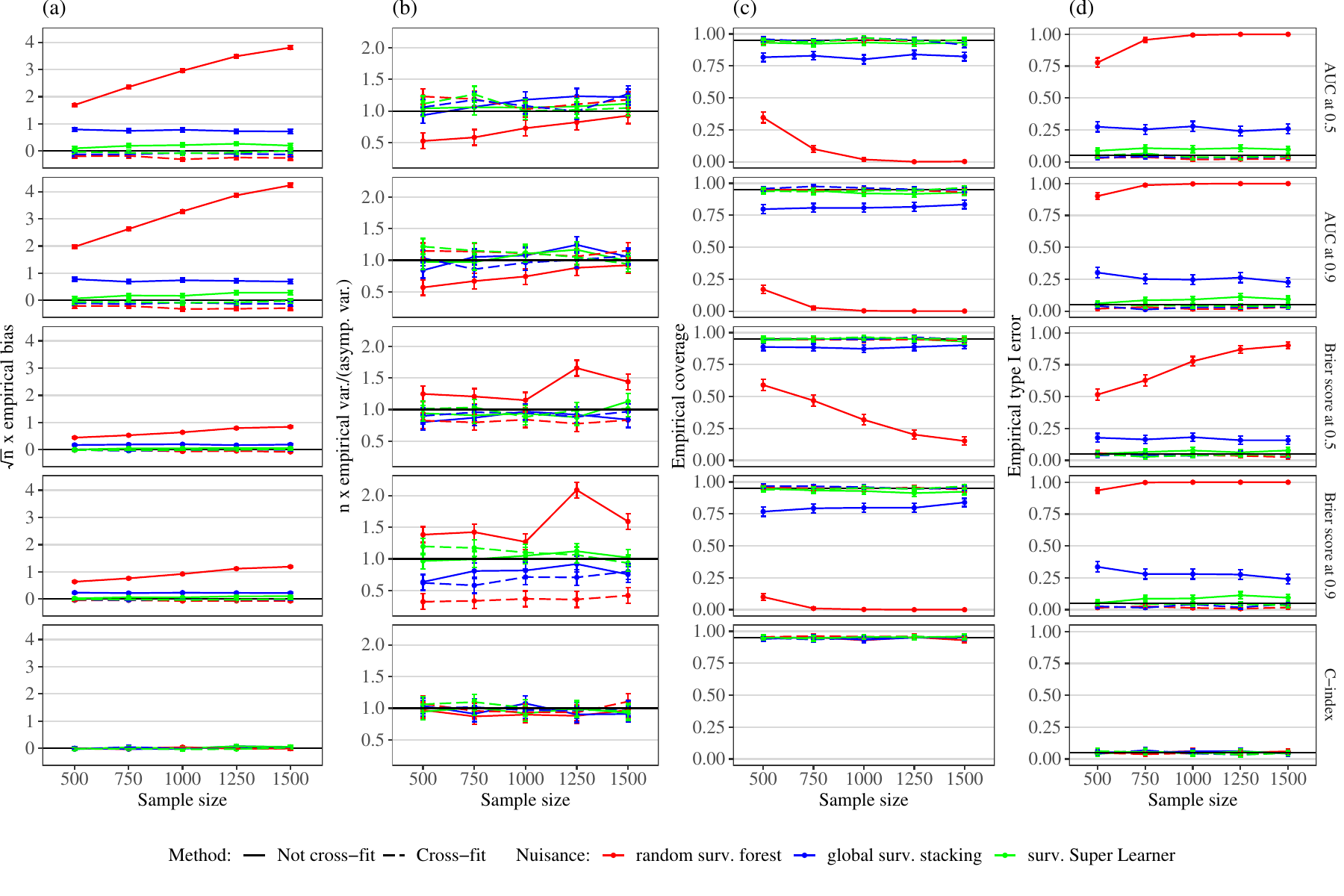}
		\caption{Performance of the one-step VIM estimator for the (zero) importance of $X_6$ in Scenario II. The VIMs shown are AUC\ and Brier score at times 0.5 and 0.9 and the C-index restricted to time 0.9. (a) empirical bias scaled by $\sqrt{n}$; (b) empirical variance scaled by $n / \sigma^2$, where $\sigma^2$ is the theoretical asymptotic variance of the estimator; (c) empirical coverage of nominal 95\% confidence intervals; (d) empirical type I error. The colors denote different nuisance estimators, which were used to estimate both event and censoring distributions. Solid and dashed lines denote non-cross-fitted and cross-fitted estimators, respectively. Vertical bars represent 95\% confidence intervals taking into account Monte Carlo error.}
		\label{fig:X4_scenario2}
	\end{figure}
	
	\subsection{Simulation results in Scenario III}
	In Scenario III, we set $p = 5$, such that all features had non-zero importance, and set $\Sigma$ to be the identity matrix. We generated 500 random datasets of size $n \in \{500, 750, \dots,1500\}$ and used Algorithm \ref{alg:alternative, xfit}, which is valid when the importance is \textit{a priori} known to be non-zero. We assess performance in the same manner as described in the main text. 
	
	Figure \ref{fig:X1_scenario1} displays the results for $X_1$, and Fig. \ref{fig:X2_scenario1} displays the results for $X_2$. We observe that for AUC\ and Brier score, cross-fitting is necessary for good performance. The cross-fitted global stacking and survival Super Learner estimators achieves low bias and coverage within Monte Carlo error of the nominal level, while the random survival forest implementation has larger bias and is somewhat anti-conservative for AUC\ and Brier score VIMs. For the C-index, surprisingly, cross-fitting appears less important for good performance. 
	
	\begin{figure}
		\centering
		\includegraphics[width=\linewidth]{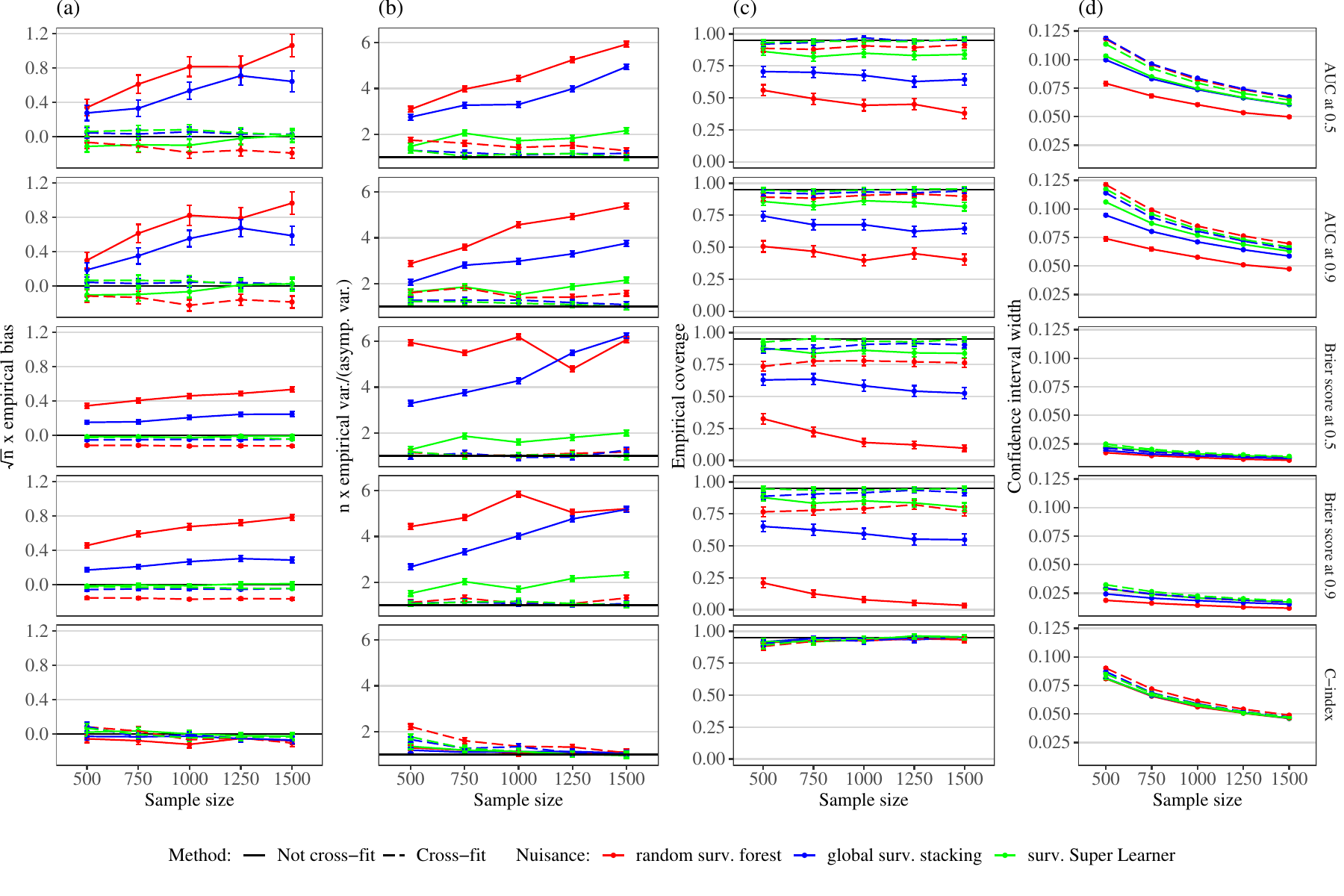}
		\caption{Performance of the one-step VIM estimator for the importance of $X_1$ in Scenario III. The VIMs shown are AUC\ and Brier score at times 0.5 and 0.9 and the C-index restricted to time 0.9. (a) empirical bias scaled by $\sqrt{n}$; (b) empirical variance scaled by $n / \sigma^2$, where $\sigma^2$ is the theoretical asymptotic variance of the estimator; (c) empirical coverage of nominal 95\% confidence intervals; (d) average confidence interval width. The colors denote different nuisance estimators, which were used to estimate both event and censoring distributions. Solid and dashed lines denote non-cross-fitted and cross-fitted estimators, respectively. Vertical bars represent 95\% confidence intervals taking into account Monte Carlo error.}
		\label{fig:X1_scenario1}
	\end{figure}
	
	\begin{figure}
		\centering
		\includegraphics[width=\linewidth]{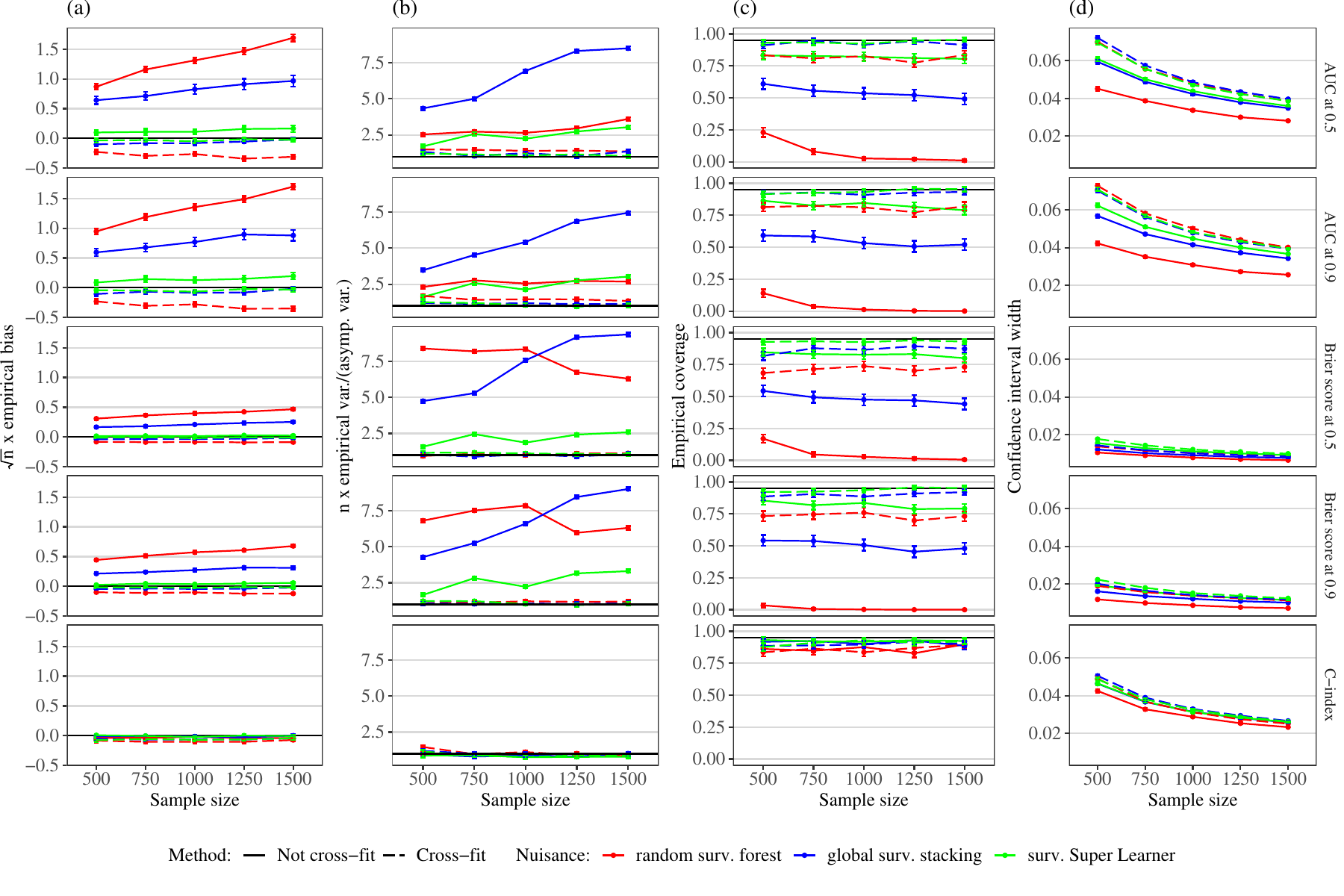}
		\caption{Performance of the one-step VIM estimator for the importance of $X_2$ in Scenario III. The VIMs shown are AUC\ and Brier score at times 0.5 and 0.9 and the C-index restricted to time 0.9. (a) empirical bias scaled by $\sqrt{n}$; (b) empirical variance scaled by $n / \sigma^2$, where $\sigma^2$ is the theoretical asymptotic variance of the estimator; (c) empirical coverage of nominal 95\% confidence intervals; (d) average confidence interval width. The colors denote different nuisance estimators, which were used to estimate both event and censoring distributions. Solid and dashed lines denote non-cross-fitted and cross-fitted estimators, respectively. Vertical bars represent 95\% confidence intervals taking into account Monte Carlo error.}
		\label{fig:X2_scenario1}
	\end{figure}
	
	\subsection{The effect of the censoring rate (Scenario IV)}
	
	In order to study the effect of censoring on our procedure, we performed a simulation study in Scenario IV, in which the censoring rate varied between 30\% and 70\%. The event times and covariates were generated as in Scenario II, while $\beta_{0,C}$ was selected to achieve overall censoring rates in $\{30\%, 40\%,\dots,70\%\}$. For this scenario, we generated 500 random datasets of size 1000. We considered the importance of $X_1$ and $X_6$ using AUC\ and Brier score at landmark times $\tau \in \{0.5, 0.9\}$ and C-index restricted to $\tau = 0.9$. We used Algorithm \ref{alg:alternative, ss xfit} to compute point and standard error estimates, from which we computed nominal 95\% Wald-type confidence intervals. We evaluated performance using the empirical bias, the empirical variance, the empirical confidence interval coverage, and the average confidence interval width (for $X_1$), and empirical type I error rate (for $X_6$). 
	
	In Fig. \ref{fig:cens rate 1} and Fig. \ref{fig:cens rate 6} we display the results of this experiment. The bias of the cross-fitted global stacking and survival Super Learner estimators is largely unaffected by the censoring rate. The bias of the random survival forest estimator is modestly affected by increased censoring, even with cross-fitting. The variance of all estimators tends to increase with increasing censoring, most dramatically at the later landmark time $\tau = 0.9$. The cross-fitted estimators demonstrate nominal coverage. As expected, confidence interval width increases with increased censoring. Overall, we see that the operating characteristics of the procedure are largely consistent across censoring levels. Unsurprisingly, the impact of censoring on estimator variance and confidence interval width is larger at the later landmark time.  
	
	\begin{figure}
		\centering
		\includegraphics[width=\linewidth]{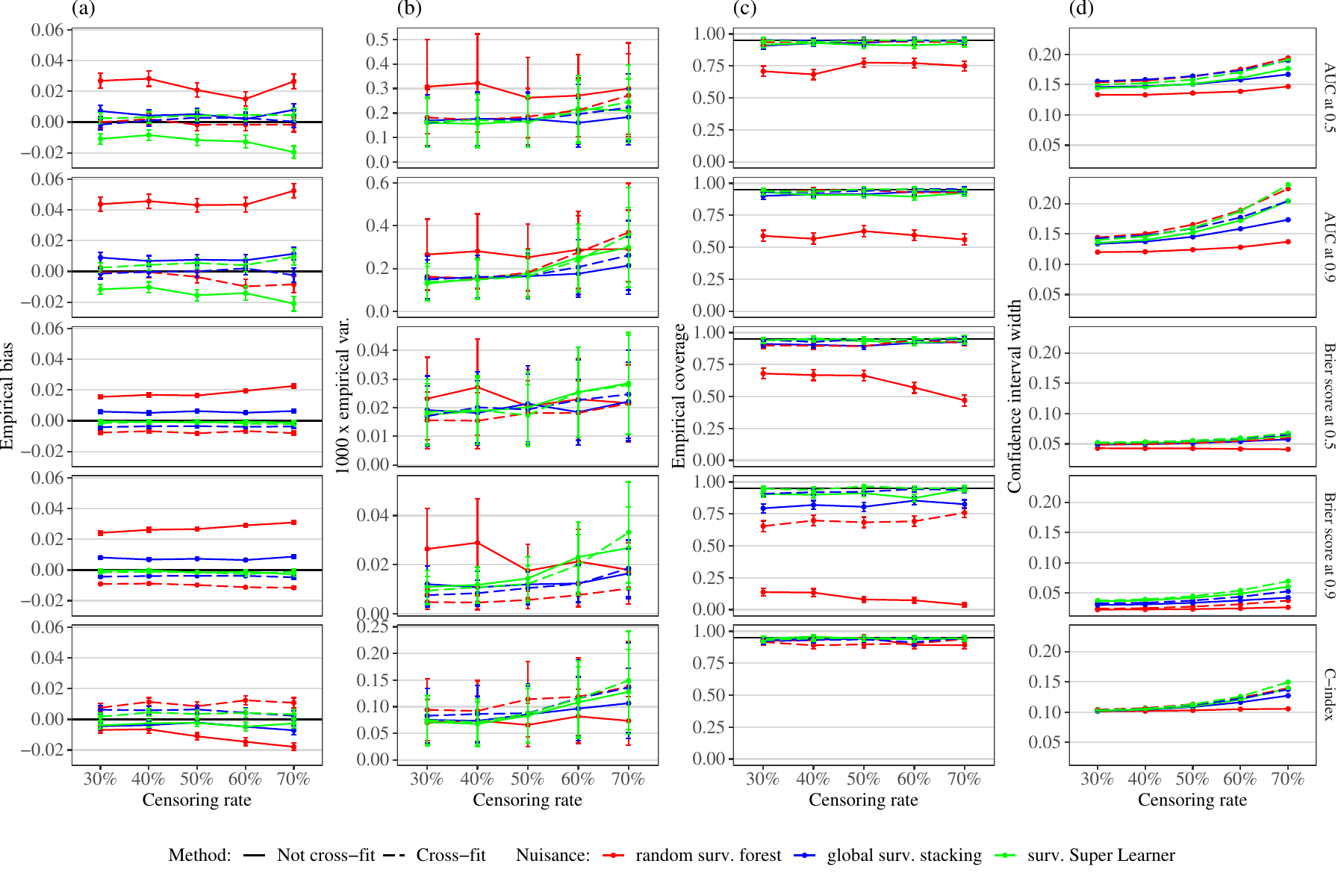}
		\caption{Performance of the one-step VIM estimator for the importance of $X_1$ in Scenario IV. The  VIMs shown are AUC\ and Brier score at times 0.5 and 0.9 and the C-index restricted to time 0.9. (a) empirical bias; (b) empirical variance; (c) empirical coverage of nominal 95\% confidence intervals; (d) average confidence interval width. The colors denote different nuisance estimators, which were used to estimate both event and censoring distributions. Solid and dashed lines denote non-cross-fitted and cross-fitted estimators, respectively. Vertical bars represent 95\% confidence intervals taking into account Monte Carlo error.}
		\label{fig:cens rate 1}
	\end{figure}
	
	\begin{figure}
		\centering
		\includegraphics[width=\linewidth]{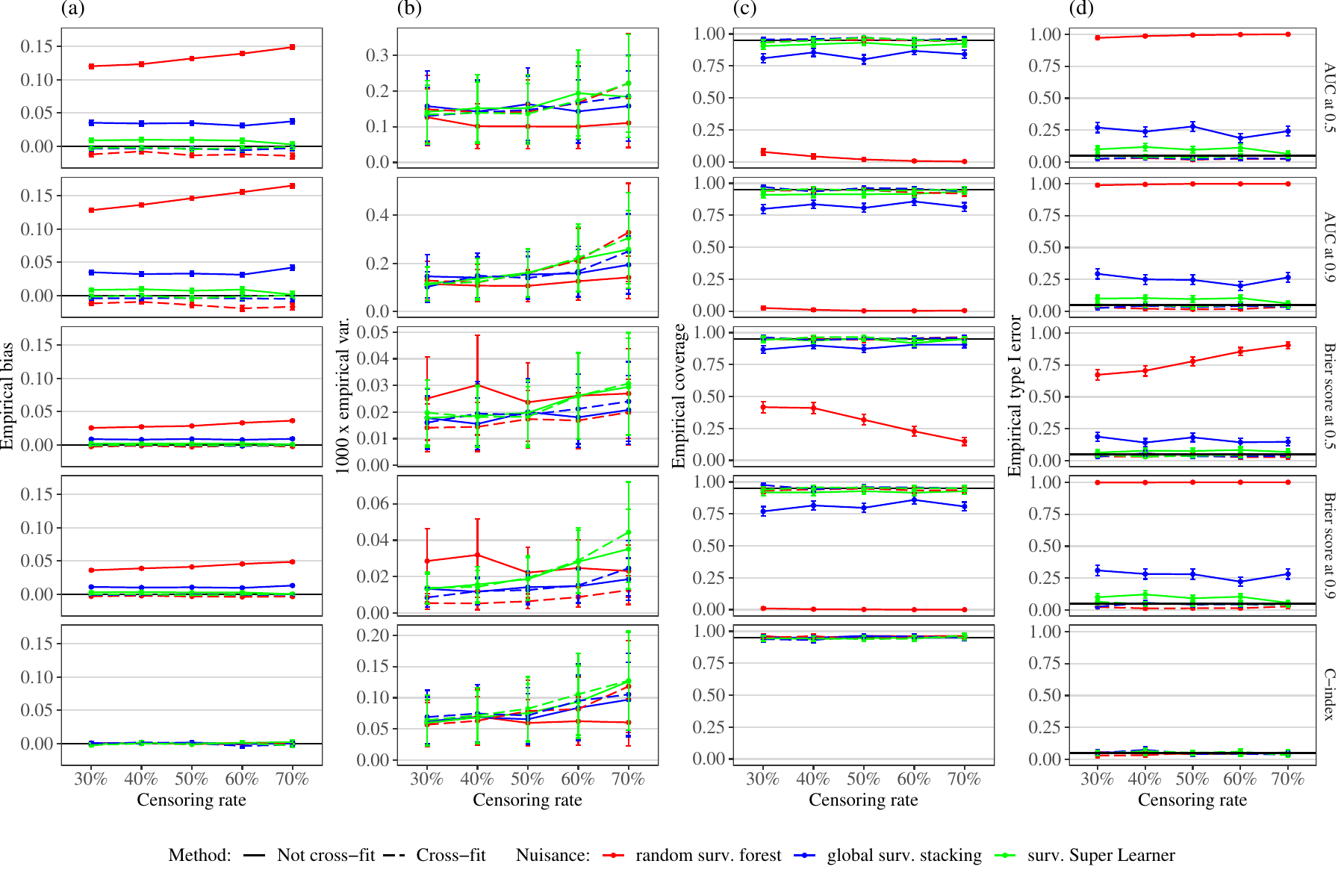}
		\caption{Performance of the one-step VIM estimator for the importance of $X_6$ in Scenario IV. The  VIMs shown are AUC\ and Brier score at times 0.5 and 0.9 and the C-index restricted to time 0.9. (a) empirical bias; (b) empirical variance; (c) empirical coverage of nominal 95\% confidence intervals; (d) average confidence interval width. The colors denote different nuisance estimators, which were used to estimate both event and censoring distributions. Solid and dashed lines denote non-cross-fitted and cross-fitted estimators, respectively. Vertical bars represent 95\% confidence intervals taking into account Monte Carlo error.}
		\label{fig:cens rate 6}
	\end{figure}
	
	\subsection{Subsampling for C-index boosting}
	As noted in Section \ref{sec:gradient boosted c}, the proposed C-index boosting procedure involves repeated calculation of double sums appearing in (\ref{eq:objective}) and (\ref{eq:gradient}), which can be computationally expensive. This computational expense can be reduced using subsampling in the boosting procedure. We conducted a simulation study to assess the effect of subsampling on the performance of the boosting procedure.
	
	We generated data under the settings of Scenario I, simulating 500 random datasets of size $n \in \{500, 750, \ldots, 1500\}$. We considered the importance of $X_1$, with the C-index restricted to $\tau = 0.9$. We used Algorithm \ref{alg:alternative, ss xfit} to compute point and standard error estimates, from which we constructed 95\% Wald-type confidence intervals. We evaluated performance using empirical bias, empirical variance, empirical confidence interval coverage, and average confidence interval width.
	
	Fig. \ref{fig:cindex subsampling} shows the results of this experiment. While the bias of the estimators using smaller subsampling proportions is somewhat larger in sample sizes smaller than 1000, subsampling appears to have little effect as the sample size increases. The variance, confidence interval coverage, and confidence interval width are all similar among the four subsampling proportions used.
	
	\begin{figure}
		\centering
		\includegraphics[width=\linewidth]{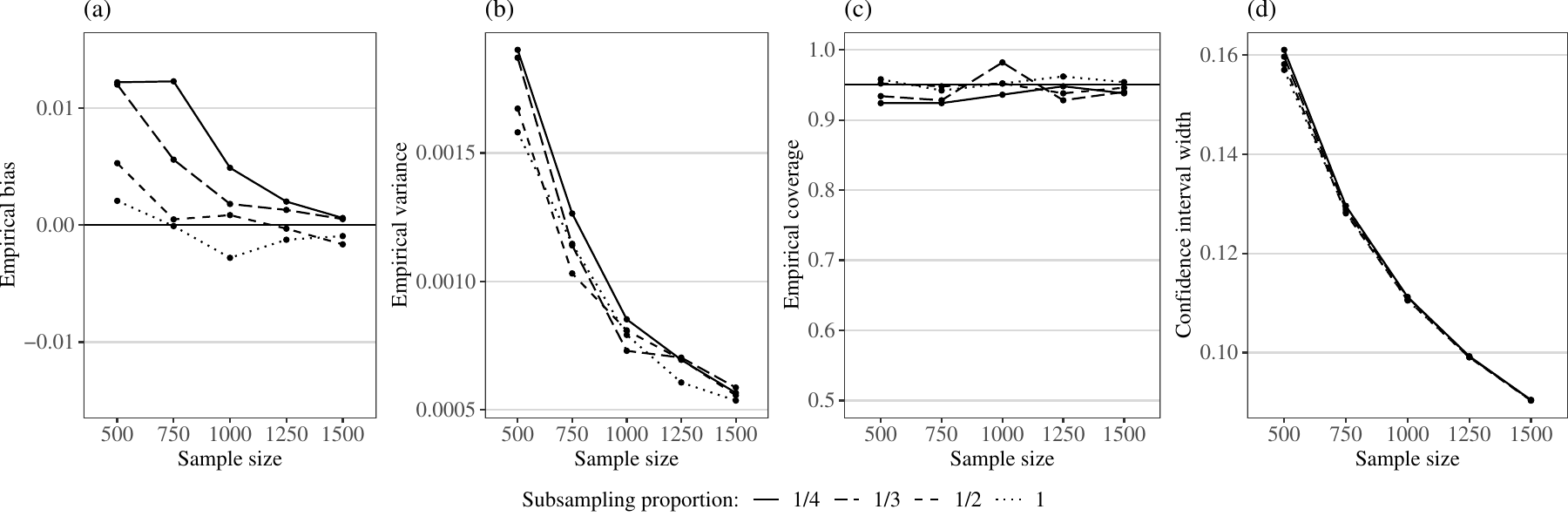}
		\caption{Performance of the one-step VIM estimator for the importance of $X_1$ using different subsampling proportions for C-index boosting. (a) empirical bias; (b) empirical variance; (c) empirical coverage of nominal 95\% confidence intervals; (d) average confidence interval width. Different line types denote different subsampling proportions.}
		\label{fig:cindex subsampling}
	\end{figure}

	\section{Additional data analysis}\label{sec:addtl data analysis}
	In this section, we give the results of the HVTN 702 VIM analysis relative to the full covariate vector. Here, the full feature vector included all available covariates and the reduced feature vector excluded the feature or feature group of interest. This quantifies the loss in predictiveness from exclusion of the feature(s) of interest. As in the analysis presented in the main text, we adjusted for treatment arm assignment. 
	
	In Figure \ref{fig:702 conditional}, we display the results of the this analysis. For the combined cohort, by C-index and by AUC\ across all time horizons, sex assigned at birth is estimated to be the most important feature for predicting the probability of HIV acquisition diagnosis. In this cohort, the ranking of feature groups across the three AUC\ VIM time horizons is similar, with behavioral features ranking behind sex assigned at birth, followed by the remaining five feature groups. The magnitude of the estimated VIMs is similar at the later landmark times, which suggests that the importance of sex assigned at birth and sexual behavior features, while not particularly large, are relatively stable over the time horizons of interest in this context. For C-index, sexual behavior features again rank second in importance. At the 24 and 30 month time horizons, sex assigned at birth achieves statistical significance for AUC\ importance ($p = 0.022$ and $p = 0.007$, respectively). Among participants assigned female sex at birth, the feature ranking is different than that observed in the combined cohort. In particular, sexual behavior features appear to have decreased AUC\ importance among females compared to the combined cohort. As in the combined cohort, the estimated VIMs and associated confidence intervals are similar among the three time horizons for AUC, while for C-index, sexual health features have somewhat larger importance. Among participants assigned male sex at birth, sexual behavior features have the highest estimated AUC\ importance at the 18-month time horizon, with a point estimate of around 0.1, although the estimated magnitude is lower at later time horizons. For C-index importance, sexual behavior features are associated with a change in oracle predictiveness of around 0.15 ($p = 0.010$).
	
	\begin{figure}
		\centering
		\includegraphics[width=\linewidth]{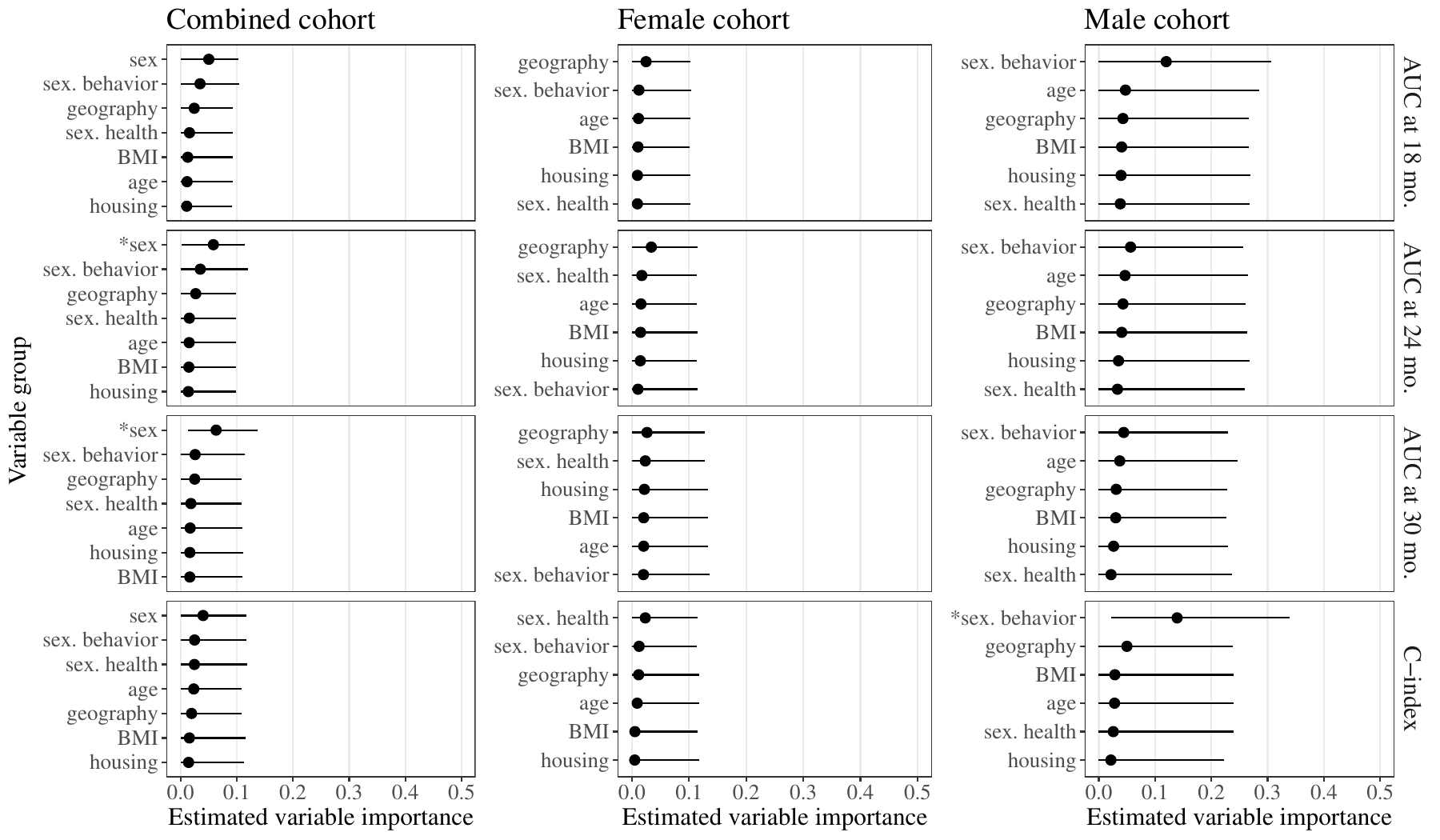}
		\caption{Variable importance in HVTN 702 evaluated relative to the full model that uses all features. Rows correspond, from top to bottom, to AUC\ variable importance evaluated at 18, 24, and 30 months of follow-up, and C-index variable importance. Columns correspond, from left to right, to the combined male and female cohort, female cohort, and male cohort. Feature groups include geographic features, sex assigned at birth (sex), age, body mass index (BMI), sexual health features, sexual behavior features, and housing features. Feature groups for which a one-sided test of zero importance reached nominal significance at a 0.05 level are marked with an asterisk.}
		\label{fig:702 conditional}
	\end{figure}
	
	\section{Robustness}\label{sec:robustness}
	
	\subsection{Comparison of debiasing approaches}\label{subsec:debiasing robustness comp}
	
	Here, we use a simple example to illustrate the lack of robustness of the direct one-step estimator of $v_0$ based on the efficient influence function of $P\mapsto V(f_P, P)$. For ease of exposition, we suppose that $m = 2$ and $V_2(H_0) = 1$, so that $V(f_0, H_0) = V_1(f_0, H_0)$. We also suppose that $f_0$ is fixed and known. As shown in the proof of Theorem \ref{thm:efficient influence function}, the parameter $P \mapsto V(f_0, H_P)$ has efficient influence function at $P_0$ relative to $\mathcal{M}_{\text{obs}}$ equal to 
	\begin{align*}
		\phi_0: z \mapsto 2\left[\iiint \omega((f_0(x),t),(f_0(x_2),t_2))F_0(dt_2\midd x_2)Q_0(dx_2)\left\{F_0(dt \midd x) - \chi_0(z, dt)\right\} - v_0\right].
	\end{align*}
	Given estimators $F_n$ and $G_n$ of $F_0$ and $G_0$, respectively, and setting $Q_n$ to be the empirical distribution of $X$ in the sample, we consider the one-step estimator 
	\begin{align*}
		v_n^{**} := V(f_0, H_n) + \frac{1}{n}\sum_{i=1}^{n}\phi_n(Z_i)
	\end{align*}
	with $\phi_n$ obtained by replacing $F_0$, $G_0$ and $Q_0$ by $F_n$, $G_n$ and $Q_n$, respectively, in the form of $\phi_0$. Here, we do not consider cross-fitting, since it only serves to weaken the regularity conditions under which the one-step estimator is asymptotically linear. We let $F_\infty$ and $G_\infty$ denote the in-probability limits of $F_n$ and $G_n$, respectively, and decompose
	\begin{align*}
		v_n^{**} - v_0 = (\mathbb{P}_n - P_0)\phi_\infty + \left(\mathbb{P}_n - P_0\right)\left(\phi_n - \phi_\infty\right) + V(f_0, P_n) - V(f_0, P_0) + P_0 \phi_n\ .
	\end{align*}
	The leading term on the right-hand side above is linear and mean-zero. The second is an empirical process term that is expected to tend to zero in probability regardless of the values of $F_\infty$ and $G_\infty$. Thus, we are concerned with the linearization remainder term $V(f_0, P_n) - V(f_0, P_0) + P_0 \phi_n$. For notational simplicity, we define $\tilde{\omega}_0:(x_1,t_1,x_2,t_2)\mapsto\omega((f_0(x_1), t_1), (f_0(x_2), t_2))$. In addition, for any conditional event distribution functions $F_1$ and $F_2$ and covariate distribution functions $Q_1$ and $Q_2$, we define 
	\begin{align*}
		\tilde{V}_0(F_1,F_2, Q_1,Q_2) := 2\iiiint\tilde{\omega}_0(x_1,t_1,x_2,t_2)F_1(dt_1\midd x_1)Q_1(dx_1)F_2(dt_2 \midd x_2)Q_2(dx_2)\ .
	\end{align*}
	The remainder term can then be written as 
	\begin{align}
		&V(f_0, P_n) - V(f_0, P_0) + P_0 \phi_n\nonumber\\
		&=-\tilde{V}_0(F_n,F_n,Q_n,Q_n) - \tilde{V}_0(F_0,F_0,Q_0,Q_0) + 2\tilde{V}_0(F_n,F_n,Q_n,Q_0) \nonumber\\
		&\hspace{0.6cm}- 2 \iiiint\tilde{\omega}_0(x,t,x_2,t_2)F_n(dt_2\midd x_2)Q_n(dx_2)\chi_n(z, dt)P_0(dz)\nonumber\\
		&= -\tilde{V}_0(F_n,F_n,Q_n,Q_n) - \tilde{V}_0(F_0,F_0,Q_0,Q_0) + 2\tilde{V}_0(F_n, F_0,Q_n,Q_0) \nonumber\\
		&\hspace{0.6cm}- 2\tilde{V}_0(F_n,F_0,Q_n,Q_0)+ 2\tilde{V}_0(F_n,F_n,Q_n,Q_0) \nonumber\\
		&\hspace{0.6cm}- 2 \iiiint\tilde{\omega}_0(x,t,x_2,t_2)F_n(dt_2\midd x_2)Q_n(dx_2)\chi_n(z, dt)P_0(dz)\nonumber\\
		&= -\tilde{V}_0(F_n,F_n,Q_n,Q_n) - \tilde{V}_0(F_0,F_0,Q_0,Q_0) + 2\tilde{V}_0(F_n, F_0,Q_n,Q_0) \nonumber \\
		&\hspace{0.6cm}- 2 \iiiint\tilde{\omega}_0(x,t,x_2,t_2)F_n(dt_2\midd x_2)Q_n(dx_2)\left\{F_n(dt \midd x) - F_0(dt \midd x) + \chi_n(z, dt)\right\}P_0(dz)\ . \label{eq:non robust}
	\end{align}
	Next, we define the functions
	\begin{align*}
		B^G_{n,0}:(x,t) &\mapsto \frac{G_0(t \midd x)}{G_n(t \midd x)} - 1\ ;  \\
		B^L_{n,0}:(x,t)&\mapsto L_n(t \midd x) - L_0(t \midd x) \ .
	\end{align*}
	Using the Duhamel equation (Theorem 6 of \citealp{Gill1990}), the last summand of (\ref{eq:non robust}) can be shown to equal
	\begin{align*}
		&2\iiiint \Big\{\tilde{\omega}_0(x_1,t_1,x_2,t_2)\int_0^{t_1}\frac{S_0(u^{-} \midd x_1)}{S_n(u \midd x_1)}B^G_{n,0}(x_1,u)B^L_{n, 0}(x_1, du)\\
		&\hspace{6cm} \times  F_n(dt_1 \midd x_1)F_n(dt_2 \midd x_2)Q_0(dx_1)Q_n(dx_2)\Big\}\\
		&\hspace{1cm}+ 2\iiiint\tilde{\omega}_0(x_1,t_1,x_2,t_2)S_0(t_1^-\midd x_1)B^G_{n, 0}(x_1,t_1)B^L_{n,0}(x_1, dt_1)F_n(dt_2 \midd x_2)Q_0(dx_1)Q_n(dx_2)\ .
	\end{align*}
	Roughly speaking, if either $B_{n,0}^G$ or $B_{n,0}^L$ tends to zero in probability, the above display will also tend zero in probability. In contrast, the first three terms of (\ref{eq:non robust}) do not involve the conditional censoring distribution $G_0$. Therefore, while we can expect them to tend to zero in probability provided $F_n$ tends to $F_0$ --- given that $Q_n$ is the empirical distribution, it certainly tends to $Q_0$ --- we cannot expect robustness against inconsistent estimation of $F_0$. 
	
	\subsection{Doubly-robust estimation of the oracle prediction function}
	Here, we present a strategy for doubly-robust estimation of the conditional survival function $F_0(\tau \midd x)$ using pseudo-outcomes. We define $q_1: (y, \delta)\mapsto  \delta + \I(y \geq \tau)$ and $q_2:y\mapsto \min\{y, \tau\}$. The pseudo-outcome depends on the function
	\begin{align*}
		m_{S,G}: (t, x) \mapsto -\frac{1}{S(t \midd x)}\int_t^\infty \I(u \geq \tau)S(du \midd x) = -\frac{\int_{\max\{t,\tau\}}^\infty S(du \midd x)}{S(t \midd x)} =\frac{S(t \vee \tau \midd x)}{S(t \midd x)}\ .
	\end{align*}
	The doubly-robust pseudo-outcome mapping of \citet{Rubin2007} is then defined as
	\begin{align}
		& p_{S,G}: z \mapsto \frac{q_1(y,\delta)\I(y \geq \tau)}{G(q_2(y)\midd x)} + \frac{(1 - q_1(y,\delta))m_{S,G}(q_2(y), x)}{G(q_2(y)\midd x)} - \int_0^{q_2(y)}\frac{m_{S,G}(u, x)\Lambda_G(du \midd x)}{G(u \midd x)}\ . \label{eq:DR pseudo outcome}
	\end{align}
	Under conditional independence of $T$ and $C$ given the measured covariate vector $X$, \citet{Rubin2007} showed that $E_0[p_{S,G}(Z)\midd X = x] = F_0(\tau \midd x)$ $P_0$--almost surely if either $S = S_0$ or $G = G_0$. Then, given estimators $S_n$ and $G_n$ of $S_0$ and $G_0$, the estimated pseudo-outcomes are given by $\{p_{S_n, G_n}(Z_1),  \ldots, p_{S_n, G_n}(Z_n)\}$. We regress these pseudo-outcomes on the feature vectors $\{X_1,  \ldots, X_n\}$ to estimate $F_0(\tau \midd x)$. 
	
	\section{Comparison to permutation variable importance}\label{sec:permutation}
	Permutation-based approaches are widely used to assess the importance of a feature in a fixed prediction algorithm \citep{Breiman2001, fisher2019}. As in the main text, for an index set $s$, we let $\mathbbmsl{P}_{0,s}$ denote the distribution of $(X_{s}^{(2)}, X_{-s}^{(1)}, T^{(1)}, C^{(1)})$, where $(X^{(1)}, T^{(1)}, C^{(1)})$ and $(X^{(2)}, T^{(2)}, C^{(2)})$ are drawn independently from $\mathbbmsl{P}_0$. We note that $\mathbbmsl{P}_{0, s}$ defines a valid joint distribution for $(X, T, C)$ but is not, in general, equal to $\mathbbmsl{P}_0$. For a prediction function $f$, the permutation variable importance of $X_s$ relative to $X$ is given by
	\begin{align}\label{eq:permutation importance}
		\mathbbmsl{V}(f, \mathbbmsl{P}_0) - \mathbbmsl{V}(f, \mathbbmsl{P}_{0, s})\ .
	\end{align}
	This quantity measures the decrease in predictiveness of $f$ when the feature group of interest $X_s$ is sampled independently of the other features and the outcome. In practice, permutation variable importance is typically estimated by (i) assessing the predictiveness of $f$ using the unperturbed data; (ii) permuting the feature or features in $X_s$ across observations; and (iii) assessing the predictiveness of $f$ using the permuted data. The permutation process is often repeated several times, and the results are averaged across permutations; see, for example, \citet{spytek2023survex}.
	
	As apparent from its definition, the permutation variable importance measures a different quantity than the exclusion importance we consider in this work. First, the prediction function $f$ is generally considered to be fixed and given; and second, the decrease in predictiveness corresponds to a comparison between the target population and a population in which the data unit has the same marginal distributions for $X_s$ and $(X_{-s}, T)$ but in which $X_s$ is independent of $(X_{-s}, T)$. To make this second point concrete, we consider the example of the mean squared error predictiveness $\mathbbmsl{V}(f, \mathbbmsl{P}) := E_{\mathbbmsl{P}}[\{f(X) - T\}^2]$. For exclusion importance, the oracle and residual oracle prediction functions are given by the conditional means $\mathbbmsl{f}_0: x \mapsto E_{\mathbbmsl{P}_0}(T \midd X = x)$ and $\mathbbmsl{f}_{0,s}: x \mapsto E_{\mathbbmsl{P}_0}(T \midd X_{-s} = x_{-s})$. The exclusion variable importance is
	\begin{align*}
		\psi_{0,s}^{\text{exclusion}} := E_{\mathbbmsl{P}_0}\left[\left\{\mathbbmsl{f}_0(X) - \mathbbmsl{f}_{0,s}(X)\right\}^2\right] = \int \left\{\mathbbmsl{f}_0(x) - \mathbbmsl{f}_{0,s}(x)\right\}^2\mathbbmsl{Q}_{0}(dx)
	\end{align*}
	Now, we suppose that the prediction function $f$ in (\ref{eq:permutation importance}) is taken to be $\mathbbmsl{f}_0$; this is often implicitly true, as many machine learning algorithms are aimed at estimating the conditional mean of the outcome given features. \citet{hooker2021permutation} show that the permutation importance can be written as
	\begin{align*}
		\psi_{0,s}^{\text{permutation}} = E_{\mathbbmsl{P}_0}\left[\left\{\mathbbmsl{f}_0(X) - T\right\}^2 - \left\{\mathbbmsl{f}_{0,s}^*(X) - T\right\}^2\right] = \int \left\{\mathbbmsl{f}_0(x) - \mathbbmsl{f}_{0,s}^*(x)\right\}^2\mathbbmsl{Q}_{0}(dx), 
	\end{align*}
	where $\mathbbmsl{f}_{0,s}^*: x \mapsto \int \mathbbmsl{f}_0(x)\mathbbmsl{Q}_{0, X_s}(dx_s)$ and we denote by $\mathbbmsl{Q}_{0, X_s}$ the marginal distribution of $X_s$. We note that $\mathbbmsl{f}_{0,s}(x) = \int \mathbbmsl{f}_0(x)\mathbbmsl{Q}_{0, X_s \midd X_{-s}}(dx_s)$, where we denote by $\mathbbmsl{Q}_{0, X_s \midd X_{-s}}$ the conditional distribution of $X_s$ given $X_{-s}$. Therefore, for mean squared error predictiveness, the permutation importance of $X_s$ relative to $X$ for the conditional mean prediction function is equal to the exclusion importance if $X_{s}$ and $X_{-s}$ are independent.
	
	Estimation issues aside, this example suggests that when the features are independent, we may in some cases expect exclusion and permutation variable importance to produce qualitatively similar results. When the features are correlated, the targets of estimation differ, and which is preferred for a given scientific context may be debatable. However, as noted by \citet{hooker2021permutation}, feature correlation can also have a substantial impact on estimation performance for permutation importance, which requires the evaluation of $\mathbbmsl{f}_0$ at values $(x_s, x_{-s})$ that have low probability (in extreme cases, zero) mass under the joint distribution $\mathbbmsl{Q}_{0}$. \citet{hooker2021permutation} study this phenomenon in the context of random forests and neural networks and refer to it as extrapolation bias.
	
	We conducted a simulation study to compare our proposed approach to an existing implementation of the permutation approach for right-censored data, available in the \texttt{survex} software package \citep{spytek2023survex}. The general simulation setup was as described in the experiments reported in the main text. We set $p$, the dimension of $X$, to 4, and generated $T$ and $C$ from the log-normal accelerated failure time models
	\begin{align*}
		\log T = \tfrac{10}{20}x_1 - \tfrac{6}{20}x_2 - \tfrac{3}{20}x_3+ \varepsilon_T\ ,\hspace{0.2in} \log C = \varepsilon_C\ ,
	\end{align*}
	where $\varepsilon_T$ and $\varepsilon_C$ were independent standard normal random variables. We set $\Sigma$, the covariance matrix of $X$, to 
	\begin{align*}
		\Sigma = \begin{pmatrix}
			1 & 0 & 0 & \Sigma_{14}\\
			0 & 1 & 0 & 0 \\
			0 & 0 & 1 & 0\\
			\Sigma_{14} & 0 & 0 & 1
		\end{pmatrix}.
	\end{align*}
	We varied $\Sigma_{14}$ over the values $\{0, 0.3, 0.6, 0.9\}$, representing scenarios in which $X_1$ and $X_4$ varied from uncorrelated to strongly correlated. We note that in this simulation setting, $T$ and $C$ were independent; the permutation importance estimation procedure implemented in \texttt{survex} uses Kaplan-Meier IPCW weights to estimate model performance and, unlike our proposed approach, was not designed to handle informative censoring. 
	
	We considered estimation of the importance of $X_4$ relative to the full covariate vector $X$. For $X_4$, both $\psi_{0,s}^{\text{exclusion}}$ and $\psi_{0,s}^{\text{permutation}}$ are zero regardless of the value of $\Sigma_{14}$, since the conditional survival function of $T$ given $X$ does not depend on $X_4$. We considered importance based on Brier score predictiveness evaluated at landmark time $\tau = 0.5$. For our proposed method, we used random survival forests for nuisance estimation, with five-fold cross-fitting. We used Super Learner \citep{vanderlaan2007} to estimate the residual oracle prediction function. For \texttt{survex}, we used the \texttt{model\_parts()} function with 10 permutations. For both approaches, five-fold cross-validation was used to select the tuning parameters for the random forests algorithm, as described in Section \ref{sec:nuisance estimation details}.
	
	We generated 500 datasets of size $n \in \{500, 1000, \ldots, 3000\}$. For each simulation setting, we evaluated the performance of each method by calculating the proportion of simulation replicates in which $X_4$ was correctly ranked as the least important of the four features.
	
	In Fig. \ref{fig:survex_comparison}, we show the results of this simulation study. We note that when $\Sigma_{14} = 0$, both the permutation and exclusion approaches tend to correctly rank $X_4$ as the least important feature as the sample size increases. The exclusion approach demonstrates similar behavior in each of the other correlation settings, while the permutation approach appears to be negatively affected by moderate to strong correlation between $X_4$ and $X_1$. When $\Sigma_{14} = 0.9$, the correct ranking proportion of the permutation approach does not appear to increase with sample size. The results of this experiment suggest that the empirical performance of our proposed approach is insensitive to the presence of correlation between features, while existing permutation importance methods can be negatively affected by correlated features.
	
	\begin{figure}
		\centering
		\includegraphics[width=\linewidth]{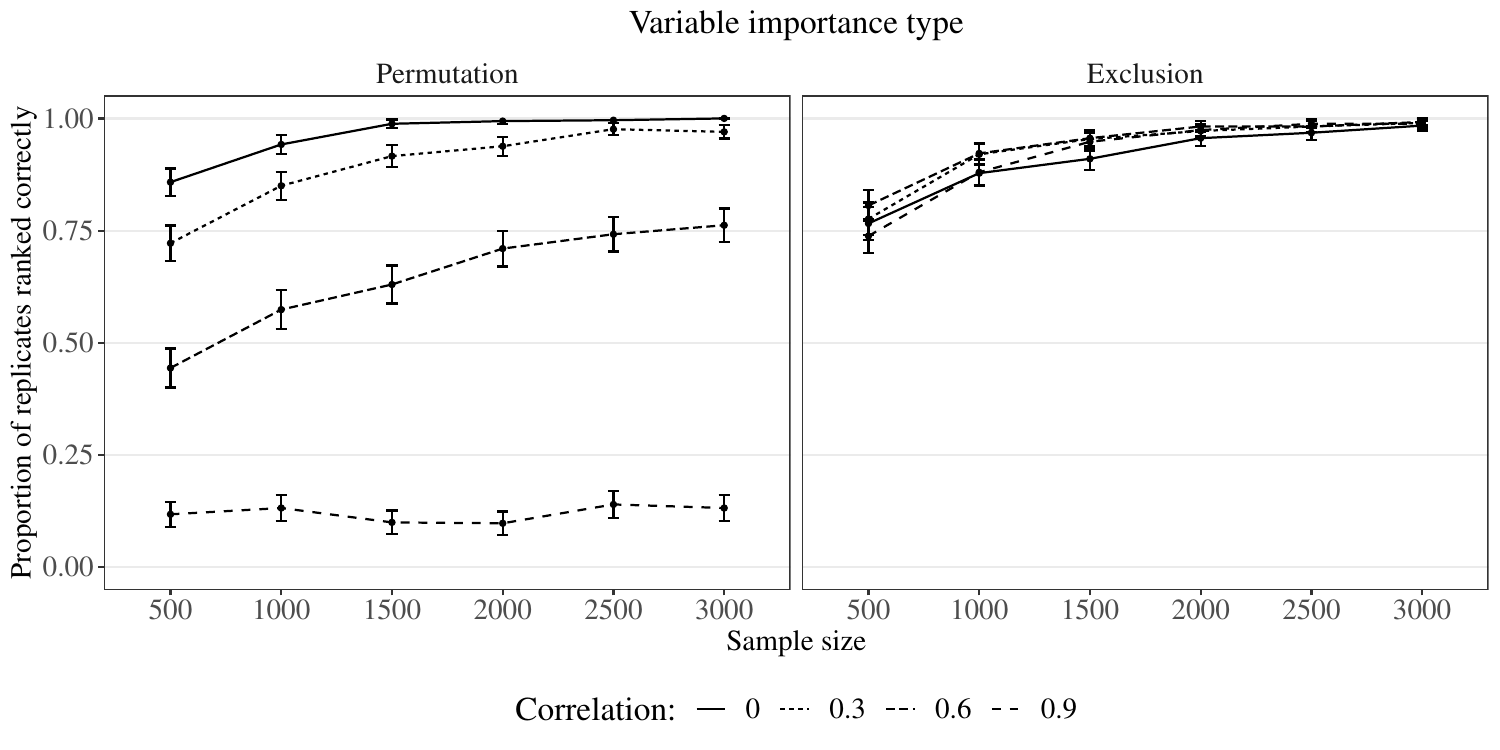}
		\caption{Comparison of permutation and exclusion variable importance methods for correctly ranking $X_4$, which had zero importance under both paradigms. For both methods, random survival forests were used to estimate the outcome conditional survival function. Different line types indicate varying levels of correlation between $X_4$ and $X_1$. Vertical bars represent 95\% confidence intervals taking into account Monte Carlo error.}
		\label{fig:survex_comparison}
	\end{figure}

\end{document}